\title{Policy Transforms and Learning Optimal Policies\\\footnotetext{I thank Jiaying Gu, Ismael Mourifie, Eduardo Souza-Rodrigues, Adam Rosen, Stanislav Volgushev and Yuanyuan Wan for their feedback and encouragement, and I am especially grateful to JoonHwan Cho for many hours of discussion that helped to improve this paper. A previous version of this paper appeared in my doctoral thesis at the University of Toronto. This research was supported by the Social Sciences and Humanities Research Council of Canada. All errors are my own.}
}
\date{\small \today}
\author{Thomas M. Russell\footnote{Thomas M. Russell, Assistant Professor, Department of Economics, Carleton University, 1125 Colonel By Drive, Ottawa, Ontario, K1S5B6, Canada. Email: thomas.russell3@carleton.ca. 
}\\\textit{Carleton University}}
\begin{document}
\maketitle
\vspace{-.4in}
\begin{abstract}
\noindent We study the problem of choosing optimal policy rules in uncertain environments using models that may be incomplete and/or partially identified. We consider a policymaker who wishes to choose a policy to maximize a particular counterfactual quantity called a \textit{policy transform}. We characterize \textit{learnability} of a set of policy options by the existence of a decision rule that closely approximates the maximin optimal value of the policy transform with high probability. Sufficient conditions are provided for the existence of such a rule. However, learnability of an optimal policy is an ex-ante notion (i.e. before observing a sample), and so ex-post (i.e. after observing a sample) theoretical guarantees for certain policy rules are also provided. Our entire approach is applicable when the distribution of unobservables is not parametrically specified, although we discuss how semiparametric restrictions can be used. Finally, we show possible applications of the procedure to a simultaneous discrete choice example and a program evaluation example.
\end{abstract}

\medskip


\medskip
\noindent \textit{Keywords}: Partial Identification, Decision Theory, Statistical Learning Theory

\thispagestyle{empty}

\clearpage

\section{Introduction}

One of the fundamental goals of econometrics is to credibly translate knowledge of underlying economic mechanisms into models that, when combined with sample data, can be used to understand the effects of counterfactual policy experiments and can help guide policy decisions. In this paper we consider the problem of making policy decisions in settings where the econometric model is partially identified and/or incomplete. The paper is motivated by the fact that credible models are needed to honestly inform policy makers on the impacts of counterfactual policies, even if credible models provide an incomplete description of the true data generating process. 

Our framework is general enough to accommodate many existing structural econometric models. Our description of the environment is similar to descriptions found in \cite{jovanovic1989observable} and \cite{chesher2017generalized}, which in turn are extensions of the classical foundations for econometric modelling set forth in \cite{koopmans1950measuring} and \cite{hurwicz1950generalization}, among others. We assume the economic system under consideration manifests as a collection of random variables which can be partitioned into those that are observable---including a vector of observed endogenous variables $Y$ and a vector of exogenous variables $Z$---and those that are latent or unobservable---denoted by the vector $U$. We refer colloquially to the variables contained in $Y$ and $Z$ as the ``observables,'' and refer to the variables contained in $U$ as the ``unobservables.'' Unlike most of the existing literature, we do not take the distribution of $U$ as a model primitive. This is in accordance with the perspective that the latent variable $U$ represents the gap between what can be explained by a theoretical model, and what must remain unexplained; that is, ``errors in equations'' rather than ``errors in variables.''\footnote{These two explanations of the error term are documented by \cite{morgan1990history} Chapter 6. We recommend \cite{qin2001error} for a review of how attitudes towards the latent variables have evolved over time. } As we will demonstrate, such a distinction becomes especially important when performing counterfactual analyses.\todo{This point could be pressed further in the text, when applicable, and in the conclusion.}

The policymaker is assumed to have access to data on the observables, as well as an econometric model that describes how the observables are related to the unobservables. The model may depend on a vector of parameters $\theta \in \Theta$; here $\Theta$ is required only to be a complete and separable metric space, which permits many function spaces used in nonparametric analyses. We then let $\Gamma$ represent an abstraction of the set of all possible policies under consideration by the policymaker, where $\gamma \in \Gamma$ denotes one such policy. Each hypothetical policy $\gamma \in \Gamma$ represents an intervention on the underlying existing economic system, which operates to generate the endogenous variables from the exogenous and unobserved variables. After the economic system is modified, the resulting system may now generate a new, or counterfactual distribution of the endogenous variables. Thus, by altering the underlying economic system, a policy intervention induces a change between the factual (or observed) and counterfactual (hypothetical and unobserved) distributions of the endogenous outcome variables. Latent variables are not affected by the policy, and instead serve as important links between the factual and counterfactual domains.\footnote{From \cite{pearl2009causality} p. 211: ``The background variables are the main carriers of information from the actual world to the hypothetical world; they serve as the ``guardians of invariance" (or persistence) in the dynamic process that transforms the former into the latter."} \todolt{might be nice to reference interventionist approaches to counterfactuals} A policymaker's problem is then formulated as the problem of choosing a policy intervention that induces a counterfactual distribution of endogenous outcome variables that is favourable according to some criterion. 

We denote the counterfactual endogenous outcome variables as $Y_{\gamma}^\star$, where the $\gamma$ index is to emphasize the fact that its distribution will depend on the counterfactual policy experiment $\gamma \in \Gamma$ under consideration. Under this setup, this paper focuses on a particular class of counterfactual quantities that can be written in the following form:
\begin{align}
I[\varphi](\gamma):= \int \varphi(v) \, dP_{V_{\gamma}}.\label{eq_integral_of_interest}
\end{align}
Here $\varphi$ is some function, $V_{\gamma} := (Y_{\gamma}^\star,Y,Z,U)$ is a vector of all the random variables that describe the factual and counterfactual domains, $P_{V_{\gamma}}$ denotes the distribution of $V_{\gamma}$, and $v$ denotes a realization of $V_{\gamma}$. In particular, the operator $I[\,\cdot\,](\gamma)$ takes a function $\varphi$ of the vector $v$ of endogenous, exogenous, unobserved and counterfactual variables, and maps it to a function $I[\varphi](\gamma)$ of the policy parameter $\gamma$. For this reason, we refer to $I[\,\cdot\,](\gamma)$ as a \textit{policy transform}. As we will show in our examples on simultaneous discrete choice and program evaluation, counterfactual objects that can be written as policy transforms include counterfactual choice probabilities, and counterfactual average effects. If a policymaker's counterfactual object of interest can be written as the policy transform of some function $\varphi$, then the resulting policy transform gives all the information the policymaker needs to compare various policies and make a policy choice. 

Throughout the paper we consider a policymaker who wishes to maximize the value of the policy transform, although our analysis is equally applicable to the case when the policymaker wishes to minimize the value of the policy transform. With perfect knowledge of the distribution of the vector $V_{\gamma}$, the policymaker faces a trivial decision problem and can simply choose the policy $\gamma$ that obtains the maximum of the policy transform $I[\varphi](\gamma)$. However, this idealized decision problem is rarely encountered in practice, and instead the policymaker may only have access to a finite sample of the observed random variables. Furthermore, even with an infinite sample the policy transform may not be identified under any credible assumptions. This will be especially true throughout our discussion, since we will not require that the distribution of the unobservables $U$ be parametrically specified. 

To make progress, we model the policy decision problem as a decision under ambiguity, where we assume that the ``true state of the world'' belongs to a state space $\mathcal{S}\times \mathcal{P}_{Y,Z}$. Here $\mathcal{P}_{Y,Z}$ is the set of all Borel probability measures on the observable space $\mathcal{Y}\times\mathcal{Z}$. Furthermore, each $s \in \mathcal{S}$ is associated with a pair of conditional distributions $(P_{U|Y,Z}, P_{Y_{\gamma}^\star|Y,Z,U})$. Taking a pair $(s,P_{Y,Z}) \in \mathcal{S}\times \mathcal{P}_{Y,Z}$ to be the true state, the policymaker can evaluate the policy transform in \eqref{eq_integral_of_interest} corresponding to that state. Keeping the dependence on $P_{Y,Z}$ implicit, we denote the policy transform in state $(s,P_{Y,Z})$ as $I[\varphi](\gamma,s)$, and refer to it as the \textit{state-dependent policy transform}. We then consider the policymaker's decision problem when she has access to a finite sample from the true distribution. Let $\Psi_{n}$ denote the space of all possible $n-$samples $\{(y_{i},z_{i})\}_{i=1}^{n}$, and let $d: \Psi_{n} \to \Gamma$ denote a (measurable) decision rule that maps from sample realizations to policies. Before a sample $\psi \in \Psi_{n}$ is observed $d(\psi)$ will be a random variable, and the policymaker's problem is then translated into the problem of selecting a decision rule according to some reasonable criteria. 

However, without knowledge of the true state, it is unclear how the policymaker should (in a prescriptive sense) choose among, or rank, various decision rules. One nearly self-evident requirement on any method of ranking decision rules is that the ranking should respect weak dominance; that is, if for every $P_{Y,Z} \in \mathcal{P}_{Y,Z}$ we have $I[\varphi](d'(\psi),s) \leq I[\varphi](d(\psi),s)$ a.s. for every $s \in \mathcal{S}$, then $d$ should be preferred to $d'$. However, it is clear that many decisions rules will not be comparable according to this partial ordering. 

To progress further, we introduce a preference relation over the space of all decision rules that is motivated from computational learning theory. In particular, fix any $\kappa \in (0,1)$ and let $c_{n}(d,\kappa)$ be the smallest value satisfying:
\begin{align}
\inf_{P_{Y,Z} \in \mathcal{P}_{Y,Z}} P_{Y,Z}^{\otimes n} \left(\inf_{s \in \mathcal{S}} I[\varphi](d(\psi),s) + c_{n}(d,\kappa) \geq \sup_{\gamma \in \Gamma} \inf_{s \in \mathcal{S}} I[\varphi](\gamma,s)  \right) \geq \kappa.
\end{align}
Then under our framework, a decision rule $d:\Psi_{n} \to \Gamma$ is weakly preferred to decision rule $d':\Psi_{n} \to \Gamma$ at level $\kappa$ and sample size $n$ if $c_{n}(d,\kappa) \leq c_{n}(d',\kappa)$.\footnote{See Definition \ref{definition_pac_preference_relation}.} This preference relation appears to be new, and diverges (to some extent) from the existing literature on frequentist decision theory. However, its close connection to the probably approximately correct (PAC) learning framework from computational learning theory allows us to use a rich set of results from statistical learning theory and empirical process theory to study its theoretical properties. In addition, this preference relation induces a total ordering, and our first result in Section \ref{section_methodology} demonstrates that, at a minimum, this preference relation respects weak dominance.

Given this preference relation, throughout the paper we will use the value $c_{n}(d,\kappa)$ to measure the ``performance'' or ``quality'' of a decision rule $d$ for a given sample size $n$ and confidence level $\kappa$. We then provide two sets of theoretical results for the policymaker's decision problem. 

In the first set of results, we provide conditions on the decision problem that guarantees the existence of a decision rule $d$ such that $c_{n}(d,\kappa)$ tends to zero as the sample size $n$ becomes large. The existence of such a decision rule characterizes the notion of policy space learnability. The definition of policy space learnability appears to be new in economics, although it is adapted from the widely popular PAC learning framework from computer science proposed by \cite{valiant1984theory}. Our particular analysis deals mostly with the decision theoretic generalization of the PAC learning model proposed by \cite{haussler1992decision}, which is referred to as the \textit{agnostic} PAC learning model. 

We show that even in simple environments the policy space may not be learnable. In this case the policymaker's decision problem is still well-defined, but there will be theoretical limitations on how well any given policy can perform, even in large samples.  We then provide sufficient conditions for learnability which are related to certain complexity measures of the class of functions in our problem; in particular, to the behaviour of covering/packing numbers and metric entropy. We define an ``entropy growth condition,'' and we show that if certain key classes of functions in our environment satisfy this condition, then the policy space $\Gamma$ is learnable. Primitive conditions for our entropy growth condition can be found in the literature on empirical processes and statistical learning. In addition to being sufficient for learnability, we also show how the condition can be used to establish rates of convergence. 

However, since learnability is an \textit{ex-ante} notion (i.e. before observing the sample), verifying learnability can be uninformative about the \textit{ex-post} performance (i.e. after observing the sample) of a given policy rule. Thus, our second set of results provides a means for the policymaker to perform an ex-post analysis of her selected policy rule. First we study the finite sample properties of a particular decision rule, called the $\varepsilon-$maximin empirical (eME) rule, which selects a $\varepsilon-$maximizer of the worst case (over $s \in \mathcal{S}$) empirical version of $I[\varphi](\gamma,s)$. Using concentration inequalities, we provide an upper bound on the quantity $c_{n}(d,\kappa)$ when $d$ is the eME rule, and we demonstrate how the upper bound is affected by various features of the decision problem. 

However, the eME rule is only one particular rule, and for many reasons it may not be the policy rule selected by the policymaker. We thus turn to the problem of approximating the set of all policies $\gamma \in \Gamma$ satisfying:
\begin{align}
\gamma\mapsto \sup_{\gamma \in \Gamma} \inf_{s \in \mathcal{S}} I[\varphi](\gamma,s) - \inf_{s \in \mathcal{S}} I[\varphi](\gamma,s) \leq \delta,  
\end{align}
with probability at least $\kappa$; note that any decision rule that selects a policy in this set will thus have $c_{n}(d,\kappa)\leq \delta$. We call this set of policies the ``$\delta$-level set,'' and we show how a procedure from the literature on excess risk bounds in statistical learning theory can be adapted to our environment to approximate the $\delta-$level set. Finally, we show that the eME decision rule selects a policy in the $\delta-$level with high probability for $\delta$ sufficiently large, providing further justification for its use. Unlike the first ex-ante analysis of learnability, all of the results comprising the ex-post analysis do not require the entropy growth condition---or any other sufficient condition for learnability---to be satisfied. Thus, they are applicable whether or not the policy space $\Gamma$ is learnable, although they are silent about rates of convergence. Taken altogether, we believe our two sets of theoretical results provide a comprehensive means of making and evaluating policy decisions.  

This paper also makes a contribution from an identification perspective. Perhaps unsurprisingly, an important theoretical object in our study of policy decisions are the following policy transform envelope functions:
\begin{align*}
I_{\ell b}[\varphi](\gamma):= \inf_{s \in \mathcal{S}} I[\varphi](\gamma,s), &&I_{u b}[\varphi](\gamma):= \sup_{s \in \mathcal{S}} I[\varphi](\gamma,s).
\end{align*}
Regardless of the true (sub-)state $s_{0} \in \mathcal{S}$, at the true distribution $P_{Y,Z}$ the policy transform in \eqref{eq_integral_of_interest} can be ``sandwiched'' between these upper and lower envelope functions. This idea is illustrated in Figure \ref{fig_policy_choice}. Our ability to provide a tractable characterization of these envelope functions thus turns out to be critical to our ability to provide sufficient conditions for policy learnability, and for our ex-post analysis of the eME rule and the $\delta-$level set.  

\begin{figure}[!t]
\centering
\includegraphics[scale=0.6]{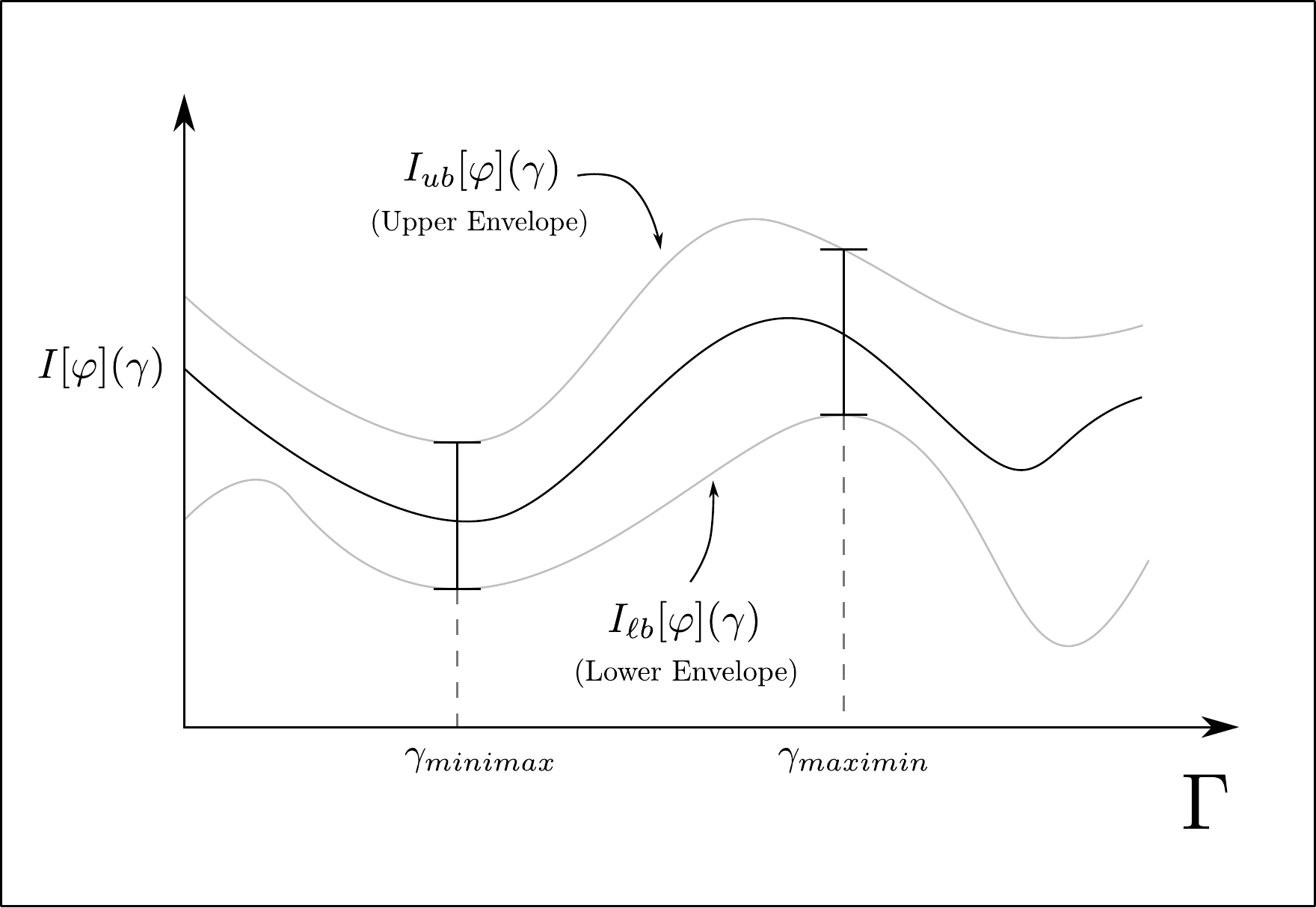}
\caption{This figure illustrates the policy transform of some function $\varphi$, as well as the upper and lower envelope functions $I_{u b}[\varphi](\gamma)$ and $I_{\ell b}[\varphi](\gamma)$ (resp.). The minimax (over (sub-)states $s \in\mathcal{S}$) policy is the policy that minimizes the upper envelope, and the maximin (over (sub-)states $s \in\mathcal{S}$) policy choice is the policy that maximizes the lower envelope. }\label{fig_policy_choice}
\end{figure}
The envelope functions may not be policy transforms themselves, but under some conditions they can be interpreted as sharp bounds on the policy transform $I[\varphi](\gamma)$, point-wise in the variable $\gamma$. It is here that we make a contribution in the identification literature by showing that the envelope functions can be expressed as the value functions of optimization problems parameterized by the policy variable $\gamma \in \Gamma$. The result is derived under assumptions found in the theory of error bounds and exact penalty functions from the literature on optimization, and the resulting optimization problems are closely related to \textit{mathematical programs with equilibrium constraints}, or MPECs.\footnote{See \cite{dolgopolik2016unifying} for a survey of exact penalty functions and their connection to error bounds, and see \cite{luo1996mathematical} for a textbook treatment MPECs.} A remarkable benefit of our optimization approach is that we show the bounds on the policy transform can be constructed without the need to first estimate the full identified set for $\theta$, the vector of model parameters. This is in contrast to typical approaches to bounding counterfactual quantities, which first estimate the identified set of structural parameters, and then perform a counterfactual for every possible value of the parameter vector in the identified set. \todo{give references?} A direct implication of our result is that, in either point- or partially-identified models, if the policymaker's counterfactual quantity of interest is the policy transform of some function $\varphi$, then all structural parameters can be treated as nuisance parameters when performing counterfactuals and making policy choices. These results on identification may be of substantial separate interest. 

Finally, throughout the text we discuss a simultaneous discrete choice and a program evaluation example in order to illustrate possible applications of the procedure. The simultaneous discrete choice example includes empirical entry games (e.g. \cite{tamer2003incomplete}) and empirical models of social interactions (e.g. \cite{brock2001discrete}) as special cases, and has become a canonical example of an incomplete model in the literature on partial identification. The second program evaluation example follows the setup in \cite{heckman1999local} and \cite{heckman2005structural}. This example has attracted recent attention in the literature on partial identification (e.g. \cite{mogstad2018using} and \cite{mourifie2020layered}) and is included to show the breadth of our procedure. 

\subsection{Related Literatures}   

This paper builds on results from a variety of different literatures, including recent work on counterfactuals in structural models, partial identification and random set theory, decision theory and optimal policy choice, and computational and statistical learning theory.

Our approach to modelling and counterfactuals in partially identified models extends the literature using random set theory in econometrics, including \cite{beresteanu2011sharp}, \cite{galichon2011set}, \cite{beresteanu2012partial} and \cite{chesher2017generalized}. As mentioned in the introduction, our general environment is similar to descriptions found in \cite{jovanovic1989observable} and more recently in \cite{chesher2017generalized}, which in turn are extensions of the classical foundations for econometric modelling set forth in \cite{koopmans1950measuring} and \cite{hurwicz1950generalization}, among others. The use of random set theory is convenient in order to permit application of the method to a wider class of models. In particular, our framework is applicable to models that may (or may not) be incomplete, which are an important class of models in the literature on partial identification. Incomplete models are now legion, and include entry games with multiple equilibria (\cite{bresnahan1990entry}, \cite{bresnahan1991empirical}, \cite{tamer2003incomplete}, \cite{jia2008happens}, \cite{ciliberto2018market}); english auctions (\cite{haile2003inference}, \cite{chesher2017incomplete}); discrete choice models with endogenous regressors or social interactions (\cite{chesher2012simultaneous}, \cite{chesher2013instrumental}, \cite{chesher2014instrumental}); matching models (\cite{uetake2019entry}); friendship networks (\cite{miyauchi2016structural}); and selection and treatment effect models (\cite{mourifie2018sharp}, \cite{russell2019sharp}).

From the perspective of policy choice, our general approach to the problem of policy decisions is new. However, there is now a large and growing literature on statistical treatment rules in econometrics, including papers by \cite{manski2004statistical}, \cite{hirano2009asymptotics}, \cite{stoye2009minimax}, \cite{stoye2012minimax}, \cite{chamberlain2011bayesian}, \cite{tetenov2012statistical}, \cite{kasy2016partial}, \cite{kitagawa2018should} and \cite{mbakop2019model}. In general these papers can be divided according to (i) whether they are frequentist/bayesian, (ii) whether they take a finite-sample or asymptotic approach, and (iii) whether they consider decision problems under uncertainty or ambiguity (or ``Knightian uncertainty''). In the current paper we take a frequentist, finite-sample approach to decision problems under ambiguity. However, unlike previous papers that belong to the same class, our method of evaluating statistical decision rules differs from the procedure proposed by \cite{wald1950statistical}. In the absence of ambiguity arising from the unknown sub-state $s\in \mathcal{S}$, our procedure is very similar to the PAC framework for inductive inference that has become enormously popular in the computer science literature. This model of learning was initially proposed in a seminal paper by \cite{valiant1984theory}, for which he won the prestigious Turing Award. The name ``probably approximately correct'' seems to have been first used by \cite{angluin1988learning}, who extended the model to the case of noisy data. The PAC model and its extensions have now become the dominant model of learning in the theoretical foundations of machine learning; influential textbook treatments that make this connection explicit include \cite{kearns1994introduction}, \cite{vapnik1995nature}, \cite{vapnik1998statistical}, \cite{vidyasagar2002theory}, \cite{shalev2014understanding} and \cite{mohri2018foundations}. Our particular analysis is most closely related to the decision theoretic generalization of the PAC learning model proposed by \cite{haussler1992decision}, as well as the general learning setting considered in \cite{vapnik1995nature}. Other important papers studying necessary and sufficient conditions for learnability in various machine learning settings include \cite{blumer1989learnability}, \cite{kearns1994efficient}, \cite{bartlett1996fat}, \cite{alon1997scale}, and \cite{shalev2010learnability}, among others. Our work here on providing sufficient conditions for learnability borrows heavily from this literature. However, the additional ambiguity that arises in relation to possible partial identification of the policy transform differentiates our setting from the statistical learning literature, and our incorporation of this notion of ambiguity into the PAC framework appears to be new. Many of our results are applicable to problems involving risk minimization subject to (stochastic) constraints, and thus may be of separate interest to researchers in machine learning.  

Surprisingly, we are unaware of any attempts to formally connect the literature on statistical decision theory with the literature on statistical learning theory.\footnote{\cite{kitagawa2018should} and \cite{mbakop2019model} make some connections with the statistical learning literature. However, their method of evaluating statistical treatment rules is different from that considered by the PAC model. Some discussion on the links with decision theory can be found in an influential paper by \cite{haussler1992decision}, although the discussion is very limited and no connection is made with Wald-style frequentist decision theory. As far as we are aware, this remains an open question.} On the one hand, the properties of a Wald-style analysis are (at this point) better understood; see, for example, \cite{stoye2011statistical} for an axiomatization of Wald's frequentist maximin procedure. On the other hand, we find the PAC style criterion to be much more amenable to informative ex-post analyses of particular decision rules, mostly due to its connection to the concentration of measure phenomenon, and thus its amenability to analysis using concentration inequalities.

The connections to the statistical learning literature permeate our theoretical results. There are connections of our work to the study of ratio-type empirical processes (e.g. \cite{gine2003ratio}, \cite{gine2006concentration}), and to the study of fixed-point equations and rates of convergence in risk minimization problems (e.g. \cite{massart2000some}, \cite{koltchinskii2000rademacher},  \cite{bousquet2002some}, \cite{bartlett2005local}, and \cite{koltchinskii2006local}). Overall our work is most closely related to the work of \cite{koltchinskii2006local}, and the subsequent textbook treatment \cite{koltchinskii2011oracle}. As we will see in the section on the ex-post analysis of certain decision rules, a key component of our approach is the use of Rademacher processes to construct data-dependent bounds on certain important empirical processes. This has the benefit of allowing the policymaker to avoid relying on any specific properties of the underlying function class, which are typically difficult to verify, and thus are applicable whether or not the associated policy space is learnable. Furthermore, the use of data-dependent complexity measures like the empirical Rademacher complexity ensures our finite sample guarantees are less conservative than otherwise. It appears this idea was independently offered by \cite{bartlett2002model} and \cite{koltchinskii2001rademacher}, and was developed further in \cite{koltchinskii2006local}. See also Section 4.2 in \cite{koltchinskii2011oracle}. A review of excess risk bounds and their application to classification problems in statistical learning theory can be found in \cite{boucheron2005theory} and \cite{koltchinskii2011oracle}.


Closely related to the identification component of this paper---which studies the envelope functions for the policy transform---is the work by \cite{ekeland2010optimal}, \cite{schennach2014entropic}, \cite{torgovitsky2019partial} and \cite{li2019general}. The paper of \cite{ekeland2010optimal} is focused on model specification testing, and allows for econometric models with only semiparametric restrictions on the distribution of unobservables in the form of moment conditions.\footnote{The paper of \cite{ekeland2010optimal} is related to a string of other papers by the same authors, namely \cite{galichon2006inference}, \cite{galichon2009test} and \cite{galichon2011set}.} \cite{schennach2014entropic} provides a general framework for models with moment conditions that depend on latent variables, and shows that the latent variables can be integrated out of the moment conditions without loss of information using a least-favourable entropy maximizing distribution. \cite{torgovitsky2019partial} shows that when restrictions on the distribution of the latent variables have a certain structure, sharp identified sets for functionals of partially-identified parameters can be characterized in terms of optimization problems. Finally, \cite{li2019general} shows that sharp identified sets for structural and counterfactual parameters can be constructed using a method that essentially profiles the latent variables out of the moment conditions. In the current paper, we use an idea related to \cite{li2019general} to eliminate unobservables from the counterfactual bounding problem. However, in contrast to \cite{li2019general} our focus on policy transforms means our formulation does not require replacing a finite number of moment conditions with a continuum of moment conditions. Furthermore, our approach does not require the policymaker to compute the full identified set of structural parameters. Our specific characterization of the bounds on the policy transform in terms of two parametric optimization problems was designed to be amenable to the theoretical analysis of policy space learnability, and the analysis of the eME rule and the $\delta-$level sets. Thus, our particular bounding approach is new. Finally, and perhaps most importantly, our focus is primarily on using the bounds to study the problem of policy choice, which is not considered in any of \cite{ekeland2010optimal}, \cite{schennach2014entropic}, \cite{torgovitsky2019partial} or \cite{li2019general}.

The idea that at least some structural parameters may be seen as nuisance parameters in the policy decision problem goes back at least as far as \cite{marshak1953measurements}. \cite{heckman2010building} refers to this idea as ``Marshak's Maxim." At a high level, the identification component of this paper is reminiscent of \cite{ichimura2000direct}, who discuss a method for performing ex-ante policy experiments in the treatment effect literature without estimating the structural parameters, and without specifying the error distribution. More recent examples of counterfactual analysis without first estimating the (identified set for the) structural parameters can be found in \cite{syrgkanis2018inference}, \cite{tebaldi2019nonparametric} and \cite{kalouptsidi2019picounterfactual}.\\~\\

\noindent The remainder of the paper will proceed as follows. Section \ref{section_methodology} introduces the notation and main definitions and assumptions, in addition to describing the decision environment and introducing the motivating examples. Importantly, Section \ref{section_methodology} introduces the policy transform, and defines the notion of learnability of a policy space. As described above, the theoretical results in this paper depend heavily on the nature of the upper and lower envelope functions for the policy transform. Thus, in Section \ref{section_envelope_functions} we define the identified set for the policy transform, and present our main identification result characterizing its upper and lower envelopes. Equipped with this result, Section \ref{sec_policy_analysis} then considers the problem of policy choice, providing sufficient conditions for learnability, and Section \ref{section_ex_post_analysis} provides an ex-post analysis of the performance of particular decision rules. Section \ref{section_conclusion} concludes. All proofs can be found in the Appendices. \\~\\

\noindent \textbf{Notation:} Given a subset $\mathcal{X}$ of a Polish space (a complete and separable metric space), we use $\mathfrak{B}(\mathcal{X})$ to denote the Borel $\sigma-$algebra on $\mathcal{X}$ (note the topology on $\mathcal{X}$ is the topology induced by the metric). We will often either leave the metric implicit, or will denote a generic metric by the function $d:\mathcal{X} \times \mathcal{X} \to \mathbb{R}$. For two measurable spaces $(\mathcal{X},\mathfrak{B}(\mathcal{X}))$ and $(\mathcal{X}',\mathfrak{B}(\mathcal{X}'))$, the product $\sigma-$algebra on $\mathcal{X}\times \mathcal{X}'$ is denoted by $\mathfrak{B}(\mathcal{X})\otimes \mathfrak{B}(\mathcal{X}')$. If $X: (\Omega,\mathfrak{A})\to (\mathcal{X},\mathfrak{B}(\mathcal{X}))$ is a random variable defined on the probability space $(\Omega,\mathfrak{A},P)$, then we use $P_{X}$ to denote the probability measure induced on $\mathcal{X}$ by $X$; that is, for any $A \in \mathfrak{B}(\mathcal{X})$, $P_{X}(A) := P(X^{-1}(A))$. We let $\sigma(X)\subseteq \mathfrak{A}$ denote the smallest sub $\sigma-$algebra making $X$ a measurable function. Furthermore, we interpret $P_{X|X'}(X \in A| X' =x)$ as a regular conditional probability measure. In many cases we do not explicitly differentiate between the true distribution of the random variable $X$, say $P_{X}$, or some other distribution of the random variable $X$, say $P_{X}'$, and instead leave the distinction to be resolved by context. To keep the notation clean, we will omit the transpose when combining column vectors; that is, if $v_{1}$ and $v_{2}$ are two column vectors, rather than write $v=(v_{1}^\top,v_{2}^\top)^\top$ we instead write $v=(v_{1},v_{2})$, where it is understood that $v$ is a column vector unless otherwise specified. Importantly, throughout the paper we use the convention that $\sup \emptyset = - \infty$ and $\inf \emptyset = + \infty$. Finally, we will largely ignore measurability issues in the main text, but we note that such issues are non-trivial in our framework, and are discussed and addressed in Appendix \ref{appendix_measurability}.

\section{Methodology}\label{section_methodology}

\subsection{Preliminaries}
As mentioned in the introduction, the description of the environment follows closely that of \cite{jovanovic1989observable} and \cite{chesher2017generalized}, which in turn are extensions of the classical foundations for econometric modelling set forth in \cite{koopmans1950measuring} and \cite{hurwicz1950generalization}, among others. However, there are some differences that will be pointed out as they occur. We will also make heavy use of random set theory in this paper. Random set theory has played a major role in the development of methods for partially identified models, for example in the contributions of \cite{beresteanu2011sharp}, \cite{galichon2011set}, \cite{beresteanu2012partial} and \cite{chesher2017generalized}, among others. We will also use random set theory in this paper, as it naturally generalizes many features of complete econometric models to incomplete models (see \cite{chesher2017generalized}). Since complete models can be seen as special cases of incomplete models, focusing on incomplete models will allow us to construct a method that applies to a broader class of econometric models. Some important definitions from random set theory---including the notion of Effros-measurability, the definition of a random set, the distribution of a random set, and the notion of a selection from a random set---have been moved to Appendix \ref{appendix_preliminaries} for brevity. The current section will presume some working knowledge of these concepts.

We begin by specifying the restrictions on the factual and counterfactual domains. First we will fix the probability space and define the unobserved random variables and parameters that are common to both domains.\todolt{In the end it seems there is actually no reason to require the underlying probability space to be complete.}  

\begin{assumption}\label{assump_preliminary}
There exists a fixed probability space $(\Omega,\mathfrak{A},P)$, and a random element $U: (\Omega,\mathfrak{A}) \to (\mathcal{U},\mathfrak{B}(\mathcal{U}))$ where $\mathcal{U}$ is a compact second-countable Hausdorff space. In addition, the parameter space $\Theta$ is a Polish space equipped with the $\sigma-$algebra $\mathfrak{B}(\Theta)$. 
\end{assumption}

Fixing the probability space throughout represents a departure from some of the existing literature on partial identification and random set theory in econometrics (e.g. \cite{galichon2011set}, \cite{chesher2017generalized}). Our reason for doing so is mostly conceptual. This paper is concerned with counterfactuals, and counterfactuals naturally involve some comparison of units between factual and counterfactual states. In any probabilistic framework, the underlying probability space naturally specifies the basic unit of observation (e.g. individuals, firms, types, etc.), so that it is necessary for the units of observation to be the same in both the factual and counterfactual states when performing a counterfactual analysis. The point may seem esoteric, but it will have a major impact on the statement and proofs of most of our results while also resolving some interpretative difficulties. \todo{This point could be pressed further in the text, when applicable, and in the conclusion.}

The restriction that $\mathcal{U}$ is a compact space in Assumption \ref{assump_preliminary} may seem overly restrictive; for example, the euclidean space $\mathbb{R}^{d}$ ($d<\infty$) with the usual topology is not a compact space. We might consider relaxing Assumption \ref{assump_preliminary} by allowing $\mathcal{U}$ to be a locally compact second-countable Hausdorff space, of which $\mathbb{R}^{d}$ (with the usual topology) is an example. However, any locally compact Hausdorff space has a one-point compactification; that is, assuming $\mathcal{U}$ is locally compact and Hausdorff, there exists a compact space $\widetilde{\mathcal{U}}$ with $\mathcal{U} \subset \widetilde{\mathcal{U}}$ such that $\widetilde{\mathcal{U}} \setminus \mathcal{U}$ consists of a single point.\footnote{See \cite{munkres2014topology} Theorem 29.1.} Furthermore, $\widetilde{\mathcal{U}}$ is unique up to a homeomorphism.\footnote{Recall a homeomorphism is a continuous invertible function with a continuous inverse.} A related argument has been presented in \cite{schennach2014entropic}. From this perspective, it is difficult to imagine an environment where a policymaker should have strong a priori reasons to model the unobservables using a locally compact Hausdorff space $\mathcal{U}$ versus its one-point compactification $\widetilde{\mathcal{U}}$, despite the fact that this is often done in practice. On the other hand, the theoretical benefits of taking $\mathcal{U}$ to be compact (or to be the one-point compactification of some locally compact Hausdorff space) are numerous. We will highlight these benefits as they arise. 

Note that we will not require that the distribution $U$ belong to a parametric class. This is in keeping with our desire to avoid treating the distribution of $U$ as a model primitive. This perspective is consistent with the idea that the latent variables represent components of the underlying economic system that remain unmodelled, due primarily to the policymaker's ignorance of the process determining $U$, and thus her inability to construct a complete mathematical description of the economic system under investigation. This interpretation becomes especially meaningful given the role the latent variables play in determining counterfactual outcomes. Instead, as we will see, the distribution of $U$ can be implicitly constrained by the remaining primitives of the model.  

Finally we note that equipping the parameter space with the Borel $\sigma-$algebra $\mathfrak{B}(\Theta)$ may seem odd. However, to make policy decisions in our framework will require measurability of certain functions to be introduced later on. Primitive conditions for the required measurability will make use of the measure space $(\Theta,\mathfrak{B}(\Theta))$. We return to similar points throughout the paper, and refer readers to Appendix \ref{appendix_measurability} for our results on measurability.   

We will now summarize the restrictions on the factual and counterfactual domains, beginning with the factual domain.\todolt{In the end it seems there is actually no reason to require the random set $\bm G^{-}$ to be closed (other than for compactness of $\mathcal{P}_{U|Y,Z}(\theta)$).}
\begin{assumption}[Factual Domain]\label{assumption_factual_domain}
The factual domain is represented by random vectors $Y: (\Omega,\mathfrak{A}) \to (\mathcal{Y},\mathfrak{B}(\mathcal{Y}))$ and $Z: (\Omega,\mathfrak{A}) \to (\mathcal{Z},\mathfrak{B}(\mathcal{Z}))$, where $\mathcal{Y}$ and $\mathcal{Z}$ are Polish spaces. There exists a (possibly multi-valued) map $\bm G^{-} : \mathcal{Y} \times\mathcal{Z}\times\Theta \to \mathcal{U}$ which is closed and Effros-measurable, and satisfies:\todolt{does conditionally matter?}
\begin{align}
P\left(U \in \bm G^{-}(Y,Z,\theta_{0})|Y=y,Z=z\right)=1,\label{eq_puyz}
\end{align}
$(y,z)-$a.s. for some $\theta_{0} \in \Theta$. Furthermore,
\begin{align}
\E_{P_{U|Y,Z} \times P_{Y,Z}} [m_{j}(Y,Z,U,\theta_{0})] \leq 0, \qquad j=1,\ldots,J,\label{eq_moments}
\end{align}
for some measurable functions $m_{j}: \mathcal{Y}\times \mathcal{Z}\times \mathcal{U}\times \Theta \to \mathbb{R}$, for $j=1,\ldots,J$, bounded in absolute value for each $\theta \in \Theta$. 
\end{assumption}
The first part of the assumption states that the unobserved random vector is a selection from the random set $\bm G^{-}(Y,Z,\theta_{0})$ (see Appendix \ref{appendix_preliminaries} for the definition of a selection).\footnote{A similar argument to the one presented in Appendix B of \cite{chesher2015characterizations} can be used to show that this characterization of selectionability conditional on $(y,z)$ a.s. is equivalent to using an analogous selectionability criteria for the joint distributions of $(Y,Z,U)$. A similar point will apply later on when we introduce Assumption \ref{assumption_counterfactual_domain}.} Note the assumption requires only that $\bm G^{-}(\cdot,\theta)$ admits a selection when $\theta=\theta_{0}$. The first part of the assumption can thus be interpreted as a support restriction for the vector of unobservables conditional on the observed data. These support restrictions are derived from the policymaker's econometric model, as we will see in the examples ahead. We also note that the random set $\bm G^{-}$ contains the $U-$level sets presented in \cite{chesher2017generalized} as a special case, and thus our framework will be applicable to the \textit{generalized instrumental variable} (GIV) models considered in their work. 

In the second part of the assumption we suppose that the factual domain satisfies the moment inequalities in \eqref{eq_moments}, which are allowed to depend on the unobserved random variable $U$. This differs from moment conditions in the generalized method of moments (GMM), as well as typical definition of moment inequalities (c.f. \cite{chernozhukov2007}). This places our paper in the narrow literature in partial identification that allows for moments to depend on unobserved random variables with a possibly unknown distribution (c.f. \cite{ekeland2010optimal}, \cite{schennach2014entropic}, \cite{torgovitsky2019partial} and \cite{li2019general}). The assumption of boundedness of the moment functions may appear to be restrictive. This assumption might be replaced by the weaker assumption that the moment functions are uniformly integrable with respect to the set of probability measures $P_{U|Y,Z} \times P_{Y,Z}$ satisfying the other components of Assumption \ref{assumption_factual_domain}.\footnote{See for example alternative assumptions given in \cite{ekeland2010optimal} and \cite{li2019general}.} However, regardless of how it is weakened, we contend that boundedness of the moment functions remains the most primitive assumption for our purposes. Finally, the fact that there are only a finite number of moment functions may also be restrictive; for example, this prohibits the use of conditional moment inequalities when the conditioning variable is continuous. Our identification result in Section \ref{section_envelope_functions} can be extended---under a suitable modification of our assumptions---to handle the case of an infinite number of moment inequalities. However, the same statement is not true of the results in Sections \ref{sec_policy_analysis} and \ref{section_ex_post_analysis} on policy decisions, which rely more crucially on the fact that the number of moment conditions is finite. We also note that both the Effros measurability of $\bm G^{-}$ and Borel measurability of each moment function $m_{j}$ with respect to $\mathfrak{B}(\mathcal{Y})\otimes \mathfrak{B}(\mathcal{Z})\otimes\mathfrak{B}(\Theta)$ (rather than only with respect to $\mathfrak{B}(\mathcal{Y})\otimes \mathfrak{B}(\mathcal{Z})$) will be required later on to ensure measurability of certain key classes of functions.  
  
Similar to the factual domain, we must specify restrictions on the counterfactual domain, and when specifying the counterfactual domain we must specify which counterfactuals are under consideration by the policymaker. We index various counterfactuals by an abstract parameter $\gamma$, where a fixed value of $\gamma$ represents a single counterfactual, and different values of $\gamma$ correspond to different counterfactuals. The interpretation of the parameter $\gamma$ that will be used throughout is that it is an abstraction of a policy tool under the control of the policymaker. The parameter $\gamma$ will play an important role in our policy decision procedure presented later in the paper. \todolt{In the end it seems there is actually no reason to require the random set $\bm G^{\star}$ to be closed.}

\begin{assumption}[$\Gamma$-Counterfactual Domains]\label{assumption_counterfactual_domain}
The $\Gamma$-counterfactual domains are represented by a stochastic process $\{Y^\star(\omega,\gamma) : \gamma \in \Gamma\}$ where $(\Gamma,\mathfrak{B}(\Gamma))$ is a measurable space with $\Gamma$ a Polish space, and where $Y_{\gamma}^\star:=Y^\star(\cdot,\gamma)$ is such that $Y^\star: (\Omega\times\Gamma,\mathfrak{A}\otimes \mathfrak{B}(\Gamma)) \to (\mathcal{Y}^\star,\mathfrak{B}(\mathcal{Y}^\star)$ is measurable, with $\mathcal{Y}^\star$ a Polish space. Furthermore, there exists a (possibly multi-valued) map $\bm G^{\star}: \mathcal{Y}\times\mathcal{Z}\times\mathcal{U}\times \Theta \times \Gamma \to \mathcal{Y}^{\star}$ which is closed and Effros measurable, and satisfies: 
\begin{align}
P\left(Y_{\gamma}^\star \in \bm G^\star(Y,Z,U,\theta_{0},\gamma)|Y=y,Z=z,U=u\right)=1,  \label{eq_pygamma}
\end{align}
$(y,z,u)-$a.s. for the same $\theta_{0} \in \Theta$ from Assumption \ref{assumption_factual_domain}, and for all $\gamma \in \Gamma$ . 
\end{assumption}
Compared to the existing literature, Assumption \ref{assumption_counterfactual_domain} appears to be new. It restricts the set of counterfactuals considered in this paper to be those that can be written as modifications of support-like restrictions on the random variables in the model. We contend that this assumption is able to accommodate most counterfactuals of interest in economics, although it rules out, for example, consideration of counterfactuals that modify the distributions of the latent variables. 
Under this assumption we have that $Y_{\gamma}^\star:=Y^\star(\cdot,\gamma)$ is a \textit{selection process} from the \textit{set-valued process} $\bm G^\star(Y,Z,U,\theta_{0},\gamma)$, where $\bm G^\star$ is required to be Effros-measurable with respect to the product $\sigma-$algebra. Again, the measurability requirement with respect to both $\Theta$ and $\Gamma$ may seem odd, but will be required in Section \ref{sec_policy_analysis} and \ref{section_ex_post_analysis} when we consider the question of policy choice. Note that---consistent with the remark following Assumption \ref{assump_preliminary}---the probability space in Assumptions \ref{assumption_factual_domain} and \ref{assumption_counterfactual_domain} are assumed to be the same. 

\begin{remark}[The ``No Back-Tracking'' Principle]\label{remark_no_backtracking}
From a purely mathematical standpoint there is no reason that the moment functions in Assumption \ref{assumption_factual_domain} cannot also be functions of $Y_{\gamma}^\star$ and/or $\gamma \in \Gamma$. However, we omit this extension for interpretive reasons and caution researchers interested in this approach. In particular, if the researcher is not judicious in her formulation of such moment functions, then it is possible to have environments where the counterfactual $\gamma \in \Gamma$ of interest has ``identifying power'' for the structural parameters $\theta \in \Theta$. Such environments are extremely puzzling since, intuitively, in these cases the counterfactual domain $\gamma \in \Gamma$ under consideration contains ``information'' on the values of the structural parameters $\theta \in \Theta$ existing in the factual domain. Environments that avoid such difficulties will be said to satisfy the ``no back-tracking principle.''\footnote{This principle is named in honour of the philosopher David Lewis who argued against similar ``back-tracking counterfactuals'' in \cite{lewis1979counterfactual}.} We will return to this idea at some point in our example on simultaneous discrete choice models.
\end{remark}
\begin{figure}[!t]
\centering
\includegraphics[scale=0.7]{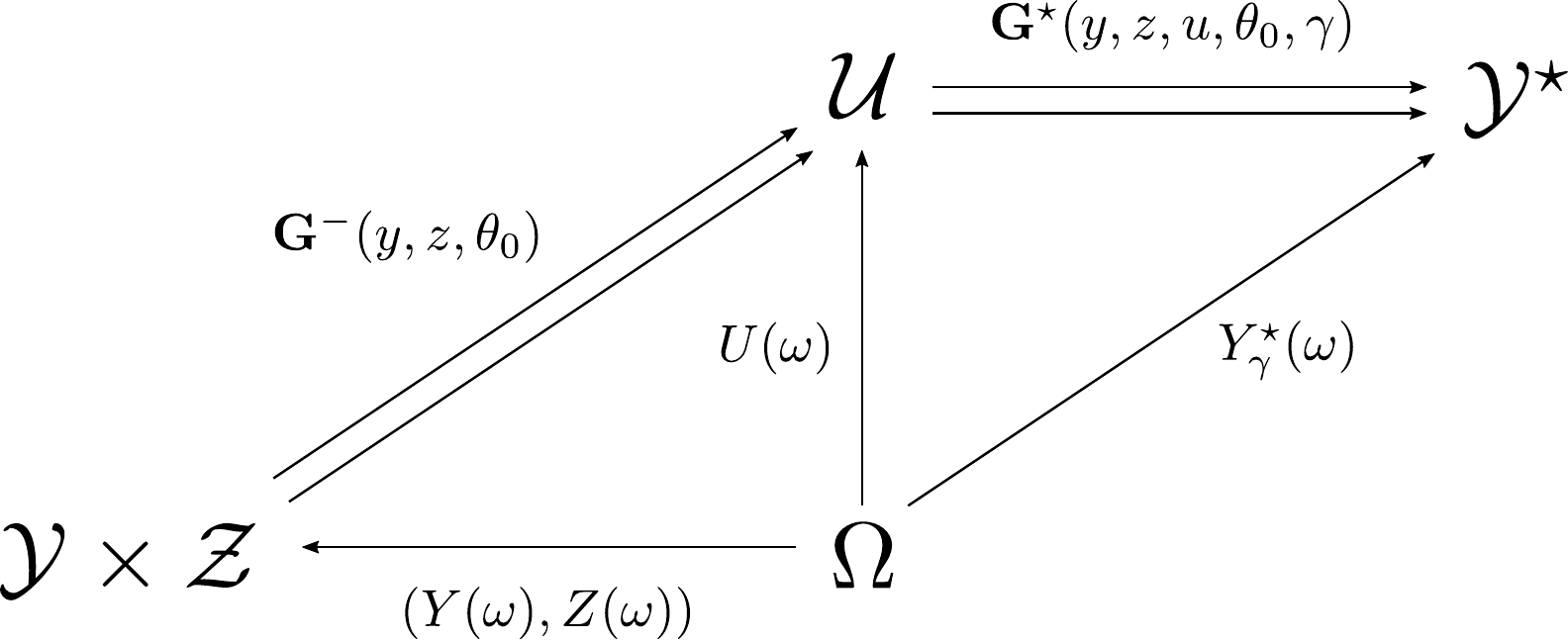}
\caption{Displayed above is an illustration of the setup implied by Assumptions \ref{assump_preliminary}, \ref{assumption_factual_domain} and \ref{assumption_counterfactual_domain}. In particular, note that all random variables are assumed to be defined on the same probability space. Furthermore, note the direction of the arrows from the factual domain $\mathcal{Y} \times \mathcal{Z}$ to the latent $\mathcal{U}$ to the counterfactual domain $\mathcal{Y}^{\star}$, intended to illustrate the process by which information from the factual domain informs on the counterfactual domain. }\label{fig_map_of_spaces}
\end{figure}

The setup implied by Assumptions \ref{assump_preliminary}, \ref{assumption_factual_domain} and \ref{assumption_counterfactual_domain} is illustrated in Figure \ref{fig_map_of_spaces}. Throughout the remainder of the paper, we let $V_{\gamma}:=(Y_{\gamma}^{\star},Y,Z,U)$ denote a random vector with realizations $v \in \mathcal{V}$, where $\mathcal{V}$ is a product space with the product $\sigma-$algebra. 

\subsection{Examples}

We will now turn to two examples to help illustrate the nature of the assumptions just introduced. The examples will be revisited throughout the remainder of the text. The introduction of the examples is lengthy, and readers may skip to Subsection \ref{sec_decision_problem} without loss of continuity. 

The first example we consider is a simultaneous discrete choice model. Simultaneous discrete choice models have seen a wide number of applications, including empirical entry games (e.g. \cite{tamer2003incomplete}), and discrete choice models with social interactions (e.g. \cite{brock2001discrete}). It is already known from the work of \cite{chesher2020structural} that this example falls into the class of GIV models considered by \cite{chesher2017generalized}. For readers familiar with these works, the model will serve as a natural point of comparison. The second example is a program evaluation example which closely mirrors the environment in \cite{heckman2005structural}. The example shows a model where the structural parameter is point-identified, but the counterfactual object of interest is partially-identified.\footnote{In our setting, this is due to the fact the instrument will be assumed to have finite support.} 
\setcounter{example}{0}
\begin{example}[Simultaneous Discrete Choice]\label{example_simultaneous_discrete_choice}
Consider a simultaneous discrete choice problem. In particular, assume that a binary outcome vector $Y:=(Y_{1},\ldots,Y_{K})$ has generic element $Y_{k} \in \mathcal{Y}$ determined by the equation:
\begin{align}
Y_{k}&=\mathbbm{1}\{ \pi_{k} \left(Z_{k},Y_{-k};\theta \right)  \geq U_{k}\}.
\end{align}
Here $Z_{k}$ is a vector of covariates, $U_{k}$ is an unobserved random variable, and $\theta$ is a vector of model parameters. We will define the vector $Z:=(Z_{1},\ldots,Z_{K})$ and $U:=(U_{1},\ldots,U_{K})$ where each variable $Z_{k}$ has support $\mathcal{Z}=\{z_{1},\ldots,z_{L}\}$, a finite subset of euclidean space, and each $U_{k}$ has support $\mathcal{U}= [-1,1]^{d_{u}}$.\footnote{Note that we could instead define $\mathcal{U}:=\overline{\mathbb{R}}^{d_{u}}$, but then:
\begin{align*}
\mathbbm{1}\{ \pi_{k} \left(Z_{k},Y_{-k};\theta \right)  \geq U_{k}\} = \mathbbm{1}\{ \tilde{\pi}_{k} \left(Z_{k},Y_{-k};\theta \right)  \geq \tilde{U}_{k}\},
\end{align*}
where $\tilde{\pi}_{k}\left(Z_{k},Y_{-k};\theta \right) = \text{tanh}\left(\pi_{k} \left(Z_{k},Y_{-k};\theta \right)\right)$ and $\tilde{U}_{k} = \text{tanh}(U_{k})$. In other words, the case with $\mathcal{U}:=\overline{\mathbb{R}}^{d_{u}}$ is homeomorphic to the case $\mathcal{U}:=[-1,1]^{d_{u}}$. } For each $k$, we assume that $\pi_{k}$ is a known measurable function of $(Z_{k},Y_{-k},\theta)$, mapping to $[-1,1]$ that is linear in parameters $\theta$ and has a gradient (with respect to $\theta$) bounded away from zero for each $(z,y_{-k})$. We also assume that $\theta=(\theta_{1},\ldots,\theta_{K})$, and that each $\pi_{k}$ depends only on the subvector $\theta_{k}$.  For simplicity we will assume that the parameter space $\Theta$ is a compact subset of $\mathbb{R}^{d_{\theta}}$, and that $U$ is continuously distributed. \todo{How will you enforce this?} To illustrate the use of semi-parametric restrictions, we will also assume that each coordinate of the vector $U$ is (i) median zero, and (ii) median independent of $(Z_{k},Y_{-k})$. Finally, we assume all random variables are supported on the same probability space $(\Omega,\mathfrak{A},P)$. Verification of Assumption \ref{assump_preliminary} under these conditions is presented in Appendix \ref{appendix_additional_details_sdc_A1_to_A3}. 

For the factual domain, we have the following multifunction:
\begin{align}
\bm G^{-} \left( Y,Z,\theta \right):= \text{cl}\left\{ u \in \mathcal{U} :  Y_{k}=\mathbbm{1}\{ \pi_{k} \left(Z_{k},Y_{-k};\theta \right)  \geq u_{k}\},\,\, k=1,\ldots,K. \right\}.\label{eq_DC_reverse_correspondence}
\end{align}
Note the closure is taken to ensure that $\bm G^{-}(\cdot,\theta)$ is a closed set for each $\theta$. However, this introduces no additional structure and serves merely as a technical simplification, since $\bm G^{-}(\cdot,\theta_{0})$ as defined above will be almost surely equal to the right hand side of \eqref{eq_DC_reverse_correspondence} without taking the closure, which follows from the assumption that $U$ is continuously distributed. To complete the description of the factual domain, we will impose the median zero and median independence assumptions for each coordinate of the vector $U$ as a sequence of moment conditions. In particular, for $k=1\ldots,K$, we will impose the moment conditions:
\begin{align}
\E[\left(\mathbbm{1}\{U_{k} \geq 0 \} - \mathbbm{1}\{ U_{k} \leq 0\}\right)\mathbbm{1}\{Z_{k}=z, Y_{-k}=y_{-k}\}] \leq 0,\quad \forall z \in \mathcal{Z}, \,\, y_{-k}\in \mathcal{Y}^{K-1},\label{eq_sdc_mom1}\\
\E[\left(\mathbbm{1}\{ U_{k} \leq 0\} - \mathbbm{1}\{U_{k} \geq 0 \}\right)\mathbbm{1}\{Z_{k}=z, Y_{-k}=y_{-k}\}] \leq 0,\quad \forall z \in \mathcal{Z}, \,\, y_{-k}\in \mathcal{Y}^{K-1}.\label{eq_sdc_mom2}
\end{align}
Taken together, \eqref{eq_sdc_mom1} and \eqref{eq_sdc_mom2} imply that the latent variables $U_{k}$ are both median zero and median independent of covariates $Z_{k}$ and the outcomes $Y_{-k}$.\footnote{Note that this restriction implies constraints on the joint distribution of the vector $(U_{1},\ldots,U_{K})$. Alternatively, we might instead impose only median independence of $U_{k}$ with $Z_{k}$, which restricts only the marginal distribution of $U_{k}$.} Verification of Assumption \ref{assumption_factual_domain}, including Effros-measurability of the multifunction \eqref{eq_DC_reverse_correspondence}, is provided in Appendix \ref{appendix_additional_details_sdc_A1_to_A3}. 

Turning to the counterfactual domain, there are many possible counterfactuals that may be of interest. For the sake of illustration, we will consider counterfactuals of the following form. Let $\gamma_{k}: \mathcal{Z} \times \mathcal{Y}^{K-1} \to \mathcal{Z} \times \mathcal{Y}^{K-1}$, $\gamma = (\gamma_{k})_{k=1}^{K}$, and $Y_{\gamma}^\star:=(Y_{1,\gamma}^\star,\ldots,Y_{K,\gamma}^\star)$ with typical element:
\begin{align}
Y_{k,\gamma}^\star&=\mathbbm{1}\{ \pi_{k} (\gamma (Z_{k},Y_{-k,\gamma}^\star);\theta )  \geq U_{k}\}.\label{eq_sdc_counterfactual_equations}
\end{align}
For example, our interest may be in the properties of the counterfactual random variable $Y_{k,\gamma}^\star$, such as its mean or its conditional mean. The multifunction for the counterfactual domain is then given by:
\begin{align}
\bm G^\star(Z,U,\theta,\gamma):=\left\{ y^\star \in \mathcal{Y}  : y_{k}^\star=\mathbbm{1}\{ \pi_{k} \left(\gamma(Z_{k},y_{-k}^\star);\theta \right)  \geq U_{k}\},\,\, k=1,\ldots,K. \right\}.\label{eq_DC_counterfactual_correspondence}
\end{align}
Note here we take $\mathcal{Y}^\star = \mathcal{Y}$. Verification of Assumption \ref{assumption_counterfactual_domain}, including Effros-measurability of the multifunction in \eqref{eq_DC_counterfactual_correspondence}, is provided in Appendix \ref{appendix_additional_details_sdc_A1_to_A3}.
\end{example}

\begin{example}[Program Evaluation]\label{example_program_evaluation}
Consider the problem of program evaluation. In this example, a binary variable $D \in \{0,1\}$ indicates participation in the treatment or control group for some program, and the observed real-valued outcome is given by:
\begin{align}
Y=U_{0}(1-D)+U_{1}D,\label{eq_POM}
\end{align}
where $U_{0}$ and $U_{1}$ are \textit{potential outcomes} that are never jointly observed. We will assume throughout that $U_{0}, U_{1} \in \mathcal{U} = [\underline{Y},\overline{Y}]$, and thus we also assume the outcome $Y$ takes values in the bounded interval $\mathcal{Y}:= [ \underline{Y},\overline{Y}]$. In the absence of a selection equation determining the values of $D$, the potential outcome model is \textit{incomplete}. This case is considered in \cite{russell2019sharp}, and the framework in this paper applies to this case as well. Alternatively, we will consider the more popular approach of \cite{heckman1999local} and \cite{heckman2005structural}, and will suppose that the treatment is determined by the equation:
\begin{align}
D = \mathbbm{1}\{g_{0}(Z) \geq U \},\label{eq_selection}
\end{align}
where $U$ is continuous, and $g_{0}(\cdot)$ is an unknown measurable function of the observable covariates $Z \in \mathcal{Z}\subset \mathbb{R}^{d_{z}}$, where $d_{z}$ is the dimension of the vector $Z$. We will assume that $\mathcal{Z}$ is finite, and will allow for the case when the vector $Z$ can be decomposed as $Z=(X,Z_{0})$ with (i) $U\independent Z_{0}|X$ (conditional independence) and (ii) $\E[U_{d}|Z] = \E[U_{d}|X]$ for $d \in\{0,1\}$ (mean independence). We will thus decompose $\mathcal{Z}$ as $\mathcal{Z} = \mathcal{Z}_{0} \times \mathcal{X}$, where $\mathcal{Z}_{0}$ is the support of $Z_{0}$ and $\mathcal{X}$ is the support of $X$. Under these assumptions, it is without loss of generality that $U$ be taken to be uniformly distributed on $[0,1]$ conditional on $Z$. As shown in \cite{vytlacil2002independence}, these assumptions, combined with the additive separability of the selection equation in \eqref{eq_selection}, are equivalent to the assumptions required to estimate the local average treatment effect (LATE) of \cite{imbens1994}. This connects this model with a large body of empirical work that focuses on obtaining estimates of the LATE. 

Set the parameter space as $\Theta= \mathcal{G}\times \mathcal{T}$. Here $\mathcal{G}$ can be taken equal to the space of all positive measurable functions on $\mathcal{Z}$ which is a metric space with the sup norm (for example); under finiteness of $\mathcal{Z}$, this space is Polish. Furthermore, we will take the component $\mathcal{T}$ of the parameter space to be the space of all possible measurable functions on $\mathcal{Z}$. This component of the parameter space will be used in the moment functions below. Finally, we shall denote a generic pair $(g,t) \in \Theta$ as $\theta$. 

We will denote the support of $(U_{0},U_{1},U)$ as $\mathcal{U}:= [\underline{Y},\overline{Y}]^{2} \times [0,1]$. We also assume that the random variables in the vector $(Y,D,Z,U_{0},U_{1},U)$ are all supported on the same probability space $(\Omega,\mathfrak{A},P)$. Under these conditions, Assumption \ref{assump_preliminary} is verified in Appendix \ref{appendix_additional_details_te_A1_to_A3}.

For the factual domain we have the multifunction:
\begin{align}
\bm G^{-} \left( Y,D,Z,\theta \right):= \text{cl}\left\{ (U_{0},U_{1},U) \in \mathcal{U} : \begin{array}{l}
    Y = U_{0}(1-D) + U_{1}D,  \\
    D = \mathbbm{1}\{g(Z) \geq U\}.
  \end{array} \right\}.\label{eq_POM_reverse_correspondence}
\end{align}
Note the closure is taken to ensure that $\bm G^{-}(\cdot,\theta)$ is a closed set for each $\theta$. However, this introduces no additional structure and serves merely as a technical simplification, since $\bm G^{-}(\cdot,\theta)$ as defined above will be almost surely equal to the right hand side of \eqref{eq_POM_reverse_correspondence} without taking the closure, which follows from the assumption that $U$ is continuously distributed. \todo{How will you enforce this?} Close inspection of this multifunction provides some simplification:
\begin{align}
\bm G^{-} \left( Y,D,Z,\theta \right)=\begin{cases}
	\{ Y \} \times [\underline{Y},\overline{Y}] \times [g(Z),1], &\text{ if } D=0,\\
	[\underline{Y},\overline{Y}] \times \{ Y \} \times [0,g(Z)], &\text{ if } D=1.
\end{cases}\label{eq_POM_reverse_correspondence2}
\end{align}
To complete the description of the factual domain, we will impose the independence condition $U \independent Z_{0}|X$ and the mean independence condition $\E[U_{d}|Z] = \E[U_{d}|X]$, for $d \in\{0,1\}$, as a sequence of moment conditions. In particular, since $\mathcal{Z}$ is assumed to be finite, let us partition $\mathcal{Z}$ into the product $\mathcal{Z}=\mathcal{Z}_{0}\times \mathcal{X}$, where $\mathcal{Z}_{0}:=\left\{z_{01}, \ldots, z_{0K} \right\}$ and $\mathcal{X}:= \left\{x_{1},\ldots,x_{L} \right\}$. Now consider the following sequence of moment inequalities:
\begin{align}
\E[\left(D - g(z_{0},x)\right)\mathbbm{1}\{Z_{0}=z_{0},X=x\}] \leq 0,\qquad \forall z_{0} \in \mathcal{Z}_{0}, \, x \in \mathcal{X},\label{eq_pom_mom1}\\
\E[\left(g(z_{0},x) - D\right)\mathbbm{1}\{Z_{0}=z_{0},X=x\}] \leq 0,\qquad \forall z_{0} \in \mathcal{Z}_{0}, \, x \in \mathcal{X},\label{eq_pom_mom2}
\end{align}
and:
\begin{align}
\E[\left(\mathbbm{1}\{U \leq g(z_{0},x)\} - g(z_{0},x)\right)\mathbbm{1}\{X=x\}] \leq 0,\qquad \forall z_{0} \in \mathcal{Z}_{0}, \, x \in \mathcal{X},\, \label{eq_pom_mom3}\\
\E[\left(g(z_{0},x) - \mathbbm{1}\{U \leq g(z_{0},x)\}\right)\mathbbm{1}\{X=x\}] \leq 0, \qquad\forall z_{0} \in \mathcal{Z}_{0},\, x \in \mathcal{X}.\label{eq_pom_mom4}
\end{align}
Together \eqref{eq_pom_mom1} and \eqref{eq_pom_mom2} imply $P(D=1|Z=z) = g(z)$ for all $z \in \mathcal{Z}$, and \eqref{eq_pom_mom3} and \eqref{eq_pom_mom4} imply $P(U\leq g(z)|Z=z) = P(U \leq g(z)|X=x) = g(z)$ for all $z_{0} \in \mathcal{Z}_{0}$ and $x \in \mathcal{X}$. Under finiteness of the support $\mathcal{Z}$, these moment inequalities represent the only observable implications of the independence condition $U \independent Z_{0}|X$. In addition, we will impose the following moment conditions:
\begin{align}
\E\left[t(z_{0},x)-\mathbbm{1}\{Z=z_{0},X=x\}\right] &\leq 0,\qquad\forall z_{0} \in \mathcal{Z}_{0},\, \forall x \in \mathcal{X}, \label{eq_pom_mom5}\\
\E\left[\mathbbm{1}\{Z=z_{0},X=x\}-t(z_{0},x)\right] &\leq 0,\qquad\forall z_{0} \in \mathcal{Z}_{0},\, \forall x \in \mathcal{X}, \label{eq_pom_mom6}
\end{align}
and:
\begin{align}
\E\left[U_{d}\left(\mathbbm{1}\{Z=z_{0},X=x\}\sum_{z_{0} \in \mathcal{Z}_{0}} t(z_{0},x)- \mathbbm{1}\{X=x\}t(z_{0},x) \right)\right] &\leq 0, \,\,\forall z_{0} \in \mathcal{Z}_{0},\, x \in \mathcal{X},\, d\in \{0,1\},\label{eq_pom_mom7}\\
\E\left[U_{d}\left(\mathbbm{1}\{X=x\}t(z_{0},x)-\mathbbm{1}\{Z=z_{0},X=x\}\sum_{z_{0} \in \mathcal{Z}_{0}} t(z_{0},x) \right) \right] &\leq 0, \,\,\forall z_{0} \in \mathcal{Z}_{0},\, x \in \mathcal{X},\, d\in \{0,1\}.\label{eq_pom_mom8}
\end{align}
Together \eqref{eq_pom_mom5} - \eqref{eq_pom_mom8} imply the mean independence condition: $\E[U_{d}|Z] = \E[U_{d}|X]$ for $d \in\{0,1\}$. In particular, \eqref{eq_pom_mom5} and \eqref{eq_pom_mom6} ensure $t(z_{0},x) = P(Z_{0}=z_{0},X=x)$, so that the moment conditions in \eqref{eq_pom_mom7} and \eqref{eq_pom_mom8} imply:
\begin{align*}
\E\left[U_{d}\left(\mathbbm{1}\{Z=z_{0},X=x\}P(X=x)- \mathbbm{1}\{X=x\}P(Z_{0}=z_{0},X=x)\right) \right] &= 0,\,\,\forall z_{0} \in \mathcal{Z}_{0},\, x \in \mathcal{X},\, d\in \{0,1\},
\end{align*}
or equivalently:
\begin{align*}
\E\left[U_{d}\left(\frac{\mathbbm{1}\{Z=z_{0},X=x\}}{P(Z_{0}=z_{0},X=x)}- \frac{\mathbbm{1}\{X=x\}}{P(X=x)}\right) \right] &= 0,\,\,\forall z_{0} \in \mathcal{Z}_{0},\, x \in \mathcal{X},\, d\in \{0,1\}.
\end{align*}
From here, a full verification of Assumption \ref{assumption_factual_domain} for the factual domain, including Effros measurability of the multifunction \eqref{eq_POM_reverse_correspondence2}, is provided in Appendix \ref{appendix_additional_details_te_A1_to_A3}.

With this setup, we might be interested in how the outcome variable changes when the factors $Z$ that determine an individual's treatment decision are modified. For example, let $\Gamma$ denote the set of all measurable functions $\gamma: \mathcal{Z} \to \mathcal{Z}$ (note that there are at most finitely many).\footnote{See \cite{carneiro2011estimating} for a discussion of other possible parameters under this setting.} We can then define:
\begin{align}
Y_{\gamma}^\star=U_{0}(1-D_{\gamma}^\star)+U_{1}D_{\gamma}^\star,\label{eq_POM_star}
\end{align}
where the random variable $D_{\gamma}^\star$ is then given by:
\begin{align*}
D_{\gamma}^{\star} = \mathbbm{1}\{g_{0}(\gamma(Z)) \geq U\}.   
\end{align*}
Note that as in \cite{heckman1999local} and \cite{heckman2005structural}, our counterfactual $\gamma \in \Gamma$ has no direct effect on $(U_{0},U_{1})$.  Our interest is in the properties of the random variable $Y_{\gamma}^\star$, such as its mean or its conditional mean. The multifunction for the counterfactual domain is given by:
\begin{align}
\bm G^\star(Z,U_{0},U_{1},U,\theta,\gamma):=\left\{ (Y_{\gamma}^\star,D_{\gamma}^\star) \in \mathcal{Y} \times \{0,1\} :   \begin{array}{l}
    Y_{\gamma}^\star = U_{0}(1-D_{\gamma}^\star) + U_{1}D_{\gamma}^\star,  \\
    D_{\gamma}^\star = \mathbbm{1}\{g(\gamma(Z)) \geq U\}.
  \end{array} \right\}.\label{eq_POM_counterfactual_correspondence}
\end{align}
Note here we take $\mathcal{Y}^\star = \mathcal{Y}$. Again, close inspection of this multifunction provides some simplification:
\begin{align}
\bm G^\star(Z,U_{0},U_{1},U,\theta,\gamma)=\begin{cases}
	(U_{1},1), &\text{ if } U \leq g(\gamma(Z)),\\
	(U_{0},0), &\text{ if } g(\gamma(Z)) < U.
\end{cases}\label{eq_POM_counterfactual_correspondence2}
\end{align} 
A full verification of Assumption \ref{assumption_counterfactual_domain} for the counterfactual domain, including Effros measurability of the multifunction \eqref{eq_POM_counterfactual_correspondence2}, is provided in Appendix \ref{appendix_additional_details_te_A1_to_A3}.

\end{example}

\subsection{The Policy Transform and Decision Problem}\label{sec_decision_problem}

Throughout the paper we will build on the environment established in the previous section to present a framework for making policy decisions based on the value of any counterfactual object of interest that can be written as an integral of some function of the vector $V_{\gamma}$. In particular, if $\varphi: \Omega \times \Gamma \to \mathbb{R}$ is some measurable function, then we will restrict attention to environments where policymakers are interested in either the \textit{policy transform} or the \textit{conditional policy transform} of $\varphi$, which are defined next. 

\begin{definition}[Policy Transform and Conditional Policy Transform]\label{definition_policy_transform}
Let $\varphi: \Omega \times \Gamma \to \mathbb{R}$ be a bounded and measurable function. The \textit{policy transform} of $\varphi$ is a function $I[\varphi](\gamma): \Gamma \to \mathbb{R}$ given by:
\begin{align}
I[\varphi](\gamma):= \int \varphi(\omega,\gamma) \,dP. \label{eq_integral}
\end{align}
Furthermore, if $\mathfrak{A}' \subset \mathfrak{A}$ is a $\sigma-$algebra, then a conditional policy transform of $\varphi$ given $\mathfrak{A}'$ is a function $\tilde{I}[\varphi]: \Omega \times \Gamma \to \mathbb{R}$ such that (i) $\tilde{I}[\varphi]:\Omega\times\Gamma \to \mathbb{R}$ is $\mathfrak{A}'\otimes\Gamma-$measurable, and (ii) $I[\tilde{I}[\varphi](\cdot,\gamma)\mathbbm{1}_{A}](\gamma) = I[\varphi \mathbbm{1}_{A}](\gamma)$ for every $A \in \mathfrak{A}'$. 
\end{definition}

We will focus on the unconditional policy transform throughout the remainder of the paper, since analogous results hold for the conditional policy transform. In addition, since the relevant random variables in our environment are given in the vector $V_{\gamma}$, we will abuse notation throughout the paper and instead focus on policy transforms of the form:
\begin{align}
I[\varphi](\gamma):= \int_{\Omega} \varphi(V_{\gamma}(\omega)) \,dP = \int_{\mathcal{V}} \varphi(v) \,dP_{V_{\gamma}}, \label{eq_integral2}
\end{align}
which are clearly a special case of the general policy transforms in Definition \ref{definition_policy_transform}.

In the remainder of the paper we take as primitive that the policymaker would like to choose $\gamma$ to maximize the value of the policy transform for some known function $\varphi: \mathcal{V} \to \mathbb{R}$, although all results apply equally to the case where the policymaker wishes to minimize the policy transform.\footnote{After we describe the decision problem, it will be apparent that desire of the policymaker to maximize or minimize the policy transform might be deduced using an axiomatic approach from a preference relation over the space of Borel probability measures on $\mathcal{V}$. We find this idea interesting, but will not pursue it here.} For pedagogical purposes, it is useful to first consider an idealized decision problem. In particular, when (i) the true distribution $P_{Y,Z}$ is known, (ii) the conditional distribution $P_{U|Y,Z}$ is known, and (iii) the counterfactual conditional distribution $P_{Y_{\gamma}^\star|Y,Z,U}$ is known, the policymaker's problem becomes trivial: she can simply compute the policy transform of $\varphi$ and choose the maximizing value of $\gamma$. However, clearly such idealized environments will be rare. Instead, we will consider the more realistic case when the policymaker only has access to an i.i.d. sample of size $n$ from the true distribution $P_{Y,Z}$, and knows only that Assumptions \ref{assump_preliminary}, \ref{assumption_factual_domain}, and \ref{assumption_counterfactual_domain} are satisfied. In such an environment, the policymaker may be unable to compute the policy transform due to (i) lack of perfect knowledge of $P_{Y,Z}$, (ii) lack of knowledge of $P_{U|Y,Z}$ and (iii) lack of knowledge of $P_{Y_{\gamma}^\star|Y,Z,U}$. All three cases can occur when the structural parameters are point- or partially-identified.  

We are now ready to define the decision problem under consideration.

\begin{definition}[The Decision Problem]\label{definition_decision_problem}
The policymaker's decision problem is characterized by:
\begin{enumerate}[label=(\roman*)]
	\item The population, represented by the probability space $(\Omega,\mathfrak{A},P)$.\label{decision_population}
	\item The action (or policy) space, given by $(\Gamma,\mathfrak{B}(\Gamma))$.\label{decision_action_space}
	\item The sample space, given by $(\Psi_{n},\Sigma_{\Psi_{n}}, P_{Y,Z}^{\otimes n})$, where $\Psi_{n} := (\mathcal{Y}\times \mathcal{Z})^{n}$, with typical element $\psi=\{(y_{i},z_{i})\}_{i=1}^{n}$, equipped with the product Borel $\sigma-$algebra $\Sigma_{\Psi_{n}} := (\mathfrak{B}(\mathcal{Y})\otimes\mathfrak{B}(\mathcal{Z}))^{\otimes n}$ and the product measure $P_{Y,Z}^{\otimes n}$. \label{decision_sample_space}
	\item The state space, given by $\mathcal{S}\times \mathcal{P}_{Y,Z}$, where $\mathcal{P}_{Y,Z}$ is the set of all Borel probability measures on $\mathcal{Y}\times \mathcal{Z}$, and $\mathcal{S}$ is the set of all triples $s=(\theta,P_{U|Y,Z},P_{Y_{\gamma}^\star|Y,Z,U})$ such that the pair $(s,P_{Y,Z})$ satisfies:
	\begin{enumerate}
	 	\item $\theta \in \Theta$,
	 	\item $P_{U|Y,Z}(U \in \bm G^{-}\left(Y,Z,\theta\right) | Y=y,Z=z)=1$, $(y,z)$-a.s.,
	 	\item $P_{Y_{\gamma}^\star|Y,Z,U}(Y_{\gamma}^\star \in \bm G^{\star}(Y,Z,U,\theta,\gamma) | Y=y,Z=z,U=u )=1$, $(y,z,u)$-a.s., and
	 	\item the elements $\theta \in \Theta$ and $P_{U|Y,Z}$ satisfy: 
	 	\begin{align}
		\max_{j=1,\ldots,J} \E_{P_{U|Y,Z} \times P_{Y,Z} } [ m_{j}(Y,Z,U,\theta) ] \leq 0.
		\end{align}
	 \end{enumerate}\label{decision_state_space}
\item The feasible statistical decision rules $\mathcal{D}$, with typical element $d$, given by the set of all measurable functions $d: \Psi_{n} \to \Gamma$.\label{decision_rules}\note{Note that the composition of universally measurable functions is universally measurable. Thus, it seems Borel measurability of $d$ is sufficient for the map $\psi\mapsto \inf_{s \in \mathcal{S}} I[\varphi](d(\psi),s)$ to be measurable if $\gamma \mapsto \inf_{s\in \mathcal{S}} I[\varphi](\gamma,s)$ is universally measurable.}
\item The objective function, given by a function $I[\varphi]:\Gamma \times \mathcal{S}\times \mathcal{P}_{Y,Z}\to \mathbb{R}$, called the state-dependent policy transform, which has the expression:
\begin{align}
I[\varphi](\gamma,s):= \int \varphi(v) \, d(P_{Y_{\gamma}^\star|Y,Z,U} \times P_{U|Y,Z}\times P_{Y,Z})\label{eq_welfare}
\end{align}
where $\varphi: \mathcal{V} \to \mathbb{R}$ is a measurable function (where $P_{Y,Z}$ is left implicit when writing $I[\varphi](\gamma,s)$).\todo{Varphi should depend on $\theta$, no?} \label{decision_welfare}
\end{enumerate}

\end{definition}

A few remarks on this definition of our statistical decision problem are in order. In parts \ref{decision_population} and \ref{decision_action_space}, the specification of the population and the action space are somewhat standard, and have been motivated in the previous sections. In part \ref{decision_sample_space}, the sample space is simply taken as the $n-$fold product of the observable space $(\mathcal{Y} \times \mathcal{Z})$. The measure on this space is the $n-$fold product of the true distribution $P_{Y,Z}$, from which we immediately deduce that the sample in $\psi \in \Psi_{n}$ is assumed to be i.i.d. Motivated from the framework in the previous section, part \ref{decision_state_space} indicates that the unobserved state is characterized by a distribution $P_{Y,Z}$ and the triple $(\theta,P_{U|Y,Z},P_{Y_{\gamma}^\star|Y,Z,U})$, where $\mathcal{S}$ corresponds to the set of all such triples that satisfy the model support restrictions and moment conditions introduced in the previous section. In part \ref{decision_rules}, the feasible decision rules $\mathcal{D}$ are characterized by the set of all measurable functions from the sample space $\Psi_{n}$ to the action space $\Gamma$. We will return to this point below.\footnote{Note we might instead allow for randomized decision rules by taking $\mathcal{D}$ to be the set of all measurable functions from $\Psi$ to the set of all \textit{distributions} on $\Gamma$. This is not required for what we have in mind, but is easily accommodated under slightly modified assumptions.} Furthermore, in this paper we will use the terms \textit{policy rules} and \textit{decision rules} interchangeably. Finally, part \ref{decision_welfare} of Definition \ref{definition_decision_problem} introduces the state-dependent policy transform, which is a generalization of the policy transform that allows for it's value to depend on the unknown state from part \ref{decision_state_space}. Evaluated at the true state, the state-dependent policy transform reduces to the policy transform from Definition \ref{definition_policy_transform}. 

Ex-ante (i.e. before observing the sample) each decision rule $d:\Psi_{n} \to \Gamma$ is a random variable. Under some measurability conditions, this implies the state-dependent policy transform $I[\varphi](d(\psi),s)$ is also a random variable. The remaining question is how to use the collection $\{ I[\varphi](d(\psi),s) : (s,P_{Y,Z}) \in \mathcal{S}\times \mathcal{P}_{Y,Z}\}$ to evaluate a given policy rule. It seems self-evident that a policy rule $d \in \mathcal{D}$ should be preferred to a policy rule $d' \in \mathcal{D}$ if for every $P_{Y,Z} \in \mathcal{P}_{Y,Z}$ we have $I[\varphi](d'(\psi),s) \leq I[\varphi](d(\psi),s)$ a.s. for every $s \in \mathcal{S}$; in such a case, $d$ delivers a larger value of the policy transform in every state with probability one, regardless of the distribution $P_{Y,Z}$. Any preference relation over $\mathcal{D}$ that satisfies this condition will be said to \textit{respect weak dominance}.\footnote{We refer to \cite{manski2011actualist} for a similar definition. Also note that our definition implies stochastic dominance of $I[\varphi](d(\psi),s)$ over $I[\varphi](d'(\psi),s)$ for every $(s,P_{Y,Z}) \in \mathcal{S} \times \mathcal{P}_{Y,Z}$. By Strassen's Theorem, our definition will be equivalent to stochastic dominance if we allow for alternative probability spaces for each $(s,P_{Y,Z})$ pair.} However, beyond the requirement that a preference relation respect weak dominance, it is not obvious how a policymaker should (in the prescriptive sense) choose among competing policy options given the decision problem in Definition \ref{definition_decision_problem}.\footnote{This point is raised repeatedly in the work of Charles Manski, and is summarized in \cite{manski2011actualist}.}

Although a particular preference relation is not required in order to find the results in this paper interesting, it will be useful to define our notion of optimality in the policymaker's decision problem. In particular, our results may be especially useful to policymakers that are sympathetic to the following preference relation:
\begin{definition}[PAC Maximin Preference Relation]\label{definition_pac_preference_relation} 
Fix a sample size $n$. For any $\kappa \in (0,1)$ and any $d \in \mathcal{D}$, let $c_{n}(\cdot,\kappa): \mathcal{D} \to \mathbb{R}_{++}$ be the smallest value satisfying:
\begin{align}
  \inf_{P_{Y,Z} \in \mathcal{P}_{Y,Z}} P_{Y,Z}^{\otimes n} \left(\inf_{s \in \mathcal{S}} I[\varphi](d(\psi),s) + c_{n}(d,\kappa) \geq \sup_{\gamma \in \Gamma} \inf_{s \in \mathcal{S}} I[\varphi](\gamma,s)  \right) \geq \kappa.\label{eq_method1}
\end{align}
Then decision rule $d:\Psi_{n} \to \Gamma$ is weakly preferred to (or weakly dominates) decision rule $d':\Psi_{n} \to \Gamma$ at level $\kappa$ and sample size $n$, denoted by $d' \preccurlyeq_{\kappa} d$, if and only if $c_{n}(d,\kappa) \leq c_{n}(d',\kappa)$. The decision rule $d:\Psi_{n} \to \Gamma$ is strictly preferred to (or strictly dominates) decision rule $d':\Psi_{n} \to \Gamma$, denoted by $d' \prec_{\kappa} d$, if and only if $c_{n}(d,\kappa) < c_{n}(d',\kappa)$. A decision rule $d \in \mathcal{D}$ will be called admissible with respect to $\preccurlyeq_{\kappa}$ if there is no decision rule $d' \in \mathcal{D}$ that is strictly preferred to (or strictly dominates) $d$.\note{It seems this will actually be equivalent to Wald's decision problem when $\mathcal{S}=\{s_{0}\}$.}
\end{definition}
This preference relation is named the PAC maximin preference relation given its close connection to the learning framework in the next subsection, which in turn is closely related to the PAC learning model of \cite{valiant1984theory} from computational learning theory. We refer readers to Appendix \ref{appendix_aside_PAC_learnability} where we discuss the notion of PAC learnability from computational learning theory. We will also emphasize the connection further in the next subsection.



For a fixed $\kappa \in (0,1)$, the preference relation from Definition \ref{definition_pac_preference_relation} is a total ordering, meaning any two decision rules $d$ and $d'$ can be compared according to $\preccurlyeq_{\kappa}$. In addition, it has an interpretation in terms of quantiles. In particular, suppose for simplicity that $\mathcal{P}_{Y,Z}$ contains a single distribution $\pi$ and define $Q_{\pi}(\kappa,d)$ as the $\kappa$ quantile (under distribution $\pi$) of the map:
\begin{align}
d \mapsto \sup_{\gamma \in \Gamma} \inf_{s \in \mathcal{S}} I[\varphi](\gamma,s) - \inf_{s \in \mathcal{S}} I[\varphi](d(\psi),s).\label{eq_d_map}
\end{align}
Note that the map in \eqref{eq_d_map} is always positive. Then a decision rule $d \in \mathcal{D}$ will be preferred to a decision rule $d' \in \mathcal{D}$ under $\preccurlyeq_{\kappa}$ if and only if $Q_{\pi}(\kappa,d) \leq Q_{\pi}(\kappa,d')$. Quantile utility maximization has been considered in \cite{manski1988ordinal} and \cite{manski2014quantile}, and axiomatized in \cite{rostek2010quantile}. However, our approach has major differences from these approaches, especially with regards to our treatment of the (sub-)states $s \in \mathcal{S}$.

Providing an axiomatization for the preference relation in Definition \ref{definition_pac_preference_relation} is beyond the scope of this paper. Indeed, there is no reason why a policymaker needs to have the exact preference relation from Definition \ref{definition_pac_preference_relation} in order to find the results in this paper useful or interesting. However, the following result shows that, at a minimum, $\preccurlyeq_{\kappa}$ respects weak dominance, as defined above. 

\begin{proposition}\label{proposition_pac_weak_dominance}
Suppose that Assumptions \ref{assump_preliminary}, \ref{assumption_factual_domain} and \ref{assumption_counterfactual_domain} hold, and that $\varphi: \mathcal{V} \to [\varphi_{\ell b}, \varphi_{ub}] \subseteq \mathbb{R}$ is a bounded and measurable function. Also, suppose that $\gamma \mapsto \inf_{s \in \mathcal{S}} I[\varphi](\gamma,s)$ is (universally) measurable. Let $d,d'\in \mathcal{D}$ be two decision rules, and suppose that for every $P_{Y,Z} \in \mathcal{P}_{Y,Z}$ we have $I[\varphi](d'(\psi),s) \leq I[\varphi](d(\psi),s)$ a.s. for every $s \in \mathcal{S}$. Then for any $\kappa \in (0,1)$ we have $d' \preccurlyeq_{\kappa} d$, where $\preccurlyeq_{\kappa}$ is the preference relation from Definition \ref{definition_pac_preference_relation}; that is, the preference relation $\preccurlyeq_{\kappa}$ respects weak dominance. 
\end{proposition}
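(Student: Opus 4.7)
The plan is to show that the dominance hypothesis transfers from the state-dependent policy transforms $I[\varphi](d(\psi),s)$ to their state-infima $\inf_{s \in \mathcal{S}} I[\varphi](d(\psi),s)$, and then to transfer the ordering through the definition of $c_n(\cdot,\kappa)$ via a monotonicity-of-probability argument.

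First I would fix an arbitrary $P_{Y,Z} \in \mathcal{P}_{Y,Z}$ and interpret the hypothesis as providing a measurable set $\Psi^\star \subseteq \Psi_n$ (possibly depending on $P_{Y,Z}$) with $P_{Y,Z}^{\otimes n}(\Psi^\star) = 1$ on which $I[\varphi](d'(\psi),s) \leq I[\varphi](d(\psi),s)$ holds for every $s \in \mathcal{S}$. (The boundedness of $\varphi$ guarantees both sides are finite, and the universal measurability assumption on $\gamma \mapsto \inf_{s} I[\varphi](\gamma,s)$ ensures $\psi \mapsto \inf_{s} I[\varphi](d(\psi),s)$ and $\psi \mapsto \inf_s I[\varphi](d'(\psi),s)$ are measurable after composition with the Borel-measurable $d,d'$.) Taking the infimum over $s \in \mathcal{S}$ is a monotone operation, so on $\Psi^\star$ I immediately obtain
\begin{align*}
\inf_{s \in \mathcal{S}} I[\varphi](d'(\psi),s) \;\leq\; \inf_{s \in \mathcal{S}} I[\varphi](d(\psi),s).
\end{align*}

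Next, I would denote the constant $V^\star := \sup_{\gamma \in \Gamma} \inf_{s \in \mathcal{S}} I[\varphi](\gamma,s)$ (which does not depend on $\psi$) and, for any $c \geq 0$, compare the events
\begin{align*}
E_d(c) := \Bigl\{\psi : \inf_{s \in \mathcal{S}} I[\varphi](d(\psi),s) + c \geq V^\star\Bigr\}, \qquad E_{d'}(c) := \Bigl\{\psi : \inf_{s \in \mathcal{S}} I[\varphi](d'(\psi),s) + c \geq V^\star\Bigr\}.
\end{align*}
From the pointwise inequality on $\Psi^\star$, $E_{d'}(c) \cap \Psi^\star \subseteq E_d(c) \cap \Psi^\star$, and hence $P_{Y,Z}^{\otimes n}(E_d(c)) \geq P_{Y,Z}^{\otimes n}(E_{d'}(c))$. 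Since $P_{Y,Z}$ was arbitrary, taking the infimum over $P_{Y,Z} \in \mathcal{P}_{Y,Z}$ preserves the inequality.

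Finally, I would invoke the definition of $c_n(\cdot,\kappa)$ as the smallest $c$ ensuring the inner infimum exceeds $\kappa$. Setting $c = c_n(d',\kappa)$, the inequality on the worst-case probabilities gives $\inf_{P_{Y,Z}} P_{Y,Z}^{\otimes n}(E_d(c_n(d',\kappa))) \geq \kappa$, so $c_n(d',\kappa)$ lies in the set of admissible constants for $d$, and minimality yields $c_n(d,\kappa) \leq c_n(d',\kappa)$. By Definition \ref{definition_pac_preference_relation} this is precisely $d' \preccurlyeq_\kappa d$. The main conceptual point to handle carefully is the measurability bookkeeping in step one (which quantifier-order reading of ``a.s.\ for every $s$'' is used); under the uniform-null-set reading the argument above is essentially immediate, while a pointwise-in-$s$ reading would require an additional step using the universal measurability of $\gamma \mapsto \inf_s I[\varphi](\gamma,s)$ to move the ``a.s.'' outside the quantifier over $s$. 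No other step is delicate.
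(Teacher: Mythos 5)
Your argument is correct and follows essentially the same route as the paper, which simply packages your envelope-inequality-plus-event-monotonicity step into an auxiliary result (Lemma \ref{lemma_on_weak_dominance}), applied with $c(P)=\sup_{\gamma\in\Gamma}\inf_{s\in\mathcal{S}}I[\varphi](\gamma,s)$ together with the same observation that $\psi\mapsto\inf_{s\in\mathcal{S}}I[\varphi](d(\psi),s)$ is universally measurable as a composition of universally measurable maps. The quantifier-order point you flag is genuine but is treated identically in the paper: the proof of Lemma \ref{lemma_on_weak_dominance} likewise passes from ``a.s.\ for every $s$'' to a single exceptional null set before taking the infimum over states, so your uniform-null-set reading matches what the paper implicitly assumes.
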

\begin{proof}
See Appendix \ref{appendix_proofs}.
\end{proof}

\begin{remark}
Universal measurability is a weaker requirement then Borel measurability, and is defined in Appendix \ref{appendix_measurability}. Also, in Appendix \ref{appendix_measurability} we show that the map $\gamma \mapsto \inf_{s \in \mathcal{S}} I[\varphi](\gamma,s)$ is universally measurable, although the result and proof relies on Assumption \ref{assumption_error_bound} introduced in the next section. Since Assumption \ref{assumption_error_bound} has not yet been introduced at this point, we impose (universal) measurability of $\gamma \mapsto \inf_{s \in \mathcal{S}} I[\varphi](\gamma,s)$ as a separate assumption in this proposition. 
\end{remark}

Our main interest in the preference relation from Definition \ref{definition_pac_preference_relation}---especially versus other preference relations encountered in frequentist decision theory---is its close connection to the PAC learning framework, which allows us to use a rich set of results from statistical learning theory and empirical process theory to study its theoretical properties. Before formally introducing this connection, we will first revisit our examples to illustrate the various definitions presented in Definition \ref{definition_decision_problem}. 

\setcounter{example}{0}
\begin{example}[Simultaneous Discrete Choice (Cont'd)]
For the simultaneous discrete choice example, recall that our interest is in the properties of the counterfactual random variable $Y_{k,\gamma}^\star$, such as its mean or its conditional mean. For the sake of illustration, we will focus on the quantity:
\begin{align}
I[\varphi](\gamma) = \int_{\Omega} \mathbbm{1}\{Y_{k,\gamma}^\star(\omega) = 1 \} \,dP, \label{eq_policy_transform_sdc}
\end{align}
which is a counterfactual choice probability. Note this quantity is the policy transform of the function $\varphi(\omega,\gamma) = \mathbbm{1}\{Y_{k,\gamma}^\star(\omega) = 1 \}$. Without much additional complication, we might instead be interested in the conditional choice probability $\E[\mathbbm{1}\{Y_{k,\gamma}^\star(\omega) = 1 \}|Z]$; it is easily verified that $\tilde{I}[\varphi](\omega,\gamma)= \E[\varphi(\omega,\gamma)|Z](\omega)$, with $\varphi(\omega,\gamma) = \mathbbm{1}\{Y_{k,\gamma}^\star(\omega) = 1 \}$, is a conditional policy transform.\footnote{Indeed, by definition this quantity is measurable with respect to $\sigma(Z)$, and satisfies:
\begin{align}
I[\tilde{I}[\varphi](\cdot,\gamma)\mathbbm{1}_{A}](\gamma) = \int  \E[\varphi(\omega,\gamma)|Z](\omega) \mathbbm{1}_{A}(\omega) \, dP  = \int \mathbbm{1}\{Y_{k,\gamma}^\star(\omega) = 1 \} \mathbbm{1}_{A}(\omega)  \, dP =I[\varphi \mathbbm{1}_{A}](\gamma),\label{eq_cond_policy_sdc}
\end{align}
for every $A \in \sigma(Z)$. } Throughout we will suppose the policymaker is interested in selecting the policy $\gamma \in \Gamma$ that maximizes the quantity \eqref{eq_policy_transform_sdc}. We can now formally define the policymaker's decision problem. The population is given by the probability space $(\Omega,\mathfrak{A},P)$ and the action space is given by $(\Gamma,\mathfrak{B}(\Gamma))$, where $\Gamma$ is the set of all functions $\gamma=(\gamma_{k})_{k=1}^{K}$ with $\gamma_{k}: \mathcal{Z}\times \mathcal{Y}^{K-1} \to \mathcal{Z}\times \mathcal{Y}^{K-1}$ and $\mathfrak{B}(\Gamma)$ can be taken as the power set of $\Gamma$.\footnote{Since $\mathcal{Z}$ and $\mathcal{Y}$ are finite, both $\Gamma$ and $\mathfrak{B}(\Gamma)$ contain at most finitely many elements.} The sample space in this example is given by $\Psi_{n}$, which is all possible realizations of the $n$ vectors $\{(y_{i},z_{i})\}_{i=1}^{n}$. Each state of the world is indexed by a pair $(\theta,P_{U|Y,Z})$ satisfying the support restriction given by \eqref{eq_DC_reverse_correspondence} and the moment conditions \eqref{eq_sdc_mom1} and \eqref{eq_sdc_mom2}. The state dependent policy transform is given by:
\begin{align*}
I[\varphi](\gamma,s):= \int \mathbbm{1}\{U_{k} \leq \pi_{k}(\gamma(Z_{k},Y_{-k});\theta) \} \,dP_{U|Y,Z}dP_{Y,Z}.
\end{align*}
A feasible statistical decision rule is then any measurable function $d:\Psi_{n} \to \Gamma$ that selects a policy indexed by $\gamma$ given access to an $n-$sample from $\Psi_{n}$. 
\end{example}

\begin{example}[Program Evaluation (Cont'd)]
For the program evaluation example, recall that our interest is in the properties of the random variable $Y_{\gamma}^\star$, such as its mean or its conditional mean. For the sake of illustration, we will focus on the average outcome under some counterfactual policy $\gamma \in \Gamma$, given by $\E[Y_{\gamma}^\star]$. Note that taking $\varphi(\omega,\gamma) = Y_{\gamma}^\star(\omega) \left(:= Y^\star(\omega,\gamma)\right)$, it is then clear that $\E[Y_{\gamma}^\star] = I[\varphi](\gamma)$, so that the average effect of a counterfactual policy is the policy transform of the random variable $Y_{\gamma}^\star(\omega)$. Without much additional complication, we might instead be interested in the conditional average effect $\E[Y_{\gamma}^\star|X]$. It is easily verified that $\tilde{I}[\varphi](\omega,\gamma)= \E[\varphi(\omega,\gamma)|X](\omega)$, with $\varphi(\omega,\gamma) = Y_{\gamma}^\star(\omega)$, is a conditional policy transform.\footnote{Indeed, by definition this quantity is measurable with respect to $\sigma(X)$, and satisfies:
\begin{align}
I[\tilde{I}[\varphi](\cdot,\gamma)\mathbbm{1}_{A}](\gamma) = \int  \E[\varphi(\omega,\gamma)|X](\omega) \mathbbm{1}_{A}(\omega) \, dP  = \int Y_{\gamma}^\star(\omega) \mathbbm{1}_{A}(\omega)  \, dP =I[\varphi \mathbbm{1}_{A}](\gamma),\label{eq_cond_policy_te}
\end{align}
for every $A \in \sigma(X)$. } We will assume throughout that the policymaker is interested in maximizing the value of $\E[Y_{\gamma}^\star]$. We can now formally define the policymaker's decision problem. The population is given by the probability space $(\Omega,\mathfrak{A},P)$ and the action space is given by $(\Gamma,\mathfrak{B}(\Gamma))$, where $\Gamma$ is the set of all functions $\gamma: \mathcal{Z} \to \mathcal{Z}$ and $\mathfrak{B}(\Gamma)$ is the power set of $\Gamma$.\footnote{Since $\mathcal{Z}$ is finite, both $\Gamma$ and $\mathfrak{B}(\Gamma)$ contain at most finitely many elements.} The sample space is given by $\Psi_{n} = (\mathcal{Y} \times \{0,1\} \times \mathcal{Z})^{n}$ with a typical element $\psi= ((y_{i},d_{i},z_{i}))_{i=1}^{n}$. The state space $\mathcal{S}$ is given by $s=(\theta,P_{U_{0},U_{1},U|Y,Z}, P_{Y_{\gamma}^{\star}|U_{0},U_{1},U,Y,Z})$, where $P_{U_{0},U_{1},U|Y,Z}$ and $P_{Y_{\gamma}^{\star}|U_{0},U_{1},U,Y,Z}$ are any random variables that satisfy the support restriction \eqref{eq_POM_reverse_correspondence} and moment conditions \eqref{eq_pom_mom1} - \eqref{eq_pom_mom6}. Finally, a feasible statistical decision rule is any measurable function $d:\Psi_{n} \to \Gamma$ that selects a policy indexed by $\gamma$ given access to an $n-$sample from $\Psi_{n}$. 
\end{example}

\subsection{A Roadmap to the Theoretical Results: Ex-ante and Ex-post Analyses}

With the policymaker's decision problem defined in the previous subsection, our upcoming theoretical results can be divided according to whether they are applicable ex-ante (i.e. before observing the sample) or ex-post (i.e. after observing the sample). 

Recall the preference relation from Definition \ref{definition_pac_preference_relation}. Under this preference relation, the ``performance'' or ``quality'' of a decision rule $d$ can be measured using the value $c_{n}(d,\kappa)$. Thus, the value of $c_{n}(d,\kappa)$ will be a major focus of both the ex-ante and ex-post theoretical analyses in the remainder of the paper. Our main focus in the ex-ante theoretical results is establishing sufficient conditions for learnability of a policy space, which we will discuss further in this subsection. Our main focus for the ex-post theoretical analysis is in establishing bounds on the value of $c_{n}(d,\kappa)$ for certain decision rules, as well as bounds on the set of decision rules $d \in \mathcal{D}$ that obtain a small value of $c_{n}(d,\kappa)$.

\subsubsection{Policy Space Learnability}

To understand the ex-ante theoretical analysis, we must formally introduce the concept of policy space learnability, named because of its connection to notions of learnability from computational learning theory. Intuitively, a policy space $\Gamma$ will be learnable if, for some decision rule $d \in \mathcal{D}$, the value $c_{n}(d,\kappa)$ from Definition \ref{definition_pac_preference_relation} can be made arbitrarily small as $n$ increases. This concept will be made precise in this subsection.  

A review of concepts of learnability from computational learning theory is provided in Appendix \ref{appendix_aside_PAC_learnability}. We argue that, under the preference relation from Definition \ref{definition_pac_preference_relation}, the conceptual differences between the problem of policy choice and the problem of selecting an optimal classifier in a statistical learning setting are smaller than they may initially appear. In both settings we wish to select a decision rule based on a finite sample that will perform well, based on similar criteria, in samples yet unseen. The essential difference between the environments is that the performance of a counterfactual policy is unobservable, even for the sample in hand. Of course this is not an issue if the policymaker has an econometric model that can used to determine the counterfactual outcomes of the policy experiment. 

The general model from the previous subsections will serve exactly this purpose. Given the preference relation from Definition \ref{definition_pac_preference_relation}, the policymaker is presented with a decision problem that is remarkably similar to a learning problem, which is apparent when the following definition is compared with the definition of PAC Learnability from Appendix \ref{appendix_aside_PAC_learnability}.

\begin{definition}[PAMPAC Learnability]\label{definition_pampac_learnability}
Under Assumptions \ref{assump_preliminary}, \ref{assumption_factual_domain}, and \ref{assumption_counterfactual_domain}, a policy space $\Gamma$ is policy agnostic maximin PAC-learnable (PAMPAC) with respect to the policy transform of $\varphi:\mathcal{V} \to \mathbb{R}$ if there exists a function $\zeta_{\Gamma}: \mathbb{R}_{++}\times (0,1) \to \mathbb{N}$ such that, for any $(c,\kappa) \in  \mathbb{R}_{++}\times (0,1)$ and any distribution $P_{Y,Z}$ over $\mathcal{Y}\times \mathcal{Z}$, if $n \geq \zeta_{\Gamma}(c,\kappa)$ then there is some decision procedure $d: \Psi_{n} \to \Gamma$ satisfying:
\begin{align}
\inf_{P_{Y,Z} \in \mathcal{P}_{Y,Z}} P_{Y,Z}^{\otimes n} \left(  \inf_{s\in \mathcal{S}} I[\varphi](d(\psi),s) + c \geq \sup_{\gamma \in \Gamma}\inf_{s \in \mathcal{S}} I[\varphi](\gamma,s)  \right) \geq \kappa. \label{eq_PAC_crit}
\end{align}
\end{definition}

That is, a policy space is PAMPAC learnable if there is exists some decision rule $d:\Psi_{n} \to \mathbb{R}$ that, in the worst-case (sub-)state $s \in \mathcal{S}$, closely approximates the value:
\begin{align*}
\sup_{\gamma \in \Gamma}\inf_{s \in \mathcal{S}} I[\varphi](\gamma,s),
\end{align*}
with high probability for a sufficiently large (but finite) sample.\footnote{A nearly identical definition can be given for policy agnostic minimax PAC-learnability, with the exception that the decision procedure $d: \Psi_{n} \to \Gamma$ must satisfy:
\begin{align}
\inf_{P_{Y,Z} \in \mathcal{P}_{Y,Z}} P_{Y,Z}^{\otimes n} \left( \sup_{s\in \mathcal{S}} I[\varphi](d(\psi),s) - c \leq \inf_{\gamma \in \Gamma}\sup_{s \in \mathcal{S}} I[\varphi](\gamma,s) \right) \geq \kappa. 
\end{align}
} In terms of the preference relation from Definition \ref{definition_pac_preference_relation}, PAMPAC learnability implies that, as the sample size grows, every point in $(c,\kappa)-$space must eventually (i.e. for large enough $n$) lie above the function $c_{n}(d,\cdot):(0,1)\to \mathbb{R}_{++}$ for some decision rule $d$. This idea is illustrated in Figure \ref{fig_pampac_learnability}. Framed in this manner, we see that PAMPAC learnability is not required to determine the admissible decision rules or to make a policy choice. However, there may be substantial ex-ante limitations on the theoretical performance of any given decision rule in environments that are not PAMPAC learnable, making it an important object of theoretical analysis.
\begin{figure}[!t]
\centering
\includegraphics[scale=0.65]{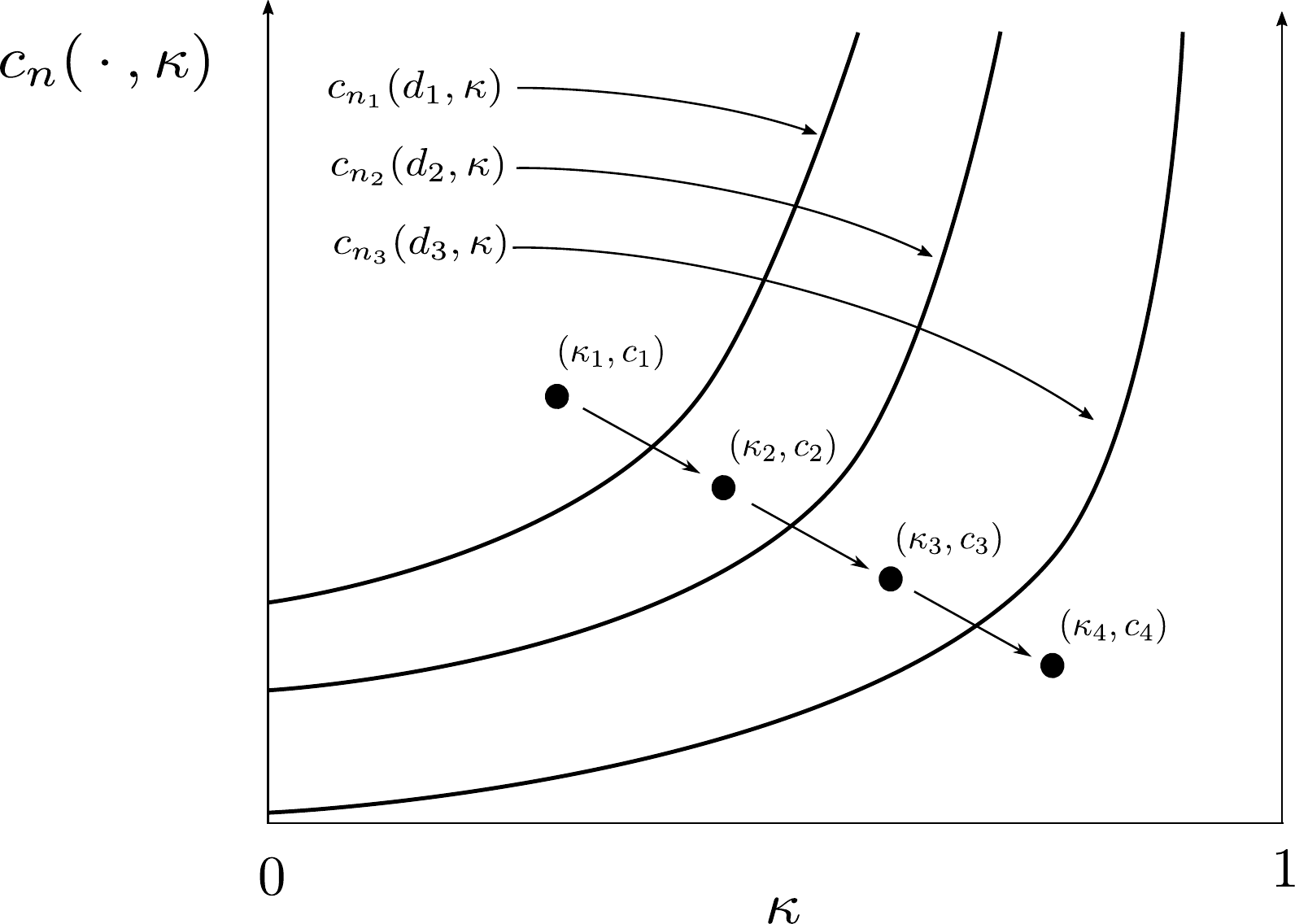}
\caption{This figure illustrates the idea of PAMPAC learnability from Definition \ref{definition_pampac_learnability}. Given a pair $(c,\kappa)$, PAMPAC learnability guarantees that there is some finite $n$ and some decision rule $d:\Psi_{n} \to \Gamma$ such that the graph of $c_{n}(d,\kappa)$ lies entirely below the point $(c,\kappa)$. For example, for $(c_{1},\kappa_{1})$ in the figure, there exists a sample size $n_{1}$ and a decision rule $d_{1}$ such that \eqref{eq_PAC_crit} is satisfied. Note that \eqref{eq_PAC_crit} is also satisfied for the points $(c_{2},\kappa_{2})$ and $(c_{3},\kappa_{3})$ at $n_{2}$ and $d_{2}$, and $n_{3}$ and $d_{3}$, respectively. To verify PAMPAC learnability, the same must hold for all points $(c,\kappa)$; in particular, in the figure we would need to find a sample size $n_{4}$ and decision rule $d_{4}$ such that the graph of $c_{n_{4}}(d_{4},\kappa)$ lies entirely below the point $(c_{4},\kappa_{4})$.}\label{fig_pampac_learnability}
\end{figure}

Despite the fact that PAMPAC learnability may appear to be a weak notion, there are trivial environments where a policy space $\Gamma$ may not be PAMPAC learnable. 

\setcounter{example}{0}
\begin{example}[Simultaneous Discrete Choice (cont'd)]
Consider the general setup of Example \ref{example_simultaneous_discrete_choice}. Suppose for simplicity that $K=1$, and consider the following modifications. Let $\mathcal{Z}=[-1,1]$ and $\Theta=[-1,1]$ and let $\pi_{k}(Z_{k},Y_{-k};\theta)=\pi_{k}(Z_{k};\theta) = \sin(Z_{k}/\theta)$. Then $Y_{k}$ is determined by the equation:
\begin{align*}
Y_{k} = \mathbbm{1}\{\sin(Z_{k}/\theta) \geq U_{k} \}.
\end{align*}
Now consider a policy space $\Gamma$ that consists of all functions $\gamma: \mathcal{Z} \to \mathcal{Z}$, and suppose we are interested in the policy transform:
\begin{align*}
I[\varphi](\gamma):= \int_{\Omega} \varphi(\omega,\gamma)\,dP= \int_{\Omega} \mathbbm{1}\{ Y_{k,\gamma}^\star(\omega)=1\} \,dP,
\end{align*}
where $\varphi(\omega,\gamma)= \mathbbm{1}\{ Y_{k,\gamma}^\star(\omega)=1\}$ and:
\begin{align*}
Y_{k,\gamma}^\star = \mathbbm{1}\{\sin(\gamma(Z_{k})/\theta) \geq U_{k} \}.
\end{align*}
In this case, we claim the policy space $\Gamma$ may not be PAMPAC learnable with respect to the policy transform of $\varphi$. \todo{Prove rigorously}
\end{example}
It is important to realize that the possible failure of PAMPAC learnability does not hinge on the choice of the sine function in this example, which is used for illustrative purposes only. Indeed, the following example shows that the idea is more general.

\setcounter{example}{1}

\begin{example}[Program Evaluation (cont'd)]
Consider the general setup of Example \ref{example_program_evaluation}, with the following modifications. Let $\mathcal{Z}=[-1,1]$ and let $\Theta$ denote the space of continuous functions with values in $[-1,1]$. Otherwise, keep all other aspects of the factual domain the same. Now consider a policy space $\Gamma$ that consists of all continuous functions $\gamma: \mathcal{Z} \to \mathcal{Z}$. Suppose we are still interested in the policy transform of $\varphi(\omega,\gamma) = Y_{\gamma}^\star(\omega)$, where:
\begin{align}
Y_{\gamma}^\star=U_{0}(1-D_{\gamma}^\star)+U_{1}D_{\gamma}^\star,
\end{align}
and where the random variable $D_{\gamma}^\star$ is given by:
\begin{align*}
D_{\gamma}^{\star} = \mathbbm{1}\{\theta_{0}(\gamma(Z)) \geq U\}.   
\end{align*}
In this case, we claim the policy space $\Gamma$ may not be PAMPAC learnable with respect to the policy transform of $\varphi$. \todo{Prove rigorously}
\end{example}
These examples illustrate that there may be limits to which policy spaces are learnable. In the first example, learnability may fail because the structural function determining the counterfactual values of $Y_{k,\gamma}^\star$ is too ``complex," and so cannot be adequately approximated (or ``learned'') with any finite amount of data. A similar explanation applies to the second example, in particular to the structural function determining the values of $D_{\gamma}^\star$. In the next sections we will explore sufficient conditions for the learnability of a policy space that are precisely related to constraints on the complexity of certain function spaces. After establishing a particular policy space is learnable, which is an ex-ante (i.e. before observing the sample) notion, we will then discuss how to evaluate particular decision rules, which is an ex-post (i.e. after observing the sample) notion. Both components will be relevant to the theoretical evaluation of the decision problem.

\subsection{The Path Forward}

As is suggested by \eqref{eq_PAC_crit} in Definition \ref{definition_pampac_learnability}, and as was discussed in the introduction, in order to determine whether a given policy space $\Gamma$ is PAMPAC learnable it is useful to first provide a characterization of the envelope functions:
\begin{align*}
I_{\ell b}[\varphi](\gamma):=\inf_{s \in \mathcal{S}} I[\varphi](\gamma,s), &&I_{u b}[\varphi](\gamma):=\sup_{s \in \mathcal{S}} I[\varphi](\gamma,s).
\end{align*}
Note that, at the true distribution $P_{Y,Z}$, the function $I_{\ell b}[\varphi](\gamma)$ serves as a lower bound on the policy transform $I[\varphi](\gamma)$. Similarly, the function $I_{u b}[\varphi](\gamma)$ serves as an upper bound. Recall that this idea was illustrated in Figure \ref{fig_policy_choice} in the introduction. 

In the case of PAMPAC learnability, if a tractable characterization of the lower envelope function $I_{\ell b}[\varphi](\gamma)$ can be provided under some conditions, then determining whether a policy space is PAMPAC learnable reduces to the problem of finding a decision rule $d: \Psi_{n} \to \Gamma$ that satisfies:
\begin{align}
\inf_{P_{Y,Z} \in \mathcal{P}_{Y,Z}} P_{Y,Z}^{\otimes n} \left( \sup_{\gamma \in \Gamma} I_{\ell b}[\varphi](\gamma)  - I_{\ell b}[\varphi](d(\psi)) \leq c \right) \geq \kappa, 
\end{align}
for large enough (but finite) $n$. Thus in the next section we focus on obtaining a tractable characterization of the envelope functions before returning to the problem of policy choice in Section \ref{sec_policy_analysis}. Once a tractable characterization of the lower (or upper) envelope function is provided, we will then present sufficient conditions for PAMPAC learnability. In addition to its importance to our ex-ante analysis, we will see that a tractable characterization of the envelope functions will also be key to our ex-post analysis of the policymaker's decision problem in Section \ref{section_ex_post_analysis}.

\section{Envelope Functions for the Policy Transform}\label{section_envelope_functions}

\subsection{Preliminaries}

In this section we derive a useful characterization of the envelope functions $I_{\ell b}[\varphi](\gamma)$ and $I_{u b}[\varphi](\gamma)$ defined in the previous section. We will show that these envelope functions can be written as the value functions of optimization problems parameterized by $\gamma \in \Gamma$. Our specific characterization will be important when deriving our learnability results, as well as for our ex-post finite-sample analysis in the next sections. However, for those interested in partial identification, the results in this section may be of substantial separate interest.

We first define the identified set for the structural parameters and policy transform before presenting our main result for this section. In general, these identified sets must be defined \textit{relative} to a distribution $P_{Y,Z}$.\footnote{See, for example, Definition 3 in \cite{chesher2017generalized} and the surrounding discussion.} For notational simplicity this is kept implicit throughout this section. 

We now begin by introducing some additional notation. For assistance with some of the notation in the next definition, the reader is referred to Appendix \ref{appendix_preliminaries}, which discusses the notion of selectionability from a random set. 
\begin{definition}[Distributions of Selections]\label{definition_collections}
The collection $\mathcal{P}_{U|Y,Z}(\theta)$ contains all regular conditional probability measures $P_{U|Y,Z}$ such that each $P_{U|Y,Z} \in \mathcal{P}_{U|Y,Z}(\theta)$ is the distribution of some selection $U \in Sel(\bm G^{-}(\cdot,\theta))$; that is:\footnote{Clearly the collection $\mathcal{P}_{U|Y,Z}(\theta)$ also depends on $P_{Y,Z}$, although we suppress this dependence for notational simplicity throughout.}
\begin{align}
\mathcal{P}_{U|Y,Z}(\theta) := \left\{P_{U|Y,Z} : U \sim P_{U|Y,Z} \text{ for some } U \in Sel(\bm G^{-}(\cdot,\theta))\right\}.\label{eq_definition_p_u_yz}
\end{align}
Furthermore, the collection $\mathcal{P}_{Y_{\gamma}^\star|Y,Z,U}(\theta,\gamma)$ contains all regular conditional probability measures $P_{Y_{\gamma}^\star|Y,Z,U}$ such that each $P_{Y_{\gamma}^\star|Y,Z,U} \in \mathcal{P}_{Y_{\gamma}^\star|Y,Z,U}(\theta,\gamma)$ is the distribution of some selection $Y_{\gamma}^\star \in Sel(\bm G^{\star}(\cdot,\theta,\gamma))$; that is:\footnote{Clearly the collection $\mathcal{P}_{Y_{\gamma}^\star|Y,Z,U}(\theta,\gamma)$ also depends on $P_{Y,Z,U}$, although we suppress this dependence for notational simplicity throughout.}
\begin{align}
\mathcal{P}_{Y_{\gamma}^\star|Y,Z,U}(\theta,\gamma) := \left\{P_{Y_{\gamma}^\star|Y,Z,U}(\theta,\gamma) : Y_{\gamma}^\star \sim P_{Y_{\gamma}^\star|Y,Z,U}(\theta,\gamma) \text{ for some } Y_{\gamma}^\star \in Sel(\bm G^{\star}(\cdot,\theta,\gamma))\right\}.\label{eq_definition_p_ystar_uyz}
\end{align}
\end{definition}

We will see shortly that compactness of $\mathcal{U}$ from Assumption \ref{assump_preliminary} is quite convenient. Indeed, note that under compactness of $\mathcal{U}$, the collection $\mathcal{P}_{U|Y,Z}(\theta)$ is uniformly tight for any $\theta$. If $\mathcal{P}_{U|Y,Z}(\theta)$ is also closed in the $\text{weak}^{*}$ topology, then the collection $\mathcal{P}_{U|Y,Z}(\theta)$ is compact in the $\text{weak}^{*}$ topology, which allows for a simplification of the statement and proofs of many of the results. However, by the fact that $\bm G^{-}$ is closed, this latter result follows directly from the fact that every selection $U \in Sel \left( \bm G^{-}(\cdot,\theta) \right)$ is supported by a compact set.\footnote{See \cite{corbae2009introduction} Theorem 9.9.2 on p. 575, as well as the surrounding discussion.} Thus, throughout our exposition we can use the fact that $\mathcal{P}_{U|Y,Z}(\theta)$ is compact in the $\text{weak}^{*}$ topology.\note{This seems to be the only place where the fact that $\bm G^{-}$ is closed is useful.} 

Beyond the simplifications that come with this result, it also solves a meaningful issue related to selections from identically distributed random sets. Indeed, two identically distributed random sets may have different sets of measurable selections, although the $\text{weak}^{*}$ closure of their measurable selections will always coincide.\footnote{See \cite{molchanov2017theory} Theorem 1.4.3 on p. 79.} The issue is thus entirely resolved by compactness of $\mathcal{U}$, which ensures the collection $\mathcal{P}_{U|Y,Z}(\theta)$ is closed in the $\text{weak}^{*}$ topology; in other words, under Assumptions \ref{assump_preliminary} and \ref{assumption_factual_domain}, this means two identically distributed random sets $\bm G^{-}(Y,Z,\theta)$ and $\bm G^{-}(Y',Z',\theta)$ (see Definition \ref{definition_identically_distributed_random_sets} in Appendix \ref{appendix_preliminaries}) will have the same set of measurable selections.

With the additional notation afforded by Definition \ref{definition_collections}, we now have the following definition of the identified set of structural parameters:
\begin{definition}[Identified Set of Structural Parameters]\label{definition_identified_set}
Under Assumptions \ref{assump_preliminary} and \ref{assumption_factual_domain}, the identified set $\Theta^{*}$ of structural parameters (with respect to the distribution $P_{Y,Z}$) is given by:
\begin{align}
\Theta^{*}:= \left\{\theta \in \Theta :\inf_{P_{U|Y,Z} \in \mathcal{P}_{U|Y,Z}(\theta)} \max_{j=1,\ldots,J} \,\,\E_{P_{U|Y,Z}\times P_{Y,Z}} \left[ m_{j}(y,z,u,\theta) \right] \leq 0 \right\}.
\end{align}
\end{definition}
Compactness of $\mathcal{P}_{U|Y,Z}(\theta)$ in the $\text{weak}^{*}$ topology, combined with boundedness of the moment conditions, ensures that the infimum in the definition of $\Theta^{*}$ is obtained.\footnote{This follows from the extreme value theorem after noting the map $P_{U|Y,Z} \mapsto \E_{P_{U|Y,Z}\times P_{Y,Z}} \left[ m_{j}(y,z,u,\theta) \right]$ is continuous when the moment function $m_{j}$ is uniformly bounded.} Although our focus in this paper is not on the identified set of structural parameters, this definition will be helpful when providing a definition of the identified set for the policy transform, as well as in the proofs. 

To state the definition of the identified set for the policy transform, it will be useful for us to first define the following function:
\begin{align}
&I^{*}[\varphi](\theta,\gamma,I, P_{Y_{\gamma}^\star|Y,Z,U}, P_{U|Y,Z})\nonumber\\
&\qquad\qquad:= \max\left\{ \,\,\left|\E_{P_{Y_{\gamma}^\star|Y,Z,U} \times P_{U|Y,Z}\times P_{Y,Z}} \left[ \varphi(V_{\gamma})-I \right] \right|,\max_{j=1,\ldots,J} \,\,\E_{P_{U|Y,Z}\times P_{Y,Z}} \left[ m_{j}(y,z,u,\theta) \right]\right\}.\label{eq_g_function}
\end{align}
Intuitively, this function is less than zero if and only if (i) all moment conditions are satisfied at the distribution $P_{Y,Z}$ and the pair $(\theta,P_{U|Y,Z})$, and (ii) if the point ``$I$'' is the resulting value of the policy transform for the inputs $(\theta,\gamma, P_{Y_{\gamma}^\star|Y,Z,U}, P_{U|Y,Z})$. As such, it represents all the conditions necessary for the point ``$I$'' to be included in the identified set for the policy transform. We now have the following definition:
\begin{definition}[Identified Set for Policy Transforms]\label{definition_identified_set_counterfactual}
Under Assumptions \ref{assump_preliminary}, \ref{assumption_factual_domain}, and \ref{assumption_counterfactual_domain}, for any $\gamma \in \Gamma$ the identified set for $I[\varphi](\gamma)$ (with respect to the distribution $P_{Y,Z}$) is given by:
\begin{align}
\mathcal{I}^{*}[\varphi](\gamma) := \bigcup_{\theta \in \Theta^{*}} \mathcal{I}[\varphi](\theta,\gamma),
\end{align}
where:
\begin{align}
\mathcal{I}[\varphi](\theta,\gamma):= \bigg\{ I \in \mathbb{R} : \exists P_{U|Y,Z} \in \mathcal{P}_{U|Y,Z}(\theta) \text{ and } &P_{Y_{\gamma}^\star|Y,Z,U} \in \mathcal{P}_{Y_{\gamma}^\star|Y,Z,U}(\theta,\gamma) \nonumber\\
&\text{ satisfying } I^{*}[\varphi]\left( \theta,\gamma,I, P_{Y_{\gamma}^\star|Y,Z,U}, P_{U|Y,Z}\right)\leq 0\bigg\}.\label{eq_identified_integral_prog_ub}
\end{align}

\end{definition}

Our main result in this section will attempt to provide a more insightful characterization of the identified set for policy transforms, which will also be vital for the problem of policy choice considered in the next section. However, before stating our main identification result, we require the following technical assumption. 

\begin{assumption}[Error Bounds]\label{assumption_error_bound}
(i) (Linear Minorant) There exists values $\delta>0$ and $C_{1}>0$ such that for every $\theta \in \Theta$:
\begin{align}
\inf_{P_{U|Y,Z} \in \mathcal{P}_{U|Y,Z}(\theta)}\max_{j=1,\ldots,J} |\E_{P_{U|Y,Z}\times P_{Y,Z}} \left[ m_{j}(y,z,u,\theta) \right]|_{+} \geq C_{1} \min\{\delta,d(\theta,\Theta^{*})\}.\label{eq_errorbound_1}
\end{align}
(ii) (Local Counterfactual Robustness) There exists a value $C_{2} \geq 0$ such that for any $\theta \in \Theta_{\delta}^{*} := \{ \theta : d(\theta, \Theta^{*}) \leq \delta\} $:
\begin{align}
&\inf_{P_{U|Y,Z} \in \mathcal{P}_{U|Y,Z}(\theta)}\inf_{P_{Y_{\gamma}^\star|Y,Z,U} \in \mathcal{P}_{Y_{\gamma}^\star|Y,Z,U}(\theta,\gamma)} \int \varphi(v)  \, dP_{V_{\gamma}}\nonumber\\
&\qquad\qquad\qquad\geq \inf_{\theta^{*} \in \Theta^{*}}\inf_{P_{U|Y,Z} \in \mathcal{P}_{U|Y,Z}(\theta^{*})}\inf_{P_{Y_{\gamma}^\star|Y,Z,U} \in \mathcal{P}_{Y_{\gamma}^\star|Y,Z,U}(\theta^{*},\gamma)} \int \varphi(v)  \, dP_{V_{\gamma}}  - C_{2} d(\theta,\Theta^{*}),\label{eq_errorbound_2}
\end{align}
and:
\begin{align}
&\sup_{P_{U|Y,Z} \in \mathcal{P}_{U|Y,Z}(\theta)}\sup_{P_{Y_{\gamma}^\star|Y,Z,U} \in \mathcal{P}_{Y_{\gamma}^\star|Y,Z,U}(\theta,\gamma)} \int \varphi(v)  \, dP_{V_{\gamma}}\nonumber\\
&\qquad\qquad\qquad\leq \sup_{\theta^{*} \in \Theta^{*}}\sup_{P_{U|Y,Z} \in \mathcal{P}_{U|Y,Z}(\theta^{*})}\sup_{P_{Y_{\gamma}^\star|Y,Z,U} \in \mathcal{P}_{Y_{\gamma}^\star|Y,Z,U}(\theta^{*},\gamma)} \int \varphi(v)  \, dP_{V_{\gamma}}  + C_{2} d(\theta,\Theta^{*}).\label{eq_errorbound_3}
\end{align}

\end{assumption}
Intuitively, Assumption \ref{assumption_error_bound} makes two statements. First, part (i) of the assumption is a global condition that requires that, whenever $\theta \in \Theta \setminus \Theta^{*}$, there is at least one moment function that can be bounded below by the function on the right side of \eqref{eq_errorbound_1}. In general this condition is very similar to previous conditions in the literature; see, for example, the  ``partial identification condition'' in \cite{chernozhukov2007} section 4.2. Also, see \cite{kaido2019constraint} for a review of similar conditions. The major difference arises from the fact that the condition must hold for all $P_{U|Y,Z} \in \mathcal{P}_{U|Y,Z}(\theta)$, owing to the fact that the moment conditions in this paper are allowed to depend on the latent variables. Verifying condition (i) can usually be done by first enumerating all scenarios which imply $\theta \notin \Theta^{*}$, and then verifying that the condition holds for each such scenario. This is exactly the strategy used when verifying the assumption in the examples. Also note that the condition is automatically satisfied if $\mathcal{P}_{U|Y,Z}(\theta)$ is empty---that is, when $\bm G^{-}(Y,Z,\theta)$ admits no measurable selections---or when none of the moment conditions depend on the structural parameters. 

Part (ii) of Assumption \ref{assumption_error_bound} appears to be entirely new. Intuitively, \eqref{eq_errorbound_2} is a local condition that requires the smallest value of the integral of $\varphi$ to not decrease too fast as we move $\theta$ slightly outside of the identified set. In the opposite direction, \eqref{eq_errorbound_3} requires that the largest value of the integral of $\varphi$ does not increase too fast as we move $\theta$ slightly outside of the the identified set. These conditions will be violated if, for example, the value of the integral can change discontinuously on the boundary of the identified set. We call the condition the \textit{local counterfactual robustness} condition because it demands that small changes in the value of the structural parameters do not generate discontinuous changes in value of the counterfactual quantity of interest. Interestingly, both of the conditions in Assumption \ref{assumption_error_bound} are related to typical assumptions made in the theory of error bounds in the optimization literature.\footnote{See \cite{pang1997error} for an introduction. } Finally, note the value of $\delta$ in parts (i) and (ii) are the same. However, this is not restrictive, since part (i) and (ii) can be established for two different values $\delta_{(i)},\delta_{(ii)}>0$, and then $\delta$ can be taken as $\delta=\min\{\delta_{(i)},\delta_{(ii)}\}$.

In practice, part (ii) of Assumption \ref{assumption_error_bound} can be challenging to verify. Because of this, we introduce the following assumption as an alternative to part (ii) of Assumption \ref{assumption_error_bound}: 
\begin{assumption}[Error Bounds (2)(ii)]\label{assumption_error_bound2}
For some $\delta > 0$, there exists values $\ell_{1},\ell_{2} \geq 0$ (possibly depending on $\delta$) such that:
\begin{align}
d(u,\bm G^{-}(y,z,\theta)) &\leq \ell_{1} \cdot d(\theta,\Theta^{-}(y,z,u)\cap \Theta_{\delta}^{*}), \,\,&&(y,z)-a.s. \,\, \text{ for all } u \in \mathcal{U} \text{ and } \theta \in \Theta_{\delta}^{*}, \label{eq_lipschitz_cond1}\\
d(y^\star,\bm G^{\star}(y,z,u,\theta,\gamma)) &\leq \ell_{2}   \cdot d(\theta,\Theta^{\star}(v,\gamma)\cap \Theta_{\delta}^{*}), \,\,&&(y,z,u)-a.s. \,\, \text{ for all } y^\star \in \mathcal{Y}^\star \text{ and } \theta \in \Theta_{\delta}^{*}.\label{eq_lipschitz_cond2}
\end{align}
where $\Theta^{-}(y,z,u)$ and $\Theta^{\star}(v,\gamma)$ are defined by:
\begin{align*}
\Theta^{-}(y,z,u)&:= \left\{ \theta : u \in \bm G^{-}(y,z,\theta)\right\}, &&\Theta^{\star}(v,\gamma):= \left\{ \theta : y^\star \in \bm G^{\star}(y,z,u,\theta,\gamma)\right\}.
\end{align*}
Furthermore, the function $\varphi: \mathcal{V} \to \mathbb{R}$ is bounded, measurable, and Lipschitz continuous in $(u,y^\star)$ with Lipschitz constant $L_{\varphi}$.
\end{assumption}

The following Lemma shows that Assumption \ref{assumption_error_bound2} is sufficient for part (ii) of Assumption \ref{assumption_error_bound}. In the process, the Lemma makes an interesting connection between Assumption \ref{assumption_error_bound} and certain Lipschitzian behaviour of the random sets $\bm G^{-}$ and $\bm G^\star$ with respect to the structural parameters $\theta \in \Theta$.  
\begin{lemma}\label{lemma_lipschitz_condition}
Suppose that Assumptions \ref{assump_preliminary}, \ref{assumption_factual_domain} and \ref{assumption_counterfactual_domain} are satisfied. Finally, suppose that $\bm G^{-}(\cdot,\theta)$ and $\bm G^{\star}(\cdot,\theta,\gamma)$ are almost-surely non-empty for each $\theta \in \Theta^{*}$. Then Assumption \ref{assumption_error_bound2} implies Assumption \ref{assumption_error_bound}(ii) with $C_{2} = L_{\varphi} \max\{\ell_{1},\ell_{2}\}$.
\end{lemma}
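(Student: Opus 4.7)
The plan is to show that Assumption \ref{assumption_error_bound2} enables the construction of, for each near-infimizer on the LHS of \eqref{eq_errorbound_2}, a matching element in the $\Theta^{*}$-indexed family on the RHS whose integral value differs by at most $L_\varphi \max\{\ell_1, \ell_2\} d(\theta, \Theta^{*})$. Since \eqref{eq_errorbound_3} follows by a symmetric argument, I focus on \eqref{eq_errorbound_2}. Fix $\theta \in \Theta_\delta^{*}$ and $\epsilon > 0$. First I would select $\theta^{*} \in \Theta^{*}$ with $d(\theta, \theta^{*}) \leq d(\theta, \Theta^{*}) + \epsilon$, together with measurable selections $U \in \mathrm{Sel}(\bm G^{-}(\cdot,\theta))$ and $Y_\gamma^\star \in \mathrm{Sel}(\bm G^{\star}(\cdot, U, \theta, \gamma))$ whose induced conditional distributions come within $\epsilon$ of the LHS infimum.

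Next, I would construct corresponding selections at $\theta^{*}$. Since $\theta \in \Theta^{-}(Y, Z, U(\omega)) \cap \Theta_\delta^{*}$ almost surely, applying \eqref{eq_lipschitz_cond1} with $u = U(\omega)$ yields $d(U(\omega), \bm G^{-}(Y, Z, \theta^{*})) \leq \ell_1 d(\theta, \theta^{*})$ a.s. A standard measurable selection theorem---exploiting Effros-measurability of $\bm G^{-}(\cdot,\theta^{*})$ from Assumption \ref{assumption_factual_domain}---then produces $U^{*} \in \mathrm{Sel}(\bm G^{-}(\cdot,\theta^{*}))$ with $d(U^{*}(\omega), U(\omega)) \leq \ell_1 d(\theta, \theta^{*}) + \epsilon$ almost surely. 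Analogously, because $\theta \in \Theta^{\star}(Y, Z, U(\omega), Y_\gamma^\star(\omega), \gamma) \cap \Theta_\delta^{*}$ a.s., \eqref{eq_lipschitz_cond2} with $u = U(\omega)$ and $y^\star = Y_\gamma^\star(\omega)$ gives $d(Y_\gamma^\star(\omega), \bm G^{\star}(Y, Z, U(\omega), \theta^{*}, \gamma)) \leq \ell_2 d(\theta, \theta^{*})$ a.s.

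The main obstacle is that this last bound concerns the set-valued map $\bm G^{\star}(\cdot, U, \theta^{*}, \gamma)$, whereas the $\theta^{*}$-system requires a conditional law $P_{Y_\gamma^\star|Y,Z,U}^{*} \in \mathcal{P}_{Y_\gamma^\star|Y,Z,U}(\theta^{*}, \gamma)$ supported slice-by-slice in $u$ on $\bm G^{\star}(\cdot, u, \theta^{*}, \gamma)$---in particular along the \emph{new} variable $U^{*}$, not the original $U$. I would resolve this by exploiting the freedom in specifying $P_{Y_\gamma^\star|Y,Z,U}^{*}$ as a conditional distribution (which need only be defined up to $(y,z,u)$-null sets), combined with a measurable selection theorem applied to the Effros-measurable set-valued map $(y, z, u) \mapsto \bm G^{\star}(y, z, u, \theta^{*}, \gamma)$, to obtain a coupling $(U^{*}, Y_\gamma^{\star,*})$ on the underlying probability space with $d(Y_\gamma^{\star,*}(\omega), Y_\gamma^\star(\omega)) \leq \ell_2 d(\theta, \theta^{*}) + O(\epsilon)$ a.s. Once the coupling is in hand, the Lipschitz hypothesis on $\varphi$ in $(u, y^\star)$ yields
\begin{align*}
|\varphi(V_\gamma^{*}(\omega)) - \varphi(V_\gamma(\omega))| \leq L_\varphi \max\{\ell_1, \ell_2\}\, d(\theta, \theta^{*}) + O(\epsilon)
\end{align*}
almost surely; integrating against $P$, taking the infimum over $\theta^{*} \in \Theta^{*}$ on the RHS, and letting $\epsilon \to 0$ delivers \eqref{eq_errorbound_2} with $C_2 = L_\varphi \max\{\ell_1, \ell_2\}$. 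The argument for \eqref{eq_errorbound_3} is identical up to the direction of the inequalities.
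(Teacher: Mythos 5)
Your overall strategy is the paper's: pick $\theta^{*}\in\Theta^{*}$ nearly attaining $d(\theta,\Theta^{*})$, use \eqref{eq_lipschitz_cond1}--\eqref{eq_lipschitz_cond2} to move a near-optimizer of the $\theta$-problem to a feasible competitor for the $\theta^{*}$-problem at cost proportional to $d(\theta,\theta^{*})$, and then invoke the Lipschitz property of $\varphi$, which is exactly how the constant $L_{\varphi}\max\{\ell_{1},\ell_{2}\}$ arises in the paper. The difference is that you execute this at the level of selections and couplings on $(\Omega,\mathfrak{A},P)$, whereas the paper first applies Lemma \ref{lemma_joint_to_marginal_selection} to rewrite both sides of \eqref{eq_errorbound_2} as integrals of the pointwise double infima $\inf_{u\in\bm G^{-}(y,z,\theta)}\inf_{y^{\star}\in\bm G^{\star}(y,z,u,\theta,\gamma)}\varphi(v)$, reduces the claim to a pointwise (in $(y,z)$) Lipschitz-type bound in $\theta$ over $\Theta_{\delta}^{*}$, and gets that bound by converting \eqref{eq_lipschitz_cond1}--\eqref{eq_lipschitz_cond2} into Pompeiu--Hausdorff Lipschitz continuity of the two set-valued maps in $\theta$ via Proposition 3C.1 of \cite{dontchev2009implicit}.

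The genuine gap is precisely the step you flag as the ``main obstacle,'' whose proposed resolution does not work. The bound you extract from \eqref{eq_lipschitz_cond2} controls $d\bigl(Y_{\gamma}^{\star}(\omega),\bm G^{\star}(Y,Z,U(\omega),\theta^{*},\gamma)\bigr)$, i.e. the counterfactual set evaluated at the \emph{old} $U(\omega)$, while feasibility at $\theta^{*}$ requires $Y_{\gamma}^{\star,*}(\omega)\in\bm G^{\star}(Y,Z,U^{*}(\omega),\theta^{*},\gamma)$, the set evaluated at the \emph{new} $U^{*}(\omega)$. Redefining the conditional law on null sets cannot bridge this: the event $\{U^{*}\neq U\}$ is in general not $P$-null (indeed $U^{*}$ must move away from $U$ whenever $U\notin\bm G^{-}(\cdot,\theta^{*})$, which is the typical case for $\theta\notin\Theta^{*}$), and Assumption \ref{assumption_error_bound2} places no restriction on how $\bm G^{\star}(y,z,u,\theta^{*},\gamma)$ varies in its $u$ argument. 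Hence nothing guarantees that $\bm G^{\star}(y,z,U^{*}(\omega),\theta^{*},\gamma)$ contains any point within $\ell_{2}d(\theta,\theta^{*})+O(\epsilon)$ of $Y_{\gamma}^{\star}(\omega)$, and the coupling $(U^{*},Y_{\gamma}^{\star,*})$ with the claimed a.s. distance bound need not exist; everything after that point rests on this unproved inequality. The way the paper avoids having to build such a coupling is the interchange step: after Lemma \ref{lemma_joint_to_marginal_selection} the comparison is pointwise in $(y,z)$, and the Hausdorff--Lipschitz bounds in $\theta$ are applied there with the $u$ argument held fixed, so no re-supported conditional law along a perturbed $U^{*}$ is ever needed. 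If you want to keep your selection-level formulation, you would need either to route the argument through that interchange or to add an explicit continuity condition of $\bm G^{\star}$ in $u$, which is not part of Assumption \ref{assumption_error_bound2}.
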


\begin{proof}
See Appendix \ref{appendix_proofs}.
\end{proof}

It can be shown that the conditions \eqref{eq_lipschitz_cond1} and \eqref{eq_lipschitz_cond2} are equivalent to almost-sure versions of Lipschitz continuity conditions for set-valued maps, where the distance between two sets is measured by the Pompeiu–Hausdorff distance. Localized versions of these conditions are called \textit{metric regularity} conditions, which also have a close connection to constraint qualifications from optimization theory. See \cite{dontchev2009implicit} Chapter 3.3 and \cite{ioffe2016metric} for a discussion. 

\subsection{Envelope Functions for the Policy Transform}\label{subsection_bounds_on_policy_transform}

We can finally turn to our main objective for this section, which is the problem of bounding the policy transform $I[\varphi](\gamma)$. Theoretically, bounds on $I[\varphi](\gamma)$ can be obtained by solving two (very) complicated constrained optimization problems that search over all distributions $P_{U|Y,Z}$ and $P_{Y_{\gamma}^\star|Y,Z,U}$ that satisfy our modelling assumptions for the ones that maximize and minimize the policy transform of $\varphi$. However, it is clear that such optimization problems will be infeasible in most realistic cases. The following result shows a tractable formulation of bounds on policy transforms that will be important for the next section. 
\begin{theorem}[Bounds on the Policy Transform]\label{thm_cortes}
Suppose that Assumptions \ref{assump_preliminary}, \ref{assumption_factual_domain}, \ref{assumption_counterfactual_domain} and \ref{assumption_error_bound} all hold. Also, suppose that $\varphi:\mathcal{V} \to [\varphi_{\ell b}, \varphi_{ub}]\subset \mathbb{R}$ is a bounded, measurable function, and that for each $\gamma \in \Gamma$, the random sets $\bm G^{-}(\cdot,\theta)$ and $\bm G^{\star}(\cdot,\theta,\gamma)$ are almost-surely non-empty for each $\theta \in \Theta^{*}$. Then $\overline{\text{co}}\, \mathcal{I}^{*}[\varphi](\gamma)=[I_{\ell b}[\varphi](\gamma),I_{u b}[\varphi](\gamma)]$, with: 
\begin{align}
I_{\ell b}[\varphi](\gamma) &= \inf_{\theta \in \Theta} \max_{\lambda_{j} \in \{0,1\}}\int\inf_{u \in \bm G^{-}(y,z,\theta)}  \inf_{y^{\star}  \in \bm G^{\star}(y,z,u,\theta,\gamma)} \Bigg( \varphi(v) + \mu^{*}\sum_{j=1}^{J}  \lambda_{j}  m_{j}(y,z,u,\theta)\Bigg) \, dP_{Y,Z},\label{eq_lb_varphi}\\
I_{ub}[\varphi](\gamma) &= \sup_{\theta \in \Theta} \min_{\lambda_{j} \in \{0,1\}}  \int \sup_{u \in \bm G^{-}(y,z,\theta)}  \sup_{y^{\star}  \in \bm G^{\star}(y,z,u,\theta,\gamma)} \Bigg( \varphi(v) - \mu^{*}\sum_{j=1}^{J}  \lambda_{j} m_{j}(y,z,u,\theta)\Bigg) \, dP_{Y,Z},\label{eq_ub_varphi}
\end{align}
where $\mu^{*} \in \mathbb{R}_{+}$ is any value satisfying:
\begin{align}
\mu^{*} \geq \max\left\{\frac{C_{2}}{C_{1}}, \frac{(\varphi_{ub} - \varphi_{\ell b})}{C_{1}\delta} \right\},\label{eq_mu_star_requirement}
\end{align}
and where $C_{1}$, $C_{2}$ and $\delta$ are from Assumption \ref{assumption_error_bound}.
\end{theorem}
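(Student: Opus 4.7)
My plan is to prove the characterization of the lower envelope $I_{\ell b}[\varphi](\gamma)$; the upper envelope $I_{ub}[\varphi](\gamma)$ then follows by applying the same argument to $-\varphi$, which flips the outer $\inf_\theta$ to $\sup_\theta$, converts the inner infima over selections into suprema, and sends $\max_{\lambda\in\{0,1\}^J}$ to $\min_{\lambda\in\{0,1\}^J}$ by the sign change on the penalty. The argument decomposes into three successive simplifications of the optimization defining $I_{\ell b}[\varphi](\gamma) = \inf_{s\in\mathcal{S}} I[\varphi](\gamma,s)$.

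\emph{Step 1: Reduce to a scalar optimization over $\theta$.} Each identified set $\mathcal{I}[\varphi](\theta,\gamma)$ from Definition \ref{definition_identified_set_counterfactual} is the image, under the affine functional $(P_U,P_{Y^\star|U}) \mapsto \int \varphi\, dP_{V_\gamma}$, of the convex set of distribution pairs satisfying the selection constraints and the moment inequalities. Convexity of the selection classes follows because a mixture of two selections of a multifunction is itself a selection. Hence $\mathcal{I}[\varphi](\theta,\gamma)$ is a sub-interval of $[\varphi_{\ell b},\varphi_{ub}]$, and $\overline{\text{co}}\,\mathcal{I}^*[\varphi](\gamma)$ is the interval between $\inf_{\theta\in\Theta^*}\underline{I}(\theta,\gamma)$ and $\sup_{\theta\in\Theta^*}\overline{I}(\theta,\gamma)$, where $\underline{I}(\theta,\gamma)$ denotes the infimum of $\int \varphi\, dP_{V_\gamma}$ over all selection distributions satisfying the moment constraints.

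\emph{Step 2: Interchange of integration and infimization, followed by minimax.} Using Effros-measurability of $\bm G^-$ and $\bm G^\star$ (Assumptions \ref{assumption_factual_domain} and \ref{assumption_counterfactual_domain}), joint measurability of $\varphi$ and $m_j$, compactness of $\mathcal{U}$, and a standard Aumann-type measurable-selection interchange result, for any bounded jointly measurable integrand $f$,
\[
\inf_{P_U \in \mathcal{P}_{U|Y,Z}(\theta)}\inf_{P_{Y^\star|U} \in \mathcal{P}_{Y_\gamma^\star|Y,Z,U}(\theta,\gamma)} \!\!\int\! f\, dP_{V_\gamma} = \int \inf_{u\in \bm G^-(y,z,\theta)}\inf_{y^\star\in \bm G^\star(y,z,u,\theta,\gamma)} f\, dP_{Y,Z}.
\]
Applying this with $f = \varphi + \mu^*\sum_j \lambda_j m_j$ for each fixed $\lambda\in\{0,1\}^J$ and then invoking Sion's minimax theorem---justified by weak*-compactness of $\mathcal{P}_{U|Y,Z}(\theta)$ (noted after Definition \ref{definition_collections} under Assumption \ref{assump_preliminary}), linearity in $(P_U,P_{Y^\star|U})$ and in $\lambda$, and the fact that the max over $\{0,1\}^J$ equals the max over the convex hull $[0,1]^J$---yields that the RHS of \eqref{eq_lb_varphi} equals
\[
\inf_{\theta\in\Theta}\inf_{P_U,P_{Y^\star|U}}\Bigl\{\mathbb{E}[\varphi] + \mu^*\textstyle\sum_j (\mathbb{E}[m_j])_+\Bigr\}.
\]

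\emph{Step 3: Exact penalty.} It remains to show this last quantity equals $\inf_{\theta\in\Theta^*}\underline{I}(\theta,\gamma)$. The inequality $\le$ follows by restricting to $\theta \in \Theta^*$ and feasible $P_U$, at which the penalty vanishes. The reverse inequality $\ge$ is the main obstacle and is where the lower bound on $\mu^*$ in \eqref{eq_mu_star_requirement} is used. For $\theta$ with $d(\theta,\Theta^*) \ge \delta$, Assumption \ref{assumption_error_bound}(i) forces the penalty $\sum_j(\mathbb{E}[m_j])_+$ to be at least $C_1\delta$ uniformly over $P_U \in \mathcal{P}_{U|Y,Z}(\theta)$, and since $\mu^* C_1 \delta \ge \varphi_{ub}-\varphi_{\ell b}$ the penalty alone dominates any conceivable decrease of $\mathbb{E}[\varphi]$ relative to $I_{\ell b}[\varphi](\gamma)$. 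For $\theta$ with $d(\theta,\Theta^*) < \delta$, the linear lower bound $\sum_j(\mathbb{E}[m_j])_+ \ge C_1 d(\theta,\Theta^*)$ from Assumption \ref{assumption_error_bound}(i) and the linear Lipschitz-type control of $\mathbb{E}[\varphi]$ from Assumption \ref{assumption_error_bound}(ii) combine so that the penalty exactly absorbs the worst-case decrease in the objective precisely when $\mu^* C_1 \ge C_2$. The hardest piece of the argument is this exact-penalty bookkeeping, which must simultaneously discipline $\theta$-deviations outside $\Theta^*$ and moment-condition violations with a single uniform penalty parameter; the two-term $\max$ in \eqref{eq_mu_star_requirement} reflects exactly this double duty.
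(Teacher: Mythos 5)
The decisive problem is Step 2's appeal to Sion's minimax theorem. After the interchange, the right side of \eqref{eq_lb_varphi} equals $\inf_{\theta}\max_{\lambda\in\{0,1\}^{J}}\inf_{P}\{\E[\varphi]+\mu^{*}\sum_{j}\lambda_{j}\E[m_{j}]\}$, where $P$ ranges over pairs drawn from $\mathcal{P}_{U|Y,Z}(\theta)$ and $\mathcal{P}_{Y_{\gamma}^{\star}|Y,Z,U}(\theta,\gamma)$, and what you need is the nontrivial direction of the swap to reach $\inf_{\theta}\inf_{P}\{\E[\varphi]+\mu^{*}\sum_{j}|\E[m_{j}]|_{+}\}$. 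Sion requires the strategy sets to be convex (and $P\mapsto\int f\,dP$ to be suitably semicontinuous in the weak$^{*}$ topology, which is itself unclear when $\varphi$ and the $m_{j}$ are only bounded and measurable). Convexity of the families of selection distributions on the fixed probability space of Assumption \ref{assump_preliminary} is exactly what is not available: the paper establishes weak$^{*}$-compactness of $\mathcal{P}_{U|Y,Z}(\theta)$ after Definition \ref{definition_collections}, but explicitly notes that without non-atomicity only the weak$^{*}$ closed \emph{convex hulls} of selection families can be identified. Your Step 1 claim that ``a mixture of two selections of a multifunction is itself a selection'' is false as stated (a convex combination of selections need not take values in the random set, and a mixture of their laws need not be the law of any selection absent atomless randomization), and it is precisely this missing convexity that Step 2 leans on. The failure is not cosmetic: over a non-convex set of attainable expectation pairs the equality $\inf_{P}\max_{\lambda}=\max_{\lambda}\inf_{P}$ can fail outright --- with $J=1$, $\mu^{*}=1$ and two attainable pairs $(\E\varphi,\E m)=(0,10)$ and $(5,-10)$, the left side is $5$ while the right side is $0$. (Incidentally, the interval claim in Step 1 needs no convexity at all, since the closed convex hull of any bounded subset of $\mathbb{R}$ is the interval between its infimum and supremum.)

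The paper's proof never performs this swap. Its exact-penalty result (Lemma \ref{lemma_error_bound}) constructs fixed multipliers $\lambda_{j}^{\ell b}(\theta,P_{Y,Z})\in\{0,1\}$ --- indicators of the moments attaining $\inf_{P_{U|Y,Z}}\max_{j}|\E m_{j}|_{+}$ --- which depend only on $(\theta,P_{Y,Z})$ and not on the selection; Assumption \ref{assumption_error_bound} is then used exactly as in your Step 3 to show the penalized value with these multipliers equals the target. The interchange result (Lemma \ref{lemma_joint_to_marginal_selection}) is applied afterwards, and the passage to the $\max_{\lambda\in\{0,1\}^{J}}$ representation is handled by Lemma \ref{lemma_multiplier_switch}, which needs only the easy (weak-duality) direction of minimax together with the fact that the inequality is an equality for $\theta\in\Theta^{*}$, the outer $\inf_{\theta}$ being pinned to $\Theta^{*}$ by the penalty. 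So your Step 3 is essentially the paper's penalty bookkeeping and is sound in spirit, but Step 2 does not go through as written: to repair it you must either replicate this explicit-multiplier construction (avoiding any minimax theorem) or add an assumption, such as an atomless conditional probability structure, that delivers convexity of the selection classes before invoking Sion.
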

\begin{proof}
See Appendix \ref{appendix_proofs}.
\end{proof}

Theorem \ref{thm_cortes} states that the closed, convex hull of the identified set $\mathcal{I}^{\star}[\varphi](\gamma)$ from Definition \ref{definition_identified_set_counterfactual} for the policy transform $I[\varphi](\gamma)$ can be computed as the solution to two optimization problems. Interestingly, these optimization problems are closely related to problems found in the literature on mathematical programming problems subject to equilibrium constraints (MPECs), which have previously seen applications in economics to social planning problems and Stackelberg games.\footnote{For a textbook treatment, see \cite{luo1996mathematical}.} The upper and lower envelope functions in Theorem \ref{thm_cortes} are perhaps most aptly characterized as penalized optimization problems, with $\mu^{*}$ in \eqref{eq_mu_star_requirement} serving the role of the penalty parameter. Both the statement of the result and its proof rely on the theory of exact penalty functions from the literature on error bounds in variational analysis.\footnote{See \cite{dolgopolik2016unifying} for a review.} The Theorem uses the error bounds Assumption \ref{assumption_error_bound} in order to show that the penalty $\mu^{*}$ can be taken to be finite. This is very important for the theoretical analysis of the policy decision problem to take place in the sections ahead. Furthermore, implicitly Theorem \ref{thm_cortes} shows that the values of $\lambda_{j}$ will depend only on the parameter $\theta$, a point which will used in the next sections.\todo{where exactly is this used? This point could be emphasized better later on.} 

From an identification perspective, the envelope functions will generally not give sharp bounds on the policy transform. However, under any additional conditions that ensure the identified set $\mathcal{I}^{*}(\gamma)$ is closed and convex for every $\gamma \in \Gamma$, Theorem \ref{thm_cortes} provides a (point-wise in $\gamma$) sharp characterization of the identified set for the policy transform. Finally, the result is easily modified for the case when the object of interest is a conditional policy transform.

One of the most interesting features of Theorem \ref{thm_cortes} is that, when the counterfactual object of interest is a particular form, there is no need to compute the identified set $\Theta^{*}$ of structural parameters in order to bound the counterfactual object of interest. In addition, the unobservables in the problem are profiled out, and when the identified set $\mathcal{I}^{*}(\gamma)$ is closed and convex this is without any loss of information. This point also translates into the policy decision problem studied in the next sections. The structural parameters and unobservables intuitively play the role of an intermediary connecting the factual and counterfactual domains. However, after the envelope functions from Theorem \ref{thm_cortes} are computed, they play no further role in the problem of policy choice. 

While we will not dwell on measurability issues in the main text, we note that Lemma \ref{lemma_measurable_functions} in Appendix \ref{appendix_measurability} shows that the integrands in the optimization problems are universally measurable; that is, measurable for the completion of any probability measure $P_{Y,Z}$. The proof of this result relies crucially on the fact that both $\bm G^{-}$ and $\bm G^{\star}$ are Effros-measurable. Furthermore, Proposition \ref{proposition_measurable} in Appendix \ref{appendix_measurability} shows that the maps $\gamma \mapsto I_{\ell b}[\varphi](\gamma),I_{u b}[\varphi](\gamma)$ are measurable with respect to the universal $\sigma-$algebra on $\Gamma$ (as generated by the Borel $\sigma-$algebra). These results will be important to keep in mind in the next sections on policy choice.  

We now return to the examples presented earlier to discuss our identification result. We will first verify Assumption \ref{assumption_error_bound} in our examples and will show how Lemma \ref{lemma_lipschitz_condition} can be helpful. 

\setcounter{example}{0}

\begin{example}[Simultaneous Discrete Choice (cont'd)]
Consider again Example \ref{example_simultaneous_discrete_choice} on simultaneous discrete choice, and recall that we have imposed a median zero and median independence restriction using the moment conditions in \eqref{eq_sdc_mom1} and \eqref{eq_sdc_mom2}. 

This example presents challenges for the verification of Assumption \ref{assumption_error_bound} because of the discontinuity of the function $\varphi(v)=\mathbbm{1}\{\pi_{k}(\gamma(z,y_{-k});\theta)\geq u \}$. Indeed, under our current assumptions, Assumption \ref{assumption_error_bound} is not satisfied. To appreciate the intuition, focus on Assumption \ref{assumption_error_bound}(ii). The issue for this assumption arises only when for some $k \in \{1,\ldots,K\}$ and some $z \in \mathcal{Z}$ and $y_{-k} \in \mathcal{Y}^{K-1}$ we have (i) the counterfactual cutoff value $\pi_{k}(\gamma(z,y_{-k});\theta^{*})=0$ at some $\theta^{*} \in \partial\Theta^{*}$, and if (ii) $P(Y_{k}=1|Z_{k}=z',Y_{-k}=y_{-k}')\neq 0.5$, where $(z',y_{-k}')=\gamma(z,y_{-k})$. In this knife-edge case, a very small change from $\theta^{*}\in \partial\Theta^{*}$ to some $\theta \notin \Theta^{*}$ can cause a discontinuous change in $P(Y_{\gamma,k}^\star=1)$. A full description of this failure, including illustrations of various cases, is presented in Appendix \ref{appendix_additional_details_sdc_A4_error_bound}.  

However, by slightly strengthening our moment conditions we can satisfy Assumption \ref{assumption_error_bound} in this example. The key is to introduce additional assumptions on the degree of smoothness of the distribution of $U_{k}$ around zero. In particular, we will replace the moment conditions in \eqref{eq_sdc_mom1} and \eqref{eq_sdc_mom2} with the following conditions:
\begin{align}
\E \left[\left(\mathbbm{1}\{U_{k}\leq \pi_{k}(z',y_{-k}';\theta)\} - \max\{L_{0}\pi_{k}(z',y_{-k}';\theta),0\} - 0.5\right)\mathbbm{1}\{Z_{k}=z, Y_{-k}=y_{-k}\}  \right]\leq 0,\label{eq_sdc_cons1_intext}\\
\E\left[\left(0.5 - \mathbbm{1}\{U_{k}\leq \pi_{k}(z',y_{-k}';\theta)\}-\max\{-L_{0}\pi_{k}(z',y_{-k}';\theta),0\}\right) \mathbbm{1}\{Z_{k}=z, Y_{-k}=y_{-k}\}  \right] \leq 0,\label{eq_sdc_cons2_intext}
\end{align}
for $k=1,\ldots,K$, for all $z,z' \in \mathcal{Z}$ and all $y_{-k},y_{-k}' \in \mathcal{Y}^{K-1}$. In addition to implying the median zero/median independence assumption, these new moment conditions also limit the amount of probability mass on $\mathcal{U}$ that is arbitrarily close to zero, which turns out to be key to satisfying Assumption \ref{assumption_error_bound}. Also note that, despite the fact that these moment conditions will implicitly impose constraints on the obtainable counterfactual choice probabilities, it is easily verified that they do not impose any additional constraints on the set of structural parameters $\theta \in \Theta$ that can rationalize the observed distribution (in the sense of Definition \ref{definition_identified_set}), and thus do not violate the no-backtracking principle introduced in Remark \ref{remark_no_backtracking}.   

With these new moment conditions, Assumption \ref{assumption_error_bound} can be shown to be satisfied. Recall that when first introducing Example \ref{example_simultaneous_discrete_choice} we assumed $\pi_{k}$ is a known measurable function of $(Z_{k},Y_{-k})$ that is linear in parameters $\theta$, and has a gradient (with respect to $\theta$) bounded away from zero for each $(z,y_{-k})$. We conclude that $\pi_{k}$ is Lipschitz in $\theta$, and also satisfies a ``reverse Lipschitz'' condition; that is, for each $(z,y_{-k})$ we have: 
\begin{align*}
L_{k}'||\theta-\theta^{*}||\leq |\pi_{k}(z,y_{-k};\theta)-\pi_{k}(z,y_{-k};\theta^{*})| \leq L_{k}||\theta-\theta^{*}||,
\end{align*}
for some $L_{k}',L_{k}>0$. Now define:
\begin{align}
\tau:= \min_{k}\min_{(z,y_{-k})} |0.5 - P(Y_{k}=1|Z=z,Y_{-k}=y_{-k})| && s.t. &&|0.5 - P(Y_{k}=1|Z=z,Y_{-k}=y_{-k})|>0.\label{eq_tau}
\end{align}
Then the analysis in Appendix \ref{appendix_additional_details_sdc_A4_error_bound} shows that Assumption \ref{assumption_error_bound} is verified for $C_{1} = L_{0} L'$, $C_{2}=L_{0}L$ and $\delta = \tau/(L_{0} L')$, where $L=\min_{k} L_{k}$ and $L' = \min_{k} L_{k}'$. In Theorem \ref{thm_cortes} we can thus take the penalty $\mu^{*}$ to be any value satisfying:
\begin{align*}
\mu^{*}\geq \max\left\{\frac{L}{L'}, \frac{1}{\tau} \right\}.
\end{align*}
Theorem \ref{thm_cortes} says that the lower and upper envelopes on $I[\varphi](\gamma)= P(Y_{\gamma}^\star=1)$, as a function of $\gamma$, are given by \eqref{eq_lb_varphi} and \eqref{eq_ub_varphi}, respectively.
\end{example}
\begin{remark}[Counterfactual Coherency]
Recall that Theorem \ref{thm_cortes} applies only if the random sets $\bm G^{-}(\cdot,\theta)$ and $\bm G^{\star}(\cdot,\theta,\gamma)$ are almost-surely non-empty for each $\theta \in \Theta^{*}$. In the simultaneous discrete choice example, the counterfactual map $\bm G^{\star}(\cdot,\theta,\gamma)$ can fail to be almost-surely non-empty, which is related to the well known problem of \textit{coherency} in these models. In particular, for a given instantiation of a vector of unobservables $(u_{1},\ldots,u_{K})$, there may not exist any vector of counterfactual endogenous outcome variables $(y_{1,\gamma}^\star,\ldots,y_{K,\gamma}^\star)$ that solves the system of equations represented by \eqref{eq_sdc_counterfactual_equations}. However, we note that this issue is unrelated to our particular approach, and might be resolved by (i) conditioning the analysis on the subset of $\mathcal{U}$ that ensures a solution to the system of equations in \eqref{eq_sdc_counterfactual_equations}, or (ii) imposing certain constraints on the parameter space that ensures the existence of a solution to the system of equations in \eqref{eq_sdc_counterfactual_equations}. We refer the reader to \cite{chesher2020structural} for a thorough discussion of this issue. However, whether this ``counterfactual coherency'' problem can be resolved without violating the no-backtracking principle from Remark \ref{remark_no_backtracking} appears to be an open question.\todolt{Does this violate the no back-tracking argument?} 
\end{remark}

\begin{example}[Program Evaluation (cont'd)]
Consider again Example \ref{example_program_evaluation} on program evaluation. Verification of Assumption \ref{assumption_error_bound} is presented in Appendix \ref{appendix_additional_details_te_A4_error_bound}, and uses Lemma \ref{lemma_lipschitz_condition} to verify Assumption \ref{assumption_error_bound}(ii). Remarkably, we show that Assumption \ref{assumption_error_bound} is satisfied for any value of $\delta>0$ with $C_{1}=C_{2}=1$. Thus we can take the penalty $\mu^{*}=1$. Then Theorem \ref{thm_cortes} says that the lower and upper envelopes on $I[\varphi](\gamma)= \E[Y_{\gamma}^\star]$, as a function of $\gamma$, are given by \eqref{eq_lb_varphi} and \eqref{eq_ub_varphi}, respectively.
\end{example}

\section{On the Learnability of Optimal Policies}\label{sec_policy_analysis}

In this section, we provide sufficient conditions for PAMPAC learnability. To begin, the following proposition clarifies the connection between the lower envelope function from the previous section and the notion of PAMPAC learnability. 
\begin{proposition}\label{proposition_lower_envelope_connection}
Suppose Assumptions \ref{assump_preliminary}, \ref{assumption_factual_domain}, \ref{assumption_counterfactual_domain}, and \ref{assumption_error_bound} hold. Also, suppose that $\varphi:\mathcal{V} \to [\varphi_{\ell b}, \varphi_{ub}]\subset \mathbb{R}$ is a bounded, measurable function, and that for each $\gamma \in \Gamma$, the random sets $\bm G^{-}(\cdot,\theta)$ and $\bm G^{\star}(\cdot,\theta,\gamma)$ are almost-surely non-empty for each $\theta \in \Theta^{*}$. Then a policy space $\Gamma$ is PAMPAC learnable with respect to the policy transform of $\varphi$ if and only if:
\begin{align}
\inf_{P_{Y,Z} \in \mathcal{P}_{Y,Z}} P_{Y,Z}^{\otimes n} \left( \sup_{\gamma \in \Gamma} I_{\ell b}[\varphi](\gamma)  - I_{\ell b}[\varphi](d(\psi)) \leq c \right) \geq \kappa, \label{eq_PAC_crit2}
\end{align}
where $I_{\ell b}[\varphi]: \Gamma \to \mathbb{R}$ is the lower envelope function from Theorem \ref{thm_cortes}.
\end{proposition}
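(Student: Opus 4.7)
The plan is to show that, at any fixed $P_{Y,Z} \in \mathcal{P}_{Y,Z}$ and for any $\gamma \in \Gamma$, the two quantities $\inf_{s \in \mathcal{S}} I[\varphi](\gamma,s)$ and $I_{\ell b}[\varphi](\gamma)$ coincide. Once this pointwise identity is established, the equivalence between the PAMPAC inequality in Definition \ref{definition_pampac_learnability} and the inequality in \eqref{eq_PAC_crit2} is immediate by substitution inside the probability statement, since the outer inequality $\inf_{s} I[\varphi](d(\psi),s) + c \geq \sup_{\gamma} \inf_{s} I[\varphi](\gamma, s)$ is just a rearrangement of $\sup_{\gamma} I_{\ell b}[\varphi](\gamma) - I_{\ell b}[\varphi](d(\psi)) \leq c$.

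The key identification step proceeds as follows. First, unfolding Definition \ref{definition_decision_problem}\ref{decision_state_space} and Definition \ref{definition_identified_set_counterfactual}, the family of triples $s = (\theta, P_{U|Y,Z}, P_{Y_\gamma^\star|Y,Z,U})$ ranging over $\mathcal{S}$ (at the fixed $P_{Y,Z}$) is exactly the family used to construct $\mathcal{I}^{*}[\varphi](\gamma)$. In particular, the values taken by $s \mapsto I[\varphi](\gamma, s)$ on $\mathcal{S}$ coincide precisely with $\mathcal{I}^{*}[\varphi](\gamma)$, so
\begin{equation*}
\inf_{s \in \mathcal{S}} I[\varphi](\gamma, s) \;=\; \inf \mathcal{I}^{*}[\varphi](\gamma).
\end{equation*}
Since $\varphi$ is bounded and $\mathcal{I}^{*}[\varphi](\gamma) \subset \mathbb{R}$ is therefore a bounded set, its closed convex hull equals $[\inf \mathcal{I}^{*}[\varphi](\gamma), \sup \mathcal{I}^{*}[\varphi](\gamma)]$, so taking the closed convex hull does not change the infimum:
\begin{equation*}
\inf \mathcal{I}^{*}[\varphi](\gamma) \;=\; \inf \overline{\mathrm{co}}\, \mathcal{I}^{*}[\varphi](\gamma).
\end{equation*}
Finally, Theorem \ref{thm_cortes} (whose hypotheses are exactly those imposed here) gives $\overline{\mathrm{co}}\, \mathcal{I}^{*}[\varphi](\gamma) = [I_{\ell b}[\varphi](\gamma), I_{ub}[\varphi](\gamma)]$, whose infimum is $I_{\ell b}[\varphi](\gamma)$. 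Chaining the three equalities delivers the desired identity.

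With the pointwise identity in hand, the event inside the probability in \eqref{eq_PAC_crit} coincides exactly with the event inside the probability in \eqref{eq_PAC_crit2}, so the two infima over $P_{Y,Z}$ of the respective probabilities are equal. Therefore a decision rule $d$ certifies PAMPAC learnability at $(c,\kappa,n)$ if and only if it satisfies \eqref{eq_PAC_crit2}, and the ``if and only if'' statement of the proposition follows by quantifying over $d$ and $n$ as in Definition \ref{definition_pampac_learnability}.

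I do not expect significant obstacles. The only subtleties are bookkeeping: (i) making sure the parametrization of $\mathcal{S}$ in Definition \ref{definition_decision_problem} is literally the same as the feasibility set underlying $\mathcal{I}^{*}[\varphi](\gamma)$ in Definition \ref{definition_identified_set_counterfactual}, which is true by construction; (ii) handling the (vacuous) case in which $\mathcal{I}^{*}[\varphi](\gamma)$ is empty at some $P_{Y,Z}$, which is excluded here by the assumption that $\bm G^{-}(\cdot,\theta)$ and $\bm G^{\star}(\cdot,\theta,\gamma)$ are almost surely nonempty on $\Theta^{*}$ together with the convention $\Theta^{*} \ne \emptyset$ implicit in the state-space setup; and (iii) measurability of $\gamma \mapsto I_{\ell b}[\varphi](\gamma)$ and of $\psi \mapsto I_{\ell b}[\varphi](d(\psi))$, which is supplied by Proposition \ref{proposition_measurable} in the measurability appendix and ensures that the probability in \eqref{eq_PAC_crit2} is well-defined.
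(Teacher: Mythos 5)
Your proof is correct and follows essentially the same route the paper intends: the paper states this proposition without a separate proof, since $I_{\ell b}[\varphi](\gamma)$ is by definition $\inf_{s\in\mathcal{S}}I[\varphi](\gamma,s)$ and Theorem \ref{thm_cortes} identifies it with the optimization value, so the event in Definition \ref{definition_pampac_learnability} is just a rearrangement of the event in \eqref{eq_PAC_crit2}. Your explicit chain $\inf_{s\in\mathcal{S}}I[\varphi](\gamma,s)=\inf\mathcal{I}^{*}[\varphi](\gamma)=\inf\overline{\text{co}}\,\mathcal{I}^{*}[\varphi](\gamma)=I_{\ell b}[\varphi](\gamma)$ simply makes precise the bookkeeping (including the measurability remark) that the paper leaves implicit.
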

\begin{remark}
By Proposition \ref{proposition_measurable} in Appendix \ref{appendix_measurability}, the map $\psi \mapsto I_{\ell b}[\varphi](d(\psi))$ is universally measurable; that is, measurable with respect to the completion of any $P_{Y,Z} \in \mathcal{P}_{Y,Z}$. Thus, the event in \eqref{eq_PAC_crit2} can always be assigned a unique probability using outer measures, if necessary. 
\end{remark}
In particular, the lower envelope function completely characterizes PAMPAC learnability of the policy space $\Gamma$ with respect to $\varphi$. Thus, it should be unsurprising that our sufficient conditions for a policy space to be PAMPAC learnable will be related to the behaviour of the lower envelope function from Theorem \ref{thm_cortes}. 

Next we introduce an entropy growth condition, which will be imposed as a constraint on the complexity allowed for both the moment functions and the function $\varphi$. To introduce the entropy growth condition, we must first define the covering number and metric entropy for a class of functions.

\begin{definition}[Covering Number, Metric Entropy]
Let $(\mathcal{T},\rho)$ be a semi-metric space. A cover of $\mathcal{T}$ is any collection of sets whose union contains $\mathcal{T}$ as a subset. For any $\varepsilon>0$, the covering number for $\mathcal{T}$, denoted by $N(\varepsilon,\mathcal{T},\rho)$, is the smallest number of $\rho-$balls needed to form a $\varepsilon-$cover. The metric entropy is the logarithm of the covering number.  
\end{definition}

\begin{definition}[Entropy Growth Condition]\label{definition_entropy_growth}
Let $\mathcal{F}$ be a measurable class of real-valued functions on a measurable space $(\mathcal{X},\mathfrak{A}_{\mathcal{X}})$ with envelope $F$. The class $\mathcal{F}$ satisfies the entropy growth condition if:
\begin{align}
\sup_{Q \in \mathcal{Q}_{n}} \log N(\varepsilon,\mathcal{F},||\cdot||_{Q,2})=o(n), \label{eq_entropy_growth}
\end{align}
for every $\varepsilon>0$, with the supremum taken over all discrete probability measures $\mathcal{Q}_{n}$ on $\mathcal{X}$ with atoms that have probabilities that are integer multiples of $1/n$. 
\end{definition}
This condition is adapted from a condition in \cite{dudley1991uniform} (Theorem 6, p. 500) that, in combination with other mild conditions, is shown to be sufficient for a class of functions to be uniform Glivenko-Cantelli.\footnote{See also \cite{van1996weak} Theorem 2.8.1 on p.167.} The entropy growth condition essentially says that, for any set $\mathcal{X}_{n}$ of $n$ points $(x_{1},\ldots,x_{n})$ in some space $\mathcal{X}$, the logarithm of the minimal number of balls of radius $\varepsilon>0$ needed to cover the set:
\begin{align*}
\mathcal{F}|_{\mathcal{X}_{n}} := \left\{ (f(x_{1}),\ldots,f(x_{n})) : f \in \mathcal{F} \right\} \subseteq \mathbb{R}^{n},
\end{align*}
is of order $o(n)$. Sufficient conditions for this to be the case can be connected to conditions previously used in the literature. For example, \eqref{eq_entropy_growth} is satisfied if the class of functions is of VC-type (c.f. \cite{chernozhukov2013gaussian}, \cite{belloni2019subvector}), if the class satisfies Pollard's manageability criterion (c.f. \cite{pollard1990empirical}, \cite{andrews2013inference}, \cite{andrews2017inference}), or if the class of functions is otherwise known to be a uniform Donsker class.

The following Theorem shows that if certain classes of functions in the policy analysis problem obey the entropy growth condition, then every policy space is PAMPAC learnable.  To state the result, we must first introduce an important class of functions. Let $\Lambda = \{0,1\}^{J}$, and for a fixed triple $(\theta,\gamma,\lambda) \in \Theta\times \Gamma\times\Lambda$, let $h_{\ell b}(\cdot,\cdot,\theta,\gamma,\lambda): \mathcal{Y} \times \mathcal{Z} \to \mathbb{R}$ be given by:
\begin{align}
 h_{\ell b}(y,z,\theta,\gamma,\lambda):=\inf_{u \in \bm G^{-}(y,z,\theta)}   \Bigg( \inf_{y^{\star}  \in \bm G^{\star}(y,z,u,\theta,\gamma)} \varphi(v) + \mu^{*}\sum_{j=1}^{J}  \lambda_{j} m_{j}(y,z,u,\theta)\Bigg).
\end{align}
Note that $h_{\ell b}(\cdot,\cdot,\theta,\gamma,\lambda)$ is exactly the integrand in the lower envelope function from Theorem \ref{thm_cortes}. Now define the class of functions:
\begin{align}
\mathcal{H}_{\ell b}:= \left\{ h_{\ell b}(\cdot,\cdot,\theta,\gamma,\lambda): \mathcal{Y} \times \mathcal{Z} \to \mathbb{R} : (\theta,\gamma,\lambda) \in \Theta \times \Gamma\times\Lambda\right\}.
\end{align}
Then we have the following result:
\begin{theorem}\label{theorem_pampac_learnability}
Suppose that Assumptions \ref{assump_preliminary}, \ref{assumption_factual_domain}, \ref{assumption_counterfactual_domain} and \ref{assumption_error_bound} hold. Also, suppose that $\varphi:\mathcal{V} \to [\varphi_{\ell b}, \varphi_{ub}]\subset \mathbb{R}$ is a bounded, measurable function, and that for each $\gamma \in \Gamma$, the random sets $\bm G^{-}(\cdot,\theta)$ and $\bm G^{\star}(\cdot,\theta,\gamma)$ are almost-surely non-empty for each $\theta \in \Theta^{*}$. Fix any $\varepsilon>0$. (i) If the class of functions $\mathcal{H}_{\ell b}$ satisfies the entropy growth condition, then every policy space is PAMPAC learnable with respect to the policy transform of $\varphi$. Furthermore, for any $c>0$ we have:
\begin{align}
\sup_{P_{Y,Z} \in \mathcal{P}_{Y,Z}} P_{Y,Z}^{\otimes n} \left( \sup_{\gamma \in \Gamma}\inf_{s \in \mathcal{S}} I[\varphi](\gamma,s)  - \inf_{s\in \mathcal{S}} I[\varphi](d(\psi),s) \geq c \right) =O(r_{1}(n)),
\end{align}
where:
\begin{align}
r_{1}(n):=\max\left\{n^{-1/2}, n^{-1/2} \sup_{Q \in \mathcal{Q}_{n}}\sqrt{  \log  N(\varepsilon,\mathcal{H}_{\ell b},||\cdot||_{Q,2})}\right\}. 
\end{align}
(ii) If the class of functions:
\begin{align}
\Phi &:= \{ \varphi(\cdot,u,y^\star):\mathcal{Y}\times\mathcal{Z} \to \mathbb{R} : (u,y^\star) \in \mathcal{U} \times \mathcal{Y}^\star \},\label{eq_thm_pampac_phi}\\
\mathcal{M}_{j}&:= \left\{ m_{j}(\cdot,u,\theta):\mathcal{Y}\times\mathcal{Z} \to \mathbb{R} : (u,\theta) \in \mathcal{U}\times\Theta \right\},\quad j=1,\ldots,J,
\end{align}
are uniformly bounded, and satisfy the entropy growth condition, then so does $\mathcal{H}_{\ell b}$. Furthermore, for any $c>0$ we have:\todolt{Eventually might be nice to have a lower bound on the rate. Also, note that often $y^\star=f(y,z,u,\theta,\gamma)$...how to verify entropy growth condition for $\varphi$ in this case? Finally, what are necessary conditions for learning?}
\begin{align}
\sup_{P_{Y,Z} \in \mathcal{P}_{Y,Z}} P_{Y,Z}^{\otimes n} \left( \sup_{\gamma \in \Gamma}\inf_{s \in \mathcal{S}} I[\varphi](\gamma,s)  - \inf_{s\in \mathcal{S}} I[\varphi](d(\psi),s) \geq c \right) =O(r_{2}(n)),
\end{align}
where:
\begin{align}
r_{2}(n):=\max\left\{n^{-1/2}, n^{-1/2} \sup_{Q \in \mathcal{Q}_{n}} \sqrt{ \log N (\varepsilon/4,\Phi,||\cdot||_{Q,2}) + \sum_{j=1}^{J} \log N (\varepsilon/2,\mathcal{M}_{j},||\cdot||_{Q,2})}\right\}. 
\end{align}
\end{theorem}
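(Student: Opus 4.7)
By Proposition \ref{proposition_lower_envelope_connection}, the probability bounds in both parts of the theorem reduce to controlling $\sup_{\gamma} I_{\ell b}[\varphi](\gamma) - I_{\ell b}[\varphi](d(\psi))$ uniformly in $P_{Y,Z}$, where $I_{\ell b}[\varphi](\gamma)$ is the lower envelope of Theorem \ref{thm_cortes}. That theorem gives $I_{\ell b}[\varphi](\gamma) = \inf_{\theta \in \Theta} \max_{\lambda \in \{0,1\}^J} P_{Y,Z}\, h_{\ell b}(\cdot, \theta, \gamma, \lambda)$. Let $\hat{I}_{\ell b}[\varphi](\gamma)$ denote the empirical version obtained by replacing $P_{Y,Z}$ with the empirical measure $P_n$, and take $d(\psi)$ to be a measurable $n^{-1/2}$-approximate maximizer of $\gamma \mapsto \hat{I}_{\ell b}[\varphi](\gamma)$ (such a selection exists by the universal measurability established in Proposition \ref{proposition_measurable} and standard measurable-selection arguments). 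A standard excess-optimum decomposition, using that both $\inf_\theta$ and $\max_\lambda$ are $1$-Lipschitz under the sup norm, gives
\[
\sup_\gamma I_{\ell b}[\varphi](\gamma) - I_{\ell b}[\varphi](d(\psi)) \leq 2 \sup_{h \in \mathcal{H}_{\ell b}} \left| (P_n - P_{Y,Z}) h \right| + o(1).
\]
The problem is therefore reduced to bounding the supremum of the centered empirical process indexed by $\mathcal{H}_{\ell b}$.

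\textbf{Step 2 (Part (i)).} Under Assumption \ref{assumption_factual_domain} and boundedness of $\varphi$, the class $\mathcal{H}_{\ell b}$ has a finite uniform envelope $M := |\varphi_{ub}| + |\varphi_{\ell b}| + \mu^{*} \sum_j \sup_\theta \|m_j(\cdot,\theta)\|_\infty$. Apply the symmetrization inequality and then a standard Rademacher maximal inequality on any $\varepsilon$-net of $\mathcal{H}_{\ell b}$ in $\|\cdot\|_{Q,2}$ to obtain, uniformly in $P_{Y,Z}$,
\[
\mathbb{E} \sup_{h \in \mathcal{H}_{\ell b}} |(P_n - P_{Y,Z}) h| \leq C \left( \varepsilon + M n^{-1/2} \sqrt{\sup_{Q \in \mathcal{Q}_n} \log N(\varepsilon, \mathcal{H}_{\ell b}, \|\cdot\|_{Q,2})} \right),
\]
for a universal constant $C$. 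The entropy growth condition forces the square-root term to dominate the $\varepsilon$ slack for every fixed $\varepsilon > 0$, so Markov's inequality combined with Step 1 yields the rate $r_1(n)$, and PAMPAC learnability follows from $\log N(\varepsilon, \mathcal{H}_{\ell b}, \|\cdot\|_{Q,2}) = o(n)$. Uniformity in $P_{Y,Z}$ is free because the envelope $M$ and the supremum over discrete measures $Q \in \mathcal{Q}_n$ are both distribution-free.

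\textbf{Step 3 (Part (ii) and main obstacle).} For any fixed $(y,z)$, since $\inf$ is $1$-Lipschitz one has the pointwise bound
\[
\left| h_{\ell b}^{(1)}(y,z) - h_{\ell b}^{(2)}(y,z) \right| \leq \sup_{u,y^\star} \Bigl( |\varphi^{(1)}(y,z,u,y^\star) - \varphi^{(2)}(y,z,u,y^\star)| + \mu^{*} \sum_j \lambda_j |m_j^{(1)}(y,z,u,\theta) - m_j^{(2)}(y,z,u,\theta)| \Bigr),
\]
where superscripts index two choices of slices of $\varphi$ and the $m_j$. The strategy is then: (a) assemble an $\varepsilon/4$-cover of $\Phi$ and $\varepsilon/(2J\mu^{*})$-covers of each $\mathcal{M}_j$ in $\|\cdot\|_{Q,2}$; (b) enumerate the $2^J$ choices of $\lambda$; (c) apply preservation results for sums of classes (which give additive log-covers at appropriately split radii) and for pointwise infima (which preserve $L^2(Q)$ covers of a parameterized family, the uniformity in the $(u,y^\star)$ slicing parameter being the crucial ingredient) to obtain the bound
\[
\sup_{Q \in \mathcal{Q}_n} \log N(\varepsilon, \mathcal{H}_{\ell b}, \|\cdot\|_{Q,2}) \leq \sup_Q \log N(\varepsilon/4, \Phi, \|\cdot\|_{Q,2}) + \sum_{j=1}^{J} \sup_Q \log N(\varepsilon/2, \mathcal{M}_j, \|\cdot\|_{Q,2}) + J \log 2,
\]
and substituting into the $r_1(n)$ bound of Step 2 yields $r_2(n)$. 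The hardest part of this step is (c): the passage from the pointwise Lipschitz bound to an $L^2(Q)$ cover of $\mathcal{H}_{\ell b}$ is not automatic, because $L^2(Q)$ covers of $\Phi$ allow the approximating function to depend on $(u,y^\star)$, while the pointwise infimum couples all slices at once. The resolution uses that $Q$ is supported on the finitely many sample points $(y_i,z_i)$ and that, by Effros-measurability of $\bm G^{-}, \bm G^{\star}$, the infima admit measurable selectors, so the pointwise sup over $(u,y^\star)$ can be replaced by a sup over a measurable selection that is then absorbed by a single $L^2(Q)$ cover at the inflated radii $\varepsilon/4$ and $\varepsilon/2$.
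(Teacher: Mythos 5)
Your proposal is correct and follows essentially the same route as the paper: reduce to the lower envelope of Theorem \ref{thm_cortes} via the $\varepsilon$-maximin empirical rule, bound the uniform deviation by the Rademacher complexity of $\mathcal{H}_{\ell b}$ through symmetrization and a single-scale discretization (the paper's Lemma \ref{lemma_bartlett}), conclude by Markov's inequality, and for part (ii) propagate covering numbers through the pointwise infima over $\bm G^{-}$ and $\bm G^{\star}$ and through the sum with the moment terms, exactly as in the paper's Lemmas \ref{lemma_inf_covering} and \ref{lemma_useful_for_donsker} (your cleaner two-sided deviation decomposition in Step 1 replaces, equivalently, the paper's ladder of $\varepsilon$-optimal selections). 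One small correction: for fixed $\varepsilon$ the entropy term is $o(1)$, so it does not ``dominate the $\varepsilon$ slack''; in the paper's bookkeeping the discretization error enters as $\varepsilon/\sqrt{n}$ (the cover being taken in the unnormalized empirical norm), which is what delivers the $O(r_{1}(n))$ probability bound at fixed $\varepsilon$, whereas your normalized-norm version leaves a constant $O(\varepsilon)$ term and would require letting $\varepsilon$ shrink---harmless for learnability, but needed for the rate as stated.
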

\begin{proof}
See Appendix \ref{appendix_proofs}.
\end{proof}

The proof of the part (i) proceeds by proposing a specific decision procedure, and then showing that the proposed decision procedure satisfies the requirements of PAMPAC learnability from Definition \ref{definition_pampac_learnability} when the class of functions $\mathcal{H}_{\ell b}$ satisfies the entropy growth condition. The specific decision procedure proposed in the proof is any procedure that obtains within $\varepsilon$ of the maximum of the sample analog lower envelope function for each sample $\psi \in \Psi_{n}$, for some $\varepsilon>0$. We call this rule the \textit{$\varepsilon-$maximin empirical rule}, and we will revisit it's properties in the next subsection. Here we also finally see the close connection between PAMPAC learnability and the lower envelope function from Theorem \ref{thm_cortes} in the previous section, which has been alluded to throughout the paper. The particular form of the lower envelope function from Theorem \ref{thm_cortes} makes it amenable to analysis using methods from empirical process theory, which are used in the proof of Theorem \ref{theorem_pampac_learnability}. Also note that Assumption \ref{assumption_error_bound}, which was needed to obtain a bound on the penalty $\mu^{*}$ in Theorem \ref{thm_cortes}, is also needed for this result. Without a bound on this penalty, Theorem \ref{theorem_pampac_learnability} will generally not be true.

The proof of part (ii) of Theorem \ref{theorem_pampac_learnability} shows that if each ``component'' of the lower envelope of the policy transform---namely the moment functions and the function $\varphi$---satisfy the entropy growth condition, then the metric entropy of the class $\mathcal{H}_{\ell b}$ can also be controlled. Combined with the result in Proposition \ref{proposition_lower_envelope_connection}, the proof of part (ii) of Theorem \ref{theorem_pampac_learnability} then shows that our proposed $\varepsilon-$maximin decision rule can obtain close to the maximum value (over $\gamma \in \Gamma$) of the lower envelope of the policy transform with high probability.

It may seem surprising that our learnability result holds for any policy space. However, this is a result of the fact that the complexity of the policy space is tempered by the class of functions $\Phi$ from \eqref{eq_thm_pampac_phi}, since it is only through functions in this class that the policy can affect the policy transform. By imposing that the class $\Phi$ satisfy the entropy growth condition, we are implicitly imposing constraints on the complexity of the policy space. Note that the Theorem provides only sufficient conditions for PAMPAC learnability, and alternative results that impose complexity constraints on the policy space $\Gamma$ directly, rather than on $\Phi$, may be possible.   

We will now turn to our motivating examples to verify learnability of the involved policy spaces.

\setcounter{example}{0}

\begin{example}[Simultaneous Discrete Choice (cont'd)]
Consider again Example \ref{example_simultaneous_discrete_choice} on simultaneous discrete choice. In this case we have:
\begin{align}
\Phi &:= \{ \mathbbm{1}\{\pi_{k}(\gamma(\,\cdot\,);\theta) \geq u \} : (u,\theta) \in \mathcal{U}\times\Theta \},
\end{align}
with the moment conditions:
\begin{align}
\E \left[\left(\mathbbm{1}\{U_{k}\leq \pi_{k}(z',y_{-k}';\theta)\} - \max\{L_{0}\pi_{k}(z',y_{-k}';\theta),0\} - 0.5\right)\mathbbm{1}\{Z_{k}=z, Y_{-k}=y_{-k}\}  \right]\leq 0,\\
\E\left[\left(0.5 - \mathbbm{1}\{U_{k}\leq \pi_{k}(z',y_{-k}';\theta)\}-\max\{-L_{0}\pi_{k}(z',y_{-k}';\theta),0\}\right) \mathbbm{1}\{Z_{k}=z, Y_{-k}=y_{-k}\}  \right] \leq 0,
\end{align}
for $k=1,\ldots,K$, for all $z,z' \in \mathcal{Z}$ and all $y_{-k},y_{-k}' \in \mathcal{Y}^{K-1}$. Details on the verification of the entropy growth condition for both $\Phi$ and the class of moment functions associated with the moment conditions above are presented in Appendix \ref{appendix_additional_details_sdc_verify_learnable}. Furthermore, under our assumptions for this example, the rate of convergence derived from Theorem \ref{theorem_pampac_learnability} is found to be $O(n^{-1/2})$. 
\end{example}

\begin{example}[Program Evaluation (cont'd)]
Consider again Example \ref{example_program_evaluation} on program evaluation. In this case we have:
\begin{align}
\Phi &:= \{ \mathbbm{1}\{g(\gamma(z)) \geq u \}(u_{1}-u_{0}) + u_{0} : (u_{0},u_{1},u,g) \in \mathcal{U}\times\mathcal{G} \},
\end{align}
with the moment conditions:
\begin{align}
\E[\left(D - g(Z_{0},X)\right)\mathbbm{1}\{Z_{0}=z_{0},X=x\}] &\leq 0,\qquad \forall z_{0} \in \mathcal{Z}_{0}, \, x \in \mathcal{X},\\
\E[\left(g(Z_{0},X) - D\right)\mathbbm{1}\{Z_{0}=z_{0},X=x\}] &\leq 0,\qquad \forall z_{0} \in \mathcal{Z}_{0}, \, x \in \mathcal{X},\\
\E[\left(\mathbbm{1}\{U \leq g(z_{0},x)\} - g(z_{0},x)\right)\mathbbm{1}\{X=x\}] &\leq 0,\qquad \forall z_{0} \in \mathcal{Z}_{0}, \, x \in \mathcal{X}, \\
\E[\left(g(z_{0},x) - \mathbbm{1}\{U \leq g(z_{0},x)\}\right)\mathbbm{1}\{X=x\}] &\leq 0, \qquad\forall z_{0} \in \mathcal{Z}_{0},\, x \in \mathcal{X},\\
\E\left[t(z_{0},x)-\mathbbm{1}\{Z=z_{0},X=x\}\right] &\leq 0,\qquad\forall z_{0} \in \mathcal{Z}_{0},\, \forall x \in \mathcal{X}, \\
\E\left[\mathbbm{1}\{Z=z_{0},X=x\}-t(z_{0},x)\right] &\leq 0,\qquad\forall z_{0} \in \mathcal{Z}_{0},\, \forall x \in \mathcal{X},
\end{align}
and:
\begin{align}
\E\left[U_{d}\left(\mathbbm{1}\{Z=z_{0},X=x\}\sum_{z_{0} \in \mathcal{Z}_{0}} t(z_{0},x)- \mathbbm{1}\{X=x\}t(z_{0},x) \right)\right] &\leq 0, \,\,\forall z_{0} \in \mathcal{Z}_{0},\, x \in \mathcal{X},\, d\in \{0,1\},\\
\E\left[U_{d}\left(\mathbbm{1}\{X=x\}t(z_{0},x)-\mathbbm{1}\{Z=z_{0},X=x\}\sum_{z_{0} \in \mathcal{Z}_{0}} t(z_{0},x) \right) \right] &\leq 0, \,\,\forall z_{0} \in \mathcal{Z}_{0},\, x \in \mathcal{X},\, d\in \{0,1\}.
\end{align}
Details on the verification of the entropy growth condition for both $\Phi$ and the class of functions associated with the moment functions above are presented in Appendix \ref{appendix_additional_details_te_verify_learnable}. Furthermore, under our assumptions for this example, the rate of convergence derived from Theorem \ref{theorem_pampac_learnability} is found to be $O(n^{-1/2})$. 
\end{example}

\section{Ex-Post Theoretical Results}\label{section_ex_post_analysis}

Theorem \ref{theorem_pampac_learnability} shows sufficient conditions for PAMPAC learnability in a given environment. However, while the result shows that it may be possible \textit{ex-ante} (i.e. before observing a particular sample) to learn a given policy space, it does not provide us any useful \textit{ex-post} (i.e. after observing the sample) information on the performance of our decision rule. This reflects a well-known complaint of PAC learnability, and has given rise to the literature on data-dependent excess risk bounds in statistical learning literature; see \cite{bartlett2002model}, \cite{koltchinskii2001rademacher}, and \cite{koltchinskii2006local} for examples, and \cite{boucheron2005theory} or \cite{koltchinskii2011oracle} for a review. Thus after establishing learnability of a particular class of policies, it may be of separate interest to evaluate the finite sample performance of a given decision rule for a given sample.  

This is accomplished in the next subsections. We will focus our attention on the particular decision rule used in the proof of Theorem \ref{theorem_pampac_learnability} which was shown to satisfy the requirements of PAMPAC learnability under the assumptions of the theorem. The decision rule used was allowed to be any $\varepsilon-$maximizer of the empirical version of the lower envelope function $I_{\ell b}[\varphi](\gamma)$, which is why we will call it the \textit{$\varepsilon-$maximin empirical rule.}

\begin{definition}[$\varepsilon-$maximin empirical welfare]
Fix any $\varepsilon\geq 0$ and let $\widehat{I}_{\ell b}[\varphi](\gamma)$ denote the lower envelope from Theorem \ref{thm_cortes} evaluated at the empirical measure for $(Y,Z)$. Then $d: \Psi_{n} \to \Gamma$ is a $\varepsilon-$maximin empirical (eME) rule if:
\begin{align}
\widehat{I}_{\ell b}[\varphi](d(\psi)) + \varepsilon \geq \sup_{\gamma \in \Gamma} \widehat{I}_{\ell b}[\varphi](\gamma). \label{maximin_swf_empirical}
\end{align}
\end{definition} 
\begin{remark}
Note that in general the ``$\varepsilon$'' is necessary (although it can be made arbitrarily small), owing to the fact that the supremum of $\widehat{I}_{\ell b}[\varphi](\cdot)$ may not be obtained. 
\end{remark}

Furthermore, unlike our result on PAMPAC learnability, all of the results in the next subsections are data-dependent, and do not depend on any particular properties (beyond measurability) of any function classes involved in the policy decision problem. Thus, there is no need to verify the entropy growth condition, or any other condition sufficient for learnability to use the results ahead. In practice, we still recommend that the sufficient conditions for learnability of a policy space be verified prior to using the results. 

\subsection{Theoretical Results for the Maximin Empirical Rule}\label{sec_welfare_guarantees_MEW}

In this section we obtain a bound on the value of $c_{n}(d,\kappa)$ for any fixed $\kappa$ taking $d$ to be the eME rule. To describe our procedure, we will first introduce a data-dependent complexity measure for the class $\mathcal{H}_{\ell b}$. The complexity measure we use is based on the empirical Rademacher complexity, advocated by \cite{bartlett2002model}, \cite{koltchinskii2001rademacher}, and \cite{koltchinskii2006local} (among others) in the context of empirical risk minimization.

\begin{definition}[Empirical Rademacher Complexity]
Let $\mathcal{F}$ be a class of measurable functions $f:\mathcal{Y}\times\mathcal{Z} \to \mathbb{R}$. The \textit{empirical Rademacher complexity} of $\mathcal{F}$ is given as:
\begin{align}
    ||\mathfrak{R}_{n}||(\mathcal{F}) := \sup_{f \in \mathcal{F}} \left|\frac{1}{n} \sum_{i=1}^{n} \xi_{i} \cdot f(y_{i},z_{i})\right|,\label{eq_rademacher}
\end{align}
where $\xi_{i}$ are realizations of Rademacher random variables; that is, $\xi \in \{-1,1\}$ and $P(\xi_{i}=-1)=P(\xi=1)=1/2$.
\end{definition}
\begin{remark}
A technical point worth emphasizing is that, when seen as a function of the underlying product probability space, the empirical Rademacher complexity may not be a measurable function. We suppress these difficulties in the statement of our results, although we show in Appendix \ref{appendix_measurability} that the Rademacher complexity $||\mathfrak{R}_{n}||(\mathcal{H}_{\ell b})$ is universally measurable (with respect to the product Borel $\sigma-$algebra on $(\mathcal{Y}\times\mathcal{Z})^{n}$), which is sufficient for the purposes in this paper.
\end{remark}

In our context, the empirical Rademacher complexity of the class $\mathcal{H}_{\ell b}$ depends only on the observed empirical distribution and on $n$ draws of a Rademacher random variable; it can therefore be computed after simulating from the Rademacher distribution. With this new definition in hand, we have the following result:  

\begin{theorem}\label{thm_finite_sample1}
Suppose that Assumptions \ref{assump_preliminary}, \ref{assumption_factual_domain}, \ref{assumption_counterfactual_domain}, and \ref{assumption_error_bound} hold. Let $\varphi:\mathcal{V} \to [\varphi_{\ell b}, \varphi_{ub}]\subset \mathbb{R}$ be a bounded, measurable function, and suppose that for each $\gamma \in \Gamma$, the random sets $\bm G^{-}(\cdot,\theta)$ and $\bm G^{\star}(\cdot,\theta,\gamma)$ are almost-surely non-empty for each $\theta \in \Theta^{*}$. Let $\{(y_{i},z_{i})\}_{i=1}^{n}$ be i.i.d. from some distribution $P_{Y,Z}$ satisfying our assumptions and let $d:\Psi_{n} \to \Gamma$ be an eME decision rule for some $\varepsilon>0$. Furthermore, let $\overline{H}<\infty$ satisfy $|h| \leq \overline{H}$ for every $h \in \mathcal{H}_{\ell b}$, and let:
\begin{align}
c_{n}(\kappa)&=  4 ||\mathfrak{R}_{n}||(\mathcal{H}_{\ell b}) + \sqrt{\frac{72 \ln(2/(2-\kappa)) \overline{H}^{2}}{n}} + 5\varepsilon.
\end{align}
Then for any sample size $n$, and any $\kappa \in (0,1)$ we have: 
\begin{align}
\inf_{P_{Y,Z} \in \mathcal{P}_{Y,Z}} P_{Y,Z}^{\otimes n} \left( \sup_{\gamma \in \Gamma}\inf_{s \in \mathcal{S}} I[\varphi](\gamma,s)  - \inf_{s\in \mathcal{S}} I[\varphi](d(\psi),s) \leq c_{n}(\kappa) \right) \geq \kappa.\label{eq_concen_1}
\end{align}
\end{theorem}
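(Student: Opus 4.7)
The plan is to translate the maximin gap into a gap about the lower envelope of Theorem \ref{thm_cortes}, split this gap across the empirical envelope $\widehat I_{\ell b}$, and finally reduce the problem to a standard uniform empirical process over $\mathcal{H}_{\ell b}$ which will be controlled by symmetrization and concentration. By Proposition \ref{proposition_lower_envelope_connection}, the event in the theorem is the same as
$$\left\{\sup_{\gamma \in \Gamma} I_{\ell b}[\varphi](\gamma) - I_{\ell b}[\varphi](d(\psi)) \leq c_n(\kappa)\right\}.$$
First I would pick, for any $\eta>0$, some $\gamma^\star \in \Gamma$ with $I_{\ell b}[\varphi](\gamma^\star) \geq \sup_\gamma I_{\ell b}[\varphi](\gamma) - \eta$, and perform the usual add-and-subtract decomposition
$$I_{\ell b}[\varphi](\gamma^\star) - I_{\ell b}[\varphi](d(\psi)) = \underbrace{\bigl[I_{\ell b}[\varphi](\gamma^\star) - \widehat{I}_{\ell b}[\varphi](\gamma^\star)\bigr]}_{(a)} + \underbrace{\bigl[\widehat{I}_{\ell b}[\varphi](\gamma^\star) - \widehat{I}_{\ell b}[\varphi](d(\psi))\bigr]}_{(b)} + \underbrace{\bigl[\widehat{I}_{\ell b}[\varphi](d(\psi)) - I_{\ell b}[\varphi](d(\psi))\bigr]}_{(c)}.$$
Term (b) is at most $\varepsilon$ by the defining property \eqref{maximin_swf_empirical} of the eME rule, modulo further $\varepsilon$-slack for the empirical supremum possibly being unattained; the $5\varepsilon$ in $c_n(\kappa)$ will emerge from collecting these slacks (together with the $\eta$ just introduced) along the decomposition. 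The bulk of the proof thus reduces to bounding $(a)+(c) \leq 2D_n$, where $D_n := \sup_{\gamma \in \Gamma}|\widehat I_{\ell b}[\varphi](\gamma) - I_{\ell b}[\varphi](\gamma)|$.

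Next, by Theorem \ref{thm_cortes} both envelopes equal $\inf_{\theta \in \Theta}\max_{\lambda \in \{0,1\}^J} \int h_{\ell b}(\cdot,\cdot,\theta,\gamma,\lambda)$ against $P_{Y,Z}$ and the empirical measure $P_n$ respectively. The map $f \mapsto \inf_\theta \max_\lambda f(\theta,\lambda)$ is $1$-Lipschitz with respect to the sup norm on $\Theta \times \{0,1\}^J$, so taking the supremum over $\gamma$ as well gives the key reduction
$$D_n \;\leq\; \sup_{h \in \mathcal{H}_{\ell b}} \bigl|(P_{Y,Z}-P_n)h\bigr|,$$
converting an awkward difference of two parametric infima into a standard uniform empirical process over the uniformly bounded class $\mathcal{H}_{\ell b}$.

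The last step is concentration. Since $|h| \leq \overline{H}$ for every $h \in \mathcal{H}_{\ell b}$, the functional $\psi \mapsto \sup_{h}|(P_{Y,Z}-P_n)h|$ has bounded differences $2\overline{H}/n$, so McDiarmid's inequality gives, with probability at least $1-\delta_1$, a bound of the form $\E\sup_h|(P_{Y,Z}-P_n)h| + \overline{H}\sqrt{2\ln(1/\delta_1)/n}$. Classical symmetrization bounds that expectation by $2\,\E||\mathfrak{R}_{n}||(\mathcal{H}_{\ell b})$, and a second application of McDiarmid to $||\mathfrak{R}_{n}||(\mathcal{H}_{\ell b})$ (again with bounded differences $2\overline{H}/n$) replaces the expected Rademacher complexity by its empirical counterpart at cost $\overline{H}\sqrt{2\ln(1/\delta_2)/n}$ with probability at least $1-\delta_2$. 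Splitting the failure budget $1-\kappa$ across the two deviations and multiplying by the factor of two from bounding $(a)+(c)$ delivers the prefactor $4||\mathfrak{R}_{n}||(\mathcal{H}_{\ell b})$ and, after collecting the square-root terms, the deviation $\sqrt{72\,\overline{H}^{\,2}\ln(2/(2-\kappa))/n}$.

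The main obstacle I expect is not analytical but measure-theoretic: neither $D_n$ nor $||\mathfrak{R}_{n}||(\mathcal{H}_{\ell b})$ is \textit{a priori} Borel measurable as a function of $\psi$, and the envelope $\gamma \mapsto I_{\ell b}[\varphi](\gamma)$ itself is built from suprema/infima over the potentially rich parameter space $\Theta$. However, Proposition \ref{proposition_measurable} and the universal-measurability results in Appendix \ref{appendix_measurability} (invoked already for Proposition \ref{proposition_lower_envelope_connection}) let one replace outer probabilities by ordinary probabilities under each $P_{Y,Z}^{\otimes n}$, and McDiarmid's inequality still applies to any universally measurable functional with bounded differences. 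The remainder of the proof is just careful bookkeeping of the $\varepsilon$-slacks, the two McDiarmid square-root constants, and the factor of two from symmetrization to match exactly the constants $4$, $72$ and $5$ in $c_n(\kappa)$.
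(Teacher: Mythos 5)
Your proposal is correct and takes essentially the same route as the paper: reduce the maximin gap to the lower envelope via Theorem \ref{thm_cortes}, absorb the eME and unattained-optimizer slacks into the $5\varepsilon$, bound what remains by twice the uniform deviation $\sup_{h\in\mathcal{H}_{\ell b}}\left|\int h\,d(P_{n}-P_{Y,Z})\right|$ (the paper does this through its explicit chain of $\varepsilon$-optimizers $\theta^{*},\hat\theta,\gamma^{*},\hat\gamma$, which is your add-and-subtract in expanded form), and then apply a data-dependent concentration bound in terms of the empirical Rademacher complexity, which the paper obtains by citing Theorem 4.6 of \cite{koltchinskii2011oracle} rather than running your symmetrization-plus-two-applications-of-McDiarmid argument by hand. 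The only discrepancy is in the final bookkeeping: splitting the failure budget $1-\kappa$ across the two deviations as you propose yields $\ln(2/(1-\kappa))$ inside the square root rather than the paper's $\ln(2/(2-\kappa))$, so your constants $4$ and $5\varepsilon$ match but the logarithmic factor differs slightly from the stated one.
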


\begin{proof}
See Appendix \ref{appendix_proofs}.
\end{proof}

Theorem \ref{thm_finite_sample1} shows two closely related results. First, for any fixed value of $\kappa \in (0,1)$ the Theorem shows that, when in the worst-case state, the eME rule obtains within $c_{n}(\kappa)$ of the maximin value of the state-dependent policy transform with probability at least $\kappa$.  Simple comparative statics show that the value of $c_{n}(\kappa)$ is smaller when $n$ is larger and/or $||\mathfrak{R}_{n}||(\mathcal{H}_{\ell b})$ and $\overline{H}$ are smaller. The only difficult part of computing $c_{n}(\kappa)$ is computing the Rademacher complexity, which is approximately as difficult computationally as computing the empirical version of the lower bound in Theorem \ref{thm_cortes}.  

We again see a close connection between PAMPAC learnability and the lower envelope function from Theorem \ref{thm_cortes}. The particular form of the lower envelope function from Theorem \ref{thm_cortes} makes it especially amenable to analysis using concentration concentration inequalities, which are used in the proof of Theorem \ref{thm_finite_sample1}. Again Assumption \ref{assumption_error_bound} is required for this result: without a finite (and known) value for the penalty $\mu^{*}$, derivation of the finite sample results in Theorem \ref{thm_finite_sample1} would not be possible.  

Finally we mention again that, unlike Theorem \ref{theorem_pampac_learnability} on PAMPAC learnability, Theorem \ref{thm_finite_sample1} does not impose any restrictions on the underlying class of functions $\mathcal{H}_{\ell b}$. In particular, this class need not satisfy the entropy growth condition from Definition \ref{definition_entropy_growth}, nor any other sufficient conditions for learnability, meaning Theorem \ref{thm_finite_sample1} is applicable even when $\Gamma$ is not PAMPAC learnable. As a result, Theorem \ref{thm_finite_sample1} is able to provide finite sample guarantees for the eME rule, but necessarily remains silent about rates of convergence.

\subsection{Bounds on the Set of Optimal Policies}\label{sec_set_of_optimal_policies}

The previous subsection uses a specific rule, the eME rule, and derives finite sample theoretical guarantees on the performance of this rule. However, the eME rule is only one particular rule, and for a variety of reasons it may not be the rule selected by the policymaker.  

In order to complement the results of the previous subsection, in this subsection we will provide some theoretical results on alternative policy rules. To understand the approach, let us define the function:
\begin{align}
\mathscr{E}^{*}(\gamma)&:= \sup_{\gamma \in \Gamma}\inf_{s \in \mathcal{S}} I[\varphi](\gamma,s) - \inf_{s \in \mathcal{S}} I[\varphi](\gamma,s) =\sup_{\gamma \in \Gamma} I_{\ell b}[\varphi](\gamma) - I_{\ell b}[\varphi](\gamma),\label{eq_gamma_error}
\end{align}
and the set:
\begin{align}
\mathscr{G}^{*}(\delta)&:= \{ \gamma \in \Gamma : \mathscr{E}^{*}(\gamma) \leq \delta\}.
\end{align}
We call the set $\mathscr{G}^{*}(\delta)$ the $\delta-$level set. Our objective in this subsection will be to provide an approximation of the $\delta-$level set that holds with probability at least $\kappa$. If we can do so, then by constructionn any decision rule $d:\Psi_{n}\to \Gamma$ that maps within our approximation of the $\delta-$level set will have $c_{n}(d,\kappa)\leq \delta$. There may be many decision rules that map within our approximation to the $\delta-$level set, so our theoretical results will be applicable to a large number of decision rules. As a by product of our analysis, we will also show that for certain values of $\delta$ the eME rule will be contained in the $\delta-$level set with probability at least $\kappa$. Again, the results of this section do not impose any restrictions on the underlying class of functions $\mathcal{H}_{\ell b}$, and are applicable even when $\Gamma$ is not PAMPAC learnable.

To introduce our results for the $\delta-$level set, we must first introduce some additional notation. In particular, define:
\begin{align}
\mathscr{E}_{n}(\gamma)&:=\sup_{\gamma \in \Gamma}\inf_{s \in \mathcal{S}} \widehat{I}[\varphi](\gamma,s) - \inf_{s \in \mathcal{S}} \widehat{I}[\varphi](\gamma,s) =\sup_{\gamma \in \Gamma} \widehat{I}_{\ell b}[\varphi](\gamma) - \widehat{I}_{\ell b}[\varphi](\gamma) ,
\end{align}
and for $\delta>0$ define the set:
\begin{align}
\mathscr{G}_{n}(\delta)&:= \{ \gamma \in \Gamma : \mathscr{E}_{n}(\gamma) \leq \delta\}.
\end{align}
The set $\mathscr{G}_{n}(\delta)$ represents the empirical version of the $\delta-$level set. 

The following theorem shows that, for sufficiently large $\delta$, the $\delta-$level set is contained within an enlargement of, and contains a contraction of, the empirical $\delta-$level set with high probability. 

\begin{theorem}\label{theorem_delta_minimal}
Suppose that Assumptions \ref{assump_preliminary}, \ref{assumption_factual_domain}, \ref{assumption_counterfactual_domain}, and \ref{assumption_error_bound} hold. Also suppose that $\varphi:\mathcal{V} \to [\varphi_{\ell b}, \varphi_{ub}]\subset \mathbb{R}$ is a bounded, measurable function, and that for each $\gamma \in \Gamma$, the random sets $\bm G^{-}(\cdot,\theta)$ and $\bm G^{\star}(\cdot,\theta,\gamma)$ are almost-surely non-empty for each $\theta \in \Theta^{*}$. Let $\overline{H}<\infty$ satisfy $|h| \leq \overline{H}$ for every $h \in \mathcal{H}_{\ell b}$, and suppose that $\{(y_{i},z_{i})\}_{i=1}^{n}$ is i.i.d. from some distribution $P_{Y,Z}$ satisfying our assumptions. Define:
\begin{align*}
\mathcal{H}_{n,\ell b}'(\delta)&:=\{ h_{\ell b}(\cdot,\cdot,\theta,\gamma,\lambda) -  h_{\ell b}(\cdot,\cdot,\theta',\gamma',\lambda') : \theta,\theta' \in \Theta,\,\, \gamma,\gamma' \in \mathscr{G}_{n}(\delta),\lambda,\lambda' \in \{0,1\}^{J} \},
\end{align*}
where $\mathcal{H}_{n,\ell b}'(\delta)$ has a uniform bound $\overline{H}_{n}'(\delta)\leq 2\overline{H} <\infty$. Furthermore, let $t_{j}:=\sqrt{c_{1} \log(c_{2} j )}$ with $c_{1}=5$ and $c_{2}= (3/(2(1-\kappa)))^{2/5}$, and let $\{\delta_{j}\}_{j=0}^{\infty}$ be a sequence decreasing to zero with $\delta_{0} > 2\overline{H}$. Choose some $\mathfrak{a} \in (1,\infty)$, let $\mathfrak{b}=2-1/\mathfrak{a}$, and let:
\begin{align}
T_{n}(\delta)&: = 
\begin{cases}
	2||\mathfrak{R}_{n}||(\mathcal{H}_{n,\ell b}'(\mathfrak{b}\delta_{j})) + \frac{3t_{j}\overline{H}_{n}'(\mathfrak{b}\delta_{j})}{\sqrt{n}}, &\text{ if } \delta \in (\delta_{j+1},\delta_{j}] \text{ for some $j \geq 0$}\\
	0, &\text{ otherwise, } 
\end{cases}
\end{align}
and:
\begin{align}
T_{n}^\flat(\sigma)&:=\sup_{\delta \geq \sigma} \frac{T_{n}(\delta)}{\delta},\\
T_{n}^\sharp(\eta)&:=\inf\left\{ \sigma>0 : T_{n}^\flat(\sigma)\leq \eta\right\}.
\end{align}
Finally, set $\delta^{*}> T_{n}^\sharp(1-1/\mathfrak{a})$. Then for any $\delta \geq \mathfrak{a} \delta^{*}$ we have:
\begin{align*}
\inf_{P_{Y,Z} \in \mathcal{P}_{Y,Z}} P_{Y,Z}^{\otimes n} \left(\mathscr{G}_{n}(\delta/\mathfrak{a}) \subseteq \mathscr{G}^{*}(\delta) \subseteq \mathscr{G}_{n}(\mathfrak{b}\delta)  \right)  \geq \kappa.
\end{align*}
\end{theorem}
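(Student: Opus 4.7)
The plan is to bound, with probability at least $\kappa$, the uniform deviation between the population and empirical excess functions $\mathscr{E}^{*}(\gamma) - \mathscr{E}_{n}(\gamma)$ on the relevant localized level sets, and then to deduce the double inclusion by direct algebraic manipulation. The overall structure follows the localized Rademacher-complexity framework of \cite{koltchinskii2006local}.

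First, I would reduce the problem to controlling an empirical process indexed by the difference class $\mathcal{H}_{n,\ell b}'(\delta)$. Using the minimax representation $I_{\ell b}[\varphi](\gamma) = \inf_{\theta}\max_{\lambda \in \{0,1\}^{J}} P_{Y,Z} h_{\ell b}(\cdot,\theta,\gamma,\lambda)$ from Theorem \ref{thm_cortes} and its empirical analogue under $P_{n}$, a standard minimax-stability estimate gives that $I_{\ell b}[\varphi](\gamma) - \widehat{I}_{\ell b}[\varphi](\gamma)$ can be sandwiched between signed empirical processes evaluated at $(\theta,\lambda)$ pairs associated with $\gamma$. Writing $\mathscr{E}^{*}(\gamma)-\mathscr{E}_{n}(\gamma) = [I_{\ell b}(\gamma^{*})-\widehat{I}_{\ell b}(\widehat\gamma)] - [I_{\ell b}(\gamma)-\widehat{I}_{\ell b}(\gamma)]$ and using $\widehat{I}_{\ell b}(\widehat\gamma) \geq \widehat{I}_{\ell b}(\gamma^{*})$ and $I_{\ell b}(\gamma^{*}) \geq I_{\ell b}(\widehat\gamma)$ to swap $\widehat\gamma$ for $\gamma^{*}$ in each direction, I would obtain the pointwise bound $|\mathscr{E}^{*}(\gamma)-\mathscr{E}_{n}(\gamma)| \leq \sup_{h \in \mathcal{H}_{n,\ell b}'(\delta^{\dagger})}|(P-P_{n})h|$, valid for every $\gamma$ in an empirical level set of radius $\delta^{\dagger}$ provided the near-maximizers $\gamma^{*}$ and $\widehat\gamma$ also lie in $\mathscr{G}_{n}(\delta^{\dagger})$. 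The inclusion $\widehat\gamma \in \mathscr{G}_{n}(0)$ is automatic, but $\gamma^{*} \in \mathscr{G}_{n}(\delta^{\dagger})$ must be ensured by a short self-consistency step (assume the target inclusion $\mathscr{G}^{*}(\delta) \subseteq \mathscr{G}_{n}(\mathfrak{b}\delta)$ holds and verify it after the fact); handling this circular localization is the main obstacle I anticipate in the proof.

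Next, I would control the supremum of the localized empirical process by concentration. Symmetrization bounds its expectation by $2\,\mathbb{E}\|\mathfrak{R}_{n}\|(\mathcal{H}_{n,\ell b}'(\mathfrak{b}\delta))$, and since every $h \in \mathcal{H}_{n,\ell b}'(\mathfrak{b}\delta)$ is uniformly bounded by $\overline{H}_{n}'(\mathfrak{b}\delta)$, two applications of McDiarmid's bounded-differences inequality show that both the supremum and the empirical Rademacher complexity concentrate around their respective means at the Gaussian rate $t\,\overline{H}_{n}'(\mathfrak{b}\delta)/\sqrt{n}$. Combining these on each slice $\delta \in (\delta_{j+1},\delta_{j}]$ and taking $t = t_{j}$ gives $\sup_{h \in \mathcal{H}_{n,\ell b}'(\mathfrak{b}\delta)}|(P-P_{n})h| \leq T_{n}(\delta)$ with failure probability at most $\exp(-c_{0}t_{j}^{2})$ for an absolute constant $c_{0}$. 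A peeling union bound across $j \geq 0$, with $t_{j}=\sqrt{c_{1}\log(c_{2}j)}$ and $(c_{1},c_{2})$ tuned as in the statement so that $\sum_{j\geq 0}\exp(-c_{0}t_{j}^{2}) \leq 1-\kappa$, then yields that $|\mathscr{E}^{*}(\gamma)-\mathscr{E}_{n}(\gamma)| \leq T_{n}(\delta)$ holds simultaneously over all $\delta>0$ and all relevant $\gamma$, on a single good event of probability at least $\kappa$.

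Finally, I would convert the additive bound into the multiplicative form needed for the sandwich and conclude. By construction of $T_{n}^{\flat}$ and $T_{n}^{\sharp}$, whenever $\delta' \geq T_{n}^{\sharp}(1-1/\mathfrak{a})$ we have $T_{n}(\delta') \leq (1-1/\mathfrak{a})\delta'$; and since $\delta \geq \mathfrak{a}\delta^{*} > \mathfrak{a}T_{n}^{\sharp}(1-1/\mathfrak{a})$, both $\delta$ and $\delta/\mathfrak{a}$ exceed this threshold. On the good event, if $\gamma \in \mathscr{G}^{*}(\delta)$ then $\mathscr{E}_{n}(\gamma) \leq \mathscr{E}^{*}(\gamma) + T_{n}(\delta) \leq \delta + (1-1/\mathfrak{a})\delta = \mathfrak{b}\delta$, giving $\gamma \in \mathscr{G}_{n}(\mathfrak{b}\delta)$; conversely, if $\gamma \in \mathscr{G}_{n}(\delta/\mathfrak{a})$ then $\mathscr{E}^{*}(\gamma) \leq \mathscr{E}_{n}(\gamma) + T_{n}(\delta/\mathfrak{a}) \leq (\delta/\mathfrak{a})\bigl(1 + (1-1/\mathfrak{a})\bigr) = \mathfrak{b}\delta/\mathfrak{a} \leq \delta$, where the last inequality uses $(\mathfrak{a}-1)^{2} \geq 0$, and so $\gamma \in \mathscr{G}^{*}(\delta)$. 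Together these establish $\mathscr{G}_{n}(\delta/\mathfrak{a}) \subseteq \mathscr{G}^{*}(\delta) \subseteq \mathscr{G}_{n}(\mathfrak{b}\delta)$ with probability at least $\kappa$, uniformly in $P_{Y,Z} \in \mathcal{P}_{Y,Z}$.
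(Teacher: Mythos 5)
Your high-level architecture (localization, symmetrization plus concentration, peeling, then the $\flat$/$\sharp$ algebra) matches the paper in spirit, and your final algebraic step converting an additive deviation bound into the double inclusion is essentially the right calculation. But there is a genuine gap at the heart of the concentration step. You apply symmetrization and McDiarmid directly to $\sup_{h\in\mathcal{H}_{n,\ell b}'(\delta)}|(P-P_{n})h|$, where $\mathcal{H}_{n,\ell b}'(\delta)$ is indexed by the \emph{empirical} level sets $\mathscr{G}_{n}(\delta)$. That class is a random object built from the same sample whose empirical measure appears in the process, so the standard symmetrization and bounded-differences inequalities do not apply to it as stated; and the ``self-consistency step'' you propose---assume $\mathscr{G}^{*}(\delta)\subseteq\mathscr{G}_{n}(\mathfrak{b}\delta)$ and verify it afterwards---is exactly the circularity you flag, not a resolution of it. A related, smaller hole is your treatment of the near-maximizers: placing $\gamma^{*}$ (or, in the mirrored population-localized version, $\hat{\gamma}$) inside the localization set is itself a high-probability statement requiring its own argument; in the paper this is Lemma \ref{lemma_gamma_hat_high_prob}, not something that comes for free.

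The paper closes the circle with a two-stage argument that you would need to reproduce in some form. First, Lemma \ref{lemma_subsets} proves the sandwich with the \emph{population}-localized difference class $\mathcal{H}_{\ell b}'(\delta)$, indexed by $\mathscr{G}^{*}(\delta)$, which is a fixed, non-random class, so the uniform Hoeffding/Rademacher bound and the peeling union bound are legitimate; the threshold there is an infeasible $\delta^{**}$ defined through the infeasible $T(\cdot)$, and the $\varepsilon$-optimizer bookkeeping over $(\theta,\lambda,\gamma)$ plus Lemma \ref{lemma_gamma_hat_high_prob} delivers the two multiplicative inequalities $\mathscr{E}_{n}(\gamma)\leq\mathfrak{b}(\mathscr{E}^{*}(\gamma)\vee\delta^{**})$ and $\mathscr{E}^{*}(\gamma)\leq\mathfrak{a}(\mathscr{E}_{n}(\gamma)\vee\delta^{**})$ on the good event. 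Second, the proof of Theorem \ref{theorem_delta_minimal} shows that on that same event the lemma's conclusion $\mathscr{G}^{*}(\delta)\subseteq\mathscr{G}_{n}(\mathfrak{b}\delta)$ for $\delta\geq\delta^{**}$ forces $T(\delta)\leq T_{n}(\delta)$ (this is precisely why $T_{n}$ is built from $\mathcal{H}_{n,\ell b}'(\mathfrak{b}\delta_{j})$ rather than $\mathcal{H}_{n,\ell b}'(\delta_{j})$), and then, through the monotonicity and continuity properties of the $\flat$- and $\sharp$-transforms, that $\delta^{**}\leq\delta^{*}$; the stated result is then read off from the lemma. Without this comparison step---or some substitute fixed-point/iteration argument replacing your ``verify after the fact''---the probabilistic bounds you invoke are conditioned on a class you cannot fix in advance, and the proof does not go through.
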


\begin{proof}
See Appendix \ref{appendix_proofs}.
\end{proof}

Theorem \ref{theorem_delta_minimal} closely mimics results in the statistical learning literature, namely in the problem of bounding excess risk in empirical risk minimization problems. In particular, the proof of the result uses techniques developed by \cite{koltchinskii2006local} and \cite{koltchinskii2011oracle}, where the latter gives a textbook treatment.\footnote{The $\flat-$ and $\sharp-$transforms are taken from \cite{koltchinskii2006local}, and the properties of these transforms can be found in Appendix A.3. of \cite{koltchinskii2011oracle}.} Theorem \ref{theorem_delta_minimal} gives a novel application of these techniques to the problem of policy choice in the presence of partial identification. Similar to the other results in this paper, Theorem \ref{theorem_delta_minimal} relies crucially on the form of the lower envelope function from Theorem \ref{thm_cortes}. Again Assumption \ref{assumption_error_bound} is required, since Theorem \ref{theorem_delta_minimal} requires a finite (and known) value for the penalty parameter $\mu^{*}$.

Intuitively, Theorem \ref{theorem_delta_minimal} says that for a suitably large value of $\delta$ the $\delta-$level sets $\mathscr{G}_{n}(\delta)$ of the function $\mathscr{E}_{n}(\cdot)$ can be used to approximate the $\delta-$level sets $\mathscr{G}^{*}(\cdot)$ of the function $\mathscr{E}^{*}(\cdot)$. The substantial component of the results is the selection of such a ``suitably large value of $\delta$.'' In particular, the value of $\delta$ needed for our approximation to work must be larger than the value of $\delta^{*}$ from the Theorem, where $\delta^{*}$ is related to the solution of a fixed point equation. The connection of the functions $T_{n}(\,\cdot\,)$, $T_{n}^{\flat}(\,\cdot\,)$ and $T_{n}^{\sharp}(\,\cdot\,)$ to fixed point equations is illustrated in Figure \ref{fig_tnm} and is described in its associated caption. As illustrated in the Figure, the function $T_{n}(\delta)$ is a left-continuous step function that is greater than or equal to zero on the interval $[0,\delta_{0}]$, and zero otherwise. 

\begin{figure}[!ht]
\centering
\includegraphics[scale=0.6]{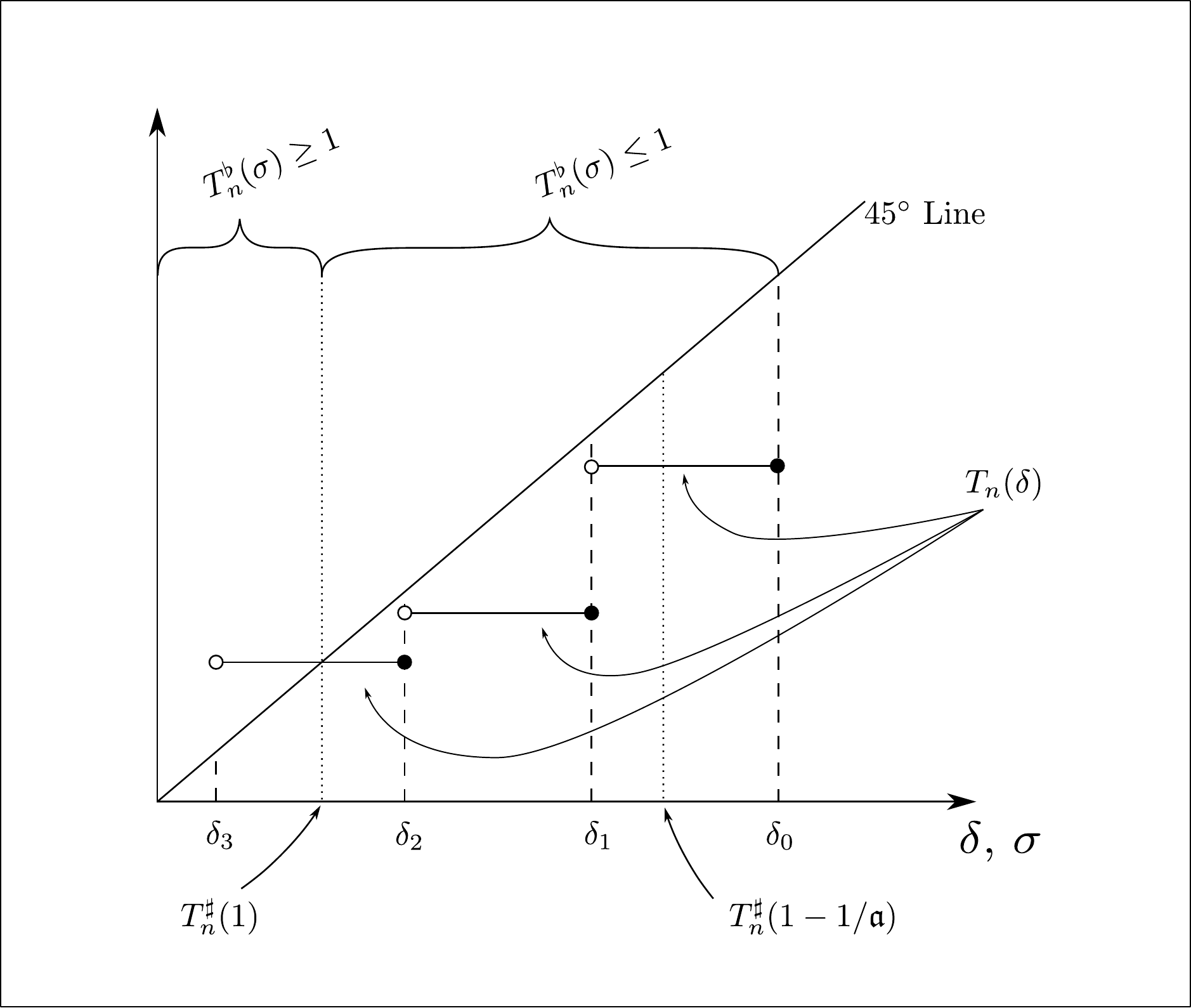}
\caption{This figure illustrates step (iv) in the procedure to determine the $\delta-$level set. After choosing a decreasing sequence $\{\delta_{j}\}_{j=0}^{\infty}$, the policymaker finds the value $\delta^{*}$ such that $\delta^{*}>T_{n}^{\sharp}(1-1/\mathfrak{a})$. In the figure, this occurs in the interval $(\delta_{1},\delta_{0}]$ (although, of course, this need not be the case). The figure also illustrates the fact that $T_{n}(\delta)$ is a step function. Finally, the figure illustrates how the $\flat-$ and $\sharp-$transforms of $T_{n}(\delta)$ are related to fixed point-equations. In particular, the figure illustrates the fixed point of $T_{n}(\delta) = \delta$, which is given exactly by $T^{\sharp}_{n}(1)$. In addition, the fixed point of $T_{n}(\delta) = \delta(1-1/\mathfrak{a})$ is given by $T^{\sharp}_{n}(1-1/\mathfrak{a})$.}\label{fig_tnm}
\end{figure}

The proof of Theorem \ref{theorem_delta_minimal} relies on Lemma \ref{lemma_subsets}, and the best way to understand Theorem \ref{theorem_delta_minimal} is to first understand Lemma \ref{lemma_subsets}.

\begin{lemma}\label{lemma_subsets}
Suppose that the assumptions of Theorem \ref{theorem_delta_minimal} all hold. Define:
\begin{align*}
\mathcal{H}_{\ell b}'(\delta)&:=\{ h_{\ell b}(\cdot,\cdot,\theta,\gamma,\lambda) -  h_{\ell b}(\cdot,\cdot,\theta',\gamma',\lambda') : \theta,\theta' \in \Theta,\,\,  \gamma,\gamma' \in \mathscr{G}^{*}(\delta),\lambda,\lambda' \in \Lambda \},
\end{align*}
where $\mathcal{H}_{\ell b}'(\delta)$ has a uniform bound $\overline{H}'(\delta)\leq 2\overline{H} < \infty$. Furthermore, let $t_{j}:=\sqrt{c_{1} \log(c_{2} j )}$ with $c_{1}=5$ and $c_{2}= (3/(2(1-\kappa)))^{2/5}$, and let $\{\delta_{j}\}_{j=0}^{\infty}$ be a sequence decreasing to zero with $\delta_{0} > 2\overline{H}$. Also, let:
\begin{align}
T(\delta)&: = 
\begin{cases}
	2||\mathfrak{R}_{n}||(\mathcal{H}_{\ell b}'(\delta_{j})) + \frac{3t_{j}\overline{H}'(\delta_{j})}{\sqrt{n}}, &\text{ if } \delta \in (\delta_{j+1},\delta_{j}],\\
	0, &\text{ otherwise, } 
\end{cases}\label{eq_T}
\end{align}
and:
\begin{align}
T^\flat(\sigma)&:=\sup_{\delta \geq \sigma} \frac{T(\delta)}{\delta},\\
T^\sharp(\eta)&:=\inf\left\{ \sigma>0 : T^\flat(\sigma)\leq \eta\right\}.
\end{align}
Finally, suppose $\delta^{**}> T^\sharp(1-1/\mathfrak{a})$ for some $\mathfrak{a}\in (1,\infty)$. Then for any $\delta \geq \mathfrak{a} \delta^{**}$ we have:
\begin{align*}
\inf_{P_{Y,Z} \in \mathcal{P}_{Y,Z}} P_{Y,Z}^{\otimes n} \left(\mathscr{G}_{n}(\delta/\mathfrak{a}) \subseteq \mathscr{G}^{*}(\delta) \subseteq \mathscr{G}_{n}((2-1/\mathfrak{a})\delta) \right) \geq \kappa.
\end{align*}
\end{lemma}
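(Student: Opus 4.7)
The plan is to reduce the claimed set inclusions to a uniform concentration bound for $\widehat{I}_{\ell b}[\varphi](\cdot) - I_{\ell b}[\varphi](\cdot)$ over the $\mathscr{E}^{*}$-level sets, and then to apply the peeling / fixed-point argument developed by \cite{koltchinskii2006local}. First I would invoke Theorem \ref{thm_cortes} to write both the population and empirical lower envelopes in the common form $\inf_{\theta \in \Theta}\max_{\lambda \in \Lambda} Ph_{\ell b}(\cdot,\cdot,\theta,\gamma,\lambda)$ and $\inf_{\theta \in \Theta}\max_{\lambda \in \Lambda} \widehat{P}_n h_{\ell b}(\cdot,\cdot,\theta,\gamma,\lambda)$, with $\widehat{P}_n$ denoting the empirical measure. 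Using the elementary inequality $|\inf_{\theta}\max_{\lambda} a_{\theta,\lambda} - \inf_{\theta}\max_{\lambda} b_{\theta,\lambda}| \leq \sup_{\theta,\lambda}|a_{\theta,\lambda} - b_{\theta,\lambda}|$, together with the decomposition $\mathscr{E}^{*}(\gamma) - \mathscr{E}_n(\gamma) = \bigl[\sup_{\gamma'}I_{\ell b}(\gamma') - \sup_{\gamma'}\widehat{I}_{\ell b}(\gamma')\bigr] + \bigl[\widehat{I}_{\ell b}(\gamma) - I_{\ell b}(\gamma)\bigr]$, I would show that whenever $\gamma$ and the relevant near-maximizers both lie in $\mathscr{G}^{*}(\delta)$ one has $|\mathscr{E}^{*}(\gamma)-\mathscr{E}_n(\gamma)| \leq \sup_{h \in \mathcal{H}_{\ell b}'(\delta)}|(\widehat{P}_n - P)h|$, which is exactly what the class $\mathcal{H}_{\ell b}'(\delta)$ was designed to control.

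Next I would apply Talagrand's concentration inequality to $Z(\delta_j) := \sup_{h \in \mathcal{H}_{\ell b}'(\delta_j)}|(\widehat{P}_n - P)h|$, and symmetrize to replace $\mathbb{E} Z(\delta_j)$ by $2\|\mathfrak{R}_n\|(\mathcal{H}_{\ell b}'(\delta_j))$. This yields $Z(\delta_j) \leq T(\delta_j)$ with probability at least $1-\exp(-t_j^{2}/c_1)$ on each slice, and the choice $t_j = \sqrt{c_1 \log(c_2 j)}$ with $c_1 = 5$ makes $\sum_{j \geq 1}(c_2 j)^{-5}$ summable, while $c_2 = (3/(2(1-\kappa)))^{2/5}$ tunes the total failure probability to at most $1-\kappa$. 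A union bound over the slices $\{\gamma : \mathscr{E}^{*}(\gamma) \in (\delta_{j+1},\delta_j]\}$ (peeling) then delivers, on an event of probability at least $\kappa$, the simultaneous deviation bound $|\mathscr{E}^{*}(\gamma) - \mathscr{E}_n(\gamma)| \leq T(\mathscr{E}^{*}(\gamma))$ for every $\gamma$ with $\mathscr{E}^{*}(\gamma) \geq \delta^{**}$.

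On this event, the fixed-point inequality $\delta^{**} > T^{\sharp}(1-1/\mathfrak{a})$ implies $T^{\flat}(\delta^{**}) \leq 1-1/\mathfrak{a}$, hence $T(\delta) \leq (1-1/\mathfrak{a})\delta$ for all $\delta \geq \delta^{**}$. From here the two inclusions follow directly. For $\mathscr{G}^{*}(\delta) \subseteq \mathscr{G}_n(\mathfrak{b}\delta)$ with $\mathfrak{b} = 2-1/\mathfrak{a}$, any $\gamma$ with $\mathscr{E}^{*}(\gamma)\leq \delta$ satisfies
$$\mathscr{E}_n(\gamma) \leq \mathscr{E}^{*}(\gamma) + T(\mathscr{E}^{*}(\gamma)) \leq \delta + (1-1/\mathfrak{a})\delta = \mathfrak{b}\delta.$$
For $\mathscr{G}_n(\delta/\mathfrak{a}) \subseteq \mathscr{G}^{*}(\delta)$ I would argue by contradiction: if $\gamma \in \mathscr{G}_n(\delta/\mathfrak{a})$ but $\mathscr{E}^{*}(\gamma) > \delta \geq \mathfrak{a}\delta^{**}$, then $\mathscr{E}_n(\gamma) \geq \mathscr{E}^{*}(\gamma) - (1-1/\mathfrak{a})\mathscr{E}^{*}(\gamma) = \mathscr{E}^{*}(\gamma)/\mathfrak{a} > \delta/\mathfrak{a}$, contradicting $\gamma \in \mathscr{G}_n(\delta/\mathfrak{a})$.

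The main obstacles are (i) rigorously converting $|\mathscr{E}^{*}-\mathscr{E}_n|$ into a uniform deviation over $\mathcal{H}_{\ell b}'$ in the presence of the nested $\inf_{\theta}\max_{\lambda}$ structure — this requires showing that near-maximizers of $I_{\ell b}$ and $\widehat{I}_{\ell b}$ relevant to $\gamma \in \mathscr{G}^{*}(\delta)$ can themselves be taken from $\mathscr{G}^{*}(\delta)$, so that the local Rademacher complexity at level $\delta$ suffices; (ii) handling measurability of the suprema (addressed via the universal-measurability results in Appendix \ref{appendix_measurability} and outer probabilities in the Talagrand step); and (iii) tracking the numerical constants through the symmetrization and peeling steps so that the $t_j$ sequence and the constants $c_1, c_2$ in the statement indeed deliver the overall confidence $\kappa$.
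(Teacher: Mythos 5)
Your overall architecture matches the paper's: peel over the levels $\delta_j$, bound the localized empirical process by the data-dependent quantity $T(\delta)$, use the $\flat/\sharp$-transforms to get $T(\delta)\leq (1-1/\mathfrak{a})\delta$ for $\delta\geq\delta^{**}$, and then convert the two-sided comparison of $\mathscr{E}^{*}$ and $\mathscr{E}_{n}$ into the level-set inclusions. However, there are two concrete problems with the proposal as written.

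First, the step you flag as obstacle (i) is not a technicality you can defer: it is the one genuinely nontrivial idea, and without it your bound $\mathscr{E}_{n}(\gamma)\leq \mathscr{E}^{*}(\gamma)+T(\mathscr{E}^{*}(\gamma))$ --- and hence the inclusion $\mathscr{G}^{*}(\delta)\subseteq\mathscr{G}_{n}(\mathfrak{b}\delta)$ --- is not established. Writing $\mathscr{E}_{n}(\gamma)-\mathscr{E}^{*}(\gamma)$ as an increment of the empirical process over the \emph{localized} class $\mathcal{H}_{\ell b}'(\sigma)$ requires that the $\varepsilon$-maximizer $\hat{\gamma}$ of the empirical envelope lie in $\mathscr{G}^{*}(\sigma)$; for the population near-maximizer $\gamma^{*}$ this is trivial, but for $\hat{\gamma}$ it is exactly what one is trying to prove. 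The paper handles this with a separate self-referential argument (Lemma \ref{lemma_gamma_hat_high_prob}): on the same peeled event, if $\mathscr{E}^{*}(\hat{\gamma})=\sigma\geq\delta$ then $\sigma\leq T(\sigma)$, which forces $\sigma\leq\delta^{**}$ and yields a contradiction, so $\hat{\gamma}\in\mathscr{G}^{*}(\sigma)$ with probability at least $\kappa$. Your proposal needs this lemma (or an equivalent argument) before the localization in the $\mathscr{G}^{*}(\delta)\subseteq\mathscr{G}_{n}(\mathfrak{b}\delta)$ direction is legitimate; as stated, you would otherwise have to enlarge the class to all of $\Gamma$, defeating the purpose of the local complexity.

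Second, your concentration bookkeeping is internally inconsistent and, as written, the union bound fails. You claim a per-slice failure probability of $\exp(-t_j^{2}/c_1)$; with $t_j=\sqrt{c_1\log(c_2 j)}$ and $c_1=5$ this equals $(c_2 j)^{-1}$, which is not summable, and it does not match your later claim that the terms are $(c_2 j)^{-5}$. The constants in the statement of the lemma are tuned to the uniform Hoeffding-type bound of \cite{koltchinskii2011oracle} (Theorem 4.6), which gives deviation $2||\mathfrak{R}_{n}||(\mathcal{H}_{\ell b}'(\delta_j))+3t_j\overline{H}'(\delta_j)/\sqrt{n}$ with failure probability $\exp(-t_j^{2}/2)=(c_2 j)^{-5/2}$; summing this and using $\sum_{j\geq 1} j^{-5/2}\leq 3/2$ together with $c_2=(3/(2(1-\kappa)))^{2/5}$ gives total failure exactly $1-\kappa$, and the slice $j=0$ costs nothing because $\delta_0>2\overline{H}$ dominates the process deterministically. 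If you instead invoke Talagrand/Bousquet you must redo this arithmetic, and the specific $T(\delta)$ of the lemma will not be the quantity your inequality delivers. A final minor point: your derivation of the inclusions only invokes $T(\mathscr{E}^{*}(\gamma))\leq(1-1/\mathfrak{a})\mathscr{E}^{*}(\gamma)$, which is valid only for $\mathscr{E}^{*}(\gamma)\geq\delta^{**}$; for policies below that level you need the monotonicity of the localized supremum to bound the deviation by $T(\delta^{**})\leq(1-1/\mathfrak{a})\delta^{**}$, i.e.\ the ``$\vee\,\delta^{**}$'' form $\mathscr{E}_{n}(\gamma)\leq\mathfrak{b}(\mathscr{E}^{*}(\gamma)\vee\delta^{**})$ and $\mathscr{E}^{*}(\gamma)\leq\mathfrak{a}(\mathscr{E}_{n}(\gamma)\vee\delta^{**})$, before concluding the inclusions for $\delta\geq\mathfrak{a}\delta^{**}$.
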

\begin{proof}
See Appendix \ref{appendix_proofs}.
\end{proof}
Note that Lemma \ref{lemma_subsets} is very similar to Theorem \ref{theorem_delta_minimal}, with a major exception being that the class of functions $\mathcal{H}_{\ell b}'(\delta)$ in Lemma \ref{lemma_subsets} differs from the class of functions $\mathcal{H}_{n,\ell b}'(\delta)$ in Theorem \ref{theorem_delta_minimal}. Note that $\mathcal{H}_{n,\ell b}'(\delta)$ represents a ``feasible version'' of $\mathcal{H}_{\ell b}'(\delta)$, since $\mathcal{H}_{\ell b}'(\delta)$ depends on the unknown $\delta-$level set $\mathscr{G}^{*}(\delta)$, where $\mathcal{H}_{n,\ell b}'(\delta)$ depends on the empirical $\delta-$level set $\mathscr{G}_{n}(\delta)$.

A heuristic proof may help provide some sense of how these results work. A necessary step in proving either Theorem \ref{theorem_delta_minimal} or Lemma \ref{lemma_subsets} is to relate the quantities $\mathscr{E}_{n}(\gamma)$ and $\mathscr{E}^{*}(\gamma)$, which is exactly what is done in the proof of Lemma \ref{lemma_subsets}. Among other things, the proof of Lemma \ref{lemma_subsets} demonstrates that an important object connecting the quantities $\mathscr{E}_{n}(\gamma)$ and $\mathscr{E}^{*}(\gamma)$ is given by:
\begin{align}
\delta \mapsto \sup_{\theta, \theta' \in \Theta} \sup_{\gamma, \gamma' \in \mathscr{G}^{*}(\delta)} \sup_{\lambda,\lambda' \in \Lambda}\left|\left( \P_{n} h_{\ell b}(\cdot,\theta,\gamma,\lambda) -  \P_{n} h_{\ell b}(\cdot,\theta',\gamma',\lambda')\right) -  \left( P h_{\ell b}(\cdot,\theta,\gamma,\lambda) -  P h_{\ell b}(\cdot,\theta',\gamma',\lambda') \right)\right|,\label{eq_supnorm_empirical}
\end{align}
where:
\begin{align*}
\P_{n} h_{\ell b}(\cdot,\theta,\gamma,\lambda)&:= \frac{1}{n}\sum_{i=1}^{n}h_{\ell b}(y_{i},z_{i},\theta,\gamma,\lambda), && P h_{\ell b}(\cdot,\theta,\gamma,\lambda):= \int h_{\ell b}(y,z,\theta,\gamma,\lambda)\,dP_{Y,Z}.
\end{align*}
The quantity \eqref{eq_supnorm_empirical} is easily seen to be the sup-norm of a particular empirical process. Note that this empirical process depends on unknown population quantities through both $\mathscr{G}^{*}(\delta)$ and through the functions $ P h_{\ell b}(\cdot,\theta,\gamma,\lambda)$ and $ P h_{\ell b}(\cdot,\theta',\gamma',\lambda')$, which depend on the unknown true probability measure. While the dependence on $\mathscr{G}^{*}(\delta)$ is unavoidable for now, the dependence on $ P h_{\ell b}(\cdot,\theta,\gamma,\lambda)$ and $ P h_{\ell b}(\cdot,\theta',\gamma',\lambda')$ can be removed by working with the function $T(\delta)$ from \eqref{eq_T}.\footnote{Note that, technically speaking, the dependence of \eqref{eq_supnorm_empirical} on $ P h_{\ell b}(\cdot,\theta,\gamma,\lambda)$ and $ P h_{\ell b}(\cdot,\theta',\gamma',\lambda')$ is removed using a symmetrization inequality (c.f. \cite{van1996weak} Lemma 2.3.1) and a Hoeffding-type concentration inequality, which leads exactly to the upper-bound $T(\delta)$, which holds with high probability.} Thus, the function $T(\delta)$ in Lemma \ref{lemma_subsets}---which is slightly different from $T_{n}(\delta)$ in Theorem \ref{theorem_delta_minimal}---is constructed to serve as an upper envelope of the quantity in \eqref{eq_supnorm_empirical}, for every $\delta \in [0,\delta_{0}]$, on some event $E_{n}$ with probability at least $\kappa$. 

With \eqref{eq_supnorm_empirical} replaced by its upper bound $T(\delta)$, the proof of Lemma \ref{lemma_subsets} then shows that, if $\sigma:=\mathscr{E}^{*}(\gamma)$, the following inequalities hold on the event $E_{n}$:
\begin{align}
\mathscr{E}^{*}(\gamma) &\leq \mathscr{E}_{n}(\gamma) + T(\sigma),\label{eq_e_ineq1}\\
\mathscr{E}_{n}(\gamma) &\leq \mathscr{E}^{*}(\gamma) + T(\sigma).\label{eq_e_ineq2}
\end{align}
Now note that if $\delta^{**} = T^{\sharp}(1-1/\mathfrak{a})+\varepsilon$ for any $\varepsilon>0$, then $T(\delta)\leq (1-1/\mathfrak{a}) \cdot \delta$ for every $\delta \geq \delta^{**}$. Furthermore, by construction the value of $\delta^{**}$ will be close to the smallest possible value for which this is true. Now fix any $\gamma$ with $\sigma = \mathscr{E}^{*}(\gamma) \geq \delta^{**}$. Then clearly:\note{Here (in this paragraph) the fixed point value of $\delta^{**}$ seems useful.} 
\begin{align}
T(\sigma)  \leq \left(1-\frac{1}{\mathfrak{a}}\right) \mathscr{E}^{*}(\gamma),
\end{align}
Combining this result with \eqref{eq_e_ineq1} and \eqref{eq_e_ineq2} we obtain that for any $\gamma$ satisfying $\mathscr{E}^{*}(\gamma) \geq \delta^{**}$ we have:
\begin{align}
\mathscr{E}^{*}(\gamma) &\leq \mathfrak{a} \mathscr{E}_{n}(\gamma),\label{eq_e_ineq3}\\
\mathscr{E}_{n}(\gamma) &\leq \mathfrak{b} \mathscr{E}^{*}(\gamma).
\end{align}
The remainder of the proof of Lemma \ref{lemma_subsets} is dedicated to showing that the following inequalities hold for any $\gamma \in \Gamma$ on the event $E_{n}$:\footnote{Note that the first inequality is trivial, since \eqref{eq_e_ineq3} is satisfied when $\mathscr{E}^{*}(\gamma) \geq \delta^{**}$, and if $\mathscr{E}^{*}(\gamma) \leq \delta^{**}$, then $\mathscr{E}^{*}(\gamma) \leq \mathfrak{a}\delta^{**}$, since $\mathfrak{a}>1$. The second of these inequalities is non-trivial, and relies on an auxiliary result given by Lemma \ref{lemma_gamma_hat_high_prob} in the Appendix.}
\begin{align}
\mathscr{E}^{*}(\gamma) &\leq \mathfrak{a} \left(\mathscr{E}_{n}(\gamma)\vee \delta^{**}\right),\\
\mathscr{E}_{n}(\gamma) &\leq  \mathfrak{b}\left( \mathscr{E}^{*}(\gamma)\vee \delta^{**}\right), 
\end{align}
After these inequalities are established, it is straightforward to argue that $\mathscr{G}_{n}(\delta/\mathfrak{a}) \subseteq \mathscr{G}^{*}(\delta) \subseteq \mathscr{G}_{n}(\mathfrak{b}\delta)$ on the event $E_{n}$ when $\delta \geq \mathfrak{a} \delta^{**}$. Intuitively, the proof of Theorem \ref{theorem_delta_minimal} then shows that $\mathcal{H}_{\ell b}'(\delta)$, $T(\cdot)$ (and its $\flat-$ and $\sharp-$transform) and $\delta^{**}$ defined in Lemma \ref{lemma_subsets} can be replaced with their feasible versions $\mathcal{H}_{n,\ell b}'(\delta)$, $T_{n}(\cdot)$ (and its $\flat-$ and $\sharp-$transform) and $\delta^{*}$ defined in Theorem \ref{theorem_delta_minimal}.

Theorem \ref{theorem_delta_minimal} suggests the following procedure to approximate the $\delta-$level set. The policymaker begins by computing $\mathscr{E}_{n}(\gamma)$ as a function of $\gamma$ (for example, by establishing a grid over $\Gamma$). The policymaker fixes some value $\mathfrak{a} \in (1,\infty)$ and constructs a sequence $\{\delta_{j}\}_{j=0}^{\infty}$ decreasing to zero with $(1-1/\mathfrak{a})\delta_{0} > 2\overline{H}$. In general the procedure will give a tighter bound if the sequence $\{\delta_{j}\}_{j=0}^{\infty}$ has small initial increments. The policymaker then computes $\delta^{*}>T_{n}^\sharp(1-1/\mathfrak{a})$. This is done by the following procedure:
\begin{enumerate}[label=(\roman*)]
	\item The policymaker takes $n$ i.i.d. draws of a Rademacher random variable $\xi$.
 	\item At the $j^{th}$ step (beginning at step $0$) the policymaker uses $\mathscr{E}_{n}(\gamma)$ to compute the Rademacher complexity $||\mathfrak{R}_{n}||(\mathcal{H}_{n,\ell b}'(\mathfrak{b}\delta_{j}))$ with the formula \eqref{eq_rademacher}.
 	\item The policymaker uses $\mathscr{E}_{n}(\gamma)$ to compute a uniform upper bound $\overline{H}_{n}(\delta_{j})$ for $\mathcal{H}_{n,\ell b}'(\delta_{j})$ (or she can simply use $2\overline{H}$).
 	\item The policymaker determines if there is any value $\delta \in (\delta_{j+1},\delta_{j}]$ such that $T_{n}(\delta_{j})/\delta \geq 1-1/\mathfrak{a}$. 
 	\begin{itemize}
 		\item If so, the policymaker stops and sets $\delta^{*}= \delta+\eta$, where $\eta>0$ and $\delta\in (\delta_{j+1},\delta_{j}]$ is equal to any value satisfying $T_{n}(\delta_{j})/\delta \leq 1-1/\mathfrak{a}$.
 		\item If not, the policymaker repeats steps (i) and (ii) for iteration $j+1$.
 	\end{itemize}
 \end{enumerate}
An illustration of this step is provided in Figure \ref{fig_tnm}. By Theorem \ref{theorem_delta_minimal}, the policymaker then knows that for every $\delta \geq \delta^{*}$, the $\delta-$minimal set $\mathscr{G}(\delta)$ will be contained within the sample analogue $\delta-$minimal set $\mathscr{G}_{n}(\mathfrak{b}\delta)$, and will contain the sample analogue $\delta-$minimal set $\mathscr{G}_{n}(\delta/\mathfrak{a})$ with probability at least $\kappa$. Note that the computational bottleneck in this procedure arises from repeatedly computing the Rademacher complexity.

In addition to being interesting in its own right, Theorem \ref{theorem_delta_minimal} also sheds light on the results from the previous subsection. In particular, the proof of Theorem \ref{theorem_delta_minimal} and Lemma \ref{lemma_gamma_hat_high_prob} lead to the following result, which is stated as a corollary of Theorem \ref{theorem_delta_minimal}. 

\begin{corollary}\label{corollary_gamma_hat_high_prob2}
Suppose the assumptions of Theorem \ref{theorem_delta_minimal} hold, and let $\delta^{*}$ be as in Theorem \ref{theorem_delta_minimal}. For any $\varepsilon>0$ let $\hat{\gamma}\in \Gamma$ be the policy selected by the eME decision rule. If $\delta \geq \delta^{*}\geq \varepsilon>0$, then:
\begin{align*}
\inf_{P_{Y,Z} \in \mathcal{P}_{Y,Z}} P_{Y,Z}^{\otimes n} \left( \mathscr{E}^{*}(\hat{\gamma})\leq \delta \right) \geq \kappa. 
\end{align*}
That is $\hat{\gamma} \in \mathscr{G}^{*}(\delta)$ with high probability when $\delta \geq \delta^{*}\geq \varepsilon>0$.\note{It feels like there is some way to turn this into necessary conditions for learnability. In particular, it seems that the eME rule is always eventually in the $\delta-$minimal set. If $\delta^{*}$ tends to zero, does it suffice to focus on the eME rule?}
\end{corollary}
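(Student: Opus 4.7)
The plan is to combine the defining property of the eME rule with the inequalities linking $\mathscr{E}^{*}$ and $\mathscr{E}_{n}$ that are produced in the course of proving Lemma \ref{lemma_subsets} and Theorem \ref{theorem_delta_minimal}, so that the corollary becomes essentially a one-line consequence of those two results applied at a particular deterministic $\hat{\gamma}$.

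First, I would unwind the definition of the eME rule. By construction $\hat{\gamma}$ satisfies
\begin{align*}
\widehat{I}_{\ell b}[\varphi](\hat{\gamma}) + \varepsilon \geq \sup_{\gamma \in \Gamma} \widehat{I}_{\ell b}[\varphi](\gamma),
\end{align*}
which is exactly the statement $\mathscr{E}_{n}(\hat{\gamma}) \leq \varepsilon$. Hence $\hat{\gamma}$ always belongs to the empirical $\varepsilon$-level set $\mathscr{G}_{n}(\varepsilon)$, and by monotonicity of $\delta \mapsto \mathscr{G}_{n}(\delta)$, the hypothesis $\delta^{*} \geq \varepsilon$ yields $\hat{\gamma} \in \mathscr{G}_{n}(\delta^{*})$ deterministically, i.e.\ on every sample.

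Second, I would invoke Lemma \ref{lemma_gamma_hat_high_prob}, the auxiliary high-probability result that underlies the second containment in the proof of Theorem \ref{theorem_delta_minimal}. That lemma produces, uniformly in $P_{Y,Z} \in \mathcal{P}_{Y,Z}$, an event $E_{n}$ of probability at least $\kappa$ on which, for every $\gamma \in \Gamma$ simultaneously, the two-sided comparisons $\mathscr{E}^{*}(\gamma) \leq \mathfrak{a}\bigl(\mathscr{E}_{n}(\gamma) \vee \delta^{*}\bigr)$ and $\mathscr{E}_{n}(\gamma) \leq \mathfrak{b}\bigl(\mathscr{E}^{*}(\gamma) \vee \delta^{*}\bigr)$ hold (these are the same pointwise bounds that deliver the $\mathscr{G}_{n}(\delta/\mathfrak{a}) \subseteq \mathscr{G}^{*}(\delta) \subseteq \mathscr{G}_{n}(\mathfrak{b}\delta)$ containments in Theorem \ref{theorem_delta_minimal}). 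Evaluating the first inequality at $\gamma = \hat{\gamma}$ and using $\mathscr{E}_{n}(\hat{\gamma}) \leq \varepsilon \leq \delta^{*}$ gives $\mathscr{E}^{*}(\hat{\gamma}) \leq \mathfrak{a}\,\delta^{*}$ on $E_{n}$, so for any $\delta \geq \mathfrak{a}\,\delta^{*}$ one immediately has $\hat{\gamma} \in \mathscr{G}^{*}(\delta)$ with probability at least $\kappa$.

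The main obstacle is the bookkeeping of the constants $\mathfrak{a}, \mathfrak{b}$: the statement of the corollary asserts the conclusion for all $\delta \geq \delta^{*}$, whereas the previous step only yields it for $\delta \geq \mathfrak{a}\,\delta^{*}$. This is resolved exactly as in Theorem \ref{theorem_delta_minimal}: since $\delta^{*}$ is chosen strictly greater than $T_{n}^{\sharp}(1-1/\mathfrak{a})$ and $\mathfrak{a} \in (1,\infty)$ is free, one can, for any slack one likes, absorb the factor $\mathfrak{a}$ into the definition of $\delta^{*}$, or equivalently, pick $\mathfrak{a}$ close enough to $1$. Once this accounting is carried out the probability bound for $\mathscr{E}^{*}(\hat{\gamma}) \leq \delta$ follows, and the corollary is just the reformulation $\hat{\gamma} \in \mathscr{G}^{*}(\delta)$.
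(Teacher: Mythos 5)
Your first two steps are fine: the eME definition is exactly $\mathscr{E}_{n}(\hat{\gamma})\leq\varepsilon$, and evaluating the pointwise bound $\mathscr{E}^{*}(\gamma)\leq\mathfrak{a}\left(\mathscr{E}_{n}(\gamma)\vee\delta^{**}\right)$ (which is established in Part 2 of the proof of Lemma \ref{lemma_subsets}, not in Lemma \ref{lemma_gamma_hat_high_prob} as you state, and with the infeasible threshold $\delta^{**}$ rather than $\delta^{*}$) at $\gamma=\hat{\gamma}$ does give $\mathscr{E}^{*}(\hat{\gamma})\leq\mathfrak{a}\,\delta^{**}$ on the event $E_{n}$. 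But this route only proves the corollary for $\delta\geq\mathfrak{a}\,\delta^{*}$, and your proposed repair of the extra factor $\mathfrak{a}$ does not work. You cannot ``pick $\mathfrak{a}$ close enough to $1$'': $\delta^{*}$ is defined through $T_{n}^{\sharp}(1-1/\mathfrak{a})$, and since $T_{n}^{\sharp}$ is non-increasing, sending $\mathfrak{a}\downarrow 1$ drives its argument to $0$ and forces $\delta^{*}$ (and $\delta^{**}$) \emph{up}, so the product $\mathfrak{a}\,\delta^{*}$ does not shrink to $\delta^{*}$. Nor can you ``absorb $\mathfrak{a}$ into the definition of $\delta^{*}$'': the corollary explicitly takes $\mathfrak{a}$ and $\delta^{*}$ as fixed in Theorem \ref{theorem_delta_minimal}, so redefining them changes the statement being proved. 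As written, your argument establishes a strictly weaker claim.

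The paper closes this gap by a different, sharper argument that is tailored to $\hat{\gamma}$ and avoids the factor $\mathfrak{a}$ altogether: Lemma \ref{lemma_gamma_hat_high_prob} argues by contradiction. Suppose $\sigma:=\mathscr{E}^{*}(\hat{\gamma})\geq\delta>\delta^{**}$; using the $\varepsilon$-optimality of $\hat{\gamma}$ under $\P_{n}$ and of $\gamma^{*}$ under $P$, and the fact that then both $\hat{\gamma}$ and $\gamma^{*}$ lie in $\mathscr{G}^{*}(\sigma)$, one bounds $\sigma$ by the localized empirical-process supremum over $\mathscr{G}^{*}(\sigma)$, which on $E_{n}$ is at most $T(\sigma)\leq(1-1/\mathfrak{a})\sigma<\sigma$ because $\sigma>\delta^{**}>T^{\sharp}(1-1/\mathfrak{a})$ --- a contradiction, so $\{\mathscr{E}^{*}(\hat{\gamma})\geq\delta\}\subseteq E_{n}^{c}$. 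The corollary then follows by combining this with the step in the proof of Theorem \ref{theorem_delta_minimal} showing $\delta^{**}\leq\delta^{*}$ on $E_{n}$, which is what licenses replacing the infeasible $\delta^{**}$ by the computable $\delta^{*}$. If you want to salvage your outline, you need this contradiction/localization argument (or something equivalent) in place of the sandwich inequality; the sandwich inequality alone cannot deliver the threshold $\delta\geq\delta^{*}$.
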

This result shows that, if $\varepsilon\leq \delta^{*}$ then our eME rule from the previous subsection will be contained in the $\delta-$level set $\mathscr{G}^{*}(\delta)$ when $\delta \geq \delta^{*}$ with high probability. This should serve as some additional justification for using the eME rule, since it shows that, when both $\delta^{*}$ and $\varepsilon$ are small, the procedure suggested by Theorem \ref{theorem_delta_minimal} will not lead to decision rules that vastly outperform the eME rule.



\section{Conclusion}\label{section_conclusion}

The purpose of the paper is to develop a general and novel framework for bounding counterfactual quantities and for making policy decisions. Our framework is applicable in models that partially identified and/or incomplete. Furthermore, we do not require parametric distributional assumptions for the latent variables, and we allow for moment conditions that depend on latent variables. We introduce the \textit{policy transform}, and argue that many counterfactual quantities can be written as the policy transform of some function. We then introduce a preference relation that respects weak dominance, and discuss the problem of policy choice using a framework similar to the PAC model of learnability from computational learning theory. Our theoretical results are divided into those that are applicable ex-ante (i.e. before observing the sample) and ex-post (i.e. after observing the sample). For our ex-ante results, we introduce the notion of ``learning'' a policy space, and provide sufficient conditions for a policy space to be learnable. For our ex-post results, we provide theoretical guarantees on the performance of particular policy rules. Throughout the paper we also demonstrate how to apply the results to a simultaneous discrete choice example and a program evaluation example.   

There are many obvious extensions of this work that might be interesting. This paper has been particularly focused on theoretical developments, with examples serving mainly a pedagogical purpose. Further development of the examples and empirical applications are needed to clearly illustrate and fully investigate the strengths and weaknesses of the method in practice. In addition, the paper has been largely silent on implementation, which may be computationally complex in certain environments. Further development of efficient algorithms to implement the procedures is clearly needed. Finally, the relation between PAC learnability and the literature on frequentist decision theory requires further investigation and clarification. We believe all of these extensions to be fruitful avenues of future research.


\pagebreak

\bibliographystyle{apa}
\bibliography{bibfile}
\pagebreak
\begin{appendix}

\section{Preliminaries}\label{appendix_preliminaries}
\subsection{Preliminaries on Random Set Theory}

This Appendix introduces some key elements of random set theory. Since measurability issues play a significant role in random set theory, we begin by providing the definition of an Effros-measurable multifunction, and show its connection with the definition of a random set. 

\begin{definition}[Effros-Measurability, Random Set]
Let $(\Omega,\mathfrak{A},P)$ be a probability space, let $\mathcal{V}$ be a Polish space, and let $\mathcal{O}_{\mathcal{V}}$ denote the collection of all open sets on $\mathcal{V}$. A multifunction $\bm V: \Omega \to \mathfrak{F}_{\mathcal{V}}$ is called Effros-measurable if for every $A \in \mathcal{O}_{\mathcal{V}}$ we have $\bm V^{-}(A):=\{ \omega \in \Omega : \bm V(\omega) \cap A \neq \emptyset \} \in \mathfrak{A}$. An Effros-measurable closed-valued multifunction on a probability space $(\Omega,\mathfrak{A},P)$ is called a random closed set. 
\end{definition}

From this definition, we see that a random closed set is an Effros-measurable closed multifunction which takes elements from the underlying probability space to the collection of closed sets on some Polish space $\mathcal{V}$. An Effros-measurable closed multifunction is also sometimes called \textit{weakly measurable}.\footnote{See \cite{aliprantis2006infinite} Ch. 18} When the underlying probability space $(\Omega,\mathfrak{A},P)$ is complete Effros-measurability is equivalent to both (i) $\bm V^{-}(B) \in \mathfrak{A}$ for all $B \in \mathfrak{B}(\mathcal{V})$ (Borel measurability) and (ii) $\bm V^{-}(F) \in \mathfrak{A}$ for all $F \in \mathfrak{F}_{\mathcal{V}}$ (strong measurability).\footnote{See \cite{molchanov2017theory} Theorem 1.3.3, p.59. } Our main interest in the paper is in the case when $\mathcal{V}$ is a subset of finite-dimensional euclidean space, although the framework is more general.

While Effros-measurability is the proper notion of measurability for many of the results, it can be difficult to verify. There are other conditions that are sufficient for Effros measurability, but we find one condition to be particularly helpful in the examples. Let $d$ denote the metric on a Polish space $\mathcal{V}$, and let $\bm V: \Omega \to \mathfrak{F}_{\mathcal{V}}$ be a multifunction. The distance to the set $\bm V(\omega)$ on $\mathcal{V}$ is given by:
\begin{align*}
d(v, \bm V(\omega)) := \inf\{  d(v,v') : v' \in \bm V(\omega) \}.
\end{align*}
By a result of \cite{himmelberg1975measurable}, Effros measurability of the multifunction $\bm V$ is equivalent to measurability of $d(v, \bm V(\omega))$ (as a random variable from $\Omega$ to $[0,\infty]$) for each $v \in \mathcal{V}$. 

Throughout the paper it is also important to understand what it means for two random sets to be identically distributed, which is provided in the next definition.

\begin{definition}[Identically Distributed Random Sets]\label{definition_identically_distributed_random_sets}
Let $(\Omega,\mathfrak{A},P)$ be a probability space, let $\mathcal{V}$ be a Polish space. We say that two random sets $\bm V$ and $\bm V^{*}$ are identically distributed, denoted by $\bm V \sim \bm V^{*}$, if for every $A \in \mathcal{O}_{\mathcal{V}}$ we have $P(\omega : \bm V(\omega) \cap A \neq \emptyset) = P(\omega : \bm V^{*}(\omega) \cap A \neq \emptyset)$. 
\end{definition}

Finally, an important concept in random set theory is that of a selection from a random set. Intuitively, a random set $\bm V$ can be understood as a collection of random variables $V$ satisfying $V(\omega) \in \bm V(\omega)$ $P-$a.s. Such random variables are called selections from the random set $\bm V$, which is made precise in the following definition.   

\begin{definition}[Selections, Conditional Selections]\label{definition_selections}
A random element $V: \Omega \to \mathcal{V}$ is called a (measurable) selection of $\bm V$ if $V(\omega) \in \bm V(\omega)$ for $P-$almost all $\omega \in \Omega$. The family of all measurable selections of a random set $\bm V$ will be denoted by $Sel(\bm V)$.
\end{definition}

Although it is suppressed in the notation, the family of selections $Sel(\cdot)$ depends both on the distribution of the random set $\bm V$, and on the underlying probability space. Indeed, two identically distributed random sets on the same probability space may have different families of selections.\footnote{See Example 1.4.2 in \cite{molchanov2017theory}, p. 79.} However, the weak closed convex hulls of the family of selections from two random closed sets on the same probability space coincide. In addition, when the underlying probability space is non-atomic, it is not necessary to take convex hulls. See the discussion following Definition \ref{definition_collections} in the main text.

\subsection{PAC Learnability}\label{appendix_aside_PAC_learnability}

As described in the introduction, our definition of learnability is related to the definition of learnability prescribed in \cite{valiant1984theory}. It will thus be useful to understand the concept of learnability from computational learning theory. We will omit technical details in the pursuit of clarity.

In a supervised learning problem, the researcher is presumed to have an i.i.d. sample $\psi= ((y_{i},z_{i}))_{i=1}^{n}$ from the true measure $P_{Y,Z}$. The researcher is also assumed to have a class of functions $\mathcal{F}$ in mind, called the hypothesis space. The researcher's objective is to select a function $f: \mathcal{Z} \to \mathcal{Y}$, called a hypothesis (or a classifier or a predictor), from the hypothesis space $\mathcal{F}$ that can accurately predict values in $\mathcal{Y}$ given values in $\mathcal{Z}$. The performance of a given function $f \in \mathcal{F}$ is measured according to a loss function. That is, it is assumed the researcher has some function $L: \mathcal{Y} \times \mathcal{Y} \to \mathbb{R}$ such that $L(y,f(z))$ measures the loss incurred when a prediction $f(z)$ is made and the true value of the outcome is $y$. The problem of selecting a good hypothesis $f$ is then translated into the problem of choosing $f \in \mathcal{F}$ to minimize expected loss, or risk. A decision rule in this context is a measurable map $d: \Psi_{n} \to \mathcal{F}$ that selects a hypothesis from the hypothesis space; in learning theory, this decision rule is called an algorithm.

So far the reader should note a resemblance to decision problems seen in statistics and econometrics. However, important differences between the fields arise when evaluating a given statistical decision rule. In particular, computer scientists are interested in rules that achieve close to the minimum possible risk with high probability in finite samples. To define this rigorously, let $\hat{f} \in \mathcal{F}$ be the hypothesis selected by some decision rule (or algorithm) $d: \Psi_{n} \to \mathcal{F}$. Since $\hat{f} \in \mathcal{F}$ depends on the observed sample, ex-ante it will be a random variable. Now fix any values $(c,\kappa) \in \mathbb{R}_{++}\times (0,1)$. Then $\hat{f}$ closely approximates the performance of the optimal decision rule in finite samples if:
\begin{align}
\inf_{P_{Y,Z} \in \mathcal{P}_{Y,Z}} P_{Y,Z}^{\otimes n} \left( \left| \inf_{f \in \mathcal{F}} \E [L(y,f(z))] - \E [L(y,\hat{f}(z))] \right| \leq c \right) \geq \kappa,\label{eq_pac_learnable}
\end{align}
for a small value of $c \in \mathbb{R}_{+}$ and a large value of $\kappa \in (0,1)$ at sample size $n$. Here $\mathcal{P}_{Y,Z}$ is the collection of all Borel probability measures on $\mathcal{Y}\times\mathcal{Z}$, and thus the performance of a decision rule is uniform over all possible distributions $P_{Y,Z} \in \mathcal{P}_{Y,Z}$.\footnote{Note that taking the outer probability is necessary because the sampling uncertainty from the choice of $\hat{f}$ is not resolved by the inner expectation.} We can now introduce the notion of (agnostic) PAC learnability initially proposed by \cite{haussler1992decision}.

\begin{definition}[Agnostic PAC Learnability]\label{definition_pac_learnability}
A hypothesis class $\mathcal{F}$ is (agnostic) probably approximately correct (PAC) learnable with respect to the loss function $L$ if there exists a function $\zeta_{\mathcal{F}}: \mathbb{R}_{+}\times (0,1) \to \mathbb{N}$ such that, for any $(c,\kappa) \in \mathbb{R}_{++} \times (0,1) \to \mathbb{N}$, if $n \geq \zeta_{\mathcal{F}}(c,\kappa)$ then there is some decision procedure $d: \Psi_{n} \to \mathcal{F}$ such that $\hat{f}:=d(\psi)$ satisfies \eqref{eq_pac_learnable}.
\end{definition} 
\begin{remark}
This definition omits an important component of the original definition of PAC learnability found in the paper of \cite{valiant1984theory}, which also requires that the algorithm (decision rule) can be processed in polynomial time (relative to the length of its input). For some this may be a serious omission, as the requirement that an algorithm can be efficiently processed is seen as a core component of learnability in computational learning theory.\footnote{This perspective is apparent in \cite{valiant2013probably}.}    
\end{remark}

In other words, a hypothesis space is (agnostic) PAC learnable if we can guarantee that \eqref{eq_pac_learnable} holds for any choice of the pair $(c,\kappa) \in \mathbb{R}_{++} \times (0,1)$ for large enough $n$. Here $c$ is called the error tolerance parameter, and $\kappa$ is called the confidence parameter. The ``agnostic'' component of the definition refers to the fact that the hypothesis class $\mathcal{F}$ may or may not include the true labelling function $f^{*}:\mathcal{Z} \to \mathcal{Y}$; indeed, such a ``true'' labelling function may not even exist.

One major advantage of the PAC framework---relative to other frequentist methods of evaluating decision rules---is its analytical tractability and amenability to analysis via concentration inequalities, and techniques from empirical process theory. Indeed, in the case when the decision rule $d:\Psi_{n}\to \mathcal{F}$ corresponds to the empirical risk minimization rule, it is well known that PAC learnability is implied by uniform convergence (over both $\mathcal{P}_{Y,Z}$ and $\mathcal{F}$) of the empirical risk to the population risk.\footnote{See, for example, \cite{shalev2014understanding} Lemma 4.2.} In specific learning problems this uniform convergence is equivalent to learnability (see the discussion in \cite{alon1997scale} and \cite{shalev2010learnability}). This means well-developed tools in empirical process theory can be used to establish the learnability of a particular class of functions. Intuitively, whether or not a particular class of functions $\mathcal{F}$ is learnable depends on the ``complexity'' of the function class. There are various ways to measure the complexity of $\mathcal{F}$, some of which are encountered in the current paper. In general, classes that exhibit less complexity are easier to learn than classes that exhibit more complexity, and if a class of functions is too complex, it may not be learnable.\todolt{Finish the section: ``On the Relation Between PAC Learnability and Decision Theory'' Discuss stochastic dominance.}

\section{Proofs}\label{appendix_proofs}
\begin{remark}[Common Notation]\label{remark_common_notation}
To avoid repetition we introduce some common notation for use in the proofs of Theorem \ref{theorem_pampac_learnability}, Theorem \ref{thm_finite_sample1}, Lemma \ref{lemma_subsets} and Lemma \ref{lemma_gamma_hat_high_prob}. In particular, for any $\theta \in \Theta$ and $\gamma \in \Gamma$ let $\lambda^{*}(\theta,\gamma)$, and $\hat{\lambda}(\theta,\gamma)$ satisfy:
\begin{align}
Ph_{\ell b}(\cdot, \theta,\gamma,\lambda^{*}(\theta,\gamma)) = \max_{\lambda \in \Lambda} Ph_{\ell b}(\cdot,\theta,\gamma,\lambda),\label{eq_lambda_star2}\\
\P_{n}h_{\ell b}(\cdot, \theta,\gamma,\hat{\lambda}(\theta,\gamma)) = \max_{\lambda \in \Lambda}  \P_{n}h_{\ell b}(\cdot,\theta,\gamma,\lambda).\label{eq_lambda_hat2}
\end{align}
Now for any $\gamma \in \Gamma$, let $\theta^{*}$ and $\hat{\theta}$ satisfy:
\begin{align}
Ph_{\ell b}(\cdot, \theta^{*}(\gamma),\gamma,\lambda^{*}(\theta^{*}(\gamma),\gamma)) \leq  \inf_{\theta \in \Theta} Ph_{\ell b}(\cdot, \theta,\gamma,\lambda^{*}(\theta,\gamma))+\varepsilon,\label{eq_theta_star2}\\
\P_{n}h_{\ell b}(\cdot, \hat{\theta}(\gamma),\gamma,\hat{\lambda}(\hat{\theta}(\gamma),\gamma)) \leq \inf_{\theta \in \Theta} \P_{n}h_{\ell b}(\cdot,\theta,\gamma,\hat{\lambda}(\theta,\gamma)) +\varepsilon,\label{eq_theta_hat2}
\end{align}
Finally, let $\gamma^{*}$ and $\hat{\gamma}$ satisfy:
\begin{align}
Ph_{\ell b}(\cdot, \theta^{*}(\gamma^{*}),\gamma^{*},\lambda^{*}(\theta^{*}(\gamma^{*}),\gamma^{*})) \geq \sup_{\gamma \in \Gamma} Ph_{\ell b}(\cdot, \theta^{*}(\gamma),\gamma,\lambda^{*}(\theta^{*}(\gamma),\gamma)) -\varepsilon,\label{eq_gamma_star2}\\
 \P_{n}h_{\ell b}(\cdot, \hat{\theta}(\hat{\gamma}),\hat{\gamma},\hat{\lambda}(\hat{\theta}(\hat{\gamma}),\hat{\gamma})) \geq \sup_{\gamma \in \Gamma}  \P_{n}h_{\ell b}(\cdot,\hat{\theta}(\gamma),\gamma,\hat{\lambda}(\hat{\theta}(\gamma),\gamma)) -\varepsilon.\label{eq_gamma_hat2}
\end{align}
\todoltinline{Whether there exists a Borel measurable $d:\Psi_{n} \mapsto \Gamma$ such that $d(\psi) = \hat{\gamma}$ appears to be unresolved. The issue is that $\hat{\theta}:\Gamma\to \Theta$ may only be analytically measurable (e.g. see the stinchcombe and white paper). Even if $h_{\ell b}$ is a Borel measurable function of $\theta$ (which it is generally not) the composition of $h_{\ell b}$ and $\hat{\theta}(\cdot)$ may not be analytically measurable (since it is the composition of a Borel function and analytic function). This might be resolved by assuming (i) the supremum over $\Theta$ is obtained, (ii) $\Gamma$ has at most countably many points, and (iii) allowing $d$ to be analytically measurable: then the function $\P_{n}h_{\ell b}(\cdot, \hat{\theta}(\gamma),\gamma,\hat{\lambda}(\hat{\theta}(\gamma),\gamma))$ will be lower semi-analytic, and the countable maximum of lower semi-analytic functions is lower semi-analytic.}
With these definitions, it is straightforward to show:
\begin{align}
\sup_{\gamma \in \Gamma}\inf_{\theta \in \Theta} \max_{\lambda \in \Lambda} P h_{\ell b}(\cdot,\theta,\gamma,\lambda) &\leq \inf_{\theta \in \Theta} \max_{\lambda \in \Lambda} P h_{\ell b}(\cdot,\theta,\gamma^{*},\lambda) +3\varepsilon,\\
\sup_{\gamma \in \Gamma}\inf_{\theta \in \Theta} \max_{\lambda \in \Lambda} \P_{n} h_{\ell b}(\cdot,\theta,\gamma,\lambda) &\leq \inf_{\theta \in \Theta} \max_{\lambda \in \Lambda} \P_{n} h_{\ell b}(\cdot,\theta,\hat{\gamma},\lambda) +3\varepsilon.
\end{align}
Furthermore, we can always choose $\gamma^{*}$ and $\hat{\gamma}$ to satisfy:\todo{Verify}
\begin{align}
\inf_{\theta \in \Theta} \max_{\lambda \in \Lambda} P h_{\ell b}(\cdot,\theta,\hat{\gamma},\lambda) \leq \inf_{\theta \in \Theta} \max_{\lambda \in \Lambda} P h_{\ell b}(\cdot,\theta,\gamma^{*},\lambda),\\
\inf_{\theta \in \Theta} \max_{\lambda \in \Lambda} \P_{n} h_{\ell b}(\cdot,\theta,\gamma^{*},\lambda) \leq \inf_{\theta \in \Theta} \max_{\lambda \in \Lambda} \P_{n} h_{\ell b}(\cdot,\theta,\hat{\gamma},\lambda).
\end{align}

\end{remark}
\begin{remark}[Measurability]
We will not comment on measurability issues in every proof, and instead we refer readers to the discussion Appendix \ref{appendix_measurability} (namely, Proposition \ref{proposition_measurable} and Corollary \ref{corollary_universal_rademacher}). There it is shown that certain quantities in this paper that are not typically (Borel) measurable are still universally measurable. This allows us to use outer measures to resolve measurability issues, although this is left implicit in many of the proofs. However, we also note that all measurability issues can also be resolved by restricting $\Theta$ and $\Gamma$ to have at most countably many points. 
\end{remark}

\subsection{Proofs of the Main Results}

\begin{proof}[Proof of Proposition \ref{proposition_pac_weak_dominance}]
Recall by assumption we have $\gamma\mapsto \inf_{s \in \mathcal{S}} I[\varphi](\gamma,s)$ is universally measurable. By (Borel) measurability of each decision rule $d:\Psi_{n} \to \Gamma$ (and thus universal measurability), and the fact that universally measurable functions are closed under composition, this implies that the map $\psi\mapsto \inf_{s \in \mathcal{S}} I[\varphi](d(\psi),s)$ is universally measurable. The result then follows from Lemma \ref{lemma_on_weak_dominance} after noting that $\sup_{\gamma \in \Gamma}\inf_{s \in \mathcal{S}} I[\varphi](\gamma,s)$ is a constant for each $P_{Y,Z} \in \mathcal{P}_{Y,Z}$ (and thus plays the role of ``$c(P)$'' from Lemma \ref{lemma_on_weak_dominance}).
\end{proof}

\begin{proof}[Proof of Lemma \ref{lemma_lipschitz_condition}]
Fix a value of $\delta>0$ satisfying Assumption \ref{assumption_error_bound2}. We will focus on proving \eqref{eq_errorbound_2} holds, as the proof of \eqref{eq_errorbound_3} is similar. By iterated application of Lemma \ref{lemma_joint_to_marginal_selection}, \eqref{eq_errorbound_2} can be rewritten as:
\begin{align*}
\inf_{\theta^{*} \in \Theta^{*}} \int \inf_{u \in \bm G^{-}(y,z,\theta^{*})} \inf_{y^\star \in \bm G^{\star}(y,z,u,\theta,\gamma)} \varphi(v)  \, dP_{Y,Z} - \int \inf_{u \in \bm G^{-}(y,z,\theta)} \inf_{y^\star \in \bm G^{\star}(y,z,u,\theta,\gamma)} \varphi(v)  \, dP_{Y,Z} \leq C_{2} d(\theta,\Theta^{*}).
\end{align*}
Note that this inequality is trivially satisfied for any $C_{2} \geq 0$ when $\theta \in \Theta^{*}$. Thus, it suffices to focus on the case when $\theta \in \Theta_{\delta}^{*}\setminus \Theta^{*}$. Furthermore, for this latter case it suffices to find a value of $C_{2}\geq 0$ satisfying:
\begin{align*}
\int \left( \inf_{u \in \bm G^{-}(y,z,\theta_{1})} \inf_{y^\star \in \bm G^{\star}(y,z,u,\theta_{1},\gamma)} \varphi(v) - \inf_{u \in \bm G^{-}(y,z,\theta_{2})} \inf_{y^\star \in \bm G^{\star}(y,z,u,\theta_{2},\gamma)} \varphi(v) \right) \, dP_{Y,Z} \leq C_{2} d(\theta_{1},\theta_{2}),
\end{align*}
for any $\theta_{1},\theta_{2} \in \Theta_{\delta}^{*}$. However, to find $C_{2}$ in the previous display, it suffices to find $C_{2}$ such that:
\begin{align}
\inf_{u \in \bm G^{-}(y,z,\theta_{1})} \inf_{y^\star \in \bm G^{\star}(y,z,u,\theta_{1},\gamma)} \varphi(v) - \inf_{u \in \bm G^{-}(y,z,\theta_{2})} \inf_{y^\star \in \bm G^{\star}(y,z,u,\theta_{2},\gamma)} \varphi(v) \leq C_{2} d(\theta_{1},\theta_{2}),\label{eq_C2_pick}
\end{align}
$(y,z)-a.s.$ Fix any $\varepsilon>0$ and let $(y,z) \in \mathcal{Y}\times\mathcal{Z}$ be any pair (outside the null sets in \eqref{eq_lipschitz_cond1} and \eqref{eq_lipschitz_cond2}). For any $\theta_{1},\theta_{2} \in \Theta_{\delta}^{*}$ let $u_{1}^{*}$, $u_{2}^{*}$, $y_{1}^{*}$ and $y_{2}^{*}$ satisfy:
\begin{align*}
u_{1}^{*}\in \bm G^{-}(y,z,\theta_{1}),&&y_{1}^{*}\in \bm G^{\star}(y,z,u_{1}^{*},\theta_{1},\gamma),\\
u_{2}^{*}\in \bm G^{-}(y,z,\theta_{2}),&&y_{2}^{*}\in \bm G^{\star}(y,z,u_{2}^{*},\theta_{2},\gamma),
\end{align*}
and:
\begin{align*}
\varphi(y,z,u_{1}^{*},y_{1}^{*}) \leq \inf_{u \in \bm G^{-}(y,z,\theta_{1})} \inf_{y^\star \in \bm G^{\star}(y,z,u,\theta_{1},\gamma)} \varphi(v)+\varepsilon,\\
\varphi(y,z,u_{2}^{*},y_{2}^{*}) \leq \inf_{u \in \bm G^{-}(y,z,\theta_{2})} \inf_{y^\star \in \bm G^{\star}(y,z,u,\theta_{2},\gamma)} \varphi(v)+\varepsilon.
\end{align*}
For simplicity we will denote $v_{1}^{*}:=(y,z,u_{1}^{*},y_{1}^{*})$ and $v_{2}^{*}:=(y,z,u_{2}^{*},y_{2}^{*})$. Now, by Proposition 3C.1 in \cite{dontchev2009implicit}, condition \eqref{eq_lipschitz_cond2} implies:
\begin{align*}
d_{H}(\bm G^{\star}(y,z,u,\theta_{1},\gamma),\bm G^{\star}(y,z,u,\theta_{2},\gamma)) \leq \ell_{2} d(\theta_{1},\theta_{2}), &&\forall \theta_{1},\theta_{2} \in \Theta_{\delta}^{*}
\end{align*}
$(y,z,u)-$a.s.\footnote{Recall the Hausdorff distance between two non-empty subsets $A$ and $B$ of a metric space $(\mathcal{X},d)$ is given by:
\begin{align*}
d_{H}(A,B):= \max\left\{\sup_{a\in A}\inf_{b \in B} d(a,b), \sup_{b\in B}\inf_{a \in A} d(a,b)\right\}.
\end{align*}
} Thus, since $y_{2}^{*}\in \bm G^{\star}(y,z,u,\theta_{2},\gamma)$ by assumption, there exists $y_{1} \in \bm G^{\star}(y,z,u,\theta_{1},\gamma)$ such that $d(y_{1},y_{2}^{*}) \leq \ell_{2} d(\theta_{1},\theta_{2})$. Furthermore, by Proposition 3C.1 in \cite{dontchev2009implicit}, condition \eqref{eq_lipschitz_cond1} implies:
\begin{align*}
d_{H}(\bm G^{-}(y,z,\theta_{1}),\bm G^{-}(y,z,\theta_{2})) \leq \ell_{1} d(\theta_{1},\theta_{2}), &&\forall \theta_{1},\theta_{2} \in \Theta_{\delta}^*.
\end{align*}
Thus, since $u_{2}^{*}\in \bm G^{-}(y,z,\theta_{2})$ by assumption, there exists $u_{1} \in \bm G^{-}(y,z,\theta_{1})$ such that $d(u_{1},u_{2}^{*}) \leq \ell_{1} d(\theta_{1},\theta_{2})$. Now let us define $v_{1}:=(y,z,u_{1},y_{1})$. Then we have:
\begin{align*}
&\inf_{u \in \bm G^{-}(y,z,\theta_{1})} \inf_{y^\star \in \bm G^{\star}(y,z,u,\theta_{1},\gamma)} \varphi(v) - \inf_{u \in \bm G^{-}(y,z,\theta_{2})} \inf_{y^\star \in \bm G^{\star}(y,z,u,\theta_{2},\gamma)}\varphi(v)\\
&\qquad\qquad\leq \varphi(v_{1}^{*}) - \varphi(v_{2}^{*}) +\varepsilon \\
&\qquad\qquad\leq \varphi(v_{1}) - \varphi(v_{2}^{*}) +2\varepsilon\\
&\qquad\qquad\leq L_{\varphi} d((y_{1},u_{1}),(u_{2}^{*},y_{2}^{*})) +2\varepsilon\\
&\qquad\qquad\leq L_{\varphi} \max\{d(y_{1},y_{2}^{*}),d(u_{1},u_{2}^{*})\} +2\varepsilon\\
&\qquad\qquad\leq L_{\varphi} \max\{\ell_{1},\ell_{2}\}d(\theta_{1},\theta_{2}) +2\varepsilon,
\end{align*}
which holds for all $\theta_{1}, \theta_{2} \in \Theta_{\delta}^{*}$.\footnote{Here we take the product metric as the sup metric; that is, if $(\mathcal{X},d)$ and $(\mathcal{X}',d')$ are two metric spaces, then the product metric $d_{\infty}$ on $\mathcal{X} \times \mathcal{X}'$ is defined as $d_{\infty}((x_{1},x_{1}'),(x_{2},x_{2}')) = \max\left\{d(x_{1},x_{2}),d'(x_{1}',x_{2}')  \right\}$.} Since $\varepsilon>0$ is arbitrary, we conclude that $C_{2}$ in \eqref{eq_C2_pick} can be taken equal to $L_{\varphi} \max\{\ell_{1},\ell_{2}\}$. This completes the proof.
\end{proof}

\begin{proof}[Proof of Theorem \ref{thm_cortes}]\label{proof_thm_cortes}
We will show the lower bound, as the proof for the upper bound is symmetric. We will prove the following sequence of equalities and inequalities:
\begin{align}
I[\varphi](\gamma)&:=\int \varphi(v) \, dP_{V_{\gamma}} \nonumber\\
&\geq \inf_{\theta \in \Theta^{*}}\inf_{P_{U|Y,Z} \in \mathcal{P}_{U|Y,Z}(\theta)}\inf_{P_{Y_{\gamma}^\star|Y,Z,U} \in \mathcal{P}_{Y_{\gamma}^\star|Y,Z,U}(\theta,\gamma)} \int \varphi(v)  \, dP_{V_{\gamma}}\label{eq1}\\
&=\inf_{\theta \in \Theta}  \inf_{P_{U|Y,Z} \in \mathcal{P}_{U|Y,Z}(\theta)} \Bigg(\inf_{P_{Y_{\gamma}^\star|Y,Z,U} \in \mathcal{P}_{Y_{\gamma}^\star|Y,Z,U}(\theta,\gamma)} \int \varphi(v)  \, dP_{V_{\gamma}}\nonumber\\
&\qquad\qquad\qquad\qquad\qquad\qquad\qquad\qquad\qquad\qquad+ \mu^{*}\sum_{j=1}^{J}  \lambda_{j}^{\ell b}(\theta,P_{Y,Z}) \E_{P_{Y,Z,U}} [ m_{j}(y,z,u,\theta)]\Bigg)\label{eq2}\\
&=\inf_{\theta \in \Theta}  \inf_{P_{U|Y,Z} \in \mathcal{P}_{U|Y,Z}(\theta)} \Bigg(\int  \inf_{y^{\star}  \in \bm G^{\star}(y,z,u,\theta,\gamma)}\varphi(v)  \, dP_{Y,Z,U}\nonumber\\
&\qquad\qquad\qquad\qquad\qquad\qquad\qquad\qquad\qquad\qquad+ \mu^{*}\sum_{j=1}^{J} \lambda_{j}^{\ell b}(\theta,P_{Y,Z})  \E_{P_{Y,Z,U}} [ m_{j}(y,z,u,\theta)]\Bigg)\label{eq3}\\
&=\inf_{\theta \in \Theta}  \int \Bigg(\inf_{u \in \bm G^{-}(y,z,\theta)}  \inf_{y^{\star}  \in \bm G^{\star}(y,z,u,\theta,\gamma)}\varphi(v) + \mu^{*}\sum_{j=1}^{J} \lambda_{j}^{\ell b}(\theta,P_{Y,Z}) m_{j}(y,z,u,\theta)\Bigg) \, dP_{Y,Z}\label{eq4}\\
&=\inf_{\theta \in \Theta} \max_{\lambda_{j} \in \{0,1\}} \int \Bigg(\inf_{u \in \bm G^{-}(y,z,\theta)}  \inf_{y^{\star}  \in \bm G^{\star}(y,z,u,\theta,\gamma)}\varphi(v) + \mu^{*}\sum_{j=1}^{J} \lambda_{j} m_{j}(y,z,u,\theta)\Bigg) \, dP_{Y,Z}\label{eq5}.
\end{align}
Inequality \eqref{eq1} is obvious. Equality \eqref{eq2} follows from Lemma \ref{lemma_error_bound}. Equalities \eqref{eq3} and \eqref{eq4} follow from Lemma \ref{lemma_joint_to_marginal_selection}. Finally, \eqref{eq5} follows from Lemma \ref{lemma_multiplier_switch}.
\end{proof}

\begin{proof}[Proof of Theorem \ref{theorem_pampac_learnability}]
Let $\mathcal{F}$ be a class of real-valued functions, and let $\psi = ((y_{i},z_{i}))_{i=1}^{n}$ denote a particular sample vector taking values in the sample space $\Psi_{n}$. For any $f ,f' \in \mathcal{F}$ define the norm:
\begin{align*}
||f-f'||_{\psi,2} := \left( \sum_{i=1}^{n} \left( f(y_{i},z_{i}) - f'(y_{i},z_{i}) \right)^{2}  \right)^{1/2} .
\end{align*}
Recall that:
\begin{align*}
h_{\ell b}(y,z,\theta,\gamma,\lambda):= \inf_{u \in \bm G^{-}(y,z,\theta)}  \inf_{y^{\star}  \in \bm G^{\star}(y,z,u,\theta,\gamma)} \Bigg( \varphi(v) + \mu^{*}\sum_{j=1}^{J}  \lambda_{j} m_{j}(y,z,u,\theta)\Bigg).
\end{align*}
For notational simplicity we will define:
\begin{align*}
\P_{n}h_{\ell b}(\cdot, \theta,\gamma,\lambda) &:= \frac{1}{n}\sum_{i=1}^{n} \inf_{u_{i} \in \bm G^{-}(y_{i},z_{i},\theta)}  \inf_{y_{i}^{\star}  \in \bm G^{\star}(y_{i},z_{i},u_{i},\theta,\gamma)} \Bigg( \varphi(v_{i}) + \mu^{*}\sum_{j=1}^{J}  \lambda_{j}  m_{j}(y_{i},z_{i},u_{i},\theta)\Bigg),\\
P h_{\ell b}(\cdot, \theta,\gamma,\lambda) &:= \int \inf_{u \in \bm G^{-}(y,z,\theta)}  \inf_{y^{\star}  \in \bm G^{\star}(y,z,u,\theta,\gamma)} \Bigg( \varphi(v) + \mu^{*}\sum_{j=1}^{J}  \lambda_{j} m_{j}(y,z,u,\theta)\Bigg)\,dP_{Y,Z}.
\end{align*}
For any decision rule $d:\Psi_{n} \to \Gamma$ and any $P_{Y,Z} \in \mathcal{P}_{Y,Z}$, we have by Markov's inequality and Theorem \ref{thm_cortes}:\footnote{To be mindful of measurability issues, we can use the outer-measures version of Markov's inequality given in Lemma 6.10 in \cite{kosorok2008introduction}.}
\begin{align}
P_{Y,Z}^{\otimes n} \left( \sup_{\gamma \in \Gamma}\inf_{s \in \mathcal{S}} I[\varphi](\gamma,s)  - \inf_{s\in \mathcal{S}} I[\varphi](d(\psi),s) \geq c \right) &\leq \frac{1}{c} \E \left(\sup_{\gamma \in \Gamma}\inf_{s \in \mathcal{S}} I[\varphi](\gamma,s)  - \inf_{s\in \mathcal{S}} I[\varphi](d(\psi),s)\right)\nonumber\\ 
&=\frac{1}{c} \E \left(\sup_{\gamma \in \Gamma} I_{\ell b}[\varphi](\gamma)  - I_{\ell b}[\varphi](d(\psi))\right).\label{eq_markov}
\end{align}
Now note by symmetrization (e.g. \cite{van1996weak} Lemma 2.3.1) we have:
\begin{align}
&\sup_{\gamma \in \Gamma}\sup_{\theta \in \Theta} \max_{\lambda \in \Lambda}\bigg| \E \left(\P_{n}h_{\ell b}(\cdot, \theta,\gamma,\lambda) -  Ph_{\ell b}(\cdot, \theta,\gamma,\lambda) \right)\bigg|\nonumber\\
&\qquad\qquad\qquad\qquad\leq  \E  \sup_{\gamma \in \Gamma}\sup_{\theta \in \Theta} \max_{\lambda \in \Lambda}\bigg| \P_{n}h_{\ell b}(\cdot, \theta,\gamma,\lambda) -  Ph_{\ell b}(\cdot, \theta,\gamma,\lambda) \bigg| \leq 2\E  ||\mathfrak{R}_{n}||(\mathcal{H}_{\ell b}),\label{eq_ml12}
\end{align}
where the final outer expectation is a joint expectation that is also taken over the Rademacher random variables. Now let $\lambda^{*}(\theta,\gamma)$, $\hat{\lambda}(\theta,\gamma)$,  $\theta^{*}(\gamma)$, $\hat{\theta}(\gamma)$, $\gamma^{*}$ and $\hat{\gamma}$ be as in Remark \ref{remark_common_notation}, and set $d(\psi)=\hat{\gamma}$.\todolt{It seems this may not always be possible, since there may not exist a Borel measurable $d$ satisfying this requirement.} Then we have:
\begin{align*}
&\E I_{\ell b}[\varphi](d(\psi))\\
&=\E \inf_{\theta \in \Theta} \max_{\lambda \in \Lambda} Ph_{\ell b}(\cdot, \theta,d(\psi),\lambda), &&\text{(by Theorem \ref{thm_cortes}),}\\
&=\E \inf_{\theta \in \Theta} Ph_{\ell b}(\cdot, \theta,d(\psi),\lambda^{*}(\theta,d(\psi))), &&\text{(since $\lambda^{*}$ is optimal at $P$ for any $(\theta,\gamma)$),}\\
&=\E Ph_{\ell b}(\cdot, \theta^{*}(d(\psi)),d(\psi),\lambda^{*}(\theta^{*},d(\psi)))-\varepsilon, &&\text{(since $\theta^{*}$ is $\varepsilon-$optimal at $(P,\lambda^{*})$ for any $\gamma$),}\\
&\geq \E  Ph_{\ell b}(\cdot, \theta^{*}(d(\psi)),d(\psi),\hat{\lambda}(\theta^{*}(d(\psi)),d(\psi)))-\varepsilon, &&\text{(since $\lambda^{*}$ was optimal at $P$ for any $(\theta,\gamma)$),}\\
&\geq \E  \P_{n}h_{\ell b}(\cdot, \theta^{*}(d(\psi)),d(\psi),\hat{\lambda}(\theta^{*}(d(\psi)),d(\psi))) -2\E  ||\mathfrak{R}_{n}||(\mathcal{H}_{\ell b})-\varepsilon, &&\text{(by \eqref{eq_ml12}),}\\
&\geq \E  \P_{n}h_{\ell b}(\cdot, \hat{\theta}(d(\psi)),d(\psi),\hat{\lambda}(\hat{\theta}(d(\psi)),d(\psi))) -2\E  ||\mathfrak{R}_{n}||(\mathcal{H}_{\ell b})-2\varepsilon, &&\text{(since $\hat{\theta}$ is $\varepsilon$-optimal at $(\P_{n},\hat{\lambda})$ for any $\gamma$),}\\
&\geq \E  \P_{n}h_{\ell b}(\cdot, \hat{\theta}(\gamma^{*}),\gamma^{*},\hat{\lambda}(\hat{\theta}(\gamma^{*}),\gamma^{*})) -2\E  ||\mathfrak{R}_{n}||(\mathcal{H}_{\ell b})-3\varepsilon, &&\text{(since $d(\psi)$ was $\varepsilon$-optimal at $(\P_{n},\hat{\lambda},\hat{\theta})$),}\\
&\geq \E  \P_{n}h_{\ell b}(\cdot, \hat{\theta}(\gamma^{*}),\gamma^{*},\lambda^{*}(\hat{\theta}(\gamma^{*}),\gamma^{*})) -2\E  ||\mathfrak{R}_{n}||(\mathcal{H}_{\ell b})-3\varepsilon, &&\text{(since $\hat{\lambda}$ was optimal at $\P_{n}$ for any $(\theta,\gamma)$),}\\
&\geq \E  Ph_{\ell b}(\cdot, \hat{\theta}(\gamma^{*}),\gamma^{*},\lambda^{*}(\hat{\theta}(\gamma^{*}),\gamma^{*})) -4\E  ||\mathfrak{R}_{n}||(\mathcal{H}_{\ell b}) - 3\varepsilon, &&\text{(by \eqref{eq_ml12}),}\\
&\geq \E  Ph_{\ell b}(\cdot, \theta^{*}(\gamma^{*}),\gamma^{*},\lambda^{*}(\theta^{*}(\gamma^{*}),\gamma^{*})) -4\E  ||\mathfrak{R}_{n}||(\mathcal{H}_{\ell b}) - 4\varepsilon, &&\text{(since $\theta^{*}$ is $\varepsilon$-optimal at $(P,\lambda^{*})$ for any $\gamma$),}\\
&\geq \E  \sup_{\gamma \in \Gamma} Ph_{\ell b}(\cdot, \theta^{*}(\gamma),\gamma,\lambda^{*}(\theta^{*}(\gamma),\gamma)) -4\E  ||\mathfrak{R}_{n}||(\mathcal{H}_{\ell b}) - 5\varepsilon, &&\text{(since $\gamma^{*}$ was $\varepsilon$-optimal at $(P,\lambda^{*},\theta^{*})$),}\\
&\geq \E  \sup_{\gamma \in \Gamma} \inf_{\theta \in \Theta} Ph_{\ell b}(\cdot, \theta,\gamma,\lambda^{*}(\theta,\gamma)) -4\E  ||\mathfrak{R}_{n}||(\mathcal{H}_{\ell b}) - 5\varepsilon, &&\text{(since $\theta^{*}$ was $\varepsilon$-optimal at $(P,\lambda^{*})$ for any $\gamma$),}\\
&\geq  \E \sup_{\gamma \in \Gamma} \inf_{\theta \in \Theta}\max_{\lambda \in \Lambda} Ph_{\ell b}(\cdot, \theta,\gamma,\lambda) -4\E  ||\mathfrak{R}_{n}||(\mathcal{H}_{\ell b}) - 5\varepsilon, &&\text{(since $\lambda^{*}$ was optimal at $P$ for any $(\theta,\gamma)$),}\\
&\geq \E  \sup_{\gamma \in \Gamma} I_{\ell b}[\varphi](\gamma) -4\E  ||\mathfrak{R}_{n}||(\mathcal{H}_{\ell b}) - 5\varepsilon, &&\text{(by Theorem \ref{thm_cortes}).}
\end{align*}
Since $\varepsilon>0$ can be taken arbitrarily small, we conclude that:
\begin{align}
\E  \left(\sup_{\gamma \in \Gamma}I[\varphi](\gamma)  -I[\varphi](d(\psi))\right)  \leq 4\E  ||\mathfrak{R}_{n}||(\mathcal{H}_{\ell b}).\label{eq_symmetrize}
\end{align}
It thus suffices to bound the Rademacher complexity, given by:
\begin{align*}
&\E  ||\mathfrak{R}_{n}||(\mathcal{H}_{\ell b}) \\
&= \E  \left( \sup_{\gamma \in \Gamma} \sup_{\theta \in \Theta}\max_{\lambda \in \Lambda} \left| \frac{1}{n}\sum_{i=1}^{n} \xi_{i} \left(\inf_{u_{i} \in \bm G^{-}(y_{i},z_{i},\theta)}   \Bigg( \inf_{y_{i}^{\star}  \in \bm G^{\star}(y_{i},z_{i},u_{i},\theta,\gamma)} \varphi(v_{i}) + \mu^{*}\sum_{j=1}^{J}  \lambda_{j}  m_{j}(y_{i},z_{i},u_{i},\theta)\Bigg)\right)\right|\right). 
\end{align*}
If $\mathcal{H}_{\ell b}$ is not closed under symmetry, then redefine it as $\mathcal{H}_{\ell b}\cup (- \mathcal{H}_{\ell b})$; for our purposes this is without loss of generality, since this operation can only increase the value of $\E  ||\mathfrak{R}_{n}||(\mathcal{H}_{\ell b})$. We then have from Lemma \ref{lemma_bartlett} that for any $\varepsilon>0$:
\begin{align}
\E  ||\mathfrak{R}_{n}||(\mathcal{H}_{\ell b}) \leq \frac{2\varepsilon}{\sqrt{n}} + 2\text{Diam}_{\psi,2}(\mathcal{H}_{\ell b})  \sqrt{ \frac{\log  N(\varepsilon,\mathcal{H}_{\ell b},||\cdot||_{\psi,2})}{n}}.\label{eq_rademacher_bound}
\end{align}
Since the class of functions $\mathcal{H}_{\ell b}$ is uniformly bounded, we have $\text{Diam}_{\psi,2}(\mathcal{H}_{\ell b})<\infty$. It remains to bound the metric entropy. To do so, we will define:
\begin{align}
&\mathcal{H}_{I}:=\Bigg\{ h(\cdot,u,\theta,\gamma,\lambda): \mathcal{Y}\times \mathcal{Z} \to \mathbb{R} : \nonumber\\
&\qquad\qquad\qquad\qquad h(y,z,u,\theta,\gamma)= \inf_{y^{\star}  \in \bm G^{\star}(y,z,u,\theta,\gamma)} \varphi(v) + \mu^{*}\sum_{j=1}^{J}  \lambda_{j}  m_{j}(y,z,u,\theta),\nonumber\\
&\qquad\qquad\qquad\qquad\qquad\qquad\qquad\qquad\qquad\qquad\qquad\qquad\qquad\qquad(u,\theta,\gamma,\lambda) \in \mathcal{U} \times \Theta \times \Gamma \times \Lambda\Bigg\},\\
\nonumber\\
&\mathcal{H}_{II}:=\left\{ h(\cdot,u,\theta,\gamma): \mathcal{Y}\times \mathcal{Z} \to \mathbb{R} : h(y,z,u,\theta,\gamma)=\inf_{y^{\star}  \in \bm G^{\star}(y,z,u,\theta,\gamma)} \varphi(v), \quad (u,\theta,\gamma) \in \mathcal{U} \times \Theta \times \Gamma  \right\},\\
\nonumber\\
&\mathcal{H}_{III}:=\left\{ h(\cdot,u,y^{\star}): \mathcal{Y}\times \mathcal{Z} \to \mathbb{R} : h(y,z,u,y^\star)=\varphi(y,z,u,y^\star), \quad (u,y^\star) \in \mathcal{U} \times \mathcal{Y}^{\star} \right\},\\
\nonumber\\
&\mathcal{H}_{IV}:=\left\{ h(\cdot,u,\theta,\lambda): \mathcal{Y}\times \mathcal{Z} \to \mathbb{R} : h(y,z,u,\theta)=\sum_{j=1}^{J}  \lambda_{j} m_{j}(y,z,u,\theta), \,\, (u,\theta,\lambda) \in \mathcal{U} \times \Theta \times \Lambda \right\}.
\end{align}
By Lemma \ref{lemma_inf_covering}, we have:
\begin{align*}
N (\varepsilon,\mathcal{H}_{\ell b},||\cdot||_{\psi,2})\leq N (\varepsilon/2,\mathcal{H}_{I},||\cdot||_{\psi,2}).
\end{align*}
By Lemma \ref{lemma_useful_for_donsker} we also have:
\begin{align*}
N (\varepsilon/2,\mathcal{H}_{I},||\cdot||_{\psi,2}) \leq N (\varepsilon/2,\mathcal{H}_{II},||\cdot||_{\psi,2})  N (\varepsilon/2,\mathcal{H}_{IV},||\cdot||_{\psi,2}). 
\end{align*}
Applying Lemma \ref{lemma_inf_covering} again we have:
\begin{align*}
N (\varepsilon/2,\mathcal{H}_{II},||\cdot||_{\psi,2}) \leq N (\varepsilon/4,\mathcal{H}_{III},||\cdot||_{\psi,2}).
\end{align*}
Finally, from iterated application of Lemma \ref{lemma_useful_for_donsker}:
\begin{align*}
N (\varepsilon/2,\mathcal{H}_{IV},||\cdot||_{\psi,2}) \leq \prod_{j=1}^{J} N (\varepsilon/(2J),\mathcal{M}_{j},||\cdot||_{\psi,2}),
\end{align*}
We conclude that:
\begin{align*}
\log N (\varepsilon,\mathcal{H}_{\ell b},||\cdot||_{\psi,2})&\leq \log N (\varepsilon/4,\mathcal{H}_{III},||\cdot||_{\psi,2}) + \sum_{j=1}^{J} \log N (\varepsilon/(2J),\mathcal{M}_{j},||\cdot||_{\psi,2})\\
&\leq \sup_{Q \in \mathcal{Q}_{n}} \log N (\varepsilon/4,\mathcal{H}_{III},||\cdot||_{Q,2}) + \sum_{j=1}^{J} \sup_{Q \in \mathcal{Q}_{n}} \log N (\varepsilon/(2J),\mathcal{M}_{j},||\cdot||_{Q,2}),
\end{align*}
with the supremum taken over all discrete probability measures $\mathcal{Q}_{n}$ on $\mathcal{X}$ with atoms that have probabilities that are integer multiples of $1/n$. Since by assumption $\mathcal{H}_{III}$ and $\mathcal{M}_{j}$ satisfy the entropy growth condition, the right side of the previous display is of order $o(n)$. Combining this with \eqref{eq_rademacher_bound}, we see that for any $(c,\kappa)$ pair, there exists some $n$ such that $4 \E ||\mathfrak{R}_{n}||(\mathcal{H}_{\ell b}) \leq c(1-\kappa)$. Combining this with \eqref{eq_symmetrize} and \eqref{eq_markov}, the proof is complete.

\end{proof}

\begin{proof}[Proof of Theorem \ref{thm_finite_sample1}]\label{proof_thm_finite_sample1}
Recall that:
\begin{align*}
h_{\ell b}(y,z,\theta,\gamma,\lambda):= \inf_{u \in \bm G^{-}(y,z,\theta)}  \inf_{y^{\star}  \in \bm G^{\star}(y,z,u,\theta,\gamma)} \Bigg( \varphi(v) + \mu^{*}\sum_{j=1}^{J}  \lambda_{j} m_{j}(y,z,u,\theta)\Bigg).
\end{align*}
For notational simplicity we will define:
\begin{align*}
\P_{n}h_{\ell b}(\cdot, \theta,\gamma,\lambda) &:= \frac{1}{n}\sum_{i=1}^{n} \inf_{u_{i} \in \bm G^{-}(y_{i},z_{i},\theta)}  \inf_{y_{i}^{\star}  \in \bm G^{\star}(y_{i},z_{i},u_{i},\theta,\gamma)} \Bigg( \varphi(v_{i}) + \mu^{*}\sum_{j=1}^{J}  \lambda_{j}  m_{j}(y_{i},z_{i},u_{i},\theta)\Bigg),\\
P h_{\ell b}(\cdot, \theta,\gamma,\lambda) &:= \int \inf_{u \in \bm G^{-}(y,z,\theta)}  \inf_{y^{\star}  \in \bm G^{\star}(y,z,u,\theta,\gamma)} \Bigg( \varphi(v) + \mu^{*}\sum_{j=1}^{J}  \lambda_{j}  m_{j}(y,z,u,\theta)\Bigg)\,dP_{Y,Z}.
\end{align*}
We claim that it suffices to set $c_{n}(\kappa)=2\tilde{c}_{n}(\psi,\kappa) + 5\varepsilon$, where $\tilde{c}_{n}(\psi,\kappa)$ satisfies:
\begin{align}
\sup_{\gamma \in \Gamma}\sup_{\theta \in \Theta} \max_{\lambda \in \Lambda} \bigg| \P_{n}h_{\ell b}(\cdot, \theta,\gamma,\lambda) -  Ph_{\ell b}(\cdot, \theta,\gamma,\lambda) \bigg| \leq \tilde{c}_{n}(\psi,\kappa),\label{eq_ml1}
\end{align}
with probability at least $\kappa/2$. Let $\lambda^{*}(\theta,\gamma)$, $\hat{\lambda}(\theta,\gamma)$,  $\theta^{*}(\gamma)$, $\hat{\theta}(\gamma)$, $\gamma^{*}$ and $\hat{\gamma}$ be as in Remark \ref{remark_common_notation} and set $d(\psi)=\hat{\gamma}$.\todolt{It seems this may not always be possible, since there may not exist a Borel measurable $d$ satisfying this requirement.} Then we have:
\begin{align*}
&I_{\ell b}[\varphi](d(\psi))\\
&=\inf_{\theta \in \Theta} \max_{\lambda \in \Lambda} Ph_{\ell b}(\cdot, \theta,d(\psi),\lambda), &&\text{(by Theorem \ref{thm_cortes}),}\\
&=\inf_{\theta \in \Theta} Ph_{\ell b}(\cdot, \theta,d(\psi),\lambda^{*}(\theta,d(\psi))), &&\text{(since $\lambda^{*}$ is optimal at $P$ for any $(\theta,\gamma)$),}\\
&=Ph_{\ell b}(\cdot, \theta^{*}(d(\psi)),d(\psi),\lambda^{*}(\theta^{*},d(\psi)))-\varepsilon, &&\text{(since $\theta^{*}$ is $\varepsilon-$optimal at $(P,\lambda^{*})$ for any $\gamma$),}\\
&\geq  Ph_{\ell b}(\cdot, \theta^{*}(d(\psi)),d(\psi),\hat{\lambda}(\theta^{*}(d(\psi)),d(\psi)))-\varepsilon, &&\text{(since $\lambda^{*}$ was optimal at $P$ for any $(\theta,\gamma)$),}\\
&\geq_{(\kappa/2)} \P_{n}h_{\ell b}(\cdot, \theta^{*}(d(\psi)),d(\psi),\hat{\lambda}(\theta^{*}(d(\psi)),d(\psi))) -\tilde{c}_{n}(\psi,\kappa)-\varepsilon, &&\text{(by \eqref{eq_ml1}),}\\
&\geq \P_{n}h_{\ell b}(\cdot, \hat{\theta}(d(\psi)),d(\psi),\hat{\lambda}(\hat{\theta}(d(\psi)),d(\psi))) -\tilde{c}_{n}(\psi,\kappa)-2\varepsilon, &&\text{(since $\hat{\theta}$ is $\varepsilon$-optimal at $(\P_{n},\hat{\lambda})$ for any $\gamma$),}\\
&\geq \P_{n}h_{\ell b}(\cdot, \hat{\theta}(\gamma^{*}),\gamma^{*},\hat{\lambda}(\hat{\theta}(\gamma^{*}),\gamma^{*})) -\tilde{c}_{n}(\psi,\kappa)-3\varepsilon, &&\text{(since $d(\psi)$ was $\varepsilon$-optimal at $(\P_{n},\hat{\lambda},\hat{\theta})$),}\\
&\geq \P_{n}h_{\ell b}(\cdot, \hat{\theta}(\gamma^{*}),\gamma^{*},\lambda^{*}(\hat{\theta}(\gamma^{*}),\gamma^{*})) -\tilde{c}_{n}(\psi,\kappa)-3\varepsilon, &&\text{(since $\hat{\lambda}$ was optimal at $\P_{n}$ for any $(\theta,\gamma)$),}\\
&\geq_{(\kappa/2)}  Ph_{\ell b}(\cdot, \hat{\theta}(\gamma^{*}),\gamma^{*},\lambda^{*}(\hat{\theta}(\gamma^{*}),\gamma^{*})) -2\tilde{c}_{n}(\psi,\kappa) - 3\varepsilon, &&\text{(by \eqref{eq_ml1}),}\\
&\geq Ph_{\ell b}(\cdot, \theta^{*}(\gamma^{*}),\gamma^{*},\lambda^{*}(\theta^{*}(\gamma^{*}),\gamma^{*})) -2\tilde{c}_{n}(\psi,\kappa) - 4\varepsilon, &&\text{(since $\theta^{*}$ is $\varepsilon$-optimal at $(P,\lambda^{*})$ for any $\gamma$),}\\
&\geq\sup_{\gamma \in \Gamma} Ph_{\ell b}(\cdot, \theta^{*}(\gamma),\gamma,\lambda^{*}(\theta^{*}(\gamma),\gamma)) -2\tilde{c}_{n}(\psi,\kappa) - 5\varepsilon, &&\text{(since $\gamma^{*}$ was $\varepsilon$-optimal at $(P,\lambda^{*},\theta^{*})$),}\\
&\geq  \sup_{\gamma \in \Gamma} \inf_{\theta \in \Theta} Ph_{\ell b}(\cdot, \theta,\gamma,\lambda^{*}(\theta,\gamma)) -2\tilde{c}_{n}(\psi,\kappa) - 5\varepsilon, &&\text{(since $\theta^{*}$ was $\varepsilon$-optimal at $(P,\lambda^{*})$ for any $\gamma$),}\\
&\geq \sup_{\gamma \in \Gamma} \inf_{\theta \in \Theta}\max_{\lambda \in \Lambda} Ph_{\ell b}(\cdot, \theta,\gamma,\lambda) -2\tilde{c}_{n}(\psi,\kappa) - 5\varepsilon, &&\text{(since $\lambda^{*}$ was optimal at $P$ for any $(\theta,\gamma)$),}\\
&\geq \sup_{\gamma \in \Gamma} I_{\ell b}[\varphi](\gamma) -2\tilde{c}_{n}(\psi,\kappa) - 5\varepsilon, &&\text{(by Theorem \ref{thm_cortes}).}
\end{align*}
where each inequality ``$\geq_{(\kappa/2)}$'' holds with probability at least $\kappa/2$. Note that this shows:
\begin{align*}
\sup_{\gamma \in \Gamma}  I_{\ell b}[\varphi](\gamma)- I_{\ell b}[\varphi](\hat{\gamma}) \leq 2\tilde{c}_{n}(\psi,\kappa) +5\varepsilon,
\end{align*}
with probability at least $\kappa$. To satisfy \eqref{eq_ml1} it clearly suffices to choose $\tilde{c}_{n}(\psi,\kappa)$ to satisfy:
\begin{align}
\sup_{P_{Y,Z} \in \mathcal{P}_{Y,Z}} P_{Y,Z}^{\otimes n}\bigg(\sup_{\gamma \in \Gamma}\sup_{\theta \in \Theta}\max_{\lambda \in \Lambda} \bigg| \P_{n}h_{\ell b}(\cdot, \theta,\gamma,\lambda) -  Ph_{\ell b}(\cdot, \theta,\gamma,\lambda) \bigg|\geq \tilde{c}_{n}(\psi,\kappa)\bigg) \leq 1-\kappa/2.
\end{align}
From \cite{koltchinskii2011oracle} Theorem 4.6 we have for any $t>0$:
\begin{align*}
\sup_{P_{Y,Z} \in \mathcal{P}_{Y,Z}} P_{Y,Z}^{\otimes n}\left( \sup_{\gamma \in \Gamma}\sup_{\theta \in \Theta}\max_{\lambda \in \Lambda} \bigg| \P_{n}h_{\ell b}(\cdot, \theta,\gamma,\lambda) -  Ph_{\ell b}(\cdot, \theta,\gamma,\lambda) \bigg| \geq 2 ||\mathfrak{R}_{n}||(\mathcal{H}_{\ell b}) + \frac{3t \overline{H} }{\sqrt{n}} \right) &\leq \exp \left( -\frac{t^{2}}{2} \right).
\end{align*}
Now set:
\begin{align*}
\tilde{c}_{n}(\psi,\kappa)&=  2||\mathfrak{R}_{n}||(\mathcal{H}_{\ell b}) + \sqrt{\frac{18 \ln(2/(2-\kappa)) \overline{H}^{2}}{n}}.
\end{align*}
Then we have:
\begin{align*}
c_{n}(\kappa)&=  4 ||\mathfrak{R}_{n}||(\mathcal{H}_{\ell b}) + \sqrt{\frac{72 \ln(2/(2-\kappa)) \overline{H}^{2}}{n}} + 5\varepsilon.
\end{align*}
Then we conclude \eqref{eq_concen_1}. 

\end{proof}

\begin{proof}[Proof of Theorem \ref{theorem_delta_minimal}]
Let $T, T^\flat$, and $T^\sharp$ be as defined in Lemma \ref{lemma_subsets}.\note{Note that the objective of the proof is to show $\delta^{**} \leq \delta^{*}$ on $E_{n}$, the same event for which Lemma \ref{lemma_subsets} applies. Knowing the feasible-to-compute $\delta^{*}$ then allows us to apply Lemma \ref{lemma_subsets}.} In this proof, it is useful to note the following facts:
\begin{enumerate}[label=(\roman*)]
	\item The functions $\delta \mapsto T_{n}(\delta), T(\delta)$ are non-decreasing left-continuous step functions that are greater than or equal to zero on the interval $[0,\delta_{0}]$, and zero otherwise.
	\item The functions $\sigma \mapsto T_{n}^\flat(\sigma), T^\flat(\sigma)$, are non-increasing and left-continuous with their only possible points of discontinuity at the points $\{\delta_{j}\}_{j=0}^{\infty}$.\noteinline{It seems they are continuous on any interval where $T_{n}(\delta)/\delta$ and $T(\delta)/\delta$ are increasing, and left continuous otherwise. The left-continuity seems to come from the fact that, if $T(\delta)/\delta$ is decreasing in $\delta$, then it is a decreasing and left-continuous function, so that also $\sigma\mapsto\sup_{\delta \geq \sigma} T(\delta)/\delta$ is also left continuous.} 
	\item The functions $\eta \mapsto T_{n}^\sharp(\eta), T^\sharp(\eta)$ are non-increasing and continuous.\noteinline{In fact $T^\sharp$ will be constant at the points of discontinuity of $T^\flat$. The continuity comes from the fact that $T^\flat$ is a non-increasing left-continuous function. At any point $\sigma^{*}$ of discontinuity of $T^\flat$ we will have $T^\flat(\sigma^{*}) > \lim_{\sigma \downarrow \sigma^{*}} T^\flat(\sigma)$, and can have a situation where $T^\flat(\sigma^{*}) > \eta \geq T^\flat(\sigma)$ for $\sigma>\sigma^{*}$. However, in such a case we will have $T^\sharp(\eta) = \sigma^{*}$, since the infimum in the definition of $T^\sharp$ behaves in a continuous way. At such points of discontinuity of $T^\flat$, it seems $T^\sharp$ will be constant.}
\end{enumerate}
Now for any $\eta>0$, let:
\begin{align*}
\delta^{*}&= T_{n}^{\sharp}(1-1/\mathfrak{a})+\eta',\\
\delta^{**}&= T^{\sharp}(1-1/\mathfrak{a})+\eta,
\end{align*}
where $\eta'=\eta+\varepsilon$ for some $\varepsilon>0$. Note that choosing $\delta^{**}$ slightly larger than $T^\sharp(1-1/\mathfrak{a})$ ensures that $T^\flat(\delta^{**})\leq 1-1/\mathfrak{a}$. A similar note applies to $\delta^{*}$ and $T_{n}^{\sharp}(1-1/\mathfrak{a})$.

From the proof of Lemma \ref{lemma_subsets} we know there exists an event $E_{n}$ with $P_{Y,Z}^{\otimes n} \left(E_{n} \right) \geq \kappa$ such that on $E_{n}$ we have $\mathscr{G}^{*}(\delta) \subseteq \mathscr{G}_{n}(\mathfrak{b}\delta)$ for every $\delta \geq \delta^{**}$. Thus, for every $\delta \geq \delta^{**}$ we have on $E_{n}$ that $T(\delta)\leq T_{n}(\delta)$, which implies:
\begin{align*}
\frac{T(\delta)}{\delta} \leq \frac{T_{n}(\delta)}{\delta}, 
\end{align*}
for all $\delta \geq \delta^{**}$. Thus, on $E_{n}$ we have $T^\flat(\sigma) \leq T_{n}^\flat(\sigma)$ for any $\sigma \geq \delta^{**}$, and in particular we have:
\begin{align}
T^\flat(\delta^{**}):= \sup_{\delta \geq \delta^{**}}\frac{T(\delta)}{\delta} \leq \sup_{\delta \geq \delta^{**}}\frac{T_{n}(\delta)}{\delta}=:T_{n}^\flat(\delta^{**}), \label{eq_Tflat_order}
\end{align}
Recall our choice of $\delta^{**}$ ensures that $T^\flat(\delta^{**})\leq 1-1/\mathfrak{a}$. We can now distinguish two cases on the event $E_{n}$:
\begin{enumerate}
	\item We have:
	\begin{align*}
	\sup_{\delta \geq \delta^{**}}\frac{T(\delta)}{\delta} \leq 1-\frac{1}{\mathfrak{a}} \leq \sup_{\delta \geq \delta^{**}}\frac{T_{n}(\delta)}{\delta}. 
	\end{align*}
	In this case, we have $T_{n}^\flat(\delta^{**}) \geq 1-1/\mathfrak{a}$, and thus $T_{n}^\sharp(1-1/\mathfrak{a}) \geq \delta^{**}$, so that $\delta^{*}>\delta^{**}$ (see the definitions of $\delta^{*}$ and $\delta^{**}$ above).
	\item We have:
	\begin{align*}
	\sup_{\delta \geq \delta^{**}}\frac{T(\delta)}{\delta} \leq \sup_{\delta \geq \delta^{**}}\frac{T_{n}(\delta)}{\delta} < 1-\frac{1}{\mathfrak{a}}.
	\end{align*}
	This implies either (i) $T^\sharp(1-1/\mathfrak{a}) \leq T_{n}^\sharp(1-1/\mathfrak{a}) < \delta^{**}$, or (ii) $T_{n}^\sharp(1-1/\mathfrak{a}) < T^\sharp(1-1/\mathfrak{a}) < \delta^{**}$. In case (i) we clearly have $\delta^{*} \geq \delta^{**}$. In case (ii), let:
	\begin{align*}
	c := T^\sharp(1-1/\mathfrak{a}) - T_{n}^\sharp(1-1/\mathfrak{a})>0.
	\end{align*}
	Then:
	\begin{align*}
	\delta^{**} - \delta^{*} &= T^\sharp(1-1/\mathfrak{a}) +\eta - T_{n}^\sharp(1-1/\mathfrak{a}) - \eta'\\
	&=c -\varepsilon,
	\end{align*}
	where the last line follows from the definition of $\eta'$.	Now suppose that $c>\varepsilon$ for our $\varepsilon>0$ chosen at the beginning of the proof. We will show that this produces a contradiction. To understand the approach, note the value of $c$ does not depend on the value of $\eta>0$, so the assumption that $c>\varepsilon$ must trivially hold for every $\eta>0$. If we can show that $c<\varepsilon$ for some $\eta>0$, we will have arrived at our desired contradiction.  

	Recall that on $E_{n}$ we have $T^\flat(\sigma) \leq T_{n}^\flat(\sigma)$ for any $\sigma \geq \delta^{**}$. This implies that, for any $r>0$, if $T^\sharp(r) \geq \delta^{**}$ then $T_{n}^\sharp(r) \geq T^\sharp(r)$. Now choose a value $r_{\eta}\in \mathbb{R}$ closest to $1-1/\mathfrak{a}$ such that $r_{\eta} \leq 1-1/\mathfrak{a}$ and:
	\begin{align*}
	T^\sharp(r_{\eta}) =T^{\sharp}(1-1/\mathfrak{a})+\eta =\delta^{**}.
	\end{align*}
	Such a choice is always possible by continuity of $T^{\sharp}$, and by the fact that $T^\sharp$ is non-increasing. By taking $\eta$ (and thus also $\delta^{**}$) small enough we conclude by continuity of $T^\sharp$ that the point $r_{\eta}$ can also always be chosen arbitrarily close to $1-1/\mathfrak{a}$.\todo{You better check this, I am not certain it is correct.} Recall by continuity of $T_{n}^\sharp$ that there exists $\varepsilon'>0$ such that $T_{n}^\sharp(x) - T_{n}^\sharp(1-1/\mathfrak{a}) < \varepsilon$ whenever $(1-1/\mathfrak{a}) < x + \varepsilon'$. Now by choosing $r_{\eta} \leq 1-1/\mathfrak{a}$ such that $1-1/\mathfrak{a}<r_{\eta}+\varepsilon'$, we have:  
	\begin{align*}
	c &= T^\sharp(1-1/\mathfrak{a}) - T_{n}^\sharp(1-1/\mathfrak{a})\\
	&< T^\sharp(r_{\eta}) - T_{n}^\sharp(1-1/\mathfrak{a})\\
	&\leq T_{n}^\sharp(r_{\eta}) - T_{n}^\sharp(1-1/\mathfrak{a})\\
	&<\varepsilon.
	\end{align*}
	This of course contradicts the fact that $c>\varepsilon$ for every choice of $\eta>0$. We conclude that $c\leq\varepsilon$, and since $\delta^{**}-\delta^{*}=c-\varepsilon$, we have $\delta^{**}\leq \delta^{*}$.\note{This proof is extremely tricky, but essentially seems related to the fact that $\delta^{**}$ is a function of $\eta$, and the fact that if $T^\sharp(r) \geq \delta^{**}(\eta)$ then $T_{n}^\sharp(r) \geq T^\sharp(r)$. } 
\end{enumerate}
We conclude in all cases that $\delta^{**}\leq \delta^{*}$ on $E_{n}$. The result then follows directly from Lemma \ref{lemma_subsets}.

\end{proof}

\begin{proof}[Proof of Lemma \ref{lemma_subsets}]
Recall that:
\begin{align*}
h_{\ell b}(y,z,\theta,\gamma,\lambda):= \inf_{u \in \bm G^{-}(y,z,\theta)}  \inf_{y^{\star}  \in \bm G^{\star}(y,z,u,\theta,\gamma)} \Bigg( \varphi(v) + \mu^{*}\sum_{j=1}^{J}  \lambda_{j} m_{j}(y,z,u,\theta)\Bigg).
\end{align*}
For notational simplicity we will define:
\begin{align*}
\P_{n}h_{\ell b}(\cdot, \theta,\gamma,\lambda) &:= \frac{1}{n}\sum_{i=1}^{n} \inf_{u_{i} \in \bm G^{-}(y_{i},z_{i},\theta)}  \inf_{y_{i}^{\star}  \in \bm G^{\star}(y_{i},z_{i},u_{i},\theta,\gamma)} \Bigg( \varphi(v_{i}) + \mu^{*}\sum_{j=1}^{J}  \lambda_{j} m_{j}(y_{i},z_{i},u_{i},\theta)\Bigg),\\
P h_{\ell b}(\cdot, \theta,\gamma,\lambda) &:= \int \inf_{u \in \bm G^{-}(y,z,\theta)}  \inf_{y^{\star}  \in \bm G^{\star}(y,z,u,\theta,\gamma)} \Bigg( \varphi(v) + \mu^{*}\sum_{j=1}^{J}  \lambda_{j}  m_{j}(y,z,u,\theta)\Bigg)\,dP_{Y,Z}.
\end{align*}
Define the events:
\begin{align*}
E_{n,j} := \left\{\sup_{\theta, \theta' \in \Theta} \sup_{\gamma, \gamma' \in \mathscr{G}^{*}(\delta_{j})} \sup_{\lambda,\lambda' \in \Lambda}\left|\left( \P_{n} h_{\ell b}(\cdot,\theta,\gamma,\lambda) -  \P_{n} h_{\ell b}(\cdot,\theta',\gamma',\lambda')\right) -  \left( P h_{\ell b}(\cdot,\theta,\gamma,\lambda) -  P h_{\ell b}(\cdot,\theta',\gamma',\lambda') \right)\right| \leq T(\delta_{j})\right\},
\end{align*}
and:
\begin{align}
E_{n}:= \bigcap_{\{ j : \delta_{j}\geq \delta^{**}\}} E_{n,j}.\label{eq_event_Enm}
\end{align}
Note the value $2 \overline{H}$ is an upper bound for any function in $\mathcal{H}_{\ell b}'(\delta)$ for any $\delta> 0$. By our choice of $\delta_{0}> 2\overline{H}$ we have:
\begin{align*}
\sup_{P_{Y,Z} \in \mathcal{P}_{Y,Z}} P_{Y,Z}^{\otimes n} \left( E_{n,0}^{c}\right)=0.
\end{align*}
Furthermore, from the uniform version of Hoeffding's inequality (e.g. \cite{koltchinskii2011oracle} Theorem 4.6, p.71) we have:
\begin{align*}
\sup_{P_{Y,Z} \in \mathcal{P}_{Y,Z}} P_{Y,Z}^{\otimes n}\left( E_{n,j}^{c}\right) \leq  \exp \left( -\frac{t_{j}^{2}}{2} \right), 
\end{align*}
for each $j\in \mathbb{N}$. We conclude by the union bound that:
\begin{align*}
\inf_{P_{Y,Z} \in \mathcal{P}_{Y,Z}} P_{Y,Z}^{\otimes n}\left( E_{n}\right) \geq 1- \sum_{\{j : \delta_{j} \geq \delta^{**}\}} \exp \left( -\frac{t_{j}^{2}}{2} \right). 
\end{align*}
Now note that with $c_{1}=5$, $c_{2}= (3/(2\kappa))^{2/5}$ and $t_{j} = \sqrt{c_{1} \log(c_{2} \cdot j)}$, we have:
\begin{align*}
\sum_{\{j : \delta_{j} \geq \delta^{**}\}} \exp \left( -\frac{t_{j}^{2}}{2} \right) &\leq  \sum_{j=1}^{\infty} \exp \left( -\frac{t_{j}^{2}}{2} \right)\\
&= \sum_{j=1}^{\infty} \exp \left( -\frac{c_{1} \log(c_{2} \cdot j)}{2} \right) \\
&= \sum_{j=1}^{\infty} (c_{2}\cdot j)^{-\frac{c_{1}}{2}} \\
&=\frac{2(1-\kappa)}{3}  \sum_{j=1}^{\infty} \left( \frac{1}{j} \right)^{5/2}\\
&\leq \frac{2(1-\kappa) }{3} \left( \frac{3}{2}\right)\\
&=1-\kappa.
\end{align*}
Thus we conclude:
\begin{align}
\inf_{P_{Y,Z} \in \mathcal{P}_{Y,Z}} P_{Y,Z}^{\otimes n} \left( E_{n} \right) \geq \kappa. 
\end{align}
The remainder of the proof proceeds in two parts:
\begin{enumerate}
	\item We will show that on the event $E_{n}$ we have for any $\gamma \in \Gamma$, $\mathscr{E}_{n}(\gamma) \leq (2-1/\mathfrak{a})\left(\mathscr{E}^{*}(\gamma)\vee \delta^{**}\right)$. We will then use this fact to argue that, on $E_{n}$, for any $\delta \geq \delta^{**}$ we have $\mathscr{G}^{*}(\delta) \subseteq \mathscr{G}_{n}((2-1/\mathfrak{a})\delta)$.
	\item We will show that on the event $E_{n}$ we have for any $\gamma \in \Gamma$, $\mathscr{E}^{*}(\gamma) \leq \mathfrak{a}\left(\mathscr{E}_{n}(\gamma)\vee \delta^{**}\right)$. We will then use this fact to argue that, on $E_{n}$, for any $\delta \geq \mathfrak{a}\delta^{**}$ we have $\mathscr{G}_{n}(\delta/\mathfrak{a}) \subseteq \mathscr{G}^{*}(\delta)$.
\end{enumerate}
Throughout this proof, let $\lambda^{*}(\theta,\gamma)$, $\hat{\lambda}(\theta,\gamma)$,  $\theta^{*}(\gamma)$, $\hat{\theta}(\gamma)$, $\gamma^{*}$ and $\hat{\gamma}$ be as in Remark \ref{remark_common_notation}.\\~\\ 
\noindent \textbf{Part 1:} We will prove that on the event $E_{n}$ we have $\mathscr{E}_{n}(\gamma) \leq (2-1/\mathfrak{a})\left(\mathscr{E}^{*}(\gamma)\vee \delta^{**}\right)$ for any $\gamma \in \Gamma$. First, consider any $\gamma$ with $\sigma:= \mathscr{E}^{*}(\gamma) \geq \delta^{**}$. Pick any $\varepsilon>0$ such that $\delta^{**}\geq\varepsilon$, which is possible since $\delta^{**}>T^\sharp(1-1/\mathfrak{a}) \geq 0$. Then on the event $E_{n}$ we have:
\begin{align*}
\mathscr{E}_{n}(\gamma)&:=\sup_{\gamma \in \Gamma}\inf_{\theta \in \Theta} \max_{\lambda \in \Lambda} \P_{n} h_{\ell b}(\cdot,\theta,\gamma,\lambda) - \inf_{\theta \in \Theta} \max_{\lambda \in \Lambda} \P_{n} h_{\ell b}(\cdot,\theta,\gamma,\lambda)\\
&\leq \inf_{\theta \in \Theta} \max_{\lambda \in \Lambda} \P_{n} h_{\ell b}(\cdot,\theta,\hat{\gamma},\lambda) - \inf_{\theta \in \Theta} \max_{\lambda \in \Lambda} \P_{n} h_{\ell b}(\cdot,\theta,\gamma,\lambda)+3\varepsilon\\
&= \inf_{\theta \in \Theta} \max_{\lambda \in \Lambda} P h_{\ell b}(\cdot,\theta,\hat{\gamma},\lambda) - \inf_{\theta \in \Theta} \max_{\lambda \in \Lambda} P h_{\ell b}(\cdot,\theta,\gamma,\lambda)\\
&\qquad\qquad+ \left( \inf_{\theta \in \Theta} \max_{\lambda \in \Lambda} P h_{\ell b}(\cdot,\theta,\gamma,\lambda)- \inf_{\theta \in \Theta} \max_{\lambda \in \Lambda} P h_{\ell b}(\cdot,\theta,\hat{\gamma},\lambda) \right)\\
&\qquad\qquad\qquad\qquad-\left(\inf_{\theta \in \Theta}  \max_{\lambda \in \Lambda}\P_{n} h_{\ell b}(\cdot,\theta,\gamma,\lambda)-\inf_{\theta \in \Theta} \max_{\lambda \in \Lambda} \P_{n} h_{\ell b}(\cdot,\theta,\hat{\gamma},\lambda)\right)+3\varepsilon\\
&\leq\sup_{\gamma \in \Gamma} \inf_{\theta \in \Theta}  \max_{\lambda \in \Lambda} P h_{\ell b}(\cdot,\theta,\gamma,\lambda) - \inf_{\theta \in \Theta} \max_{\lambda \in \Lambda}  P h_{\ell b}(\cdot,\theta,\gamma,\lambda)\\
&\qquad\qquad+ \left( \inf_{\theta \in \Theta} \max_{\lambda \in \Lambda} P h_{\ell b}(\cdot,\theta,\gamma,\lambda)- \inf_{\theta \in \Theta} \max_{\lambda \in \Lambda} P h_{\ell b}(\cdot,\theta,\hat{\gamma},\lambda) \right)\\
&\qquad\qquad\qquad\qquad-\left(\inf_{\theta \in \Theta} \max_{\lambda \in \Lambda} \P_{n} h_{\ell b}(\cdot,\theta,\gamma,\lambda)-\inf_{\theta \in \Theta}\max_{\lambda \in \Lambda}  \P_{n} h_{\ell b}(\cdot,\theta,\hat{\gamma},\lambda)\right)+3\varepsilon\\
&=\mathscr{E}^{*}(\gamma)+ \left( \inf_{\theta \in \Theta} \max_{\lambda \in \Lambda} P h_{\ell b}(\cdot,\theta,\gamma,\lambda)- \inf_{\theta \in \Theta} \max_{\lambda \in \Lambda} P h_{\ell b}(\cdot,\theta,\hat{\gamma},\lambda) \right)\\
&\qquad\qquad\qquad\qquad-\left(\inf_{\theta \in \Theta} \max_{\lambda \in \Lambda} \P_{n} h_{\ell b}(\cdot,\theta,\gamma,\lambda)-\inf_{\theta \in \Theta}\max_{\lambda \in \Lambda}  \P_{n} h_{\ell b}(\cdot,\theta,\hat{\gamma},\lambda)\right)+3\varepsilon.
\end{align*}
Now note:
\begin{align*}
&\inf_{\theta \in \Theta} \max_{\lambda \in \Lambda} P h_{\ell b}(\cdot,\theta,\gamma,\lambda)- \inf_{\theta \in \Theta} \max_{\lambda \in \Lambda} P h_{\ell b}(\cdot,\theta,\hat{\gamma},\lambda)\\
&\leq \inf_{\theta \in \Theta} \max_{\lambda \in \Lambda} P h_{\ell b}(\cdot,\theta,\gamma,\lambda)- \max_{\lambda \in \Lambda} P h_{\ell b}(\cdot,\theta^{*}(\hat{\gamma}),\hat{\gamma},\lambda)+\varepsilon\\
&\leq  \max_{\lambda \in \Lambda} P h_{\ell b}(\cdot,\hat{\theta}(\gamma),\gamma,\lambda)- \max_{\lambda \in \Lambda} P h_{\ell b}(\cdot,\theta^{*}(\hat{\gamma}),\hat{\gamma},\lambda)+2\varepsilon\\
&\leq  \max_{\lambda \in \Lambda} P h_{\ell b}(\cdot,\hat{\theta}(\gamma),\gamma,\lambda)-  P h_{\ell b}(\cdot,\theta^{*}(\hat{\gamma}),\hat{\gamma},\hat{\lambda}(\theta^{*}(\hat{\gamma}),\hat{\gamma}))+2\varepsilon\\
&= P h_{\ell b}(\cdot,\hat{\theta}(\gamma),\gamma,\lambda^{*}(\hat{\theta}(\gamma),\gamma))-  P h_{\ell b}(\cdot,\theta^{*}(\hat{\gamma}),\hat{\gamma},\hat{\lambda}(\theta^{*}(\hat{\gamma}),\hat{\gamma}))+2\varepsilon.
\end{align*}
Similarly:
\begin{align*}
&\inf_{\theta \in \Theta}\max_{\lambda \in \Lambda}  \P_{n} h_{\ell b}(\cdot,\theta,\hat{\gamma},\lambda) - \inf_{\theta \in \Theta} \max_{\lambda \in \Lambda} \P_{n} h_{\ell b}(\cdot,\theta,\gamma,\lambda)\\
&\leq \inf_{\theta \in \Theta}\max_{\lambda \in \Lambda}  \P_{n} h_{\ell b}(\cdot,\theta,\hat{\gamma},\lambda) - \max_{\lambda \in \Lambda} \P_{n} h_{\ell b}(\cdot,\hat{\theta}(\gamma),\gamma,\lambda) + \varepsilon\\
&\leq \max_{\lambda \in \Lambda}  \P_{n} h_{\ell b}(\cdot,\theta^{*}(\hat{\gamma}),\hat{\gamma},\lambda) - \max_{\lambda \in \Lambda} \P_{n} h_{\ell b}(\cdot,\hat{\theta}(\gamma),\gamma,\lambda) + 2\varepsilon\\
&\leq \max_{\lambda \in \Lambda}  \P_{n} h_{\ell b}(\cdot,\theta^{*}(\hat{\gamma}),\hat{\gamma},\lambda) - \P_{n} h_{\ell b}(\cdot,\hat{\theta}(\gamma),\gamma,\lambda^{*}(\hat{\theta}(\gamma),\gamma)) + 2\varepsilon\\
&= \P_{n} h_{\ell b}(\cdot,\theta^{*}(\hat{\gamma}),\hat{\gamma},\hat{\lambda}(\theta^{*}(\hat{\gamma}),\hat{\gamma})) - \P_{n} h_{\ell b}(\cdot,\hat{\theta}(\gamma),\gamma,\lambda^{*}(\hat{\theta}(\gamma),\gamma)) + 2\varepsilon.
\end{align*}
Thus we conclude:
\begin{align*}
\mathscr{E}_{n}(\gamma)&\leq \mathscr{E}^{*}(\gamma)+ 7\varepsilon + P h_{\ell b}(\cdot,\hat{\theta}(\gamma),\gamma,\lambda^{*}(\hat{\theta}(\gamma),\gamma))-  P h_{\ell b}(\cdot,\theta^{*}(\hat{\gamma}),\hat{\gamma},\hat{\lambda}(\theta^{*}(\hat{\gamma}),\hat{\gamma}))\\
&\qquad\qquad\qquad\qquad\qquad\qquad- \left(\P_{n} h_{\ell b}(\cdot,\hat{\theta}(\gamma),\gamma,\lambda^{*}(\hat{\theta}(\gamma),\gamma))  - \P_{n} h_{\ell b}(\cdot,\theta^{*}(\hat{\gamma}),\hat{\gamma},\hat{\lambda}(\theta^{*}(\hat{\gamma}),\hat{\gamma}))\right).
\end{align*}
However, $\gamma \in \mathscr{G}^{*}(\sigma)$ by assumption, and by Lemma \ref{lemma_gamma_hat_high_prob} we have $\hat{\gamma} \in \mathscr{G}^{*}(\sigma)$ on the event $E_{n}$. Thus, the right side of the previous display can be bounded above:
\begin{align*}
&P h_{\ell b}(\cdot,\hat{\theta}(\gamma),\gamma,\lambda^{*}(\hat{\theta}(\gamma),\gamma))-  P h_{\ell b}(\cdot,\theta^{*}(\hat{\gamma}),\hat{\gamma},\hat{\lambda}(\theta^{*}(\hat{\gamma}),\hat{\gamma}))- \left(\P_{n} h_{\ell b}(\cdot,\hat{\theta}(\gamma),\gamma,\lambda^{*}(\hat{\theta}(\gamma),\gamma))  - \P_{n} h_{\ell b}(\cdot,\theta^{*}(\hat{\gamma})\right)\\
&\leq \sup_{\theta, \theta' \in \Theta} \sup_{\gamma, \gamma' \in \mathscr{G}^{*}(\sigma)} \sup_{\lambda,\lambda' \in \Lambda} \left| \left(\P_{n} h_{\ell b}(\cdot,\theta,\gamma,\lambda) - \P_{n} h_{\ell b}(\cdot,\theta',\gamma',\lambda')\right) - \left( \P_{n} h_{\ell b}(\cdot,\theta,\gamma,\lambda) - \P_{n} h_{\ell b}(\cdot,\theta',\gamma',\lambda') \right) \right|.
\end{align*}
Furthermore, for any $\sigma\geq \delta^{**}$, on the event $E_{n}$ this final quantity is bounded above by $T(\sigma)$; this follows from the definition of $T(\sigma)$ and the monotonicity of the map:
\begin{align*}
x \mapsto \sup_{\theta, \theta' \in \Theta} \sup_{\gamma, \gamma' \in \mathscr{G}^{*}(x)} \max_{\lambda,\lambda' \in \Lambda} \left|\left( \P_{n} h_{\ell b}(\cdot,\theta,\gamma,\lambda) -  \P_{n} h_{\ell b}(\cdot,\theta',\gamma',\lambda')\right) -  \left( P h_{\ell b}(\cdot,\theta,\gamma,\lambda) -  P h_{\ell b}(\cdot,\theta',\gamma',\lambda') \right)\right|.
\end{align*}
Thus on $E_{n}$:
\begin{align*}
\mathscr{E}_{n}(\gamma)&\leq \mathscr{E}^{*}(\gamma)+ T(\sigma)+7\varepsilon\\
&= \mathscr{E}^{*}(\gamma)+ \frac{T(\sigma)}{\sigma} \sigma+7\varepsilon\\
&\leq \mathscr{E}^{*}(\gamma)+ \sup_{\delta \geq \sigma}\left( \frac{T(\delta)}{\delta}\right)\sigma+7\varepsilon\\
&= \mathscr{E}^{*}(\gamma) + T^\flat(\sigma)\sigma+7\varepsilon\\
&= \mathscr{E}^{*}(\gamma) + T^\flat(\sigma)\mathscr{E}^{*}(\gamma)+7\varepsilon.
\end{align*}
Now, since $\sigma \geq \delta^{**} > T^\sharp(1-1/\mathfrak{a})$ we have $T^\flat(\sigma)\leq T^\flat(\delta^{**}) \leq 1- 1/\mathfrak{a}$. Thus, on the event $E_{n}$, if $\gamma$ is such that $\mathscr{E}^{*}(\gamma) \geq \delta^{**}$, we have:
\begin{align*}
\mathscr{E}_{n}(\gamma) \leq  \left(2-\frac{1}{\mathfrak{a}}\right)\mathscr{E}^{*}(\gamma)+7\varepsilon.
\end{align*}
Since $\varepsilon>0$ is any value such that $\delta^{**}\geq \varepsilon$, and thus can be made arbitrarily small, we conclude that on the event $E_{n}$ we have for any $\gamma$ with $\mathscr{E}^{*}(\gamma) \geq \delta^{**}$:
\begin{align*}
\mathscr{E}_{n}(\gamma) \leq  \left(2-\frac{1}{\mathfrak{a}}\right)\mathscr{E}^{*}(\gamma). 
\end{align*}
Now consider the case when $\sigma:=\mathscr{E}^{*}(\gamma) \leq \delta^{**}$. By the same derivation as above we obtain:
\begin{align*}
&\mathscr{E}_{n}(\gamma)\\
&\leq \mathscr{E}^{*}(\gamma)  + \sup_{\theta, \theta' \in \Theta} \sup_{\gamma, \gamma' \in \mathscr{G}^{*}(\sigma)}\max_{\lambda,\lambda' \in \Lambda} \left|\left( \P_{n} h_{\ell b}(\cdot,\theta,\gamma,\lambda) -  \P_{n} h_{\ell b}(\cdot,\theta',\gamma',\lambda')\right) -  \left( P h_{\ell b}(\cdot,\theta,\gamma,\lambda) -  P h_{\ell b}(\cdot,\theta',\gamma',\lambda') \right)\right| +7\varepsilon.
\end{align*}
By monotonicity, we have:
\begin{align*}
&\sup_{\theta, \theta' \in \Theta} \sup_{\gamma, \gamma' \in \mathscr{G}^{*}(\sigma)}\max_{\lambda,\lambda' \in \Lambda} \left|\left( \P_{n} h_{\ell b}(\cdot,\theta,\gamma,\lambda) -  \P_{n} h_{\ell b}(\cdot,\theta',\gamma',\lambda')\right) -  \left( P h_{\ell b}(\cdot,\theta,\gamma,\lambda) -  P h_{\ell b}(\cdot,\theta',\gamma',\lambda') \right)\right|\\
 &\leq \sup_{\theta, \theta' \in \Theta} \sup_{\gamma, \gamma' \in \mathscr{G}^{*}(\delta^{**})}\max_{\lambda,\lambda' \in \Lambda} \left|\left( \P_{n} h_{\ell b}(\cdot,\theta,\gamma,\lambda) -  \P_{n} h_{\ell b}(\cdot,\theta',\gamma',\lambda')\right) -  \left( P h_{\ell b}(\cdot,\theta,\gamma,\lambda) -  P h_{\ell b}(\cdot,\theta',\gamma',\lambda') \right)\right|. 
\end{align*}
Furthermore, on the event $E_{n}$ we have:
\begin{align*}
\sup_{\theta, \theta' \in \Theta} \sup_{\gamma, \gamma' \in \mathscr{G}^{*}(\delta^{**})}\max_{\lambda,\lambda' \in \Lambda} \left|\left( \P_{n} h_{\ell b}(\cdot,\theta,\gamma,\lambda) -  \P_{n} h_{\ell b}(\cdot,\theta',\gamma',\lambda')\right) -  \left( P h_{\ell b}(\cdot,\theta,\gamma,\lambda) -  P h_{\ell b}(\cdot,\theta',\gamma',\lambda') \right)\right| \leq T(\delta^{**}).
\end{align*}
Thus, on the event $E_{n}$:
\begin{align*}
\mathscr{E}_{n}(\gamma)&\leq \mathscr{E}^{*}(\gamma) + T(\delta^{**})+7\varepsilon\\
&\leq \mathscr{E}^{*}(\gamma) + \sup_{\delta \geq \delta^{**}}\left( \frac{T(\delta)}{\delta} \right) \delta^{**} +7\varepsilon\\
&= \mathscr{E}^{*}(\gamma) + T^\flat(\delta^{**})\delta^{**} +7\varepsilon\\
&\leq \mathscr{E}^{*}(\gamma) + \left( 1-\frac{1}{\mathfrak{a}} \right) \delta^{**} +7\varepsilon\\
&\leq \delta^{**} + \left( 1-\frac{1}{\mathfrak{a}} \right) \delta^{**} +7\varepsilon\\
&=\left( 2-\frac{1}{\mathfrak{a}} \right) \delta^{**} +7\varepsilon.
\end{align*}
Since $\varepsilon>0$ is any value such that $\delta^{**}\geq \varepsilon$, and thus can be made arbitrarily small, we conclude that on the event $E_{n}$ we have for any $\gamma$:
\begin{align*}
\mathscr{E}_{n}(\gamma) \leq  \left(2-\frac{1}{\mathfrak{a}}\right)\left( \mathscr{E}^{*}(\gamma)\vee \delta^{**}\right). 
\end{align*}
We will use this result to argue that, on the event $E_{n}$, if $\delta\geq \delta^{**}$ then $\mathscr{E}^{*}(\gamma) \leq \delta \implies \mathscr{E}_{n}(\gamma) \leq (2-1/\mathfrak{a})\delta$. There are two cases:
\begin{enumerate}[label=(\roman*)]
	\item $\mathscr{E}^{*}(\gamma) \leq \delta^{**} \leq \delta$, which implies on the event $E_{n}$:
	\begin{align*}
\mathscr{E}_{n}(\gamma) \leq  \left(2-\frac{1}{\mathfrak{a}}\right)\left( \mathscr{E}^{*}(\gamma)\vee \delta^{**}\right) =\left(2-\frac{1}{\mathfrak{a}}\right)\delta^{**} \leq \left(2-\frac{1}{\mathfrak{a}}\right)\delta.  
\end{align*}
	\item $\delta^{**} \leq \mathscr{E}^{*}(\gamma) \leq \delta$, which implies on the event $E_{n}$:
	\begin{align*}
\mathscr{E}_{n}(\gamma) \leq  \left(2-\frac{1}{\mathfrak{a}}\right)\left( \mathscr{E}^{*}(\gamma)\vee \delta^{**}\right) = \left(2-\frac{1}{\mathfrak{a}}\right) \mathscr{E}^{*}(\gamma)\leq \left(2-\frac{1}{\mathfrak{a}}\right) \delta.  
\end{align*}
\end{enumerate}
Thus we conclude that for any $\delta \geq \delta^{**}$, on $E_{n}$ we have that $\mathscr{E}^{*}(\gamma) \leq \delta \implies \mathscr{E}_{n}(\gamma) \leq (2-1/\mathfrak{a})\delta$. Now recall that we have $\mathscr{E}^{*}(\gamma) \leq \delta \iff \gamma \in \mathscr{G}^{*}(\delta)$ and $\mathscr{E}_{n}(\gamma) \leq (2-1/\mathfrak{a})\delta \iff \gamma \in \mathscr{G}_{n}((2-1/\mathfrak{a})\delta)$. Thus, we conclude that for any $\delta \geq \delta^{**}$, on the event $E_{n}$:
\begin{align*}
\mathscr{G}^{*}(\delta) \subseteq \mathscr{G}_{n}((2-1/\mathfrak{a})\delta),
\end{align*}
as desired.\\

\noindent \textbf{Part 2:} We will prove that on the event $E_{n}$ we have $\mathscr{E}^{*}(\gamma) \leq \mathfrak{a}\left(\mathscr{E}_{n}(\gamma)\vee \delta^{**}\right)$ for any $\gamma \in \Gamma$. If $\gamma$ is such that $\mathscr{E}^{*}(\gamma) \leq \delta^{**}$ then this is trivially true (since $\mathfrak{a}>1$). Now consider any $\gamma$ with $\sigma:= \mathscr{E}^{*}(\gamma) \geq \delta^{**}$. Pick any $\varepsilon>0$ such that $\delta^{**}\geq\varepsilon$, which is possible since $\delta^{**}>T^\sharp(1-1/\mathfrak{a}) \geq 0$. Then on the event $E_{n}$ we have:
\begin{align*}
\mathscr{E}^{*}(\gamma)&:=\sup_{\gamma \in \Gamma}\inf_{\theta \in \Theta} \max_{\lambda \in \Lambda} P h_{\ell b}(\cdot,\theta,\gamma,\lambda) - \inf_{\theta \in \Theta} \max_{\lambda \in \Lambda} P h_{\ell b}(\cdot,\theta,\gamma,\lambda)\\
&\leq \inf_{\theta \in \Theta} \max_{\lambda \in \Lambda} P h_{\ell b}(\cdot,\theta,\gamma^{*},\lambda) - \inf_{\theta \in \Theta} \max_{\lambda \in \Lambda} P h_{\ell b}(\cdot,\theta,\gamma,\lambda)+3\varepsilon\\
&= \inf_{\theta \in \Theta} \max_{\lambda \in \Lambda} P h_{\ell b}(\cdot,\theta,\gamma^{*},\lambda) - \inf_{\theta \in \Theta} \max_{\lambda \in \Lambda} P h_{\ell b}(\cdot,\theta,\gamma,\lambda)\\
&\qquad\qquad+ \left( \sup_{\gamma \in \Gamma} \inf_{\theta \in \Theta} \max_{\lambda \in \Lambda} \P_{n} h_{\ell b}(\cdot,\theta,\gamma,\lambda)- \inf_{\theta \in \Theta} \max_{\lambda \in \Lambda} \P_{n} h_{\ell b}(\cdot,\theta,\gamma,\lambda) \right)\\
&\qquad\qquad\qquad\qquad-\left(\sup_{\gamma \in \Gamma}\inf_{\theta \in \Theta}  \max_{\lambda \in \Lambda}\P_{n} h_{\ell b}(\cdot,\theta,\gamma,\lambda)-\inf_{\theta \in \Theta} \max_{\lambda \in \Lambda} \P_{n} h_{\ell b}(\cdot,\theta,\gamma,\lambda)\right)+3\varepsilon\\
&=\mathscr{E}^{*}(\gamma)+ \left( \inf_{\theta \in \Theta} \max_{\lambda \in \Lambda} P h_{\ell b}(\cdot,\theta,\gamma^{*},\lambda) - \inf_{\theta \in \Theta} \max_{\lambda \in \Lambda} P h_{\ell b}(\cdot,\theta,\gamma,\lambda)\right)\\
&\qquad\qquad\qquad\qquad-\left(\sup_{\gamma \in \Gamma}\inf_{\theta \in \Theta}  \max_{\lambda \in \Lambda}\P_{n} h_{\ell b}(\cdot,\theta,\gamma,\lambda)-\inf_{\theta \in \Theta} \max_{\lambda \in \Lambda} \P_{n} h_{\ell b}(\cdot,\theta,\gamma,\lambda)\right)+3\varepsilon.
\end{align*}
Now note:
\begin{align*}
&\inf_{\theta \in \Theta} \max_{\lambda \in \Lambda} P h_{\ell b}(\cdot,\theta,\gamma^{*},\lambda) - \inf_{\theta \in \Theta} \max_{\lambda \in \Lambda} P h_{\ell b}(\cdot,\theta,\gamma,\lambda) \\
&\leq\inf_{\theta \in \Theta} \max_{\lambda \in \Lambda} P h_{\ell b}(\cdot,\theta,\gamma^{*},\lambda) -  \max_{\lambda \in \Lambda} P h_{\ell b}(\cdot,\theta^{*}(\gamma),\gamma,\lambda)+\varepsilon \\
&\leq\max_{\lambda \in \Lambda} P h_{\ell b}(\cdot,\hat{\theta}(\gamma^{*}),\gamma^{*},\lambda) -  \max_{\lambda \in \Lambda} P h_{\ell b}(\cdot,\theta^{*}(\gamma),\gamma,\lambda)+2\varepsilon \\
&\leq\max_{\lambda \in \Lambda} P h_{\ell b}(\cdot,\hat{\theta}(\gamma^{*}),\gamma^{*},\lambda) -  P h_{\ell b}(\cdot,\theta^{*}(\gamma),\gamma,\hat{\lambda}(\theta^{*}(\gamma),\gamma))+2\varepsilon \\
&\leq P h_{\ell b}(\cdot,\hat{\theta}(\gamma^{*}),\gamma^{*},\lambda^{*}(\hat{\theta}(\gamma^{*}),\gamma^{*})) -  P h_{\ell b}(\cdot,\theta^{*}(\gamma),\gamma,\hat{\lambda}(\theta^{*}(\gamma),\gamma))+2\varepsilon.
\end{align*}
Similarly:
\begin{align*}
&\inf_{\theta \in \Theta} \max_{\lambda \in \Lambda} \P_{n} h_{\ell b}(\cdot,\theta,\gamma,\lambda) - \sup_{\gamma \in \Gamma}\inf_{\theta \in \Theta}  \max_{\lambda \in \Lambda}\P_{n} h_{\ell b}(\cdot,\theta,\gamma,\lambda)\\
&\leq \inf_{\theta \in \Theta} \max_{\lambda \in \Lambda} \P_{n} h_{\ell b}(\cdot,\theta,\gamma,\lambda) - \inf_{\theta \in \Theta}  \max_{\lambda \in \Lambda}\P_{n} h_{\ell b}(\cdot,\theta,\gamma^{*},\lambda) + 3\varepsilon\\
&\leq \inf_{\theta \in \Theta} \max_{\lambda \in \Lambda} \P_{n} h_{\ell b}(\cdot,\theta,\gamma,\lambda) -  \max_{\lambda \in \Lambda}\P_{n} h_{\ell b}(\cdot,\hat{\theta}(\gamma^{*}),\gamma^{*},\lambda) + 4\varepsilon\\
&\leq  \max_{\lambda \in \Lambda} \P_{n} h_{\ell b}(\cdot,\theta^{*}(\gamma),\gamma,\lambda) -  \max_{\lambda \in \Lambda}\P_{n} h_{\ell b}(\cdot,\hat{\theta}(\gamma^{*}),\gamma^{*},\lambda) + 5\varepsilon\\
&\leq \P_{n} h_{\ell b}(\cdot,\theta^{*}(\gamma),\gamma,\hat{\lambda}(\theta^{*}(\gamma),\gamma)) -  \P_{n} h_{\ell b}(\cdot,\hat{\theta}(\gamma^{*}),\gamma^{*},\lambda^{*}(\hat{\theta}(\gamma^{*}),\gamma^{*})) + 5\varepsilon.
\end{align*}
Thus we conclude:
\begin{align*}
\mathscr{E}^{*}(\gamma)&\leq \mathscr{E}_{n}(\gamma)+ 10\varepsilon + P h_{\ell b}(\cdot,\hat{\theta}(\gamma^{*}),\gamma^{*},\lambda^{*}(\hat{\theta}(\gamma^{*}),\gamma^{*})) -  P h_{\ell b}(\cdot,\theta^{*}(\gamma),\gamma,\hat{\lambda}(\theta^{*}(\gamma),\gamma))\\
&\qquad\qquad\qquad\qquad\qquad\qquad- \left(\P_{n} h_{\ell b}(\cdot,\hat{\theta}(\gamma^{*}),\gamma^{*},\lambda^{*}(\hat{\theta}(\gamma^{*}),\gamma^{*}))  - \P_{n} h_{\ell b}(\cdot,\theta^{*}(\gamma),\gamma,\hat{\lambda}(\theta^{*}(\gamma),\gamma))\right).
\end{align*}
However, $\gamma \in \mathscr{G}^{*}(\sigma)$ by assumption, and $\mathscr{E}^{*}(\gamma^{*}) \leq \varepsilon \leq \mathscr{E}^{*}(\gamma) = \sigma$ implies $\gamma^{*} \in \mathscr{G}^{*}(\sigma)$. Thus, the right side of the previous display can be bounded above:
\begin{align*}
&P h_{\ell b}(\cdot,\hat{\theta}(\gamma^{*}),\gamma^{*},\lambda^{*}(\hat{\theta}(\gamma^{*}),\gamma^{*})) -  P h_{\ell b}(\cdot,\theta^{*}(\gamma),\gamma,\hat{\lambda}(\theta^{*}(\gamma),\gamma))\\
&\qquad\qquad\qquad\qquad\qquad\qquad- \left(\P_{n} h_{\ell b}(\cdot,\hat{\theta}(\gamma^{*}),\gamma^{*},\lambda^{*}(\hat{\theta}(\gamma^{*}),\gamma^{*}))  - \P_{n} h_{\ell b}(\cdot,\theta^{*}(\gamma),\gamma,\hat{\lambda}(\theta^{*}(\gamma),\gamma))\right)\\
&\leq \sup_{\theta, \theta' \in \Theta} \sup_{\gamma, \gamma' \in \mathscr{G}^{*}(\sigma)}\max_{\lambda,\lambda' \in \Lambda} \left|\left( \P_{n} h_{\ell b}(\cdot,\theta,\gamma,\lambda) -  \P_{n} h_{\ell b}(\cdot,\theta',\gamma',\lambda')\right) -  \left( P h_{\ell b}(\cdot,\theta,\gamma,\lambda) -  P h_{\ell b}(\cdot,\theta',\gamma',\lambda') \right)\right|.
\end{align*}
Furthermore, for any $\sigma\geq \delta^{**}$, on the event $E_{n}$ this final quantity is bounded above by $T(\sigma)$; this follows from the definition of $T(\sigma)$ and the monotonicity of the map:
\begin{align*}
x \mapsto \sup_{\theta, \theta' \in \Theta} \sup_{\gamma, \gamma' \in \mathscr{G}^{*}(x)} \max_{\lambda,\lambda' \in \Lambda} \left|\left( \P_{n} h_{\ell b}(\cdot,\theta,\gamma,\lambda) -  \P_{n} h_{\ell b}(\cdot,\theta',\gamma',\lambda')\right) -  \left( P h_{\ell b}(\cdot,\theta,\gamma,\lambda) -  P h_{\ell b}(\cdot,\theta',\gamma',\lambda') \right)\right|.
\end{align*}
Thus on $E_{n}$:
\begin{align*}
\mathscr{E}^{*}(\gamma)&\leq \mathscr{E}_{n}(\gamma)+ T(\sigma)+10\varepsilon\\
&= \mathscr{E}_{n}(\gamma)+ \frac{T(\sigma)}{\sigma} \sigma+10\varepsilon\\
&\leq \mathscr{E}_{n}(\gamma)+ \sup_{\delta \geq \sigma}\left( \frac{T(\delta)}{\delta}\right) \sigma+10\varepsilon\\
&= \mathscr{E}_{n}(\gamma) + T^\flat(\sigma)\sigma+10\varepsilon\\
&= \mathscr{E}_{n}(\gamma) + T^\flat(\sigma)\mathscr{E}^{*}(\gamma)+10\varepsilon.
\end{align*}
Now, since $\sigma \geq \delta^{**} > T^\sharp(1-1/\mathfrak{a})$ we have $T^\flat(\sigma)\leq T^\flat(\delta^{**}) \leq 1- 1/\mathfrak{a}$.\note{This seems to be the main reason why you pick $\delta^{**} > T^\sharp(1-1/\mathfrak{a}$.} Thus, on the event $E_{n}$, if $\gamma$ is such that $\sigma=\mathscr{E}^{*}(\gamma) \geq \delta^{**}$, we have:
\begin{align*}
&\mathscr{E}^{*}(\gamma) \leq \mathscr{E}_{n}(\gamma) + \left(1-\frac{1}{\mathfrak{a}}\right)\mathscr{E}^{*}(\gamma)+10\varepsilon \\
&\qquad\qquad\qquad\qquad\implies \mathscr{E}^{*}(\gamma) \leq \mathfrak{a}\mathscr{E}_{n}(\gamma) +10\mathfrak{a}\varepsilon.
\end{align*}
Since $\varepsilon>0$ is any value such that $\delta^{**}\geq \varepsilon$, and thus can be made arbitrarily small, we conclude that on the event $E_{n}$ we have for any $\gamma$:
\begin{align*}
\mathscr{E}^{*}(\gamma) \leq \mathfrak{a}\left(\mathscr{E}_{n}(\gamma)\vee \delta^{**}\right). 
\end{align*}
We will use this result to argue that, on the event $E_{n}$, if $\delta/\mathfrak{a}\geq \delta^{**}$ then $\mathscr{E}^{*}(\gamma) \leq \delta$. There are two cases:
\begin{enumerate}[label=(\roman*)]
	\item $\mathscr{E}_{n}(\gamma) \leq \delta^{**} \leq \delta/\mathfrak{a}$, which implies on the event $E_{n}$:
	\begin{align*}
\mathscr{E}^{*}(\gamma) \leq  \mathfrak{a}\left( \mathscr{E}_{n}(\gamma)\vee \delta^{**}\right) =\mathfrak{a}\delta^{**} \leq \delta.  
\end{align*}
	\item $\delta^{**} \leq \mathscr{E}_{n}(\gamma) \leq\delta/\mathfrak{a}$, which implies on the event $E_{n}$:
	\begin{align*}
\mathscr{E}^{*}(\gamma) \leq \mathfrak{a}\left( \mathscr{E}_{n}(\gamma)\vee \delta^{**}\right) = \mathfrak{a}\mathscr{E}^{*}(\gamma)\leq  \delta.  
\end{align*}
\end{enumerate}
Thus we conclude that for any $\delta/\mathfrak{a} \geq \delta^{**}$, on $E_{n}$ we have that $\mathscr{E}_{n}(\gamma) \leq \delta/\mathfrak{a} \implies \mathscr{E}^{*}(\gamma) \leq \delta$. Now recall that we have $\mathscr{E}_{n}(\gamma) \leq \delta/\mathfrak{a} \iff \gamma \in \mathscr{G}_{n}(\delta/\mathfrak{a})$ and $\mathscr{E}^{*}(\gamma) \leq \delta \iff \gamma \in \mathscr{G}^{*}(\delta)$. Thus, we conclude that for any $\delta \geq \mathfrak{a}\delta^{**}$, on the event $E_{n}$:
\begin{align*}
\mathscr{G}_{n}(\delta/\mathfrak{a}) \subseteq \mathscr{G}^{*}(\delta) ,
\end{align*}
as desired. This completes the proof.

\end{proof}

\subsection{Auxiliary Results and Proofs}

\subsubsection{On Issues of Measurability}\label{appendix_measurability}\label{appendix_on_issues_of_measurability}

The following discussion mirrors the discussion in \cite{dudley2010real} Section 3.3 and \cite{dudley2014uniform} Section 5.3. Let $\mathcal{X}$ be a Polish space, and let $\mathfrak{B}(\mathcal{X})$ be the Borel $\sigma-$algebra on $\mathcal{X}$. Then $(\mathcal{X},\mathfrak{B}(\mathcal{X}))$ is a measurable space. If $P$ is a probability law on $\mathfrak{B}(\mathcal{X})$, then $(\mathcal{X},\mathfrak{B}(\mathcal{X}),P)$ is a probability space. Now for any $B \subset \mathcal{X}$, we can define the outer measure $P^{*}$ on $B$ as:
\begin{align*}
P^{*}(B) := \inf\{P(C) : B \subset C \text{ and } C \in \mathfrak{B}(\mathcal{X})\}.
\end{align*}
By Theorem 3.3.1 in \cite{dudley2010real}, there always exists $C \in \mathfrak{B}(\mathcal{X})$ such that $P^{*}(B) = P(C)$, and such a set $C$ is called a measurable cover of $B$. Now define the collection of null sets for $P$ as:
\begin{align*}
Null(P):= \{A \subset \mathcal{X} : P^{*}(A) =0 \}.
\end{align*}
Furthermore, let $\mathfrak{B}_{P}^{*}(\mathcal{X})$ denote the smallest $\sigma-$algebra generated by $\mathfrak{B}(\mathcal{X})\cup Null(P)$. By Proposition 3.3.2 in \cite{dudley2010real}, we have:
\begin{align*}
\mathfrak{B}_{P}^{*}(\mathcal{X}):= \{B \subset \mathcal{X} : B \Delta C \in Null(P) \text{ for some } C \in \mathfrak{B}(\mathcal{X}) \},
\end{align*}
where $B \Delta C = (B\setminus C)\cup(C\setminus B)$. We can now extend the measure $P$ from $\mathfrak{B}(\mathcal{X})$ to a measure $\overline{P}$ on $\mathfrak{B}_{P}^{*}(\mathcal{X})$ as follows: if $B\Delta C \in Null(P)$ and $C \in \mathfrak{B}(\mathcal{X})$, then set $\overline{P}(B) = P(C)$. Proposition 3.3.3 in \cite{dudley2010real} verifies this is a valid extension; that is, $\overline{P}$ is a measure on $\mathfrak{B}_{P}^{*}(\mathcal{X})$ and $\overline{P}$ agrees with $P$ for all sets in $\mathfrak{B}(\mathcal{X})$.

However, note that the collection $\mathfrak{B}_{P}^{*}(\mathcal{X})$ clearly depends on the probability measure $P$. Indeed, if $Q$ is another measure on $\mathfrak{B}(\mathcal{X})$, and $\mathfrak{B}_{Q}^{*}(\mathcal{X})$ is defined in an analogous manner to $\mathfrak{B}_{P}^{*}(\mathcal{X})$, then it is possible for the two collections $\mathfrak{B}_{P}^{*}(\mathcal{X})$ and $\mathfrak{B}_{Q}^{*}(\mathcal{X})$ to differ because the null sets of $P$ and $Q$ differ. On the other hand, clearly both $\mathfrak{B}_{P}^{*}(\mathcal{X})$ and $\mathfrak{B}_{Q}^{*}(\mathcal{X})$ must have many elements in common; for example, both collections must contain the Borel sets $\mathfrak{B}(\mathcal{X})$.

A set $B \in \mathfrak{B}_{P}^{*}(\mathcal{X})$ is called \textit{measurable for the completion of $P$}. If for every probability measure $P$ the set $B$ is measurable for the completion of $P$, then we call $B$ \textit{universally measurable}. We will denote the universally measurable sets as $\mathfrak{B}^{*}(\mathcal{X})$; it is easily verified that $\mathfrak{B}^{*}(\mathcal{X})$ is also a $\sigma-$algebra.\footnote{This follows from the fact that an arbitrary intersection of $\sigma-$algebras is a $\sigma-$algebra.} By definition, for any two probability measures $P$ and $Q$, both $\mathfrak{B}_{P}^{*}(\mathcal{X})$ and $\mathfrak{B}_{Q}^{*}(\mathcal{X})$ contain the universally measurable sets. Also note that, in our example, clearly the Borel sets $\mathfrak{B}(\mathcal{X})$ are universally measurable.

A subset $A\subset \mathcal{X}$ of a Polish space $\mathcal{X}$ (with the Borel $\sigma-$field) is called $\mathfrak{B}(\mathcal{X})$-\textit{analytic} if there exists a compact metric space $\mathcal{Y}$ such that $A$ is the projection onto $\mathcal{X}$ of some $B \in \mathfrak{B}(\mathcal{X})\otimes \mathfrak{B}(\mathcal{Y})$.\footnote{We note that this is one of many equivalent definitions of an analytic set. See Chapter 8 of \cite{cohn2013measure}. Our definition is from \cite{stinchcombe1992some}.} A function $f:A \to [-\infty,\infty]$ is called lower (or upper) semi-analytic if $A$ is an analytic set and $\{x \in A : f(x) < c\}$ (or $\{x \in A : f(x) \geq c\}$) is an analytic set for every $c \in \mathbb{R}$; that is, if the epigraph (or hypograph) of $f$ is an analytic set. In a Polish space, every analytic set is universally measurable. We refer the reader to Chapter 8 of \cite{cohn2013measure} for additional details.

\begin{lemma}[Infimum over Random Sets is Lower Semi-Analytic]\label{lemma_measurable_functions}
Suppose that Assumptions \ref{assump_preliminary}, \ref{assumption_factual_domain} and \ref{assumption_counterfactual_domain} hold. Then for any lower semi-analytic function $f:\mathcal{V}\times\Gamma\times\Theta\times\{0,1\}^{J} \to \mathbb{R}$, the function $f_{lb,1}(y,z,u,\theta,\gamma,\lambda)$ given by:
\begin{align}
f_{lb,1}(y,z,u,\theta,\gamma,\lambda)&:=\inf_{y^{\star} \in \bm G^\star(y,z,u,\theta,\gamma)} f(v,\theta,\gamma,\lambda),\label{eq_measurable_2}
\end{align}
is lower semi-analytic; that is, $\{(y,z,u,\gamma,\theta,\lambda): f_{lb,1}(y,z,u,\theta,\gamma,\lambda) <r\}$ is an analytic set for every $r \in \mathbb{R}$, and thus is universally measurable. In addition, the function $f_{lb,2}(y,z,\theta,\gamma,\lambda)$ given by:
\begin{align}
f_{lb,2}(y,z,\theta,\gamma,\lambda)&:=\inf_{u \in \bm G^{-}(y,z,\theta)}f_{lb,1}(y,z,u,\theta,\gamma,\lambda),\label{eq_measurable_4}
\end{align}
is also lower semi-analytic; that is, $\{(y,z,\theta,\gamma,\lambda): f_{lb,2}(y,z,\theta,\gamma,\lambda)<r\}$ is an analytic set for every $r \in \mathbb{R}$, and thus is universally measurable.
\end{lemma}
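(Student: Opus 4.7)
The plan is a standard projection-of-analytic-sets argument, applied twice. The first step is to argue that the graphs
\begin{align*}
\mathrm{Gr}(\bm G^{-}) &:= \{(y,z,\theta,u) \in \mathcal{Y}\times\mathcal{Z}\times\Theta\times\mathcal{U} : u \in \bm G^{-}(y,z,\theta)\},\\
\mathrm{Gr}(\bm G^{\star}) &:= \{(y,z,u,\theta,\gamma,y^{\star}) : y^{\star} \in \bm G^{\star}(y,z,u,\theta,\gamma)\},
\end{align*}
are Borel subsets of the corresponding product Polish spaces. Under Assumption~\ref{assump_preliminary}, $\mathcal{U}$ is a compact second-countable Hausdorff space and is therefore metrizable (hence Polish); $\mathcal{Y}^{\star}$ is Polish by Assumption~\ref{assumption_counterfactual_domain}. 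Together with Effros-measurability of $\bm G^{-}$ and $\bm G^{\star}$ with respect to the product Borel $\sigma$-algebras, and the fact that both are closed-valued, the classical graph-measurability theorem (\emph{e.g.} Himmelberg's characterization; equivalently, writing $\{(x,u) : d(u,\bm G(x))=0\}$) yields that $\mathrm{Gr}(\bm G^{-})$ and $\mathrm{Gr}(\bm G^{\star})$ lie in the respective product Borel $\sigma$-algebras. In particular, they are analytic sets.

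Second, I would attack $f_{lb,1}$ by rewriting its strict sublevel set as a projection. For any $r\in\mathbb{R}$,
\begin{align*}
&\{(y,z,u,\theta,\gamma,\lambda) : f_{lb,1}(y,z,u,\theta,\gamma,\lambda)<r\}\\
&\qquad=\operatorname{proj}_{(y,z,u,\theta,\gamma,\lambda)}\Big(\{(y,z,u,\theta,\gamma,\lambda,y^{\star}) : f(v,\theta,\gamma,\lambda)<r\} \cap A^{\star}\Big),
\end{align*}
where $A^{\star}$ is the cylinder over $\mathrm{Gr}(\bm G^{\star})$ obtained by adjoining the free coordinate $\lambda \in \{0,1\}^{J}$. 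The set $\{f<r\}$ is analytic by hypothesis (lower semi-analyticity of $f$), $A^{\star}$ is Borel by the first step, so the intersection is analytic, and the continuous (coordinate) projection of an analytic set into a Polish space is analytic. This gives that the sublevel set is analytic, \emph{i.e.}, $f_{lb,1}$ is lower semi-analytic; since every analytic set in a Polish space is universally measurable, the second assertion follows.

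Third, I iterate. Having established that $f_{lb,1}$ is lower semi-analytic on $\mathcal{Y}\times\mathcal{Z}\times\mathcal{U}\times\Theta\times\Gamma\times\{0,1\}^{J}$, the same projection device applied to $\mathrm{Gr}(\bm G^{-})$ in the $u$-variable yields
\begin{align*}
&\{(y,z,\theta,\gamma,\lambda) : f_{lb,2}(y,z,\theta,\gamma,\lambda)<r\}\\
&\qquad=\operatorname{proj}_{(y,z,\theta,\gamma,\lambda)}\Big(\{(y,z,u,\theta,\gamma,\lambda): f_{lb,1}(y,z,u,\theta,\gamma,\lambda)<r\}\cap A^{-}\Big),
\end{align*}
where $A^{-}$ is the cylinder over $\mathrm{Gr}(\bm G^{-})$ with $\gamma,\lambda$ adjoined as free coordinates. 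The intersection is an intersection of two analytic sets and hence analytic, and its projection is analytic; so $f_{lb,2}$ is lower semi-analytic and universally measurable.

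The only non-routine step is the graph-measurability claim in the first paragraph, since the Effros-measurability in the paper is stated with respect to the \emph{product} Borel $\sigma$-algebra on $\mathcal{Y}\times\mathcal{Z}\times\Theta$ (and similarly for $\bm G^{\star}$), and the target $\mathcal{U}$ is only assumed compact second-countable Hausdorff. Metrizability of $\mathcal{U}$ (hence Polishness) is what allows us to invoke the standard theorem equating Effros-measurability of a closed-valued multifunction into a Polish space with Borel-measurability of its graph in the product. Once this is secured, the projection-and-intersection bookkeeping delivers the result without further difficulty; the explicit mention of the $\lambda$-coordinate is purely cosmetic since $\{0,1\}^{J}$ is finite.
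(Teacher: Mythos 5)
Your proposal is correct and follows essentially the same route as the paper: both arguments first use Effros-measurability of the closed-valued maps (with $\mathcal{U}$ Polish by compact metrizability) to place $\mathrm{Graph}(\bm G^{\star})$ and $\mathrm{Graph}(\bm G^{-})$ in the product Borel $\sigma$-algebra, and then conclude that the partial infimum of a lower semi-analytic function over the sections of an analytic graph is lower semi-analytic, iterating once for $f_{lb,2}$. The only difference is cosmetic: where the paper cites the selection theorem of \cite{shreve1978alternative} (Proposition 7.47 in \cite{bertsekas1978stochastic}) for that second step, you reprove it directly via the projection-of-analytic-sets identity, which is exactly the argument behind that result.
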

\begin{remark}
Defining $f_{ub,1}(y,z,u,\theta,\gamma,\lambda)$ and $f_{ub,2}(y,z,u,\theta,\gamma,\lambda)$ as the analogous functions with the infimum replaced with the supremum, it is possible to show that $f_{ub,1}(y,z,u,\theta,\gamma,\lambda)$ and $f_{ub,2}(y,z,u,\theta,\gamma,\lambda)$ are upper semi-analytic. 
\end{remark}
\begin{proof}[Proof of Lemma \ref{lemma_measurable_functions}]
Recall that under Assumption \ref{assumption_counterfactual_domain}, the multifunction $\bm G^{\star}(y,z,u,\theta,\gamma)$ is Effros measurable with respect to the product Borel $\sigma-$algebra $\mathfrak{B}(\mathcal{Y})\otimes \mathfrak{B}(\mathcal{Z})\otimes \mathfrak{B}(\mathcal{U})\otimes\mathfrak{B}(\Gamma)$. By \cite{molchanov2017theory} Theorem 1.3.3 this implies that:
\begin{align*}
\text{Graph}(\bm G^{\star}) \in \mathfrak{B}(\mathcal{Y})\otimes \mathfrak{B}(\mathcal{Z})\otimes \mathfrak{B}(\mathcal{U})\otimes\mathfrak{B}(\Theta)\otimes\mathfrak{B}(\Gamma).
\end{align*}
Thus $\text{Graph}(\bm G^{\star})$ is a Borel (and thus also an analytic) set. Now note that $\bm G^\star(y,z,u,\theta,\gamma)$ can be rewritten as:
\begin{align*}
\bm G^\star(y,z,u,\theta,\gamma):= \left\{y^\star \in \mathcal{Y}^\star : (y,z,u,y^\star,\theta,\gamma) \in \text{Graph}(\bm G^{\star}) \right\}.
\end{align*}
The fact that $f_{lb,1}: \mathcal{V}\times\Gamma\times\Theta\times\{0,1\}^{J} \to \mathbb{R}$ is lower semi-analytic then follows directly from the selection Theorem of \cite{shreve1978alternative}, p. 968.\footnote{See also \cite{bertsekas1978stochastic} Proposition 7.47, p. 179.} Taking $f_{lb,1}(y,z,u,\theta,\gamma,\lambda)$ as lower semi-analytic, a nearly identical proof shows that $f_{lb,2}(y,z,\theta,\gamma,\lambda)$ is also lower semi-analytic. 
\end{proof}

\begin{proposition}\label{proposition_measurable}
Suppose the assumptions of Theorem \ref{thm_cortes} hold. Then the maps $\gamma \mapsto I_{\ell b}[\varphi](\gamma), I_{u b}[\varphi](\gamma)$ are universally measurable.
\end{proposition}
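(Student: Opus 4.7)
The plan is to chain together results on the preservation of semi-analyticity under the operations that define $I_{\ell b}[\varphi](\gamma)$ and $I_{u b}[\varphi](\gamma)$ in Theorem \ref{thm_cortes}, and then invoke the fact that every analytic set in a Polish space is universally measurable. I will focus on $I_{\ell b}[\varphi](\gamma)$, since the argument for $I_{u b}[\varphi](\gamma)$ is symmetric (yielding upper semi-analyticity).

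First, by Lemma \ref{lemma_measurable_functions}, the integrand
\[
f_{lb,2}(y,z,\theta,\gamma,\lambda) := \inf_{u \in \bm G^{-}(y,z,\theta)}\inf_{y^{\star} \in \bm G^{\star}(y,z,u,\theta,\gamma)} \Bigg(\varphi(v) + \mu^{*}\sum_{j=1}^{J}\lambda_{j}\, m_{j}(y,z,u,\theta)\Bigg)
\]
is lower semi-analytic on $\mathcal{Y}\times\mathcal{Z}\times\Theta\times\Gamma\times\Lambda$, provided the inner bracketed function is itself lower semi-analytic. This holds because $\varphi$ and each $m_{j}$ are Borel measurable (hence lower semi-analytic), and the sum of a Borel function and a linear combination of Borel functions indexed by a finite set $\Lambda$ remains Borel, and in particular lower semi-analytic.

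Next, I would integrate $f_{lb,2}$ over $P_{Y,Z}$. By the classical result that integration of a lower semi-analytic function against a Borel probability measure preserves lower semi-analyticity in the remaining parameters (see \cite{bertsekas1978stochastic} Proposition 7.48, p.~180, which uses an application of the monotone class theorem together with the Shreve--Bertsekas projection theorem), the function
\[
(\theta,\gamma,\lambda) \mapsto \int f_{lb,2}(y,z,\theta,\gamma,\lambda)\, dP_{Y,Z}
\]
is lower semi-analytic on $\Theta\times\Gamma\times\Lambda$. Since $\Lambda = \{0,1\}^{J}$ is finite, the max over $\lambda\in\Lambda$ is a finite pointwise maximum of lower semi-analytic functions, which is again lower semi-analytic (the sublevel set of a finite max is a finite intersection of sublevel sets, and finite intersections of analytic sets are analytic). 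Finally, applying the projection theorem (e.g.\ \cite{bertsekas1978stochastic} Proposition 7.47) to take $\inf_{\theta\in\Theta}$ over the Polish parameter space $\Theta$ yields that
\[
\gamma \mapsto \inf_{\theta\in\Theta}\max_{\lambda\in\Lambda}\int f_{lb,2}(y,z,\theta,\gamma,\lambda)\, dP_{Y,Z} = I_{\ell b}[\varphi](\gamma)
\]
is lower semi-analytic on $\Gamma$. Since every analytic subset of a Polish space is universally measurable, lower semi-analytic functions are universally measurable, and the conclusion follows for $I_{\ell b}[\varphi](\cdot)$. The argument for $I_{u b}[\varphi](\cdot)$ is identical after replacing infima by suprema throughout and tracking upper semi-analyticity instead.

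The main obstacle is ensuring that each operation in the chain (integration against $P_{Y,Z}$, finite max over $\lambda$, and infimum over the Polish parameter space $\Theta$) genuinely preserves semi-analyticity in this exact setting. The integration step is the most delicate, because it requires measurability of the integrand jointly in all arguments and a Fubini-type result for semi-analytic functions; invoking the Bertsekas--Shreve framework makes this clean, but it must be verified that $P_{Y,Z}$ (as a Borel probability measure on the Polish space $\mathcal{Y}\times\mathcal{Z}$) fits into that framework, which it does since any such measure can be viewed as a kernel from a singleton. Once these standard measurable-selection ingredients are in place, the remaining steps are routine.
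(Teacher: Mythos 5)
Your proposal is correct and follows essentially the same route as the paper's proof: it starts from the characterization in Theorem \ref{thm_cortes}, establishes lower semi-analyticity of the integrand via Lemma \ref{lemma_measurable_functions} together with Borel measurability of $\varphi$ and the $m_{j}$, and then preserves semi-analyticity through integration against $P_{Y,Z}$ (Bertsekas--Shreve), the finite maximum over $\Lambda$, and the infimum over $\Theta$ via the projection/selection theorem, concluding with universal measurability of analytic sets (and symmetrically for $I_{ub}[\varphi]$ with upper semi-analyticity). The only differences are cosmetic---e.g., citing Proposition 7.48 rather than 7.46 of \cite{bertsekas1978stochastic} for the integration step---and your description of the sublevel set of the finite maximum as an intersection of sublevel sets is in fact the correct version of the paper's corresponding display.
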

\begin{proof}
We will focus on the map $\gamma \mapsto I_{\ell b}[\varphi](\gamma)$, as the proof for the upper envelope function is symmetric. By Theorem \ref{thm_cortes} we have:
\begin{align*}
I_{\ell b}[\varphi](\gamma) = \inf_{\theta \in \Theta} \max_{\lambda \in \{0,1\}^{J}} \int h_{\ell b}(y,z,\theta,\gamma,\lambda) \, dP_{Y,Z},
\end{align*}
where:
\begin{align*}
h_{\ell b}(y,z,\theta,\gamma,\lambda):= \inf_{u \in \bm G^{-}(y,z,\theta)}  \inf_{y^{\star}  \in \bm G^{\star}(y,z,u,\theta,\gamma)} \Bigg( \varphi(v) + \mu^{*}\sum_{j=1}^{J}  \lambda_{j} m_{j}(y,z,u,\theta)\Bigg).
\end{align*}
Suppose that $h_{\ell b}(y,z,\theta,\gamma,\lambda)$ is lower semi-analytic (we will return to this in a moment). Then by proposition 7.46 in \cite{bertsekas1978stochastic}, the map:
\begin{align}
(\theta,\gamma,\lambda) \mapsto \int h_{\ell b}(y,z,\theta,\gamma,\lambda) \, dP_{Y,Z},\label{eq_semi_analytic1}
\end{align}
is lower semi-analytic. Furthermore, suppose that $g_{1}: \mathbb{R} \to \mathbb{R}$ and $g_{2}:\mathbb{R} \to \mathbb{R}$ are lower semi-analytic. The function $g(x) = g_{1}(x)\vee g_{2}(x)$ satisfies:
\begin{align*}
g^{-1}((-\infty,r)) = g_{1}^{-1}((-\infty,r))\cup g_{2}^{-1}((-\infty,r)).
\end{align*}
Since analytic sets are closed under (countable) unions and intersections (\cite{parthasarathy2005probability} Theorem 3.1), we have that $g$ is lower semi-analytic whenever $g_{1}$ and $g_{2}$ are lower semi-analytic. From this we conclude that the function:
\begin{align*}
(\theta,\gamma) \mapsto \max_{\lambda \in \{0,1\}^{J}} \int h_{\ell b}(y,z,\theta,\gamma,\lambda) \, dP_{Y,Z},
\end{align*}
is lower semi-analytic, being the pointwise maximum of at most $2^{J}$ lower semi-analytic functions of the form \eqref{eq_semi_analytic1}. Finally, by the selection Theorem of \cite{shreve1978alternative}, p. 968 (see also \cite{bertsekas1978stochastic} Proposition 7.47) we have the map:
\begin{align*}
\gamma \mapsto \sup_{\theta \in \Theta} \max_{\lambda \in \{0,1\}^{J}} \int h_{\ell b}(y,z,\theta,\gamma) \, dP_{Y,Z},
\end{align*}
is lower semi-analytic, and thus universally measurable. It thus remains only to show that $h_{\ell b}(y,z,\theta,\gamma,\lambda)$ is lower semi-analytic. By Lemma \ref{lemma_measurable_functions}, $h_{\ell b}(y,z,\theta,\gamma)$ will be lower semi-analytic if we can show the function:
\begin{align}
(v,\theta,\gamma,\lambda)\mapsto \varphi(v) + \mu^{*}\sum_{j=1}^{J}  \lambda_{j}  m_{j}(y,z,u,\theta),\label{eq_semi_analytic2}
\end{align}
is lower semi-analytic. Since both $\varphi(v)$ and $\{m_{j}(y,z,u,\theta)\}_{j=1}^{J}$ are Borel measurable by assumption, since the composition of Borel measurable functions are Borel measurable we conclude that \eqref{eq_semi_analytic2} is Borel measurable. The conclusion then follows from the fact that every Borel measurable function is lower semi-analytic.
\end{proof}

A nearly identical argument shows that, for every fixed sequence $(\xi_{1},\ldots,\xi_{n})\in \{-1,1\}^{n}$, the Rademacher complexity:
\begin{align*}
((y_{1},z_{1}),\ldots,(y_{n},z_{n})) \mapsto ||\mathfrak{R}||(\mathcal{H}_{\ell b}), 
\end{align*}
is universally measurable. This is stated as a corollary of the previous result for easy reference. 
\begin{corollary}\label{corollary_universal_rademacher}
Suppose the assumptions of Theorem \ref{thm_cortes} hold, and suppose that the sequence $(Y_{1},Z_{1})$, $\ldots$, $(Y_{n},Z_{n})$ are the coordinate projections of the product probability space $((\mathcal{Y}\times\mathcal{Z})^{n},(\mathfrak{B}(\mathcal{Y})\otimes\mathfrak{B}(\mathcal{Z}))^{\otimes n},P_{Y,Z}^{\otimes n}))$. Then the map:
\begin{align*}
((Y_{1},Z_{1}),\ldots,(Y_{n},Z_{n})) \mapsto ||\mathfrak{R}||(\mathcal{H}_{\ell b}), 
\end{align*}
is universally measurable; that is, it is measurable for the completion of $P_{Y,Z}^{\otimes n}$ for any $P_{Y,Z} \in \mathcal{P}_{Y,Z}$. 
\end{corollary}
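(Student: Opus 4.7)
The plan is to mirror the lower semi-analytic chain of arguments used in the proof of Proposition \ref{proposition_measurable}, adapting it to accommodate (i) the finite signed average inside the absolute value and (ii) the supremum over $(\theta,\gamma,\lambda)$ rather than an infimum. The overall aim is to write $||\mathfrak{R}||(\mathcal{H}_{\ell b})$ as a finite composition of operations that each preserve universal measurability of the sample.

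First I would reduce the outer operations. Since $\Lambda=\{0,1\}^{J}$ is finite and $\sup_{x}|g(x)|=\sup_{x}g(x)\vee\sup_{x}(-g(x))$, the map $((y_i,z_i))_i\mapsto||\mathfrak{R}||(\mathcal{H}_{\ell b})$ equals the maximum of at most $2^{J+1}$ functions of the form
\begin{equation*}
((y_{i},z_{i}))_{i=1}^{n}\;\mapsto\;\sup_{(\theta,\gamma)\in\Theta\times\Gamma}\;\frac{1}{n}\sum_{i=1}^{n}\xi_{i}'\,h_{\ell b}(y_{i},z_{i},\theta,\gamma,\lambda)
\end{equation*}
for some sign vector $\xi'\in\{-1,+1\}^{n}$ and some $\lambda\in\Lambda$. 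Finite maxima preserve universal measurability, so it suffices to handle a single such map.

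Second, I would use the explicit double-infimum representation of $h_{\ell b}$ from Theorem \ref{thm_cortes} to absorb the $-1$ coefficients into outer suprema. Writing $f(v,\theta,\gamma,\lambda):=\varphi(v)+\mu^{*}\sum_{j}\lambda_{j}m_{j}(y,z,u,\theta)$ (which is Borel by Assumptions \ref{assumption_factual_domain}, \ref{assumption_counterfactual_domain}) and splitting by $I^{\pm}=\{i:\xi_{i}'=\pm 1\}$, the identity $-h_{\ell b}=\sup_{(u,y^\star)}(-f)$ (over the same support constraints) yields
\begin{equation*}
\sup_{(\theta,\gamma)}\frac{1}{n}\sum_{i}\xi_{i}'h_{\ell b}\;=\;\frac{1}{n}\sup_{(\theta,\gamma),\,\{(u_{j},y_{j}^{\star})\}_{j\in I^{-}}}\left[\sum_{i\in I^{+}}\inf_{(u_{i},y_{i}^{\star})\in\bm G_{i}(\theta,\gamma)}f_{i}\;-\;\sum_{i\in I^{-}}f_{i}\right],
\end{equation*}
with the inner $(u_{j},y_{j}^{\star})_{j\in I^{-}}$ constrained to the analytic graph set of $\bm G^{-}$ and $\bm G^{\star}$. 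By Lemma \ref{lemma_measurable_functions} each inner infimum is lower semi-analytic, and finite sums of lower semi-analytic functions remain lower semi-analytic (Bertsekas--Shreve, Proposition 7.48); the Borel term $-\sum_{i\in I^{-}}f_{i}$ is trivially lower semi-analytic, so the entire bracketed integrand is jointly lower semi-analytic in all its arguments.

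The main obstacle, and the step that is genuinely different from Proposition \ref{proposition_measurable}, is the final outer supremum: unlike the infimum in that proof, the sup of a lower semi-analytic function over a Polish parameter space is not in general lower semi-analytic, so the Shreve selection theorem cannot be invoked verbatim. I would close the gap by observing that the strict superlevel sets of this supremum are projections onto $(\mathcal{Y}\times\mathcal{Z})^{n}$ of coanalytic subsets of the enlarged space $(\mathcal{Y}\times\mathcal{Z})^{n}\times\Theta\times\Gamma\times\mathcal{U}^{|I^{-}|}\times(\mathcal{Y}^{\star})^{|I^{-}|}$, and such projections are universally measurable by Choquet's capacitability theorem applied along an appropriate sequence of compact approximations. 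A simpler alternative, consistent with the remark in the main text, is to note that separability of $\Theta\times\Gamma$ plus the assumption that $\Theta$ and $\Gamma$ may be restricted to (countable) dense subsets reduces the outer supremum to a countable supremum of universally measurable functions, which is automatically universally measurable. Either route yields the conclusion.
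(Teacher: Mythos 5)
Your reductions up to the decisive step are fine and track what the paper has in mind: the paper's own proof of Corollary \ref{corollary_universal_rademacher} is simply the assertion that, for each fixed sign vector, the lower semi-analytic argument of Proposition \ref{proposition_measurable} carries over, and your use of the finiteness of $\Lambda$, the identity $\sup|g|=\sup g\vee\sup(-g)$, the representation $-h_{\ell b}=\sup(-f)$ over the graph sets, and Lemma \ref{lemma_measurable_functions} reproduces exactly that chain. You are also right to single out the outer supremum as the point where the analogy with Proposition \ref{proposition_measurable} breaks: there the outer operation over $\Theta$ is an infimum (plus a finite maximum over $\Lambda$), which is precisely the operation that preserves lower semi-analyticity, whereas here it is a supremum of a function that is only lower semi-analytic.

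The gap is that neither of your two ways of closing that step is valid. For the first route: the strict superlevel set of the supremum is the projection of $\{g>c\}$ with $g$ lower semi-analytic, so $\{g>c\}$ is coanalytic and its projection is in general only a $\Sigma^{1}_{2}$ set. Choquet's capacitability theorem applies to analytic (Suslin) sets, not to projections of coanalytic sets, and universal measurability of $\Sigma^{1}_{2}$ sets is not provable in ZFC (under $V=L$ there are nonmeasurable $\Delta^{1}_{2}$ sets), so no ``compact approximation'' can extract the conclusion from lower semi-analyticity alone; projections of coanalytic sets along \emph{compact} fibers do remain coanalytic, so the idea could be salvaged if $\Theta\times\Gamma$ (and $\mathcal{Y}^{\star}$) were compact, but the corollary only assumes these spaces are Polish. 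For the second route: replacing $\Theta\times\Gamma$ by a countable dense subset changes the value of the supremum unless $(\theta,\gamma)\mapsto\sum_{i}\xi_{i}h_{\ell b}(y_{i},z_{i},\theta,\gamma,\lambda)$ is (at least) upper semicontinuous for each fixed sample, which is assumed nowhere; and outright countability of $\Theta$ and $\Gamma$ is a strictly stronger hypothesis than that of Theorem \ref{thm_cortes}---the paper invokes it only as an alternative remedy in a remark, not as part of this corollary. As written, then, your proposal does not establish the statement; to close it you would need either upper semi-analyticity (or semicontinuity together with compactness) of the summand so that the relevant level sets are genuinely analytic or coanalytic, an explicit countability/separability hypothesis on $\Theta\times\Gamma$, or a retreat to outer measures of the kind the paper uses elsewhere when measurability fails.
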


\subsubsection{Respect for Weak Dominance of the Preference Relation in Definition \ref{definition_pac_preference_relation}}

\begin{lemma}\label{lemma_on_weak_dominance}
Let $(\Omega,\mathfrak{A})$ be a measurable space, and let $X_{1},X_{2}: \Omega\times \mathcal{T} \to \mathbb{R}$ be two stochastic processes such that $X_{1}(\cdot,t)$ and $X_{2}(\cdot,t)$ are measurable for each $t$, and $\omega\mapsto \inf_{t \in \mathcal{T}} X_{1}(\omega,t),\inf_{t \in \mathcal{T}} X_{2}(\omega,t)$ are universally measurable; that is, measurable with respect to the completion of any probability measure on $(\Omega,\mathfrak{A})$. Furthermore, suppose that for any probability measure on $(\Omega,\mathfrak{A})$ we have $X_{1}(\omega,t)\leq X_{2}(\omega,t)$ a.s. for every $t \in \mathcal{T}$, and let $c:\mathcal{P} \to \mathbb{R}_{++}$ be any value depending only on $P$, where $\mathcal{P}$ is the set of all probability measure on $(\Omega,\mathfrak{A})$. Finally, let $c_{1},c_{2}: (0,1)\times \mathcal{P}\to \mathbb{R}_{++}$ be the smallest values satisfying:
\begin{align*}
P \left( \inf_{t \in \mathcal{T}} X_{1}(\omega,t) + c_{1}(\kappa,P) \geq c(P) \right) \geq \kappa,\\
P \left( \inf_{t \in \mathcal{T}} X_{2}(\omega,t) + c_{2}(\kappa,P) \geq c(P) \right) \geq \kappa,
\end{align*}
for each $\kappa \in (0,1)$. Then for every $P\in \mathcal{P}$ we have $c_{2}(\kappa,P)\leq c_{1}(\kappa,P)$ for every $\kappa \in (0,1)$.
\end{lemma}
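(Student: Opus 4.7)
The plan is to show that the event on which $X_1$ delivers a value at least $c(P) - c$ uniformly in $t$ is (up to a null set) contained in the corresponding event for $X_2$, so that any $c$ that works as a slack for $X_1$ at confidence $\kappa$ also works for $X_2$. Taking the infimum of such $c$ then yields $c_2(\kappa,P) \leq c_1(\kappa,P)$.

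More concretely, fix $P\in\mathcal{P}$ and $\kappa\in(0,1)$. First I would upgrade the pointwise-in-$t$ almost sure inequality to a single null set: under the stated hypothesis, I read ``$X_1(\omega,t)\leq X_2(\omega,t)$ a.s. for every $t\in\mathcal{T}$'' as asserting the existence of a $P$-null set $N$ such that $X_1(\omega,t)\leq X_2(\omega,t)$ for every $\omega\notin N$ and every $t$. (If one insists on the literal reading with a separate null set $N_t$ for each $t$, the same conclusion follows directly by bypassing the supremum step: note that $\{\inf_t X_1(\omega,t)\geq a\} = \bigcap_t \{X_1(\omega,t)\geq a\}$, so on the event $\{\inf_t X_1 \geq c(P)-c\}$ one has $X_1(\omega,t)\geq c(P)-c$ for \emph{all} $t$; combined with $X_1(\cdot,t)\leq X_2(\cdot,t)$ a.s. for each $t$ and the universal measurability of $\inf_t X_2(\omega,t)$, a standard completion argument with outer measures still yields the set-inclusion below.) In either case, for every $\omega\notin N$ we have $\inf_{t\in\mathcal{T}} X_1(\omega,t) \leq \inf_{t\in\mathcal{T}} X_2(\omega,t)$.

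Now fix any constant $c>0$ such that
\begin{align*}
P\left(\inf_{t\in\mathcal{T}} X_1(\omega,t) + c \geq c(P)\right) \geq \kappa.
\end{align*}
By the previous paragraph, the set $\{\omega : \inf_t X_1(\omega,t) + c \geq c(P)\}\setminus N$ is contained in $\{\omega : \inf_t X_2(\omega,t)+c \geq c(P)\}$, and each of these sets is measurable for the completion of $P$ by the universal measurability hypothesis. Monotonicity of the (completed) measure $P$ therefore yields
\begin{align*}
P\left(\inf_{t\in\mathcal{T}} X_2(\omega,t) + c \geq c(P)\right) \geq P\left(\inf_{t\in\mathcal{T}} X_1(\omega,t) + c \geq c(P)\right) \geq \kappa.
\end{align*}

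Finally, by the definitions of $c_1(\kappa,P)$ and $c_2(\kappa,P)$ as the smallest values achieving the $\kappa$-probability constraint, the display above shows that $c_1(\kappa,P)$ is a feasible slack for $X_2$, and hence $c_2(\kappa,P) \leq c_1(\kappa,P)$. The main technical obstacle is the measure-theoretic step of converting the pointwise-in-$t$ almost sure ordering of $X_1$ and $X_2$ into an inequality for the infima; this is why the universal measurability of $\omega\mapsto\inf_t X_i(\omega,t)$ is invoked in the hypothesis and why the set-inclusion is phrased modulo a $P$-null set rather than exactly. The rest is a routine monotonicity argument.
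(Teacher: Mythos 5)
Your main argument is the paper's own proof in all essentials: pass from the a.s.\ ordering of the two processes to an a.s.\ ordering of their infima, deduce the event inclusion modulo a $P$-null set, invoke universal measurability of $\omega\mapsto\inf_{t}X_{i}(\omega,t)$ so that the completed measure can be applied monotonically, and then plug in $c_{1}(\kappa,P)$ and use minimality of $c_{2}(\kappa,P)$. (The paper phrases the middle step as an inequality of survival functions at every $x\in\mathbb{R}$ and then evaluates at $x=-c_{1}(\kappa,P)$; that is only a cosmetic difference from fixing a feasible slack $c$ as you do.)

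The one piece I would not let stand is the parenthetical claim that the literal reading---a separate null set $N_{t}$ for each $t$---still yields the set inclusion ``by a standard completion argument.'' At a fixed $P$ this is false when $\mathcal{T}$ is uncountable: take $\Omega=[0,1]$ with Lebesgue measure, $\mathcal{T}=[0,1]$, $X_{1}\equiv 1/2$ and $X_{2}(\omega,t)=\mathbbm{1}\{\omega\neq t\}$. Then $X_{1}(\cdot,t)\leq X_{2}(\cdot,t)$ a.s.\ for every $t$, yet $\inf_{t}X_{2}(\omega,t)=0<1/2=\inf_{t}X_{1}(\omega,t)$ for every $\omega$, so the inclusion $\{\inf_{t}X_{1}\geq a\}\subseteq\{\inf_{t}X_{2}\geq a\}$ fails with probability one at $a=1/2$. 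The problem is exactly the one your bypass glosses over: on the good event you know $X_{1}(\omega,t)\geq a$ for all $t$, and for each fixed $t$ this gives $X_{2}(\omega,t)\geq a$ only outside a null set depending on $t$, and the union of uncountably many such null sets need not be null. So your uniform-null-set reading is what actually carries the proof; to be fair, the paper's own proof makes the same tacit step when it passes from ``$\inf_{t}X_{1}(\omega,t)\leq X_{2}(\omega,t)$ a.s.\ for every $t$'' to ``$\inf_{t}X_{1}\leq\inf_{t}X_{2}$ a.s.'' (If one insists on the per-$t$ reading, the lemma can instead be rescued through the hypothesis that the ordering holds under \emph{every} $P\in\mathcal{P}$: testing with point masses forces $X_{1}\leq X_{2}$ everywhere, after which no null sets are needed---but that is a different argument from the one sketched in your parenthesis.)
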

\begin{proof}
Fix any probability measure $P\in \mathcal{P}$. Then by assumption:
\begin{align*}
X_{1}(\omega,t) \leq X_{2}(\omega,t)\,\, a.s.\qquad \forall t\in \mathcal{T}. 
\end{align*}
This implies:
\begin{align*}
\inf_{t \in \mathcal{T}} X_{1}(\omega,t) \leq X_{2}(\omega,t)\,\, a.s.\qquad \forall t\in \mathcal{T},
\end{align*}
which in turn implies:
\begin{align*}
\inf_{t \in \mathcal{T}} X_{1}(\omega,t) \leq \inf_{t \in \mathcal{T}} X_{2}(\omega,t)\,\, a.s.,
\end{align*}
and thus:
\begin{align*}
\inf_{t \in \mathcal{T}} X_{1}(\omega,t) -c(P) \leq \inf_{t \in \mathcal{T}} X_{2}(\omega,t)-c(P)\,\, a.s.
\end{align*}
Let $N$ denote the null set for which this relation is not true (this set may depend on $P \in \mathcal{P}$). Then we have for every $x \in \mathbb{R}$:
\begin{align*}
\left\{\omega : \inf_{t \in \mathcal{T}} X_{1}(\omega,t)-c(P) > x \right\}\cap N^{c} \subseteq \left\{\omega : \inf_{t \in \mathcal{T}} X_{2}(\omega,t)-c(P) > x \right\}\cap N^{c}, 
\end{align*}
By assumption, these events belong to the universal $\sigma-$algebra generated by $\mathfrak{A}$, and so are measurable with respect to the completion of any $P \in \mathcal{P}$. This implies that for every $x \in \mathbb{R}$:
\begin{align*}
P\left(\omega : \inf_{t \in \mathcal{T}} X_{1}(\omega,t)-c(P) > x \right) \leq P\left(\omega : \inf_{t \in \mathcal{T}} X_{2}(\omega,t)-c(P) > x \right). 
\end{align*}
Now taking any $\kappa \in (0,1)$ and setting $x=-c_{1}(\kappa,P)$ we have:
\begin{align*}
\kappa \leq P\left(\omega : \inf_{t \in \mathcal{T}} X_{1}(\omega,t) + c_{1}(\kappa,P) > c(P) \right) \leq P\left(\omega : \inf_{t \in \mathcal{T}} X_{2}(\omega,t) +c_{1}(\kappa,P) > c(P) \right). 
\end{align*}
By definition, this implies $c_{2}(\kappa,P)$ can be no larger than $c_{1}(\kappa,P)$; that is, $c_{2}(\kappa,P)\leq c_{1}(\kappa,P)$. Since $\kappa \in (0,1)$ was arbitrary, we conclude that $c_{2}(\kappa,P)\leq c_{1}(\kappa,P)$ for every $\kappa \in (0,1)$. Since $P \in \mathcal{P}$ was also arbitrary we conclude that for every $P \in \mathcal{P}$ we have $c_{2}(\kappa,P)\leq c_{1}(\kappa,P)$ for every $\kappa \in (0,1)$. This completes the proof. 
\end{proof}

\subsubsection{Results on Interchanging Integrals and Supremum/Infimum}

\begin{lemma}[Interchange of Integral and Supremum/Infimum]\label{lemma_joint_to_marginal_selection}
Let $(\mathcal{V},\mathfrak{B}(\mathcal{V}))$ and $(\mathcal{V}',\mathfrak{B}(\mathcal{V}'))$ be measurable spaces with $\mathcal{V}$ and $\mathcal{V}'$ as Polish spaces. Let $V \in \mathcal{V}$ be any random variable defined on the probability space $(\Omega,\mathfrak{A},P)$ with (marginal) distribution $P_{V} = P \circ V^{-1}$. Furthermore, let $\bm G: \mathcal{V} \to \mathcal{V}'$ be an almost surely non-empty Effros-measurable multifunction. Then for any bounded and measurable function $\varphi: \mathcal{V} \times \mathcal{V}'\to \mathbb{R}$, we have:
\begin{align}
\int \sup_{v' \in \bm G(v)} \varphi(v,v') \,dP_{V} = \sup_{V' \in Sel(\bm G)} \int \varphi(v,V'(v)) \,dP_{V}, \label{eq_graf_lemma_sup_cond}\\
\int \inf_{v' \in \bm G(v)} \varphi(v,v') \,dP_{V} = \inf_{V' \in Sel(\bm G)} \int \varphi(v,V'(v)) \,dP_{V},\label{eq_graf_lemma_inf_cond}
\end{align} 
In particular, if:
\begin{align*}
\mathcal{P}_{V'|V}:= \{P_{V'|V} : V' \sim P_{V'|V}, \, \,V': \mathcal{V} \to \mathcal{V}' \text{ is measurable and } P_{V'|V}(V' \in \bm G(V)|V=v)=1\,\,a.s. \},
\end{align*}
then:
\begin{align}
\int \sup_{v' \in \bm G(v)} \varphi(v,v') \,dP_{V} = \sup_{P_{V'|V} \in \mathcal{P}_{V'|V}} \int \varphi(v,v') \,d(P_{V'|V} \times P_{V}), \\
\int \sup_{v' \in \bm G(v)} \varphi(v,v') \,dP_{V} = \inf_{P_{V'|V} \in \mathcal{P}_{V'|V}} \int \varphi(v,v') \,d(P_{V'|V} \times P_{V}).
\end{align} 
\end{lemma}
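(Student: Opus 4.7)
The plan is to prove both equalities \eqref{eq_graf_lemma_sup_cond} and \eqref{eq_graf_lemma_inf_cond}, noting that the infimum version follows from the supremum version by replacing $\varphi$ with $-\varphi$. For \eqref{eq_graf_lemma_sup_cond}, the ``$\geq$'' direction is immediate: for any $V' \in Sel(\bm G)$ we have $\varphi(v,V'(v)) \leq \sup_{v' \in \bm G(v)}\varphi(v,v')$ for $P_V$-almost every $v$, and integrating preserves the inequality. Before that, I would need to verify briefly that the marginal function $s(v) := \sup_{v' \in \bm G(v)}\varphi(v,v')$ is measurable; this is standard under Effros-measurability of $\bm G$, Borel measurability of $\varphi$, and Polishness of $\mathcal{V}'$, and follows either from Himmelberg's projection theorem or from writing $s(v)$ as a countable supremum over a Castaing representation of $\bm G$, both of which deliver measurability (with boundedness of $\varphi$ ensuring finiteness).

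The main work is the ``$\leq$'' direction, for which the strategy is a measurable $\varepsilon$-selection argument. Fix $\varepsilon > 0$ and define
\begin{align*}
\bm G_\varepsilon(v) := \left\{ v' \in \bm G(v) : \varphi(v,v') \geq s(v) - \varepsilon \right\}.
\end{align*}
By construction $\bm G_\varepsilon(v)$ is non-empty whenever $\bm G(v)$ is, so it is almost surely non-empty. The key step is to show $\bm G_\varepsilon$ admits a measurable selection $V'_\varepsilon$. Given this, $V'_\varepsilon \in Sel(\bm G)$ and integration of the defining inequality yields
\begin{align*}
\int \varphi(v,V'_\varepsilon(v))\,dP_V \geq \int s(v)\,dP_V - \varepsilon,
\end{align*}
so taking a supremum over selections on the left and letting $\varepsilon \downarrow 0$ closes the gap.

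The main obstacle is the measurability of $\bm G_\varepsilon$, and this is where I would spend the most care. The natural route is to express $\bm G_\varepsilon$ as the intersection of the multifunction $\bm G$ with the (random) super-level set $\{v' : \varphi(v,v') \geq s(v) - \varepsilon\}$. Since $s$ is measurable and $\varphi$ is jointly Borel, the graph of the latter is Borel in $\mathfrak{B}(\mathcal{V}) \otimes \mathfrak{B}(\mathcal{V}')$; combined with graph-measurability of $\bm G$ (which follows from its Effros-measurability via \cite{molchanov2017theory} Theorem 1.3.3, after passing to the completion of $P_V$ if necessary), the graph of $\bm G_\varepsilon$ is measurable, and Aumann's selection theorem supplies a measurable $V'_\varepsilon$. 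If one wishes to avoid completions, a standard workaround is to replace $\bm G_\varepsilon$ by the closure of a slightly relaxed slab and apply the Kuratowski--Ryll-Nardzewski selection theorem; boundedness of $\varphi$ again ensures $s$ is real-valued, which is needed throughout.

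For the final ``in particular'' claim connecting measurable selections to conditional distributions, the two parametrizations are equivalent in the present Polish setting: every measurable selection $V'$ induces a regular conditional law $P_{V'|V}$ supported $P_V$-a.s. on $\bm G(V)$, and conversely, because $\mathcal{V}'$ is Polish, any $P_{V'|V} \in \mathcal{P}_{V'|V}$ admits a realization as the law of a measurable selection (for instance via the quantile/disintegration construction on an enlargement of the underlying space). Tonelli together with iterated expectations $\int \varphi\, d(P_{V'|V} \times P_V) = \int \E_{P_{V'|V}}[\varphi(v,\cdot)]\, dP_V$ then transfers \eqref{eq_graf_lemma_sup_cond}--\eqref{eq_graf_lemma_inf_cond} to the conditional-distribution formulation, so this last step is essentially bookkeeping rather than substantive analysis.
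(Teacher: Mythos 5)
Your overall architecture matches the paper's: the ``$\geq$'' direction is immediate, the substance is a measurable $\varepsilon$-optimal selection, and the conditional-distribution claim is bookkeeping (indeed, as the paper defines $\mathcal{P}_{V'|V}$, its elements are degenerate conditionals of measurable selections, so that step is even lighter than your disintegration/enlargement remark suggests). The genuine gap is in your measurability claims. For a merely Borel $\varphi$, the value function $s(v)=\sup_{v'\in\bm G(v)}\varphi(v,v')$ is in general \emph{not} Borel: the Castaing-representation argument requires $\varphi(v,\cdot)$ to be upper semicontinuous so that the supremum over the closed set $\bm G(v)$ coincides with the supremum over a countable dense family of selections, and the projection route only shows that $\{v: s(v)>c\}$ is analytic, so $s$ is merely upper semi-analytic (universally measurable). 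Consequently the graph of your slab multifunction $\bm G_\varepsilon$ is \emph{not} Borel in $\mathfrak{B}(\mathcal{V})\otimes\mathfrak{B}(\mathcal{V}')$, nor even analytic in general, so Aumann's theorem in the Borel/analytic-graph form you invoke does not apply; and the KRN fallback is worse, since the closure of $\bm G_\varepsilon(v)$ need not be Effros measurable and, because $\varphi$ is not continuous in $v'$, passing to closures can destroy $\varepsilon$-optimality at every added point.

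The repair is exactly the route the paper takes: Effros measurability gives $\mathrm{gr}(\bm G)\in\mathfrak{B}(\mathcal{V})\otimes\mathfrak{B}(\mathcal{V}')$ (\cite{molchanov2017theory}, Theorem 1.3.3), hence an analytic graph, and the Exact Selection Theorem of \cite{shreve1979universally} (see also \cite{bertsekas1978stochastic}, Proposition 7.50) then delivers, in one stroke, a \emph{universally measurable} $\tilde v'(\cdot)$ with $\varphi(v,\tilde v'(v))\geq s(v)-\varepsilon$ (and $=s(v)$ where the supremum is attained), bypassing any separate measurability argument for $s$ or for the $\varepsilon$-argmax set. One integrates with respect to the completion of $P_V$ and finally replaces the universally measurable selection by a Borel version agreeing with it $P_V$-a.s., so that it is a legitimate element of $Sel(\bm G)$ as defined in the paper. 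Your slab construction could alternatively be salvaged by a selection theorem valid for $\sigma$-algebras closed under the Souslin operation (Sainte-Beuve/Leese applied with the universal $\sigma$-algebra on $\mathcal{V}$), but in any case the step ``$s$ is measurable, hence the graph of $\bm G_\varepsilon$ is Borel, hence Aumann applies'' must be replaced: that is precisely where a merely Borel $\varphi$ breaks the argument as written.
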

\begin{proof}[Proof of Lemma \ref{lemma_joint_to_marginal_selection}]
Since $\bm G$ is Effros measurable, by Theorem 1.3.3 in \cite{molchanov2017theory} we have that $\text{gr}(\bm G) \in \mathfrak{B}(\mathcal{V})\otimes \mathfrak{B}(\mathcal{V}')$, and thus $\text{gr}(\bm G)$ is trivially an analytic set. Now define:
\begin{align*}
\text{gr}_{v}(\bm G):= \{ v' \in \mathcal{V}' : (v,v') \in \text{gr}(\bm G) \}.
\end{align*}
Now let:
\begin{align*}
\varphi^{*}(v):=\sup_{v' \in \bm G(v)} \varphi(v,v') = \sup_{v' \in \text{gr}_{v}(\bm G)} \varphi(v,v').
\end{align*}
Furthermore, define the set:
\begin{align*}
M:= \left\{v \in \Pi_{\mathcal{V}}(\text{gr}(\bm G)) : \exists v' \in \text{gr}_{v}(\bm G) \text{ s.t. } \varphi(v,v') = \varphi^{*}(v) \right\}.
\end{align*}
where $\Pi_{\mathcal{V}}: \mathcal{V} \times \mathcal{V}' \to \mathcal{V}$ is the projection operator. Fix any $\varepsilon>0$. By the Exact Selection Theorem (\cite{shreve1979universally}, p.16) there exists a universally measurable function $\tilde{v}':\mathcal{V} \to \mathcal{V}'$ such that $(v,\tilde{v}'(v)) \in \text{gr}(\bm G)$ for every $v \in \Pi_{\mathcal{V}}(\text{gr}(\bm G))$ and:\note{Here boundedness is essential to rule out $+\infty$}
\begin{align*}
\varphi(v,\tilde{v}'(v)) \begin{cases}
	= \varphi^{*}(v), &\text{ if } v \in M,\\
	\geq \varphi^{*}(v) - \varepsilon, &\text{ if } v \notin M.
\end{cases}
\end{align*}
This allows us to write:
\begin{align*}
\int \sup_{v' \in \bm G(v)} \varphi(v,v') \, dP_{V} \leq \int  \varphi(v,\widetilde{v}'(v)) \, dP_{V} + \varepsilon. 
\end{align*}
Since $\tilde{v}'$ is a (universally) measurable selection, clearly we have:
\begin{align*}
\int  \varphi(v,\widetilde{v}'(v)) \, dP_{V} \leq \sup_{V' \in Sel(\bm G)} \int \varphi(v,V(v)) \, dP_{V}
\end{align*}
It suffices to show:
\begin{align*}
\sup_{V' \in Sel(\bm G)} \int \varphi(v,V'(v)) \, dP_{V} \leq \int \sup_{v' \in \bm G(v)} \varphi(v,v') \, dP_{V} .
\end{align*}
For any $\varepsilon>0$, let $V_{\varepsilon}' \in Sel(\bm G)$ satisfy:
\begin{align*}
\sup_{V' \in Sel(\bm G)} \int \varphi(v,V'(v)) \, dP_{V}  \leq \int \varphi(v,V_{\varepsilon}'(v)) \, dP_{V} + \varepsilon.
\end{align*}
Furthermore, let $N:= \{v : V_{\varepsilon}(v) \notin \bm G(v)\}$. Then by definition of $Sel(\bm G)$ we have $P(N)=0$. Thus:
\begin{align*}
\int \varphi(v,V_{\varepsilon}(v)) \, dP_{V} = \int_{N^c} \varphi(v,V_{\varepsilon}'(v)) \, dP_{V} \leq \int_{N^{c}}  \sup_{v' \in \bm G(v)} \varphi(v,v')\,dP_{V} \leq \int \sup_{v' \in \bm G(v)} \varphi(v,v') \,dP_{V}. 
\end{align*}
Combining everything we have:
\begin{align*}
\int \sup_{v' \in \bm G(v)} \varphi(v,v') \, dP_{V} \leq \sup_{V' \in Sel(\bm G)} \int \varphi(v,V(v)) \, dP_{V} + \varepsilon \leq \int \sup_{v' \in \bm G(v)} \varphi(v,v') \,dP_{V} + 2 \varepsilon
\end{align*}
Since $\varepsilon>0$ was arbitrary, we conclude:
\begin{align*}
\int \sup_{v' \in \bm G(v)} \varphi(v,v') \,dP_{V} = \sup_{V' \in Sel_{u.m.}(\bm G)} \int \varphi(v,V'(v)) \,dP_{V}.
\end{align*}
Since each $V' \in Sel_{u.m.}(\bm G)$ is universally measurable, each $V'$ can be associated with a $\mathfrak{B}(\mathcal{V})-$measurable random variable $\tilde{V}'$ such that $V' = \tilde{V}'$ a.s. Thus we can conclude:
\begin{align*}
\int \sup_{v' \in \bm G(v)} \varphi(v,v') \,dP_{V} = \sup_{V' \in Sel(\bm G)} \int \varphi(v,V'(v)) \,dP_{V}.
\end{align*}
To show the final claim, note that for any $V':\mathcal{V} \to \mathcal{V}'$ we have:
\begin{align*}
P_{V'|V}(V'=v'|V=v) = \mathbbm{1}\{V'(v)=v'\}.
\end{align*}
i.e. the conditional distribution of $V'$ is degenerate. Thus for any $V' \in Sel(\bm G)$:
\begin{align*}
\int \int \varphi(v,v') \,d(P_{V'|V}\times P_{V}) &= \int \varphi(v,v') \mathbbm{1}\{V'(v)=v'\} \,dP_{V}\\
&= \int \varphi(v,V'(v)) \,dP_{V}.
\end{align*}
By definition of $\mathcal{P}_{V'|V}$, we conclude that:\todo{Check with Joon, as this still feels a bit shaky.}
\begin{align*}
\sup_{V' \in Sel(\bm G)} \int \varphi(v,V'(v)) \,dP_{V} = \sup_{P_{V'|V} \in \mathcal{P}_{V'|V}} \int \varphi(v,v') \,d(P_{V'|V}\times P_{V}).
\end{align*}
\noteinline{Recall your previous approach. You tried to show that $\int \sup_{v' \in \bm G(v)} \varphi(v,v') $ could be written as an integral with respect to a capacity functional. You could then use the interchange results of Graf or Miranda to show the integral with respect to the capacity functional could be written as $\sup_{V \in Sel(\bm G)} \int \varphi(v,V(v)) \P_{V}$. However, this argument only works well for problems like $\int \sup_{v' \in \bm G(v)} \varphi(v')$, in which case it seems all the measurability problems with projections do not arise; for example, with $\varphi\geq 0$ this can be written as $\int_{0}^{\infty} P( \bm G(v) \cap \{ v' : \varphi(v') \geq t\}\neq \emptyset )$, whereas in your problem you get $\int_{0}^{\infty} P( \text{gr}(\bm G(v)) \cap \{(v,v') : \varphi(v,v') \geq t\}\neq \emptyset )$, which is not necessarily the capacity functional of a random closed set. }

\end{proof}

\subsubsection{Results on Error Bounds}

In the next Lemma we focus on the lower envelope function, although clearly an analogous result is true for the upper envelope function. For notational simplicity, denote:
\begin{align}
\varphi^{*}:= \inf_{\theta \in \Theta^{*}}\inf_{P_{U|Y,Z} \in \mathcal{P}_{U|Y,Z}(\theta)}\inf_{P_{Y_{\gamma}^\star|Y,Z,U} \in \mathcal{P}_{Y_{\gamma}^\star|Y,Z,U}(\theta,\gamma)} \int \varphi(v)  \, dP_{V_{\gamma}}.
\end{align}
We now have the following result:
\begin{lemma}[Equality Between Primal and Penalized Problems]\label{lemma_error_bound}
Suppose the Assumptions of Theorem \ref{thm_cortes} hold. Then there exists functions $\lambda_{j}^{\ell b}: \Theta \times \mathcal{P}_{Y,Z} \to \{0,1\}$, $j=1,\ldots,J$, depending only on $\theta$ and the distribution $P_{Y,Z}$, such that:
\begin{align*}
\varphi^{*} = \inf_{\theta \in \Theta}  \inf_{P_{U|Y,Z} \in \mathcal{P}_{U|Y,Z}(\theta)}\Bigg(\inf_{P_{Y_{\gamma}^\star|Y,Z,U} \in \mathcal{P}_{Y_{\gamma}^\star|Y,Z,U}(\theta,\gamma)} \int \varphi(v)  \, dP_{V_{\gamma}} + \mu^{*}\sum_{j=1}^{J} \lambda_{j}^{\ell b}(\theta,P_{Y,Z}) \E_{P_{Y,Z,U}} [ m_{j}(y,z,u,\theta)]\Bigg).
\end{align*}

\end{lemma}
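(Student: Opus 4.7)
The plan is to view the lemma as an exact-penalty reformulation: on the LHS the constraint $\theta \in \Theta^{*}$ is enforced explicitly, while on the RHS it is replaced by a moment penalty with coefficient $\mu^{*}$ that is large enough (by choice in \eqref{eq_mu_star_requirement}) to make the enlarged unconstrained minimum coincide with the constrained one. Writing
\begin{align*}
F(\theta, P_{U|Y,Z}) := \inf_{P_{Y_{\gamma}^\star|Y,Z,U} \in \mathcal{P}_{Y_{\gamma}^\star|Y,Z,U}(\theta,\gamma)} \int \varphi(v) \, dP_{V_{\gamma}}, \quad G_{j}(\theta, P_{U|Y,Z}) := \E_{P_{U|Y,Z} \times P_{Y,Z}}[m_{j}(Y,Z,U,\theta)],
\end{align*}
I would define $\lambda_{j}^{\ell b}(\theta, P_{Y,Z})$ piecewise on $\Theta^{*}$ and its complement so as to give a clean accounting of each direction. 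Specifically, set $\lambda_{j}^{\ell b}(\theta, P_{Y,Z}) = 0$ for every $j$ when $\theta \in \Theta^{*}$; for $\theta \notin \Theta^{*}$, let $\lambda^{\ell b}(\theta, P_{Y,Z})$ be any maximizer of $\lambda \mapsto \inf_{P_{U|Y,Z} \in \mathcal{P}_{U|Y,Z}(\theta)}\bigl( F + \mu^{*} \sum_{j} \lambda_{j} G_{j}\bigr)$ over the finite set $\{0,1\}^{J}$, which exists since $\{0,1\}^{J}$ is finite and only depends on $(\theta,P_{Y,Z})$.

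The direction RHS $\leq$ LHS is immediate: for any $\theta \in \Theta^{*}$ the inner value on the RHS equals $\inf_{P_{U|Y,Z}} F$, so taking $\inf_{\theta \in \Theta}$ on the RHS (which includes $\Theta^{*}$) is at most $\inf_{\theta \in \Theta^{*}} \inf_{P_{U|Y,Z}} F = \varphi^{*}$. For the reverse direction, I would show that the inner value is $\geq \varphi^{*}$ for every $\theta \in \Theta$. When $\theta \in \Theta^{*}$ this is clear since that inner value equals $\inf_{P_{U|Y,Z}} F \geq \varphi^{*}$. When $\theta \notin \Theta^{*}$, the inner value is $\max_{\lambda \in \{0,1\}^{J}} \inf_{P_{U|Y,Z}} (F + \mu^{*} \sum_{j} \lambda_{j} G_{j})$, and here I would invoke Sion's minimax theorem on the compact convex product $\mathcal{P}_{U|Y,Z}(\theta) \times [0,1]^{J}$, using the $\text{weak}^{*}$ compactness of $\mathcal{P}_{U|Y,Z}(\theta)$ noted before Definition \ref{definition_collections} and the affine-in-each-argument structure of the integrand, together with the fact that the maximum in $\lambda$ is attained at a vertex of the cube since the objective is linear in $\lambda$. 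This yields the identity $\max_{\lambda \in \{0,1\}^{J}} \inf_{P_{U|Y,Z}} (F + \mu^{*} \sum_{j} \lambda_{j} G_{j}) = \inf_{P_{U|Y,Z}}(F + \mu^{*} \sum_{j} |G_{j}|_{+})$. The remaining bookkeeping uses Assumption \ref{assumption_error_bound} with two cases. For $\theta \in \Theta_{\delta}^{*} \setminus \Theta^{*}$, Assumption \ref{assumption_error_bound}(ii) gives $F \geq \varphi^{*} - C_{2} d(\theta, \Theta^{*})$ pointwise, and Assumption \ref{assumption_error_bound}(i) combined with $\sum_{j} |G_{j}|_{+} \geq \max_{j} |G_{j}|_{+}$ and the observation that an infimum bound implies a pointwise bound yields $\sum_{j} |G_{j}|_{+} \geq C_{1} d(\theta, \Theta^{*})$; adding the two and invoking $\mu^{*} \geq C_{2}/C_{1}$ produces
\begin{align*}
F + \mu^{*} \sum_{j} |G_{j}|_{+} \geq \varphi^{*} + (\mu^{*} C_{1} - C_{2}) d(\theta, \Theta^{*}) \geq \varphi^{*}.
\end{align*}
For $\theta \notin \Theta_{\delta}^{*}$, the same calculation with the crude bound $F \geq \varphi_{\ell b} \geq \varphi^{*} - (\varphi_{ub} - \varphi_{\ell b})$ and $\sum_{j} |G_{j}|_{+} \geq C_{1} \delta$ yields $F + \mu^{*} \sum_{j} |G_{j}|_{+} \geq \varphi^{*}$ using the second requirement in \eqref{eq_mu_star_requirement}.

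The main obstacle is the application of Sion's minimax theorem, specifically the $\text{weak}^{*}$-continuity (or at least the appropriate semi-continuity) of $P_{U|Y,Z} \mapsto F(\theta, P_{U|Y,Z})$ and $P_{U|Y,Z} \mapsto G_{j}(\theta, P_{U|Y,Z})$ on $\mathcal{P}_{U|Y,Z}(\theta)$. For the $G_{j}$ this is granted by the footnote under Definition \ref{definition_identified_set} since each $m_{j}$ is bounded measurable; for $F$ the needed semi-continuity requires an extra argument that rewrites $F$ via Lemma \ref{lemma_joint_to_marginal_selection} as $F(\theta, P_{U|Y,Z}) = \int \inf_{y^{\star} \in \bm G^{\star}(y,z,u,\theta,\gamma)} \varphi(v) \, dP_{Y,Z,U}$ and then exploits boundedness of $\varphi$ together with compactness of $\mathcal{U}$ from Assumption \ref{assump_preliminary}. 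If only lower semi-continuity is available for $F$, this is still enough for the $\inf$-side of Sion's equality, which is the side needed here. As a fallback I would bypass Sion's entirely by an exact-penalty argument directly on the value function $\lambda \mapsto v(\lambda) := \inf_{P_{U|Y,Z}}(F + \mu^{*} \sum_{j} \lambda_{j} G_{j})$, which is concave on $[0,1]^{J}$ as an infimum of affine functions, so its maximum over $[0,1]^{J}$ is attained at a vertex in $\{0,1\}^{J}$; the required lower bound on $\max_{\lambda} v(\lambda)$ can then be obtained by testing the specific Lagrange multiplier suggested by the error-bound constants $C_{1}$ and $C_{2}$.
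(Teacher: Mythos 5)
Your choice $\lambda^{\ell b}=0$ on $\Theta^{*}$ handles the direction $\text{RHS}\leq \varphi^{*}$ cleanly, and your error-bound bookkeeping is correct as far as it goes: pointwise in $P_{U|Y,Z}$, Assumptions \ref{assumption_error_bound}(i)--(ii) and the two lower bounds on $\mu^{*}$ in \eqref{eq_mu_star_requirement} do give $F + \mu^{*}\sum_{j}|G_{j}|_{+} \geq \varphi^{*}$ for every $\theta$, with the two cases $\theta \in \Theta_{\delta}^{*}$ and $\theta \notin \Theta_{\delta}^{*}$ exactly as in the paper. The problem is the bridge from that statement back to a single $\{0,1\}^{J}$-valued multiplier chosen \emph{before} the infimum over $P_{U|Y,Z}$. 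You need $\max_{\lambda \in \{0,1\}^{J}} \inf_{P_{U|Y,Z}}\bigl(F + \mu^{*}\sum_{j}\lambda_{j}G_{j}\bigr) \geq \inf_{P_{U|Y,Z}}\bigl(F + \mu^{*}\sum_{j}|G_{j}|_{+}\bigr)$, and your justification for it is wrong in both the main route and the fallback: $v(\lambda)=\inf_{P}(F+\mu^{*}\sum_{j}\lambda_{j}G_{j})$ is \emph{concave} on $[0,1]^{J}$ (an infimum of affine functions), and it is convex functions, not concave ones, whose maximum over a polytope is attained at a vertex. Sion gives $\sup_{\lambda\in[0,1]^{J}}v(\lambda)=\inf_{P}(F+\mu^{*}\sum_{j}|G_{j}|_{+})$ (granting convexity of $\mathcal{P}_{U|Y,Z}(\theta)$, which the paper never asserts, only weak$^{*}$ compactness), but the restriction to the vertices can lose value: already with $J=1$, taking two candidate distributions with $(F,G_{1})=(1,-1)$ and $(0,1)$ gives $v(0)=v(1)=0$ while $v(1/2)=1/2=\inf_{P}(F+|G_{1}|_{+})-1/2$, so the binary maximum sits strictly below the penalized value. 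Weak duality only gives the inequality in the useless direction, so the identity you rely on is exactly the unproven step.

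This is not a cosmetic issue: the whole content of the hard direction is to show that some \emph{fixed} subset of moments works uniformly over $P_{U|Y,Z}$, which is strictly stronger than the pointwise bound $\sum_{j}|G_{j}(P)|_{+}\geq \max_{j}|G_{j}(P)|_{+}\geq C_{1}\min\{\delta,d(\theta,\Theta^{*})\}$, because the violated moment can vary with $P$. The paper's proof does this constructively: it defines $\mathcal{J}^{*}(\theta,P_{Y,Z})$ as the indices $j$ for which $\inf_{P_{U|Y,Z}}\E[m_{j}]$ equals $\inf_{P_{U|Y,Z}}\max_{k}|\E[m_{k}]|_{+}$, sets $\lambda_{j}^{\ell b}=\mathbbm{1}\{j\in\mathcal{J}^{*}\}$, and then bounds $\inf_{P_{U|Y,Z}}\sum_{j}\lambda_{j}^{\ell b}\E[m_{j}]$ below by $C_{1}\min\{\delta,d(\theta,\Theta^{*})\}$ directly from Assumption \ref{assumption_error_bound}(i) before adding the bound on $F$ from part (ii); your fallback sentence about ``testing the specific Lagrange multiplier suggested by the error-bound constants'' gestures at this but does not identify the multiplier or prove the uniform-in-$P$ lower bound, which is the actual work. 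To repair your argument you would have to either exhibit such a vertex and verify that lower bound (essentially reproducing the paper's construction), or strengthen the minimax step with structure that forces the maximum of $v$ onto $\{0,1\}^{J}$, which your current assumptions do not supply.
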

\begin{remark}
Recall that $\mathcal{P}_{Y,Z}$ is the set of all Borel probability measures on $\mathcal{Y}\times \mathcal{Z}$.
\end{remark}

\begin{proof}[Proof of Lemma \ref{lemma_error_bound}]
First, note that for any functions $\lambda_{j}^{\ell b}: \Theta \times \mathcal{P}_{Y,Z} \to \{0,1\}$, $j=1,\ldots,J$, we have:
\begin{align*}
\varphi^{*}&:= \inf_{\theta \in \Theta^{*}}\inf_{P_{U|Y,Z} \in \mathcal{P}_{U|Y,Z}(\theta)}\inf_{P_{Y_{\gamma}^\star|Y,Z,U} \in \mathcal{P}_{Y_{\gamma}^\star|Y,Z,U}(\theta,\gamma)} \int \varphi(v)  \, dP_{V_{\gamma}}\\
&= \inf_{\theta\in \Theta^{*}}\inf_{P_{U|Y,Z} \in \mathcal{P}_{U|Y,Z}(\theta)}\inf_{P_{Y_{\gamma}^\star|Y,Z,U} \in \mathcal{P}_{Y_{\gamma}^\star|Y,Z,U}(\theta,\gamma)} \sup_{\lambda \in \mathbb{R}_{+}^{J}} \Bigg(\int \varphi(v)  \, dP_{V_{\gamma}} + \mu^{*} \sum_{j=1}^{J} \lambda_{j} \E_{P_{Y,Z,U}} [ m_{j}(y,z,u,\theta)]\Bigg)\\
&\geq \inf_{\theta\in \Theta^{*}}\inf_{P_{U|Y,Z} \in \mathcal{P}_{U|Y,Z}(\theta)}\inf_{P_{Y_{\gamma}^\star|Y,Z,U} \in \mathcal{P}_{Y_{\gamma}^\star|Y,Z,U}(\theta,\gamma)} \Bigg(\int \varphi(v)  \, dP_{V_{\gamma}} + \mu^{*} \sum_{j=1}^{J} \lambda_{j}^{\ell b}(\theta,P_{Y,Z}) \E_{P_{Y,Z,U}} [ m_{j}(y,z,u,\theta)]\Bigg)\\
&\geq \inf_{\theta\in \Theta}\inf_{P_{U|Y,Z} \in \mathcal{P}_{U|Y,Z}(\theta)}\Bigg(\inf_{P_{Y_{\gamma}^\star|Y,Z,U} \in \mathcal{P}_{Y_{\gamma}^\star|Y,Z,U}(\theta,\gamma)}  \int \varphi(v)  \, dP_{V_{\gamma}} + \mu^{*} \sum_{j=1}^{J} \lambda_{j}^{\ell b}(\theta,P_{Y,Z}) \E_{P_{Y,Z,U}} [ m_{j}(y,z,u,\theta)]\Bigg).
\end{align*}
It thus suffices to show that there exists functions $\lambda_{j}^{\ell b}: \Theta \times \mathcal{P}_{Y,Z} \to \{0,1\}$ for $j=1,\ldots,J$ satisfying the reverse inequality. This is done constructively. In particular, define:
\begin{align*}
\mathcal{J}^{*}(\theta,P_{Y,Z})&:= \bigg\{ j \in \{1,\ldots,J\} :  \inf_{P_{U|Y,Z} \in \mathcal{P}_{U|Y,Z}(\theta)}\E_{P_{U|Y,Z}\times P_{Y,Z}} \left[ m_{j}(y,z,u,\theta) \right]\\
&\qquad\qquad\qquad\qquad\qquad\qquad= \inf_{P_{U|Y,Z} \in \mathcal{P}_{U|Y,Z}(\theta)}\max_{j=1,\ldots,J} |\E_{P_{U|Y,Z}\times P_{Y,Z}} \left[ m_{j}(y,z,u,\theta) \right]|_{+} \bigg\}.
\end{align*}
That is, the set $\mathcal{J}^{*}(\theta,P_{Y,Z})$ returns the indices of the weakly positive (i.e. weakly violated) moment functions that obtain the inner maximum in the problem:\note{Intuitively, $\lambda_{j}$ will select the most problematic moment (farthest from being satisfied for even the best case possible $P_{U|Y,Z}$)}
\begin{align*}
\inf_{P_{U|Y,Z} \in \mathcal{P}_{U|Y,Z}(\theta)}\max_{j=1,\ldots,J} |\E_{P_{U|Y,Z}\times P_{Y,Z}} \left[ m_{j}(y,z,u,\theta) \right]|_{+}.
\end{align*}
Now set:
\begin{align}
\lambda_{j}^{\ell b}(\theta,P_{Y,Z}):= \mathbbm{1}\left\{j \in \mathcal{J}^{*}(\theta,P_{Y,Z}) \right\}.\label{eq_lambda_j_definition}
\end{align}
To show why this choice works, begin by fixing any $\theta \in \Theta_{\delta}^{*}$. By Assumption \ref{assumption_error_bound}(ii) we have:
\begin{align}
C_{2} d(\theta,\Theta^{*}) &\geq \varphi^{*}-\inf_{P_{U|Y,Z} \in \mathcal{P}_{U|Y,Z}(\theta)}\inf_{P_{Y_{\gamma}^\star|Y,Z,U} \in \mathcal{P}_{Y_{\gamma}^\star|Y,Z,U}(\theta,\gamma)} \int \varphi(v)  \, dP_{V_{\gamma}}.\label{eq_error_bound1}
\end{align}
Furthermore, from Assumption \ref{assumption_error_bound}(i), since $\theta \in \Theta_{\delta}^{*}$ by assumption, we have:
\begin{align}
C_{1} d(\theta,\Theta^{*})&=C_{1} \min\{\delta, d(\theta,\Theta^{*})\}\\
 &\leq \inf_{P_{U|Y,Z} \in \mathcal{P}_{U|Y,Z}(\theta)}\max_{j=1,\ldots,J} |\E_{P_{U|Y,Z}\times P_{Y,Z}} \left[ m_{j}(y,z,u,\theta) \right]|_{+} \nonumber\\
&=\inf_{P_{U|Y,Z} \in \mathcal{P}_{U|Y,Z}(\theta)} \lambda_{j}^{\ell b}(\theta,P_{Y,Z}) |\E_{P_{U|Y,Z}\times P_{Y,Z}} \left[ m_{j}(y,z,u,\theta) \right]|_{+}  \nonumber\\
&=\inf_{P_{U|Y,Z} \in \mathcal{P}_{U|Y,Z}(\theta)} \lambda_{j}^{\ell b}(\theta,P_{Y,Z}) \E_{P_{U|Y,Z}\times P_{Y,Z}} \left[ m_{j}(y,z,u,\theta) \right]  \nonumber\\
&\leq \sum_{j=1}^{J} \inf_{P_{U|Y,Z} \in \mathcal{P}_{U|Y,Z}(\theta)} \lambda_{j}^{\ell b}(\theta,P_{Y,Z}) \E_{P_{U|Y,Z}\times P_{Y,Z}} \left[ m_{j}(y,z,u,\theta) \right]  \nonumber\\
&\leq \inf_{P_{U|Y,Z} \in \mathcal{P}_{U|Y,Z}(\theta)}\sum_{j=1}^{J}\lambda_{j}^{\ell b}(\theta,P_{Y,Z}) \E_{P_{Y,Z,U}} [ m_{j}(y,z,u,\theta)].\label{eq_error_bound2}
\end{align}
Now by construction we have $\mu^{*}\geq C_{2}/C_{1}$. Thus:
\begin{align}
C_{2}d(\theta,\Theta^{*}) \leq \mu^{*}C_{1}d(\theta,\Theta^{*}).\label{eq_error_bound3}
\end{align}
Now use \eqref{eq_error_bound3} to combine \eqref{eq_error_bound1} and \eqref{eq_error_bound2} and rearrange to obtain:
\begin{align*}
\varphi^{*}
&\leq  \inf_{P_{U|Y,Z} \in \mathcal{P}_{U|Y,Z}(\theta)}\inf_{P_{Y_{\gamma}^\star|Y,Z,U} \in \mathcal{P}_{Y_{\gamma}^\star|Y,Z,U}(\theta,\gamma)} \int \varphi(v)  \, dP_{V_{\gamma}} \\
&\qquad\qquad\qquad\qquad\qquad\qquad\qquad+ \mu^{*}\inf_{P_{U|Y,Z} \in \mathcal{P}_{U|Y,Z}(\theta)}\sum_{j=1}^{J}\lambda_{j}^{\ell b}(\theta,P_{Y,Z}) \E_{P_{Y,Z,U}} [ m_{j}(y,z,u,\theta)] \\
&\leq  \inf_{P_{U|Y,Z} \in \mathcal{P}_{U|Y,Z}(\theta)}\Bigg(\inf_{P_{Y_{\gamma}^\star|Y,Z,U} \in \mathcal{P}_{Y_{\gamma}^\star|Y,Z,U}(\theta,\gamma)} \int \varphi(v)  \, dP_{V_{\gamma}}+ \mu^{*}\sum_{j=1}^{J} \lambda_{j}^{\ell b}(\theta,P_{Y,Z}) \E_{P_{Y,Z,U}} [ m_{j}(y,z,u,\theta)] \Bigg),
\end{align*}
which holds for all $\theta \in \Theta_{\delta}^{*}$. To complete the proof, consider any $\theta \in \Theta \setminus \Theta_{\delta}^{*}$. Recall from the assumptions of Theorem \ref{thm_cortes} that $\varphi:\mathcal{V} \to [\varphi_{\ell b}, \varphi_{ub}] \subset \mathbb{R}$. Then using Assumption \ref{assumption_error_bound} we have: 
\begin{align*}
& \inf_{P_{U|Y,Z} \in \mathcal{P}_{U|Y,Z}(\theta)}\left(\inf_{P_{Y_{\gamma}^\star|Y,Z,U} \in \mathcal{P}_{Y_{\gamma}^\star|Y,Z,U}(\theta,\gamma)} \int \varphi(v)  \, dP_{V_{\gamma}} + \mu^{*}\sum_{j=1}^{J}\lambda_{j}^{\ell b}(\theta,P_{Y,Z}) \E_{P_{U|Y,Z}\times P_{Y,Z}} [ m_{j}(y,z,u,\theta)]\right)\\
&\qquad\geq \varphi_{\ell b} + \mu^{*} \inf_{P_{U|Y,Z} \in \mathcal{P}_{U|Y,Z}(\theta)}\int \left(  \sum_{j=1}^{J}\lambda_{j}^{\ell b}(\theta,P_{Y,Z})  m_{j}(y,z,u,\theta) \right)\, d(P_{U|Y,Z}\times P_{Y,Z})\\
&\qquad\geq  \varphi_{\ell b} + \mu^{*} \inf_{P_{U|Y,Z} \in \mathcal{P}_{U|Y,Z}(\theta)}\max_{j=1,\ldots,J} |\E_{P_{U|Y,Z}\times P_{Y,Z}} \left[ m_{j}(y,z,u,\theta) \right]|_{+}\\
&\qquad\geq  \varphi_{\ell b} + \mu^{*} C_{1}\min\{\delta, d(\theta,\Theta^{*})\}\\
&\qquad= \varphi_{\ell b} + \mu^{*} C_{1} \delta\\
&\qquad\geq \varphi^{*},
\end{align*}
where the last line follows since $\mu^{*} \geq (\varphi_{ub} - \varphi_{\ell b})/(C_{1}\delta) \geq (\varphi^{*} - \varphi_{\ell b})/(C_{1}\delta)$. Since we have shown the inequality holds for every $\theta \in \Theta$, we have:
\begin{align*}
&\inf_{\theta \in \Theta}  \inf_{P_{U|Y,Z} \in \mathcal{P}_{U|Y,Z}(\theta)}\Bigg(\inf_{P_{Y_{\gamma}^\star|Y,Z,U} \in \mathcal{P}_{Y_{\gamma}^\star|Y,Z,U}(\theta,\gamma)} \int \varphi(v)  \, dP_{V_{\gamma}}+ \mu^{*}\sum_{j=1}^{J} \lambda_{j}^{\ell b}(\theta,P_{Y,Z}) \E_{P_{Y,Z,U}} [ m_{j}(y,z,u,\theta)]\Bigg) \geq \varphi^{*}.
\end{align*}
This completes the proof.

\end{proof}

\begin{lemma}\label{lemma_multiplier_switch}
Suppose the assumptions of Theorem \ref{thm_cortes} hold, and define:
\begin{align*}
h_{\ell b}(y,z,\theta,\gamma)&:= \inf_{u \in \bm G^{-}(y,z,\theta)}  \inf_{y^{\star}  \in \bm G^{\star}(y,z,u,\theta,\gamma)} \Bigg( \varphi(v) + \mu^{*}\sum_{j=1}^{J}  \lambda_{j}^{\ell b}(\theta,P_{Y,Z})  m_{j}(y,z,u,\theta)\Bigg).
\end{align*} 
where $\lambda_{j}^{\ell b}: \Theta \times \mathcal{P}_{Y,Z} \to \{0,1\}$, $j=1,\ldots,J$, are as from Lemma \ref{lemma_error_bound}. Then we have:
\begin{align}
&\int h_{\ell b}(y,z,\theta,\gamma)\, dP_{Y,Z}\nonumber\\
&\qquad\qquad\leq \max_{\lambda_{j} \in \{0,1\}} \int \inf_{u \in \bm G^{-}(y,z,\theta)}  \inf_{y^{\star}  \in \bm G^{\star}(y,z,u,\theta,\gamma)} \Bigg( \varphi(v) + \mu^{*}\sum_{j=1}^{J}\lambda_{j}m_{j}(y,z,u,\theta)\Bigg)\, dP_{Y,Z},\label{eq_lemma_final1}
\end{align}
with equality holding in \eqref{eq_lemma_final1} if $\theta \in \Theta^{*}$. 
\end{lemma}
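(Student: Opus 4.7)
The inequality is immediate: $\lambda^{\ell b}(\theta,P_{Y,Z})$ is, by its construction in the proof of Lemma \ref{lemma_error_bound}, a specific element of $\{0,1\}^J$, so the left-hand side of \eqref{eq_lemma_final1} is one of the at most $2^J$ values over which the maximum on the right-hand side is taken.

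For the equality when $\theta\in\Theta^*$, the plan is to first apply Lemma \ref{lemma_joint_to_marginal_selection} (iterated in $u$ and $y^*$) to convert each integrand into an infimum over admissible conditional distributions. With $\Phi(P_{U|Y,Z}):=\inf_{P_{Y^*|Y,Z,U}}\int\varphi\,dP_{V_\gamma}$ and expectations taken against $P_{U|Y,Z}\times P_{Y,Z}$, this yields
\[
F(\lambda):=\int\inf_{u}\inf_{y^*}\Bigl[\varphi + \mu^*\sum_j\lambda_j m_j\Bigr]\,dP_{Y,Z} = \inf_{P_{U|Y,Z}\in\mathcal{P}_{U|Y,Z}(\theta)}\Bigl\{\Phi(P_{U|Y,Z}) + \mu^*\sum_j\lambda_j\,\E[m_j]\Bigr\},
\]
and reduces the claim to showing that $\lambda^{\ell b}$ maximizes $F$ over $\{0,1\}^J$.

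The argument then exploits two facts about $\theta\in\Theta^*$ that are immediate from the construction of $\mathcal{J}^*$ in Lemma \ref{lemma_error_bound}: (a) for $j\in\mathcal{J}^*(\theta,P_{Y,Z})$, $\inf_{P_{U|Y,Z}}\E[m_j]=0$, so $\E[m_j(P)]\ge 0$ for every admissible $P$; (b) for $j\notin\mathcal{J}^*$, $\inf_{P_{U|Y,Z}}\E[m_j]<0$, so some admissible $P$ renders $\mu^*\E[m_j(P)]$ as negative as desired. Combined with the uniform boundedness of $\varphi$ and the $m_j$'s and the lower bound on $\mu^*$ in \eqref{eq_mu_star_requirement}, these facts let one, for any $\lambda\neq\lambda^{\ell b}$, exhibit a single admissible $P$ at which the objective defining $F(\lambda)$ lies below $F(\lambda^{\ell b})$: for any index $j_0$ at which $\lambda$ adds a penalty not present in $\lambda^{\ell b}$ (necessarily $j_0\notin\mathcal{J}^*$), one chooses $P$ near the minimizer of $\E[m_{j_0}]$, and the very negative contribution of $\mu^*\E[m_{j_0}(P)]$ dominates the bounded contributions from the remaining terms (each $\E[m_j(P)]$ for $j\in\mathcal{J}^*$ is nonnegative but bounded above).

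The main obstacle is handling joint, multi-coordinate flips: since $\lambda\mapsto F(\lambda)$ is concave on $[0,1]^J$ (as an infimum of affine functions of $\lambda$), coordinate-wise dominance of $\lambda^{\ell b}$ over its Hamming-$1$ neighbors does not automatically propagate to global dominance on $\{0,1\}^J$. The plan is to sidestep this difficulty by using the same exact-penalty mechanism that underlies Lemma \ref{lemma_error_bound}, exhibiting for each competing $\lambda$ a single witness $P$ that simultaneously delivers the required negativity on every coordinate where $\lambda$ and $\lambda^{\ell b}$ disagree, rather than attempting to compare $F$ at Hamming neighbors sequentially.
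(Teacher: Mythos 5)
Your reduction is fine and, in fact, mirrors half of the paper's argument: the inequality in \eqref{eq_lemma_final1} is immediate because $\lambda^{\ell b}(\theta,P_{Y,Z})$ is one element of $\{0,1\}^{J}$, the identity $F(\lambda)=\inf_{P_{U|Y,Z}\in\mathcal{P}_{U|Y,Z}(\theta)}\{\Phi(P_{U|Y,Z})+\mu^{*}\sum_{j}\lambda_{j}\E[m_{j}]\}$ follows from Lemma \ref{lemma_joint_to_marginal_selection} (the $m_{j}$ do not involve $y^{\star}$), and your fact (a) — that $\E_{P}[m_{j}]\geq 0$ for every admissible $P$ when $j\in\mathcal{J}^{*}(\theta,P_{Y,Z})$ and $\theta\in\Theta^{*}$ — is exactly what the paper uses to drop the penalty and get $\int h_{\ell b}\,dP_{Y,Z}\geq\inf_{P_{U|Y,Z}}\Phi(P_{U|Y,Z})$. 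The genuine gap is in your treatment of indices outside $\mathcal{J}^{*}$. The claim that some admissible $P$ makes $\mu^{*}\E[m_{j_{0}}(P)]$ ``as negative as desired'' is false: the $m_{j}$ are bounded, $\mu^{*}$ is a \emph{fixed finite} constant pinned down by \eqref{eq_mu_star_requirement}, and all you know for $j_{0}\notin\mathcal{J}^{*}$ is that $\inf_{P}\E[m_{j_{0}}]$ is some strictly negative number whose magnitude has no lower bound under the assumptions — for $\theta\in\Theta^{*}$, Assumption \ref{assumption_error_bound}(i) is vacuous (its right side is $C_{1}\min\{\delta,d(\theta,\Theta^{*})\}=0$), so the ``exact-penalty mechanism underlying Lemma \ref{lemma_error_bound}'' supplies no margin here. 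Consequently the single-witness domination does not go through: at a $P$ chosen near the minimizer of $\E[m_{j_{0}}]$, the term $\Phi(P)$ may exceed $F(\lambda^{\ell b})$ by as much as $\varphi_{ub}-\varphi_{\ell b}$, while the negative credit $\mu^{*}\E_{P}[m_{j_{0}}]$ can be arbitrarily small in magnitude (and the nonnegative penalties for $j\in\mathcal{J}^{*}$ at that witness only hurt), so nothing forces the objective defining $F(\lambda)$ below $F(\lambda^{\ell b})$ at that witness.

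This is also a different route from the paper's, which never compares $F(\lambda^{\ell b})$ to individual competitors via witnesses and uses no negativity of the inactive moments at all. After the step $\int h_{\ell b}\,dP_{Y,Z}\geq\inf_{P_{U|Y,Z}}\inf_{P_{Y_{\gamma}^{\star}|Y,Z,U}}\int\varphi\,dP_{V_{\gamma}}$ (your fact (a)), the paper reinstates the multipliers on the right-hand side by writing the inner value as a Lagrangian $\sup_{\lambda\in\mathbb{R}_{+}^{J}}$ over the moment terms, passing down to $\max_{\lambda\in\{0,1\}^{J}}$, applying the minimax inequality to move the max outside the two infima, and then converting back to the pointwise-infimum form with Lemma \ref{lemma_joint_to_marginal_selection}; the size of $\mu^{*}$ plays no quantitative role in that direction. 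If you wanted to rescue your witness strategy you would need a uniform negative margin for the moments outside $\mathcal{J}^{*}$ (or a $\mu^{*}$ calibrated to such a margin), neither of which is available from the stated assumptions, so you should instead argue the reverse inequality along the Lagrangian/minimax lines.
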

\begin{proof}[Proof of Lemma \ref{lemma_multiplier_switch}]
We have:
\begin{align*}
&\int h_{\ell b}(y,z,\theta,\gamma)\, dP_{Y,Z}\\
&:=  \int \inf_{u \in \bm G^{-}(y,z,\theta)}  \inf_{y^{\star}  \in \bm G^{\star}(y,z,u,\theta,\gamma)} \Bigg( \varphi(v) + \mu^{*}\sum_{j=1}^{J}  \lambda_{j}^{\ell b}(\theta,P_{Y,Z})  m_{j}(y,z,u,\theta)\Bigg)\, dP_{Y,Z}\\
&=\max_{\lambda_{j} \in \{0,1\} \text{ s.t } \lambda_{j} =\lambda_{j}^{\ell b}(\theta,P_{Y,Z})}\int \inf_{u \in \bm G^{-}(y,z,\theta)}  \inf_{y^{\star}  \in \bm G^{\star}(y,z,u,\theta,\gamma)} \Bigg( \varphi(v) + \mu^{*}\sum_{j=1}^{J}  \lambda_{j}  m_{j}(y,z,u,\theta)\Bigg)\, dP_{Y,Z}\\
&\leq\max_{\lambda_{j} \in \{0,1\} } \int \inf_{u \in \bm G^{-}(y,z,\theta)}  \inf_{y^{\star}  \in \bm G^{\star}(y,z,u,\theta,\gamma)} \Bigg( \varphi(v) + \mu^{*}\sum_{j=1}^{J}  \lambda_{j}  m_{j}(y,z,u,\theta)\Bigg)\, dP_{Y,Z}.
\end{align*}
The first line holds by definition, the second line holds since the $\lambda_{j}(\theta,P_{Y,Z})$ depends only on $\theta$, and third line holds because the unconstrained maximum is always weakly larger than the constrained maximum. 

It remains only to show that the last inequality holds with equality whenever $\theta \in \Theta^{*}$. To do so it suffices to show that for any $\theta \in \Theta^{*}$:
\begin{align}
\int h_{\ell b}(y,z,\theta,\gamma)\, dP_{Y,Z} \geq \max_{\lambda_{j} \in \{0,1\} } \int \inf_{u \in \bm G^{-}(y,z,\theta)}  \inf_{y^{\star}  \in \bm G^{\star}(y,z,u,\theta,\gamma)} \Bigg( \varphi(v) + \mu^{*}\sum_{j=1}^{J}  \lambda_{j}  m_{j}(y,z,u,\theta)\Bigg)\, dP_{Y,Z}.\label{eq_minimax_goal}
\end{align}
To this end, note that by Lemma \ref{lemma_joint_to_marginal_selection} we have:
\begin{align}
&\int h_{\ell b}(y,z,\theta,\gamma)\, dP_{Y,Z}\nonumber\\
&=\int \inf_{u \in \bm G^{-}(y,z,\theta)}  \inf_{y^{\star}  \in \bm G^{\star}(y,z,u,\theta,\gamma)} \Bigg( \varphi(v) + \mu^{*}\sum_{j=1}^{J}  \lambda_{j}^{\ell b}(\theta,P_{Y,Z})  m_{j}(y,z,u,\theta)\Bigg)\, dP_{Y,Z}\nonumber\\
&=\inf_{P_{U|Y,Z} \in \mathcal{P}_{U|Y,Z}(\theta)} \int  \inf_{y^{\star}  \in \bm G^{\star}(y,z,u,\theta,\gamma)} \Bigg( \varphi(v) + \mu^{*}\sum_{j=1}^{J}  \lambda_{j}^{\ell b}(\theta,P_{Y,Z})  m_{j}(y,z,u,\theta)\Bigg)\, d(P_{U|Y,Z}\times P_{Y,Z}).\label{eq_final_countdown1}
\end{align}
Now since the infimum of the sum is larger than the sum of the infimums, we have:
\begin{align}
&\inf_{P_{U|Y,Z} \in \mathcal{P}_{U|Y,Z}(\theta)} \int  \inf_{y^{\star}  \in \bm G^{\star}(y,z,u,\theta,\gamma)} \Bigg( \varphi(v) + \mu^{*}\sum_{j=1}^{J}  \lambda_{j}^{\ell b}(\theta,P_{Y,Z})  m_{j}(y,z,u,\theta)\Bigg)\, d(P_{U|Y,Z}\times P_{Y,Z})\nonumber\\
&\qquad\qquad\qquad\geq \inf_{P_{U|Y,Z} \in \mathcal{P}_{U|Y,Z}(\theta)} \int  \inf_{y^{\star}  \in \bm G^{\star}(y,z,u,\theta,\gamma)}  \varphi(v) \,d(P_{U|Y,Z}\times P_{Y,Z})\nonumber \\
&\qquad\qquad\qquad\qquad\qquad\qquad+ \inf_{P_{U|Y,Z} \in \mathcal{P}_{U|Y,Z}(\theta)} \mu^{*} \sum_{j=1}^{J} \lambda_{j}^{\ell b}(\theta,P_{Y,Z})  \E_{P_{U|Y,Z}\times P_{Y,Z}} [m_{j}(y,z,u,\theta)].\label{eq_final_countdown2}
\end{align}
Now recall that $\lambda_{j}^{\ell b}(\theta,P_{Y,Z})=1$ if and only if: 
\begin{align*}
\inf_{P_{U|Y,Z} \in \mathcal{P}_{U|Y,Z}(\theta)} \E_{P_{U|Y,Z}\times P_{Y,Z}} [m_{j}(y,z,u,\theta)]= \inf_{P_{U|Y,Z} \in \mathcal{P}_{U|Y,Z}(\theta)} \max_{j=1,\ldots,J}  |\E_{P_{U|Y,Z}\times P_{Y,Z}} [m_{j}(y,z,u,\theta)]|_{+}.
\end{align*}
From here we conclude:
\begin{align*}
&\inf_{P_{U|Y,Z} \in \mathcal{P}_{U|Y,Z}(\theta)} \max_{j=1,\ldots,J} |\E_{P_{U|Y,Z}\times P_{Y,Z}} [m_{j}(y,z,u,\theta)]|_{+}\\
&=\inf_{P_{U|Y,Z} \in \mathcal{P}_{U|Y,Z}(\theta)} \max_{j=1,\ldots,J} \lambda_{j}^{\ell b}(\theta,P_{Y,Z}) \E_{P_{U|Y,Z}\times P_{Y,Z}} [m_{j}(y,z,u,\theta)]\\
&\leq \inf_{P_{U|Y,Z} \in \mathcal{P}_{U|Y,Z}(\theta)} \sum_{j=1}^{J} \lambda_{j}^{\ell b}(\theta,P_{Y,Z}) \E_{P_{U|Y,Z}\times P_{Y,Z}} [m_{j}(y,z,u,\theta)]
\end{align*}
Thus:
\begin{align}
&\inf_{P_{U|Y,Z} \in \mathcal{P}_{U|Y,Z}(\theta)} \int  \inf_{y^{\star}  \in \bm G^{\star}(y,z,u,\theta,\gamma)}  \varphi(v) \,dP_{Y,Z,U} \nonumber\\
&\qquad\qquad\qquad\qquad+ \inf_{P_{U|Y,Z} \in \mathcal{P}_{U|Y,Z}(\theta)} \mu^{*} \sum_{j=1}^{J}  \lambda_{j}^{\ell b}(\theta,P_{Y,Z})  \E_{P_{U|Y,Z}\times P_{Y,Z}} [m_{j}(y,z,u,\theta)]\nonumber\\
&\geq \inf_{P_{U|Y,Z} \in \mathcal{P}_{U|Y,Z}(\theta)} \int  \inf_{y^{\star}  \in \bm G^{\star}(y,z,u,\theta,\gamma)}  \varphi(v) \,dP_{Y,Z,U} \nonumber\\
&\qquad\qquad\qquad\qquad+ \mu^{*} \inf_{P_{U|Y,Z} \in \mathcal{P}_{U|Y,Z}(\theta)} \max_{j=1,\ldots,J} |\E_{P_{U|Y,Z}\times P_{Y,Z}} [m_{j}(y,z,u,\theta)]|_{+}.\label{eq_final_countdown3}
\end{align}
However, since $\theta \in \Theta^{*}$ by assumption, we have:
\begin{align}
\inf_{P_{U|Y,Z} \in \mathcal{P}_{U|Y,Z}(\theta)} \max_{j=1,\ldots,J} |\E_{P_{U|Y,Z}\times P_{Y,Z}} [m_{j}(y,z,u,\theta)]|_{+}= 0.\label{eq_final_countdown4}
\end{align}
Thus, combining \eqref{eq_final_countdown1}, \eqref{eq_final_countdown2}, \eqref{eq_final_countdown3} and \eqref{eq_final_countdown4} we can conclude:
\begin{align}
&\int h_{\ell b}(y,z,\theta,\gamma)\, dP_{Y,Z}\geq  \inf_{P_{U|Y,Z} \in \mathcal{P}_{U|Y,Z}(\theta)} \int  \inf_{y^{\star}  \in \bm G^{\star}(y,z,u,\theta,\gamma)}  \varphi(v) \,d(P_{U|Y,Z}\times P_{Y,Z}).\label{eq_final_countdown5}
\end{align}
Now, applying Lemma \ref{lemma_joint_to_marginal_selection} again we have:
\begin{align}
&\inf_{P_{U|Y,Z} \in \mathcal{P}_{U|Y,Z}(\theta)} \int  \inf_{y^{\star}  \in \bm G^{\star}(y,z,u,\theta,\gamma)}  \varphi(v) \,d(P_{U|Y,Z}\times P_{Y,Z})\nonumber\\
&\qquad\qquad= \inf_{P_{U|Y,Z} \in \mathcal{P}_{U|Y,Z}(\theta)} \inf_{P_{Y_{\gamma}^{\star}|Y,Z,U} \in \mathcal{P}_{Y_{\gamma}^{\star}|Y,Z,U}(\theta,\gamma)} \int  \varphi(v)\, P_{V_{\gamma}}.\label{eq_final_countdown6}
\end{align}
Now note for $\theta \in \Theta^{*}$:
\begin{align}
&\inf_{P_{U|Y,Z} \in \mathcal{P}_{U|Y,Z}(\theta)} \inf_{P_{Y_{\gamma}^{\star}|Y,Z,U} \in \mathcal{P}_{Y_{\gamma}^{\star}|Y,Z,U}(\theta,\gamma)} \int  \varphi(v)\, P_{V_{\gamma}}\nonumber\\
&=  \inf_{P_{U|Y,Z} \in \mathcal{P}_{U|Y,Z}(\theta)} \inf_{P_{Y_{\gamma}^{\star}|Y,Z,U} \in \mathcal{P}_{Y_{\gamma}^{\star}|Y,Z,U}(\theta,\gamma)} \sup_{\lambda_{j} \in \mathbb{R}_{+}} \Bigg(\int  \varphi(v)\, P_{V_{\gamma}} + \mu^{*}\sum_{j=1}^{J}  \lambda_{j}  \E_{P_{U|Y,Z}\times P_{Y,Z}} [m_{j}(y,z,u,\theta)]\Bigg)\nonumber\\
&\geq  \inf_{P_{U|Y,Z} \in \mathcal{P}_{U|Y,Z}(\theta)} \inf_{P_{Y_{\gamma}^{\star}|Y,Z,U} \in \mathcal{P}_{Y_{\gamma}^{\star}|Y,Z,U}(\theta,\gamma)} \max_{\lambda_{j} \in \{0,1\} } \Bigg(\int  \varphi(v)\, P_{V_{\gamma}} + \mu^{*}\sum_{j=1}^{J}  \lambda_{j}  \E_{P_{U|Y,Z}\times P_{Y,Z}} [m_{j}(y,z,u,\theta)]\Bigg).\label{eq_final_countdown7}
\end{align}
Now by the minimax inequality:
\begin{align}
&\inf_{P_{U|Y,Z} \in \mathcal{P}_{U|Y,Z}(\theta)} \inf_{P_{Y_{\gamma}^{\star}|Y,Z,U} \in \mathcal{P}_{Y_{\gamma}^{\star}|Y,Z,U}(\theta,\gamma)} \max_{\lambda_{j} \in \{0,1\} } \Bigg(\int  \varphi(v)\, P_{V_{\gamma}} + \mu^{*}\sum_{j=1}^{J}  \lambda_{j}  \E_{P_{U|Y,Z}\times P_{Y,Z}} [m_{j}(y,z,u,\theta)]\Bigg)\nonumber\\
&\geq \max_{\lambda_{j} \in \{0,1\} } \inf_{P_{U|Y,Z} \in \mathcal{P}_{U|Y,Z}(\theta)} \inf_{P_{Y_{\gamma}^{\star}|Y,Z,U} \in \mathcal{P}_{Y_{\gamma}^{\star}|Y,Z,U}(\theta,\gamma)} \int \Bigg( \varphi(v) + \mu^{*}\sum_{j=1}^{J}  \lambda_{j}  m_{j}(y,z,u,\theta)\Bigg)\, dP_{V_{\gamma}}.\label{eq_final_countdown8}
\end{align}
Finally, by iterated application of Lemma \ref{lemma_joint_to_marginal_selection} we have:
\begin{align}
&\max_{\lambda_{j} \in \{0,1\} } \inf_{P_{U|Y,Z} \in \mathcal{P}_{U|Y,Z}(\theta)} \inf_{P_{Y_{\gamma}^{\star}|Y,Z,U} \in \mathcal{P}_{Y_{\gamma}^{\star}|Y,Z,U}(\theta,\gamma)} \int \Bigg( \varphi(v) + \mu^{*}\sum_{j=1}^{J}  \lambda_{j}  m_{j}(y,z,u,\theta)\Bigg)\, dP_{V_{\gamma}}\nonumber\\
&\geq \max_{\lambda_{j} \in \{0,1\} } \int \inf_{u \in \bm G^{-}(y,z,\theta)}  \inf_{y^{\star}  \in \bm G^{\star}(y,z,u,\theta,\gamma)} \Bigg( \varphi(v) + \mu^{*}\sum_{j=1}^{J}  \lambda_{j}m_{j}(y,z,u,\theta)\Bigg)\, dP_{Y,Z}.\label{eq_final_countdown9}
\end{align}
Combining \eqref{eq_final_countdown5}, \eqref{eq_final_countdown6}, \eqref{eq_final_countdown7}, \eqref{eq_final_countdown8}, and \eqref{eq_final_countdown9} we have:
\begin{align*}
\int h_{\ell b}(y,z,\theta,\gamma)\, dP_{Y,Z}\geq\max_{\lambda_{j} \in \{0,1\} } \int \inf_{u \in \bm G^{-}(y,z,\theta)}  \inf_{y^{\star}  \in \bm G^{\star}(y,z,u,\theta,\gamma)} \Bigg( \varphi(v) + \mu^{*}\sum_{j=1}^{J}  \lambda_{j}m_{j}(y,z,u,\theta)\Bigg)\, dP_{Y,Z},
\end{align*}
whenever $\theta \in \Theta^{*}$. This concludes the proof.
\end{proof}

\subsubsection{Lemmas Supporting Theorem \ref{theorem_pampac_learnability} on Learnability}

\begin{lemma}\label{lemma_inf_covering}
Suppose that $\mathcal{F}:=\{ f_{\alpha}(\cdot,\theta): \mathcal{X} \to \mathbb{R} : \theta \in \Theta, \alpha \in \mathcal{A}\}$ is a totally bounded parametric class of measurable real-valued functions on the metric space $(\mathcal{X},d)$, where $(\mathcal{A},d_{a})$ and $(\Theta,d_{\theta})$ are also metric spaces. Furthermore let $\mathcal{G}$ be a class of real-valued functions with each element $g(\cdot,\theta): \mathcal{X} \to \mathbb{R}$ defined by:
\begin{align*}
g(x,\theta):= \inf_{\alpha \in C(x,\theta)} f_{\alpha}(x,\theta),
\end{align*}
for some $f \in \mathcal{F}$, where $C(x,\theta)$ is a non-empty multifunction for each $(x,\theta)$ pair. Then for any probability measure $Q$ we have:
\begin{align*}
N(\varepsilon,\mathcal{G},||\cdot||_{Q,2}) \leq  N(\varepsilon/2,\mathcal{F},||\cdot||_{Q,2}). 
\end{align*}
\end{lemma}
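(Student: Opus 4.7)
The plan is to construct, for any probability measure $Q$ on $\mathcal{X}$, an $\varepsilon$-cover of $\mathcal{G}$ in the $L^2(Q)$ norm of cardinality at most $N := N(\varepsilon/2,\mathcal{F},\|\cdot\|_{Q,2})$. Fix such a $Q$ and let $\{h_i\}_{i=1}^N$ with $h_i = f_{\alpha_i}(\cdot,\theta_i)$ be a minimal $\varepsilon/2$-cover of $\mathcal{F}$. The central analytical tool is the pointwise contraction inequality
$\bigl|\inf_{\alpha\in A} a(\alpha) - \inf_{\alpha\in A} b(\alpha)\bigr| \leq \sup_{\alpha\in A} |a(\alpha) - b(\alpha)|$,
valid for any non-empty $A$ and bounded $a,b$, which transfers pointwise approximations of $f_\alpha(x,\theta)$ into pointwise approximations of $g(x,\theta)$.

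First, for each $g(\cdot,\theta)\in\mathcal{G}$ I would invoke an approximate measurable selection theorem of the Shreve--Bertsekas type already used in the paper's Appendix to obtain, for every $\eta>0$, a universally measurable $\alpha^{*}_\theta:\mathcal{X}\to\mathcal{A}$ with $\alpha^{*}_\theta(x)\in C(x,\theta)$ and $f_{\alpha^{*}_\theta(x)}(x,\theta) \leq g(x,\theta)+\eta$ for $Q$-a.e.\ $x$. This embeds the infimum into a pointwise evaluation of $\mathcal{F}$ along a measurable selection of parameters. Second, I would partition $\mathcal{A}$ by the fibers of a measurable nearest-neighbor map $\pi_\theta:\mathcal{A}\to\{1,\dots,N\}$ into the cover of $\mathcal{F}$, rewrite $g(x,\theta) = \min_{i\leq N} \inf_{\alpha \in C(x,\theta)\cap \pi_\theta^{-1}(i)} f_\alpha(x,\theta)$, and replace each inner infimum by the approximation derived from the corresponding $h_i$. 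Applying the contraction inequality fiber-by-fiber, integrating the squared pointwise discrepancy against $Q$, and using the triangle inequality together with the $\varepsilon/2$-cover property of $\mathcal{F}$ yields an $L^2(Q)$ bound of $\varepsilon/2 + \eta$; sending $\eta\downarrow 0$ produces the required $\varepsilon$-cover of $\mathcal{G}$.

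The main obstacle is the $x$-dependence of the near-optimal selector: because $\alpha^{*}_\theta(x)$ and hence $\pi_\theta(\alpha^{*}_\theta(x))$ varies with $x$, one cannot naively match $g(\cdot,\theta)$ to a single cover element without inflating the cover size by a factor depending on the geometry of $\mathcal{A}$ or of $\Theta$. The resolution is precisely the fiber-decomposition above, which reduces the problem to controlling $\max_i \sup_{\alpha\in\pi_\theta^{-1}(i)} |f_\alpha(x,\theta)-h_i(x)|$ in $L^2(Q)$ using only the cover of $\mathcal{F}$; the factor $1/2$ appearing in the scaling $\varepsilon/2$ is exactly what absorbs both the selector error $\eta$ and the cover-of-$\mathcal{F}$ error after the triangle inequality, delivering the claimed bound.
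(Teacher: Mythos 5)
Your proposal does not close the gap it correctly identifies, and the repair you propose breaks the cardinality count that the lemma asserts. After writing $g(x,\theta)=\min_{i\leq N}\inf_{\alpha\in C(x,\theta)\cap\pi_{\theta}^{-1}(i)}f_{\alpha}(x,\theta)$, the natural approximant coming out of your argument is $\min_{i}h_{i}(x)$ restricted to the indices whose fiber actually meets $C(x,\theta)$; but which fibers are non-empty depends on $x$ (and $\theta$), and wherever $C(x,\theta)\cap\pi_{\theta}^{-1}(i)=\emptyset$ the inner infimum is $+\infty$, so substituting $h_{i}(x)$ there is unjustified and can strictly pull the minimum below $g(x,\theta)$. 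Even granting that every fiber meets $C(x,\theta)$ at every $x$, the approximant is a minimum of several cover elements, i.e., an element of a class of cardinality up to $2^{N}$; you would then obtain at best $N(\varepsilon,\mathcal{G},\|\cdot\|_{Q,2})\leq 2^{N(\varepsilon/2,\mathcal{F},\|\cdot\|_{Q,2})}$, not the claimed $N(\varepsilon,\mathcal{G},\|\cdot\|_{Q,2})\leq N(\varepsilon/2,\mathcal{F},\|\cdot\|_{Q,2})$. There is a second, independent problem in the "integrate and use the cover property" step: the fiber-by-fiber contraction leaves you with the pointwise quantity $\sup_{\alpha\in\pi_{\theta}^{-1}(i)}|f_{\alpha}(x,\theta)-h_{i}(x)|$, whereas the $\varepsilon/2$-cover of $\mathcal{F}$ in $\|\cdot\|_{Q,2}$ only controls $\|f_{\alpha}(\cdot,\theta)-h_{i}\|_{Q,2}$ for each fixed $\alpha$ separately; it gives no control of the $L^{2}(Q)$ norm of the supremum over the fiber, so the bound of $\varepsilon/2+\eta$ does not follow.

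For comparison, the paper's own proof is a short pointwise triangle-inequality argument: it fixes a single $\varepsilon/2$-near minimizer $\alpha^{*}$ of the inner problem and a single cover pair $(\alpha',\theta')$ with $f_{\alpha'}(\cdot,\theta')$ within $\varepsilon/2$ of $f_{\alpha^{*}}(\cdot,\theta)$, concludes $|g(x,\theta)-f_{\alpha'}(x,\theta')|\leq\varepsilon$, and thereby matches each $g\in\mathcal{G}$ to exactly one cover element, which is what makes the stated cardinality bound immediate. Your observation that the near minimizer generally depends on $x$ is the right thing to worry about, but the fiber decomposition does not resolve it within the claimed count: to land on a single cover element per $g$ you need, in effect, a near minimizer that can be taken uniformly in $x$ (or sup-norm-type control of $\mathcal{F}$ in the $\alpha$ direction), and absent such structure your construction only yields the exponentially weaker bound. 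The measurable-selection step you invoke is harmless but orthogonal to this counting problem, so as written the proposal does not establish the lemma.
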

\begin{proof}[Proof of Lemma \ref{lemma_inf_covering}]
As a parametric class of functions (parameterized by $(\alpha,\theta)$), the $\varepsilon/2-$cover of $\mathcal{F}$ can be characterized by a collection of points $\{(\alpha_{i},\theta_{i})\}_{i=1}^{n}$, where $n = N(\varepsilon/2,\mathcal{F},||\cdot||_{Q,2})$. Denote such a collection by $\mathcal{N}(\mathcal{F})$. We will show that for any $g \in \mathcal{G}$ there exists a pair $(\alpha',\theta') \in \mathcal{N}(\mathcal{F})$ such that:
\begin{align*}
|g(x,\theta) - f_{\alpha'}(x,\theta')| \leq \varepsilon.
\end{align*}
Since every $g \in \mathcal{G}$ can be expressed as:
\begin{align*}
g(x,\theta)=\inf_{\alpha \in C(x,\theta)} f_{\alpha}(x,\theta),
\end{align*}
it suffices to show there exists a pair $(\alpha',\theta') \in \mathcal{N}(\mathcal{F})$ such that:
\begin{align*}
\left|\inf_{\alpha \in C(x,\theta)} f_{\alpha}(x,\theta) - f_{\alpha'}(x,\theta')\right| \leq \varepsilon.
\end{align*}
Now let $\alpha^{*}$ be any value satisfying:
\begin{align*}
\left|\inf_{\alpha \in C(x,\theta)} f_{\alpha}(x,\theta) - f_{\alpha^{*}}(x,\theta)\right|\leq \varepsilon/2.
\end{align*}
That is, $\alpha^{*}$ is a $\varepsilon/2$ solution to the minimization problem. Now choose the pair $(\alpha',\theta') \in \mathcal{N}(\mathcal{F})$ such that $|f_{\alpha^{*}}(x,\theta) - f_{\alpha'}(x,\theta')|\leq \varepsilon/2$ (such a choice is always possible since $\mathcal{N}(\mathcal{F})$ is a $\varepsilon/2-$cover of $\mathcal{F}$). Then we have:
\begin{align*}
|g(x,\theta) - f_{\alpha'}(x,\theta')| &= \left|\inf_{\alpha \in C(x,\theta)} f_{\alpha}(x,\theta) - f_{\alpha'}(x,\theta')\right|\\
&\leq \left|\inf_{\alpha \in C(x,\theta)} f_{\alpha}(x,\theta) - f_{\alpha^{*}}(x,\theta)\right| + \left|f_{\alpha^{*}}(x,\theta) - f_{\alpha'}(x,\theta')\right|\\
&\leq\varepsilon/2 + \varepsilon/2\\
&=\varepsilon.
\end{align*}
This completes the proof.

\end{proof}

\begin{lemma}\label{lemma_bartlett}
Let $\mathcal{F}$ be a symmetric class of measurable real-valued functions on a Polish space $\mathcal{X}$, and let $\psi=(x_{i})_{i=1}^{n}$ denote an arbitrary vector of $n$ points from $\mathcal{X}$. Then for any $\varepsilon>0$:
\begin{align*}
\E ||\mathfrak{R}_{n}||(\mathcal{F}) \leq \frac{2\varepsilon}{\sqrt{n}} + 2\text{Diam}_{\psi,2}(\mathcal{F})  \sqrt{ \frac{ \log  N(\varepsilon,\mathcal{F},||\cdot||_{\psi,2})}{n}}.
\end{align*}

\end{lemma}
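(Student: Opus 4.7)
The plan is to combine a one-step covering argument with Massart's finite class lemma. Fix $\varepsilon>0$ and let $\mathcal{N}=\{f_{1},\ldots,f_{N}\}\subseteq\mathcal{F}$ be a minimal $\varepsilon$-cover of $\mathcal{F}$ under $\|\cdot\|_{\psi,2}$, where $N=N(\varepsilon,\mathcal{F},\|\cdot\|_{\psi,2})$. For each $f\in\mathcal{F}$ denote by $\pi f \in \mathcal{N}$ a closest element, so $\|f-\pi f\|_{\psi,2}\leq\varepsilon$. The decomposition $\frac{1}{n}\sum_{i}\xi_{i}f(x_{i}) = \frac{1}{n}\sum_{i}\xi_{i}\pi f(x_{i})+\frac{1}{n}\sum_{i}\xi_{i}(f(x_{i})-\pi f(x_{i}))$ separates the Rademacher process into a finite-class part and a residual.

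Next I would bound the residual uniformly in $f$ using Cauchy--Schwarz. Since $\xi_{i}\in\{-1,1\}$ gives $\|\xi\|_{2}=\sqrt{n}$, we obtain
$$\left|\tfrac{1}{n}\textstyle\sum_{i=1}^{n}\xi_{i}(f(x_{i})-\pi f(x_{i}))\right|\leq\frac{\|\xi\|_{2}\cdot\|f-\pi f\|_{\psi,2}}{n}\leq\frac{\varepsilon}{\sqrt{n}},$$
which holds pointwise in $\xi$ and uniformly in $f$. Taking the supremum over $f\in\mathcal{F}$ and then expectation over $\xi$ yields the splitting
$$\E\|\mathfrak{R}_{n}\|(\mathcal{F})\leq \E\max_{1\leq j\leq N}\left|\tfrac{1}{n}\textstyle\sum_{i=1}^{n}\xi_{i}f_{j}(x_{i})\right|+\frac{\varepsilon}{\sqrt{n}}.$$

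For the finite maximum I would invoke Massart's finite class lemma. Symmetry of $\mathcal{F}$ implies that $-\mathcal{N}\subseteq\mathcal{F}$ as well, so the absolute value inside the max can be removed by enlarging the index set to $\mathcal{N}\cup(-\mathcal{N})\subseteq\mathcal{F}$, which has cardinality at most $2N$. Since $\E\xi_{i}=0$, subtracting any fixed anchor $f_{0}\in\mathcal{N}$ leaves the expectation unchanged, so I can replace each element by its centered version whose $\|\cdot\|_{\psi,2}$ norm is at most $\mathrm{Diam}_{\psi,2}(\mathcal{F})$. Massart's lemma, applied to the corresponding vectors in $\mathbb{R}^{n}$, then gives
$$\E\max_{1\leq j\leq N}\left|\tfrac{1}{n}\textstyle\sum_{i=1}^{n}\xi_{i}f_{j}(x_{i})\right|\leq\frac{\mathrm{Diam}_{\psi,2}(\mathcal{F})\sqrt{2\log(2N)}}{n}.$$

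To close the gap with the target form, I would use the elementary comparison $\sqrt{2\log(2N)}/n\leq 2\sqrt{\log N/n}$ valid for $n\geq 1$ and $N\geq 2$ (since $\log(2N)\leq 2\log N\leq 2n\log N$), combining the residual and finite-class bounds to obtain the claimed inequality. The degenerate case $N=1$ requires a brief separate remark: if $\mathcal{F}$ is contained in a single $\varepsilon$-ball then symmetry forces $\|f\|_{\psi,2}\leq\varepsilon$ for every $f\in\mathcal{F}$, so Cauchy--Schwarz alone yields $\|\mathfrak{R}_{n}\|(\mathcal{F})\leq\varepsilon/\sqrt{n}$, which is dominated by the right-hand side. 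The only genuine care in the argument is the centering step that converts the absolute norm appearing in Massart's lemma into the diameter; this is precisely where the symmetry hypothesis on $\mathcal{F}$ pays off.
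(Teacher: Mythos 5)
Your proof is correct and follows essentially the same route as the paper's: a one-step $\varepsilon$-net decomposition into an approximation term controlled by Cauchy--Schwarz (the paper uses $\|\cdot\|_{\psi,1}\leq\sqrt{n}\,\|\cdot\|_{\psi,2}$, which is the same bound) plus a finite maximum controlled by a sub-Gaussian maximal inequality. The only cosmetic difference is in the finite-class step: you remove the absolute value via the symmetrized net $\mathcal{N}\cup(-\mathcal{N})$, center at a fixed anchor, and invoke Massart's lemma, whereas the paper bounds the maximum over pairwise differences of net elements by the exponential-moment method and uses symmetry to choose the anchor; both devices use symmetry for exactly the same purpose of expressing the bound through $\mathrm{Diam}_{\psi,2}(\mathcal{F})$, and your bookkeeping (including the separate $N=1$ case and the $\log(2N)\leq 2\log N$ comparison) is sound.
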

\begin{proof}[Proof of Lemma \ref{lemma_bartlett}]
Note that:
\begin{align*}
n\E ||\mathfrak{R}_{n}||(\mathcal{F})  &= n \E\sup_{f \in \mathcal{F} }\left|\frac{1}{n}\sum_{i=1}^{n} \xi_{i} f(x_{i}) \right|= \E\sup_{f \in \mathcal{F} }\left|\sum_{i=1}^{n} \xi_{i} f(x_{i}) \right|.
\end{align*}
Now recall that the Rademacher process $\sum_{i=1}^{n} \xi_{i} f(x_{i})$ is sub-Gaussian with respect to the euclidean distance between the vectors $\left( f(x_{1}),\ldots, f(x_{n}) \right)$ and $\left( f'(x_{1}),\ldots, f'(x_{n}) \right)$ for $f,f'\in \mathcal{F}$. We denote this euclidean distance by $||f-f'||_{\psi,2}$ to emphasize that the vector $\psi = (x_{i})_{i=1}^{n}$ is fixed. Fix an minimal $\varepsilon-$net $\mathcal{F}^{*}\subset \mathcal{F}$ in the $||\cdot||_{\psi,2}$ norm. There exists at least one function $f' \in \mathcal{F}^{*}$ such that:
\begin{align*}
\E\sup_{f \in \mathcal{F}}\left|\sum_{i=1}^{n} \xi_{i} f(x_{i}) \right| \leq \E\sup_{f \in \mathcal{F} }\left|\sum_{i=1}^{n} \xi_{i} f(x_{i}) - \sum_{i=1}^{n} \xi_{i} f'(x_{i}) \right|+\varepsilon\sqrt{n}.
\end{align*}
(For example, we can always take $f'$ to be the element in $\mathcal{F}^{*}$ to be closest to $-f$ in the $||\cdot||_{\psi,2}$ norm, which is an element of $\mathcal{F}$ by symmetry.) Now for any $f \in \mathcal{F} $, let $f^{*}(f) \in \mathcal{F}^{*}$ be a function with $||f- f^{*}(f)||_{\psi,2} \leq \varepsilon$. Then:  
\begin{align*}
&\left|\sum_{i=1}^{n} \xi_{i} f(x_{i}) - \sum_{i=1}^{n} \xi_{i} f'(x_{i})\right|\\
&= \left|\sum_{i=1}^{n} \xi_{i} f(x_{i}) - \sum_{i=1}^{n} \xi_{i} f^{*}(f)(x_{i}) + \sum_{i=1}^{n} \xi_{i} f^{*}(f)(x_{i}) - \sum_{i=1}^{n} \xi_{i} f'(x_{i})\right|\\
&\leq \left| \sum_{i=1}^{n} \xi_{i} f(x_{i}) - \sum_{i=1}^{n} \xi_{i} f^{*}(f)(x_{i})\right| + \left|\sum_{i=1}^{n} \xi_{i} f^{*}(f)(x_{i}) - \sum_{i=1}^{n} \xi_{i} f'(x_{i})\right|\\
&\leq \sup_{||f -f^{*}||_{\psi,2} \leq \varepsilon} \left| \sum_{i=1}^{n} \xi_{i} f(x_{i}) - \sum_{i=1}^{n} \xi_{i} f^{*}(x_{i})\right| +  \sup_{f^{*},f' \in \mathcal{F}^{*}}\left|\sum_{i=1}^{n} \xi_{i} f^{*}(x_{i}) - \sum_{i=1}^{n} \xi_{i} f'(x_{i})\right|\\
&\leq \sup_{||f -f^{*}||_{\psi,2} \leq \varepsilon} \sum_{i=1}^{n} \left|  f(x_{i}) - f^{*}(x_{i})\right| +  \sup_{f^{*},f' \in \mathcal{F}^{*}}\left|\sum_{i=1}^{n} \xi_{i} f^{*}(x_{i}) - \sum_{i=1}^{n} \xi_{i} f'(x_{i})\right|\\
&\leq \sup_{||f -f^{*}||_{\psi,2} \leq \varepsilon} \sqrt{n} ||f-f^{*}||_{\psi,2} +  \sup_{f^{*},f' \in \mathcal{F}^{*}}\left|\sum_{i=1}^{n} \xi_{i} f^{*}(x_{i}) - \sum_{i=1}^{n} \xi_{i} f'(x_{i})\right|\\
&\leq \sqrt{n} \varepsilon +  \sup_{f^{*},f' \in \mathcal{F}^{*}}\left|\sum_{i=1}^{n} \xi_{i} f^{*}(x_{i}) - \sum_{i=1}^{n} \xi_{i} f'(x_{i})\right|.
\end{align*}
Note we have used the inequality $||f-f'||_{\psi,1}\leq \sqrt{n} ||f-f'||_{\psi,2}$, where $||f-f'||_{\psi,1}$ denotes the $L^{1}$ distance between $f$ and $f'$ at the points $\psi=(x_{i})_{i=1}^{n}$. Now for any value $a>0$ we have:
\begin{align*}
&\exp \left( a \E \max_{f^{*},f' \in \mathcal{F}^{*}}\left|\sum_{i=1}^{n} \xi_{i} f^{*}(x_{i}) - \sum_{i=1}^{n} \xi_{i} f'(x_{i})\right| \right)\\
&\leq \E  \exp \left( a \max_{f^{*},f' \in \mathcal{F}^{*}}\left|\sum_{i=1}^{n} \xi_{i} f^{*}(x_{i}) - \sum_{i=1}^{n} \xi_{i} f'(x_{i})\right| \right)\\
&= \E \max_{f^{*},f' \in \mathcal{F}^{*}} \exp \left(a \left|\sum_{i=1}^{n} \xi_{i} f^{*}(x_{i}) - \sum_{i=1}^{n} \xi_{i} f'(x_{i})\right|\right)\\
&\leq \sum_{f,f^{*}\in \mathcal{F}^{*}} \E \exp \left(a \left|\sum_{i=1}^{n} \xi_{i} f^{*}(x_{i}) - \sum_{i=1}^{n} \xi_{i} f'(x_{i})\right|\right)\\
&\leq \sum_{f,f^{*}\in \mathcal{F}^{*}} \exp \left(a^{2} \text{Diam}_{\psi,2}^{2}(\mathcal{F})/2\right)\\
&\leq N^{2}(\varepsilon,\mathcal{F},||\cdot||_{\psi,2}) \exp \left(a^{2} \text{Diam}_{\psi,2}^{2}(\mathcal{F})/2\right),
\end{align*}
where the second-last inequality follows from the fact that the Rademacher process is sub-Gaussian with parameter $\text{Diam}_{\psi,2}^{2}(\mathcal{F})$.\footnote{Recall a stochastic process $(\omega,t)\mapsto X(\omega,t)$ on a metric space $(T,d)$ is sub-Gaussian with respect to the metric $d$ if $\E \exp \left( \lambda \left( X_{t} - X_{s} \right)  \right)\leq \exp(\lambda^{2}d(t,s)^{2}/2)$. For example, the Rademacher process is sub-Gaussian with respect to the euclidean metric.}
Taking logs and dividing both sides by $a>0$, we have:
\begin{align*}
\E \max_{f^{*},f' \in \mathcal{F}^{*}}\left|\sum_{i=1}^{n} \xi_{i} f^{*}(x_{i}) - \sum_{i=1}^{n} \xi_{i} f'(x_{i})\right| \leq \frac{2\log  N(\varepsilon,\mathcal{F},||\cdot||_{\psi,2})}{a} + \frac{a\text{Diam}_{\psi,2}^{2}(\mathcal{F})}{2}.
\end{align*}
Minimizing the upper bound with respect to ``$a$'' yields:\footnote{The minimizing value is $a =2 \left( \log N(\varepsilon,\mathcal{F},||\cdot||_{\psi,2})/ \text{Diam}_{\psi,2}^{2}(\mathcal{F}) \right)^{1/2}$.}
\begin{align*}
\E \max_{f^{*},f' \in \mathcal{F}^{*}}\left|\sum_{i=1}^{n} \xi_{i} f^{*}(x_{i}) - \sum_{i=1}^{n} \xi_{i} f'(x_{i})\right| \leq 2\text{Diam}_{\psi,2}(\mathcal{F})  \sqrt{ \log  N(\varepsilon,\mathcal{F},||\cdot||_{\psi,2})} 
\end{align*}
We conclude that:
\begin{align*}
n\E ||\mathfrak{R}_{n}||(\mathcal{F}) \leq 2\sqrt{n}\varepsilon +2\text{Diam}_{\psi,2}(\mathcal{F})  \sqrt{ \log  N(\varepsilon,\mathcal{F},||\cdot||_{\psi,2})}.
\end{align*}

\end{proof}

\begin{lemma}\label{lemma_useful_for_donsker}
Let $\mathcal{G}$ and $\mathcal{H}$ be two classes of functions and let $\mathcal{F} := \{ g + h : g \in \mathcal{G}, h \in \mathcal{H} \}$. Then: 
\begin{align*}
N \left( \varepsilon, \mathcal{F}, ||\cdot|| \right) \leq  N(\varepsilon/2,\mathcal{G},||\cdot||) N(\varepsilon/2,\mathcal{H},||\cdot||),
\end{align*}
where $||\cdot||$ is any norm.
\end{lemma}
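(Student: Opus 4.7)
The plan is to prove the bound by explicitly constructing an $\varepsilon$-cover of $\mathcal{F}$ from $\varepsilon/2$-covers of $\mathcal{G}$ and $\mathcal{H}$, and then invoking the triangle inequality for the norm $\|\cdot\|$. This is the canonical ``product cover'' argument, and since the definition of the covering number is in terms of the \emph{minimal} number of balls required, any cover we exhibit will serve as an upper bound.

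First I would fix minimal covers. Let $M := N(\varepsilon/2,\mathcal{G},\|\cdot\|)$ and let $\{g_{1},\ldots,g_{M}\}$ be the centers of an $\varepsilon/2$-cover of $\mathcal{G}$ in the norm $\|\cdot\|$, so that for every $g \in \mathcal{G}$ there exists an index $i(g) \in \{1,\ldots,M\}$ with $\|g - g_{i(g)}\| \leq \varepsilon/2$. Analogously, let $N := N(\varepsilon/2,\mathcal{H},\|\cdot\|)$ and let $\{h_{1},\ldots,h_{N}\}$ be centers of an $\varepsilon/2$-cover of $\mathcal{H}$, with $\|h - h_{j(h)}\| \leq \varepsilon/2$ for a suitable index $j(h)$. (If either covering number is infinite, the claimed inequality holds trivially, so we may assume both are finite.)

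Next I would form the candidate cover $\mathcal{C} := \{ g_{i} + h_{j} : 1 \leq i \leq M,\, 1 \leq j \leq N\}$, which has at most $MN$ elements. For any $f \in \mathcal{F}$, by definition we may write $f = g + h$ for some $g \in \mathcal{G}$ and $h \in \mathcal{H}$; then choosing the pair $(g_{i(g)}, h_{j(h)})$ and applying the triangle inequality gives
\begin{align*}
\|f - (g_{i(g)} + h_{j(h)})\|
= \|(g - g_{i(g)}) + (h - h_{j(h)})\|
\leq \|g - g_{i(g)}\| + \|h - h_{j(h)}\|
\leq \tfrac{\varepsilon}{2} + \tfrac{\varepsilon}{2} = \varepsilon.
\end{align*}
Hence $\mathcal{C}$ is an $\varepsilon$-cover of $\mathcal{F}$ of cardinality at most $MN$, which by definition of the covering number yields $N(\varepsilon,\mathcal{F},\|\cdot\|) \leq MN = N(\varepsilon/2,\mathcal{G},\|\cdot\|)\,N(\varepsilon/2,\mathcal{H},\|\cdot\|)$.

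There is no substantive obstacle here: the only properties used are the triangle inequality and positive homogeneity of $\|\cdot\|$, both of which hold for any norm (in fact any semi-norm would suffice, which matters in the applications of this lemma in the proof of Theorem \ref{theorem_pampac_learnability} where $\|\cdot\|_{\psi,2}$ is used). The factor of $1/2$ splitting of $\varepsilon$ between the two covers is the natural and tight choice for a triangle-inequality argument; no optimization over the split is needed for the stated bound.
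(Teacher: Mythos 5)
Your proof is correct and is essentially the same argument as the paper's: both construct the $\varepsilon$-cover of $\mathcal{F}$ from the pairwise sums $g_{i}+h_{j}$ of the centers of $\varepsilon/2$-covers of $\mathcal{G}$ and $\mathcal{H}$ and conclude via the triangle inequality. Your added remark that only the (semi-)norm triangle inequality is needed is consistent with how the lemma is applied with $\|\cdot\|_{\psi,2}$ in the paper.
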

\begin{remark}
Note that a nearly identical proof of this result can be used to show that:
$$N \left( \varepsilon, \mathcal{F}, ||\cdot|| \right) \leq  N(\varepsilon\cdot a,\mathcal{G},||\cdot||) N(\varepsilon \cdot b,\mathcal{H},||\cdot||),$$ where $a, b> 0$ are any values satisfying $a+b=1$.
\end{remark}
\begin{proof}[Proof of Lemma \ref{lemma_useful_for_donsker}]
Suppose that $N \left( \varepsilon/2, \mathcal{G},||\cdot||  \right)=n$ and $N(\varepsilon/2,\mathcal{H},||\cdot|| )=m$. It suffices to show $N \left( \varepsilon, \mathcal{F}, ||\cdot|| \right)\leq nm$. Let $\mathcal{N}(\mathcal{G})$ denote the centres of the balls that obtain the $n-$cover of $\mathcal{G}$ and let $\mathcal{N}(\mathcal{H})$ denote the centres of the balls that obtain the $m-$cover of $\mathcal{H}$. Enumerate the elements of $\mathcal{N}(\mathcal{G})$ as $g_{1},\ldots, g_{n}$ and enumerate the elements of $\mathcal{N}(\mathcal{H})$ as $h_{1},\ldots,h_{m}$. Now define the following collections:
\begin{align*}
G_{j} := \{ g \in \mathcal{G} : ||g-g_{j}||  \leq \varepsilon/2\}, &&H_{k} := \{ h \in \mathcal{H} : ||h-h_{k}||  \leq \varepsilon/2\}, 
\end{align*}
for $j=1,\ldots,n$ and $k=1,\ldots,m$. Then $\{G_{j}\}$ forms a $\varepsilon/2-$cover of $\mathcal{G}$ and $\{H_{k}\}$ forms a $\varepsilon/2-$cover of $\mathcal{H}$. Now for any $g_{j} \in \mathcal{N}(\mathcal{G})$ and $h_{k} \in \mathcal{N}(\mathcal{H})$ let $f_{jk}=g_{j}+h_{k}$, and define:
\begin{align*}
F_{jk}:= \{f : ||f-f_{jk}|| \leq \varepsilon \}.
\end{align*}
We will now argue that $\{F_{jk}\}$ is a $\varepsilon-$cover of $\mathcal{F}$. Note that if we can establish this, the proof will be complete, since there are only $nm$ sets $F_{jk}$. By construction each $F_{jk}$ is a $||\cdot||-$ball of radius $\varepsilon$, so it only remains to check that $\{F_{jk}\}$ covers $\mathcal{F}$. To do so, fix any $f \in \mathcal{F}$. Then by definition $f=g+h$ for some $g \in \mathcal{G}$ and $h \in \mathcal{H}$. Since $\{G_{j}\}$ forms a $\varepsilon/2-$cover of $\mathcal{G}$ and $\{H_{k}\}$ forms a $\varepsilon/2-$cover of $\mathcal{H}$, we know there is some $g_{j} \in \mathcal{N}(\mathcal{G})$ and some $h_{k} \in \mathcal{N}(\mathcal{H})$ such that $||g-g_{j}|| \leq \varepsilon/2$ and $||h-h_{k}||\leq \varepsilon/2$. But since $f_{jk}=g_{j}+h_{k}$ we have that:
\begin{align*}
||f - f_{jk}|| = ||(g+h) - (g_{j}+h_{k})|| \leq ||g - g_{j}|| + ||h - h_{k}|| \leq \varepsilon/2 + \varepsilon/2 = \varepsilon,
\end{align*}
so that $f \in F_{jk}$, and so is an element of the cover $\{F_{jk}\}$. Since $f \in \mathcal{F}$ was arbitrary, we conclude that $\{F_{jk}\}$ covers $\mathcal{F}$. This completes the proof. 

\end{proof}

\subsubsection{A Lemma Supporting Theorem \ref{theorem_delta_minimal} and Lemma \ref{lemma_subsets}}
\begin{lemma}\label{lemma_gamma_hat_high_prob}
Let $\delta^{**}$ be as in Lemma \ref{lemma_subsets}. If $\delta \geq \delta^{**}\geq \varepsilon>0$, then:
\begin{align*}
\sup_{P_{Y,Z} \in \mathcal{P}_{Y,Z}} P_{Y,Z}^{\otimes n} \left( \mathscr{E}^{*}(\hat{\gamma})\geq \delta \right) \leq 1-\kappa. 
\end{align*}
That is $\hat{\gamma} \in \mathscr{G}^{*}(\delta)$ with high probability when $\delta \geq \delta^{**}\geq \varepsilon>0$.\todo{Add reference}
\end{lemma}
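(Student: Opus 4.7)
The plan is to reuse the high-probability event $E_n$ constructed in the proof of Lemma \ref{lemma_subsets} together with the observation that the eME rule, by construction, forces the empirical excess $\mathscr{E}_n(\hat{\gamma})$ to be no larger than the tolerance $\varepsilon$. The key machinery has already been built: Part~2 of the proof of Lemma \ref{lemma_subsets} establishes an algebraic relationship $\mathscr{E}^{*}(\gamma) \leq \mathfrak{a}\bigl(\mathscr{E}_n(\gamma) \vee \delta^{**}\bigr)$ valid on $E_n$ uniformly over $\gamma \in \Gamma$. The argument here amounts to plugging $\gamma = \hat{\gamma}$ into this inequality.

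First I would recall the event $E_n$ from \eqref{eq_event_Enm} (or the analogous set inside the proof of Lemma \ref{lemma_subsets}) on which the centered empirical process indexed by $\mathcal{H}_{\ell b}'(\delta_j)$ is uniformly controlled by the deterministic envelope $T(\delta_j)$ for every index $j$ in the geometric scale $\{\delta_j\}$. The union bound combined with Hoeffding's inequality shows $\inf_{P_{Y,Z}\in\mathcal{P}_{Y,Z}} P_{Y,Z}^{\otimes n}(E_n) \geq \kappa$; this probability bound is inherited here without modification, as nothing in it depends on the particular policy under consideration.

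Second, by the definition of the eME rule,
\begin{align*}
\mathscr{E}_n(\hat{\gamma}) \;=\; \sup_{\gamma \in \Gamma} \widehat{I}_{\ell b}[\varphi](\gamma) \;-\; \widehat{I}_{\ell b}[\varphi](\hat{\gamma}) \;\leq\; \varepsilon,
\end{align*}
so the empirical excess is controlled deterministically through the tolerance. Third, applying Part~2 of Lemma \ref{lemma_subsets} at $\gamma = \hat{\gamma}$ on $E_n$ and using $\varepsilon \leq \delta^{**}$ gives
\begin{align*}
\mathscr{E}^{*}(\hat{\gamma}) \;\leq\; \mathfrak{a}\bigl(\mathscr{E}_n(\hat{\gamma}) \vee \delta^{**}\bigr) \;\leq\; \mathfrak{a}\bigl(\varepsilon \vee \delta^{**}\bigr) \;=\; \mathfrak{a}\,\delta^{**}.
\end{align*}
Since $\delta \geq \delta^{**}$ (and invoking the same scale factor $\mathfrak{a}$ that is already absorbed into the statement of Theorem \ref{theorem_delta_minimal}, where this lemma is used), the event $\{\mathscr{E}^{*}(\hat{\gamma}) \leq \delta\}$ contains $E_n$, and the conclusion follows from the probability lower bound on $E_n$. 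Alternatively, one can retrace the inequalities $\mathscr{E}^{*}(\hat{\gamma}) \leq \mathscr{E}_n(\hat{\gamma}) + T(\mathscr{E}^{*}(\hat{\gamma}))$ from the proof of Lemma \ref{lemma_subsets}: if $\mathscr{E}^{*}(\hat{\gamma}) \geq \delta^{**}$ then $T(\mathscr{E}^{*}(\hat{\gamma})) \leq (1-1/\mathfrak{a})\mathscr{E}^{*}(\hat{\gamma})$ by the fixed-point characterization of $\delta^{**} > T^{\sharp}(1-1/\mathfrak{a})$, which rearranges to $\mathscr{E}^{*}(\hat{\gamma}) \leq \mathfrak{a}\varepsilon$; combined with the alternative case $\mathscr{E}^{*}(\hat{\gamma}) < \delta^{**}$ this again gives the desired bound.

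The only real obstacle is bookkeeping the constant $\mathfrak{a}$ and making sure the combination $\varepsilon \leq \delta^{**}$ is used at the right moment — all the probabilistic content (the symmetrization, the Hoeffding step, and the construction of the envelope $T$) is inherited verbatim from Lemma \ref{lemma_subsets}. In particular, there is no need to revisit the Rademacher-complexity estimates or the behaviour of the $\flat$- and $\sharp$-transforms, since the statement here only requires the pointwise conclusion at the single point $\gamma = \hat{\gamma}$, which makes the argument a direct corollary of the uniform bound already proved.
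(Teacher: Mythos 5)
There is a genuine gap, and it is structural rather than computational. In the paper's logical ordering this lemma is an \emph{input} to Lemma \ref{lemma_subsets}: Part 1 of that proof invokes exactly this result to place $\hat{\gamma}$ in $\mathscr{G}^{*}(\sigma)$ on the event $E_{n}$, for any $\sigma\geq \delta^{**}$. Your main route, which quotes the conclusion $\mathscr{E}^{*}(\gamma)\leq \mathfrak{a}\left(\mathscr{E}_{n}(\gamma)\vee\delta^{**}\right)$ from Lemma \ref{lemma_subsets} and evaluates it at $\gamma=\hat{\gamma}$, therefore presupposes the very lemma you are asked to prove (it happens that Part 2 of the paper's proof does not itself use the present lemma, but you neither isolate that fact nor can you simply cite "Lemma \ref{lemma_subsets}" as available). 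The paper instead gives a self-contained argument: it rebuilds the event $E_{n}$ via the union bound and the uniform Hoeffding inequality, runs the chain of $\varepsilon$-optimality inequalities of Remark \ref{remark_common_notation} at the specific pair $(\hat{\gamma},\gamma^{*})$ (this is where the eME property enters, bounding the empirical difference by $\varepsilon$ rather than by $\mathscr{E}_{n}(\hat{\gamma})$), and then argues by contradiction: if $\sigma:=\mathscr{E}^{*}(\hat{\gamma})\geq\delta>\delta^{**}$ on $E_{n}$, then $\sigma\leq T(\sigma)\leq(1-1/\mathfrak{a})\sigma$, which is impossible, so $\{\mathscr{E}^{*}(\hat{\gamma})\geq\delta\}\subseteq E_{n}^{c}$.

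Even setting the circularity aside, your bound is quantitatively too weak. You obtain $\mathscr{E}^{*}(\hat{\gamma})\leq \mathfrak{a}\left(\varepsilon\vee\delta^{**}\right)=\mathfrak{a}\delta^{**}$, which delivers the claim only for $\delta\geq\mathfrak{a}\delta^{**}$, whereas the statement (and, crucially, its use inside Part 1 of Lemma \ref{lemma_subsets}, where it is applied at every level $\sigma\geq\delta^{**}$) requires $\delta\geq\delta^{**}$; since $\mathfrak{a}>1$ the slack is real, and it is not "absorbed into Theorem \ref{theorem_delta_minimal}," because the consumer of this lemma is Lemma \ref{lemma_subsets}, not the theorem. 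Your alternative sketch at the end is much closer to the paper's proof, but as written it stops at $\mathscr{E}^{*}(\hat{\gamma})\leq\mathfrak{a}\varepsilon$ rather than pushing to the contradiction $\sigma\leq(1-1/\mathfrak{a})\sigma$ that the paper uses (after treating the bookkeeping $\varepsilon$'s as negligible), and it would have to be developed directly from the definition of $E_{n}$ and the $\varepsilon$-optimality inequalities at $(\hat{\gamma},\gamma^{*})$ — including the observation that $\hat{\gamma},\gamma^{*}\in\mathscr{G}^{*}(\sigma)$ because $\mathscr{E}^{*}(\gamma^{*})\leq\varepsilon\leq\delta^{**}\leq\sigma$ — rather than "retraced from Lemma \ref{lemma_subsets}."
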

\begin{proof}
Throughout this proof, let $\lambda^{*}(\theta,\gamma)$, $\hat{\lambda}(\theta,\gamma)$,  $\theta^{*}(\gamma)$, $\hat{\theta}(\gamma)$, $\gamma^{*}$ and $\hat{\gamma}$ be as in Remark \ref{remark_common_notation}. Fix any $\delta>\delta^{**}$ (the case when $\delta=\delta^{**}$ follows from continuity). If $\sigma:=\mathscr{E}^{*}(\hat{\gamma}) \geq \delta \geq \varepsilon >0$, then:
\begin{align*}
\mathscr{E}^{*}(\hat{\gamma})&:= \sup_{\gamma \in \Gamma} \inf_{\theta \in \Theta} \max_{\lambda\in \Lambda} P h_{\ell b}(\cdot,\theta,\gamma,\lambda)-\inf_{\theta \in \Theta} \max_{\lambda\in \Lambda} P h_{\ell b}(\cdot,\theta,\hat{\gamma},\lambda) \\
&\leq  \inf_{\theta \in \Theta} \max_{\lambda\in \Lambda} P h_{\ell b}(\cdot,\theta,\gamma^{*},\lambda)-\inf_{\theta \in \Theta} \max_{\lambda\in \Lambda} P h_{\ell b}(\cdot,\theta,\hat{\gamma},\lambda)+3\varepsilon\\ 
&=  \inf_{\theta \in \Theta} \max_{\lambda\in \Lambda} P h_{\ell b}(\cdot,\theta,\gamma^{*},\lambda)-\inf_{\theta \in \Theta} \max_{\lambda\in \Lambda} P h_{\ell b}(\cdot,\theta,\hat{\gamma},\lambda)\\
&\qquad\qquad+ \left(\inf_{\theta \in \Theta} \max_{\lambda\in \Lambda}  \P_{n} h_{\ell b}(\cdot,\theta,\gamma^{*},\lambda)-\inf_{\theta \in \Theta} \max_{\lambda\in \Lambda}  \P_{n} h_{\ell b}(\cdot,\theta,\hat{\gamma},\lambda)\right)\\
&\qquad\qquad\qquad\qquad- \left(\inf_{\theta \in \Theta} \max_{\lambda\in \Lambda}  \P_{n} h_{\ell b}(\cdot,\theta,\gamma^{*},\lambda)-\inf_{\theta \in \Theta}\max_{\lambda\in \Lambda}   \P_{n} h_{\ell b}(\cdot,\theta,\hat{\gamma},\lambda)\right) +3\varepsilon\\
&\leq  \inf_{\theta \in \Theta} \max_{\lambda\in \Lambda} P h_{\ell b}(\cdot,\theta,\gamma^{*},\lambda)-\inf_{\theta \in \Theta} \max_{\lambda\in \Lambda} P h_{\ell b}(\cdot,\theta,\hat{\gamma},\lambda)\\
&\qquad\qquad- \left(\inf_{\theta \in \Theta} \max_{\lambda\in \Lambda}  \P_{n} h_{\ell b}(\cdot,\theta,\gamma^{*},\lambda)-\inf_{\theta \in \Theta}\max_{\lambda\in \Lambda}   \P_{n} h_{\ell b}(\cdot,\theta,\hat{\gamma},\lambda)\right) +4\varepsilon
\end{align*}
Now note:
\begin{align*}
&\inf_{\theta \in \Theta} \max_{\lambda\in \Lambda} P h_{\ell b}(\cdot,\theta,\gamma^{*},\lambda)-\inf_{\theta \in \Theta} \max_{\lambda\in \Lambda} P h_{\ell b}(\cdot,\theta,\hat{\gamma},\lambda)\\
&\leq \inf_{\theta \in \Theta} \max_{\lambda\in \Lambda} P h_{\ell b}(\cdot,\theta,\gamma^{*},\lambda)- \max_{\lambda\in \Lambda} P h_{\ell b}(\cdot,\theta^{*}(\hat{\gamma}),\hat{\gamma},\lambda)+\varepsilon\\
&\leq \max_{\lambda\in \Lambda} P h_{\ell b}(\cdot,\hat{\theta}(\gamma^{*}),\gamma^{*},\lambda)- \max_{\lambda\in \Lambda} P h_{\ell b}(\cdot,\theta^{*}(\hat{\gamma}),\hat{\gamma},\lambda)+2\varepsilon\\
&\leq \max_{\lambda\in \Lambda} P h_{\ell b}(\cdot,\hat{\theta}(\gamma^{*}),\gamma^{*},\lambda)- P h_{\ell b}(\cdot,\theta^{*}(\hat{\gamma}),\hat{\gamma},\hat{\lambda}(\theta^{*}(\hat{\gamma}),\hat{\gamma}))+2\varepsilon\\
&\leq P h_{\ell b}(\cdot,\hat{\theta}(\gamma^{*}),\gamma^{*},\lambda^{*}(\hat{\theta}(\gamma^{*}),\gamma^{*}))- P h_{\ell b}(\cdot,\theta^{*}(\hat{\gamma}),\hat{\gamma},\hat{\lambda}(\theta^{*}(\hat{\gamma}),\hat{\gamma}))+2\varepsilon.
\end{align*}
Similarly:
\begin{align*}
&\inf_{\theta \in \Theta}\max_{\lambda\in \Lambda}   \P_{n} h_{\ell b}(\cdot,\theta,\hat{\gamma},\lambda) - \inf_{\theta \in \Theta} \max_{\lambda\in \Lambda}  \P_{n} h_{\ell b}(\cdot,\theta,\gamma^{*},\lambda)\\
&\leq \inf_{\theta \in \Theta}\max_{\lambda\in \Lambda}   \P_{n} h_{\ell b}(\cdot,\theta,\hat{\gamma},\lambda) -  \max_{\lambda\in \Lambda}  \P_{n} h_{\ell b}(\cdot,\hat{\theta}(\gamma^{*}),\gamma^{*},\lambda)+\varepsilon\\
&\leq \max_{\lambda\in \Lambda}\P_{n} h_{\ell b}(\cdot,\theta^{*}(\hat{\gamma}),\hat{\gamma},\lambda) -  \max_{\lambda\in \Lambda}  \P_{n} h_{\ell b}(\cdot,\hat{\theta}(\gamma^{*}),\gamma^{*},\lambda)+2\varepsilon\\
&\leq \max_{\lambda\in \Lambda}\P_{n} h_{\ell b}(\cdot,\theta^{*}(\hat{\gamma}),\hat{\gamma},\lambda) -  \max_{\lambda\in \Lambda}  \P_{n} h_{\ell b}(\cdot,\hat{\theta}(\gamma^{*}),\gamma^{*},\lambda^{*}(\hat{\theta}(\gamma^{*}),\gamma^{*}))+2\varepsilon\\
&\leq \P_{n} h_{\ell b}(\cdot,\theta^{*}(\hat{\gamma}),\hat{\gamma},\hat{\lambda}(\theta^{*}(\hat{\gamma}),\hat{\gamma})) -  \max_{\lambda\in \Lambda}  \P_{n} h_{\ell b}(\cdot,\hat{\theta}(\gamma^{*}),\gamma^{*},\lambda^{*}(\hat{\theta}(\gamma^{*}),\gamma^{*}))+2\varepsilon.
\end{align*}
Thus we have:
\begin{align*}
\mathscr{E}^{*}(\hat{\gamma})&\leq P h_{\ell b}(\cdot,\hat{\theta}(\gamma^{*}),\gamma^{*},\lambda^{*}(\hat{\theta}(\gamma^{*}),\gamma^{*}))- P h_{\ell b}(\cdot,\theta^{*}(\hat{\gamma}),\hat{\gamma},\hat{\lambda}(\theta^{*}(\hat{\gamma}),\hat{\gamma}))\\
&\qquad\qquad- \left(\max_{\lambda\in \Lambda}  \P_{n} h_{\ell b}(\cdot,\hat{\theta}(\gamma^{*}),\gamma^{*},\lambda^{*}(\hat{\theta}(\gamma^{*}),\gamma^{*})) - \P_{n} h_{\ell b}(\cdot,\theta^{*}(\hat{\gamma}),\hat{\gamma},\hat{\lambda}(\theta^{*}(\hat{\gamma}),\hat{\gamma}))  \right) +  8\varepsilon.   
\end{align*}
Furthermore, $\sigma = \mathscr{E}^{*}(\hat{\gamma}) \geq  \mathscr{E}^{*}(\gamma^{*})$ implies that $\hat{\gamma},\gamma^{*} \in \mathscr{G}(\sigma)$. Thus:
\begin{align*}
\mathscr{E}^{*}(\hat{\gamma}) \leq \sup_{\theta, \theta' \in \Theta} \sup_{\gamma, \gamma' \in \mathscr{G}^{*}(\sigma)} \max_{\lambda,\lambda' \in \Lambda} \left|\left( \P_{n} h_{\ell b}(\cdot,\theta,\gamma,\lambda) -  \P_{n} h_{\ell b}(\cdot,\theta',\gamma',\lambda')\right) -  \left( P h_{\ell b}(\cdot,\theta,\gamma,\lambda) -  P h_{\ell b}(\cdot,\theta',\gamma',\lambda') \right)\right|, 
\end{align*}
which follows since $\varepsilon>0$ can be made arbitrarily small. Now define:
\begin{align}
E_{n,j}:= \left\{ \sup_{\theta, \theta' \in \Theta} \sup_{\gamma, \gamma' \in \mathscr{G}^{*}(\sigma)} \max_{\lambda,\lambda' \in \Lambda} \left|\left( \P_{n} h_{\ell b}(\cdot,\theta,\gamma,\lambda) -  \P_{n} h_{\ell b}(\cdot,\theta',\gamma',\lambda')\right) -  \left( P h_{\ell b}(\cdot,\theta,\gamma,\lambda) -  P h_{\ell b}(\cdot,\theta',\gamma',\lambda') \right)\right| \leq T(\delta_{j}) \right\},
\end{align}
and:
\begin{align*}
E_{n}:= \bigcap_{\{ j : \delta_{j}\geq \delta^{**}\}} E_{n,j}.
\end{align*}
Note by our choice of $\delta_{0}> 2\overline{H}$ we have:
\begin{align*}
\sup_{P_{Y,Z} \in \mathcal{P}_{Y,Z}} P_{Y,Z}^{\otimes n} \left( E_{n,0}^{c}\right)=0.
\end{align*}
Furthermore, from the uniform version of Hoeffding's inequality (e.g. \cite{koltchinskii2011oracle} Theorem 4.6, p.71) we have:
\begin{align*}
\sup_{P_{Y,Z} \in \mathcal{P}_{Y,Z}} P_{Y,Z}^{\otimes n} \left( E_{n,j}^{c}\right) \leq  \exp \left( -\frac{t_{j}^{2}}{2} \right), 
\end{align*}
for each $j\in \mathbb{N}$. We conclude by the union bound that:
\begin{align*}
\sup_{P_{Y,Z} \in \mathcal{P}_{Y,Z}} P_{Y,Z}^{\otimes n} \left( E_{n}^{c}\right) \leq \sum_{\{j : \delta_{j} \geq \delta^{**}\}} \exp \left( -\frac{t_{j}^{2}}{2} \right) \leq \sum_{j=0}^{\infty} \exp \left( -\frac{t_{j}^{2}}{2} \right) \leq 1-\kappa. 
\end{align*}
Now on the event $E_{n}$, for every $\delta \geq \delta^{**}$ we have:
\begin{align*}
\sup_{\theta, \theta' \in \Theta} \sup_{\gamma, \gamma' \in \mathscr{G}^{*}(\delta)} \max_{\lambda,\lambda' \in \Lambda} \left|(\P_{n}-P)\left( h_{\ell b}(\cdot,\theta,\gamma,\lambda) -  h_{\ell b}(\cdot,\theta',\gamma',\lambda')\right) \right| \leq T(\delta). 
\end{align*}
Now suppose by way of contradiction that $\{\mathscr{E}^{*}(\hat{\gamma}) \geq \delta\} \cap E_{n}\neq\emptyset$. Then on this event we have:
\begin{align*}
\sigma&:= \mathscr{E}^{*}(\hat{\gamma})\\
&\leq \sup_{\theta, \theta' \in \Theta} \sup_{\gamma, \gamma' \in \mathscr{G}^{*}(\sigma)} \max_{\lambda,\lambda' \in \Lambda} \left|\left( \P_{n} h_{\ell b}(\cdot,\theta,\gamma,\lambda) -  \P_{n} h_{\ell b}(\cdot,\theta',\gamma',\lambda')\right) -  \left( P h_{\ell b}(\cdot,\theta,\gamma,\lambda) -  P h_{\ell b}(\cdot,\theta',\gamma',\lambda') \right)\right|\\
&\leq T(\sigma).
\end{align*}
However, note that this implies that $\sigma \leq \delta^{**}$ on the event $E_{n}$. But since $\sigma \geq \delta > \delta^{**}$ by assumption, we have a contradiction. We conclude that $\{\mathscr{E}^{*}(\hat{\gamma})\} \geq \delta\} \cap E_{n}=\emptyset$, or equivalently that $\{\mathscr{E}^{*}(\hat{\gamma}) \geq \delta\} \subseteq  E_{n}^{c}$, where the event $E_{n}^{c}$ has probability at most $1-\kappa$. 

\end{proof}

\section{Additional Details for the Examples}\label{appendix_additional_details_for_examples}

\subsection{Example \ref{example_simultaneous_discrete_choice}: Simultaneous Discrete Choice}\label{appendix_additional_details_sdc}

\subsubsection{Verification of Assumptions \ref{assump_preliminary}, \ref{assumption_factual_domain} and \ref{assumption_counterfactual_domain}}\label{appendix_additional_details_sdc_A1_to_A3}

We will now proceed to verify Assumption \ref{assump_preliminary}, \ref{assumption_factual_domain} and \ref{assumption_counterfactual_domain}. First note that Assumption \ref{assump_preliminary} is trivially satisfied, since the probability space $(\Omega,\mathfrak{A},P)$ is complete, and both $\mathcal{U}$ and $\Theta$ are compact metric spaces with the euclidean norm.

To verify Assumption \ref{assumption_factual_domain}, note that the multifunction for the factual domain can be rewritten as:\todolt{\cite{aliprantis2006infinite} have some other ways to check sufficient conditions}
\begin{align}
\bm G^{-} \left( Y,Z,\theta \right)= \left\{ u \in \mathcal{U} : \begin{array}{l}
	u_{k} \in [\pi_{k} ( Z_{k},Y_{-k};\theta ),1], \text{ if } Y_{k}=0,\\
	u_{k} \in [-1,\pi_{k} (Z_{k},Y_{-k};\theta )], \text{ if } Y_{k}=1.
\end{array}\right\}.\label{eq_DC_reverse_correspondence2}
\end{align}
From here we conclude that, for any $u \in \mathcal{U}$:
\begin{align} 
&d(u,\bm G^{-} \left( Y,Z,\theta \right))\nonumber\\
&= \max_{k} \bigg( \mathbbm{1}\{Y_{k}=0\} |\pi_{k} ( Z_{k},Y_{-k};\theta ) - u_{k}|_{+} + \mathbbm{1}\{Y_{k}=1\} |u_{k} - \pi_{k} ( Z_{k},Y_{-k};\theta ) |_{+}\bigg).
\end{align}
Under our assumptions, this distance is the maximum of $K$ measurable functions, and so is itself measurable. Since $u \in \mathcal{U}$ was arbitrary, by the result of \cite{himmelberg1975measurable} (see also Theorem 1.3.3 in \cite{molchanov2017theory}) this implies that $\bm G^{-}$ is an Effros-measurable multifunction (w.r.t. $\mathfrak{B}(\mathcal{Y})\otimes \mathfrak{B}(\mathcal{Z})\otimes \mathfrak{B}(\Theta)$), as desired. It is then easily seen that the conditional distribution of the vector $U$ given $(Y,Z)$ satisfies \eqref{eq_puyz} in Assumption \ref{assumption_factual_domain} using the multifunction in \eqref{eq_DC_reverse_correspondence2} with $\theta=\theta_{0}$. To complete the verification of Assumption \ref{assumption_factual_domain}, note that all the moment functions from the moment conditions in \eqref{eq_sdc_mom1} and \eqref{eq_sdc_mom2} are bounded in absolute value and Borel measurable (w.r.t. $\mathfrak{B}(\mathcal{Y})\otimes \mathfrak{B}(\mathcal{Z})\otimes \mathfrak{B}(\Theta)$). 

We now turn to the verification of Assumption \ref{assumption_counterfactual_domain}. Recall the counterfactual multifunction:
\begin{align}
\bm G^\star(Z,U,\theta,\gamma):=\left\{ y^\star \in \mathcal{Y}  : y_{k}^\star=\mathbbm{1}\{ \pi_{k} \left(\gamma(Z_{k},y_{-k}^\star);\theta \right)  \geq U_{k}\},\,\, k=1,\ldots,K. \right\}.
\end{align}
Close inspection reveals that:
\begin{align} 
&d(y^\star,\bm G^{\star} \left(Z,U,\theta,\gamma\right))= \max_{k} \left|y_{k}^\star - \mathbbm{1}\{ \pi_{k} \left(\gamma(Z_{k},y_{-k}^\star);\theta \right)  \geq U_{k}\}\right|.
\end{align}
Under our assumptions, this distance is also the maximum of $K$ measurable functions, and so is itself measurable. Since $y^\star \in \mathcal{Y}^\star$ was arbitrary, by the result of \cite{himmelberg1975measurable} (see also Theorem 1.3.3 in \cite{molchanov2017theory}) this implies that $\bm G^{\star}$ is an Effros-measurable multifunction  (w.r.t. $\mathfrak{B}(\mathcal{Y})\otimes \mathfrak{B}(\mathcal{Z})\otimes \mathfrak{B}(\mathcal{Z})\otimes \mathfrak{B}(\Theta)\otimes \mathfrak{B}(\Gamma)$), as desired. Finally, it is easily seen that the conditional distribution of the vector $Y_{\gamma}^\star$ given $(Y,Z,U)$ satisfies \eqref{eq_pygamma} in Assumption \ref{assumption_counterfactual_domain} using the multifunction in \eqref{eq_DC_counterfactual_correspondence} with $\theta=\theta_{0}$.

\subsubsection{Verification of Assumption \ref{assumption_error_bound}}\label{appendix_additional_details_sdc_A4_error_bound}

We will first verify Assumption \ref{assumption_error_bound}(ii) for some $C_{2}\geq 0$ and $\delta>0$, and then will show that Assumption \ref{assumption_error_bound}(i) is also satisfied for our choice of $\delta>0$. 

As was mentioned in the main text, under our current assumptions for this example Assumption \ref{assumption_error_bound}(ii) is not satisfied. The issue is illustrated in Figures \ref{fig_concentrated_mass2} and \ref{fig_concentrated_mass3}, and a case where Assumption \ref{assumption_error_bound}(ii) is satisfied is illustrated in Figure \ref{fig_concentrated_mass4}. The issue arises only when for some $k \in \{1,\ldots,K\}$ and some $z \in \mathcal{Z}$ and $y_{-k} \in \mathcal{Y}_{-k}$ we have: (i) the object of interest is $P(Y_{\gamma,k}^\star=1|Z_{k}=z',Y_{-k}=y_{-k}')$ or $P(Y_{\gamma,k}^\star=1)$, (ii) the counterfactual cutoff value $\pi_{k}(\gamma(z,y_{-k});\theta^{*})=0$ at some $\theta^{*} \in \partial \Theta^{*}$, and (iii) if $P(Y_{k}=1|Z_{k}=z',Y_{-k}=y_{-k}')\neq 0.5$, where $(z',y_{-k}')=\gamma(z,y_{-k})$. In this knife-edge case, a very small change in $\theta^{*}$ to some $\theta \notin \Theta^{*}$ can cause a discontinuous change in $P(Y_{\gamma,k}^\star=1|Z_{k}=z',Y_{-k}=y_{-k}')$ or $P(Y_{\gamma,k}^\star=1)$. 
\begin{figure}[!t]
\centering
\includegraphics[scale=1]{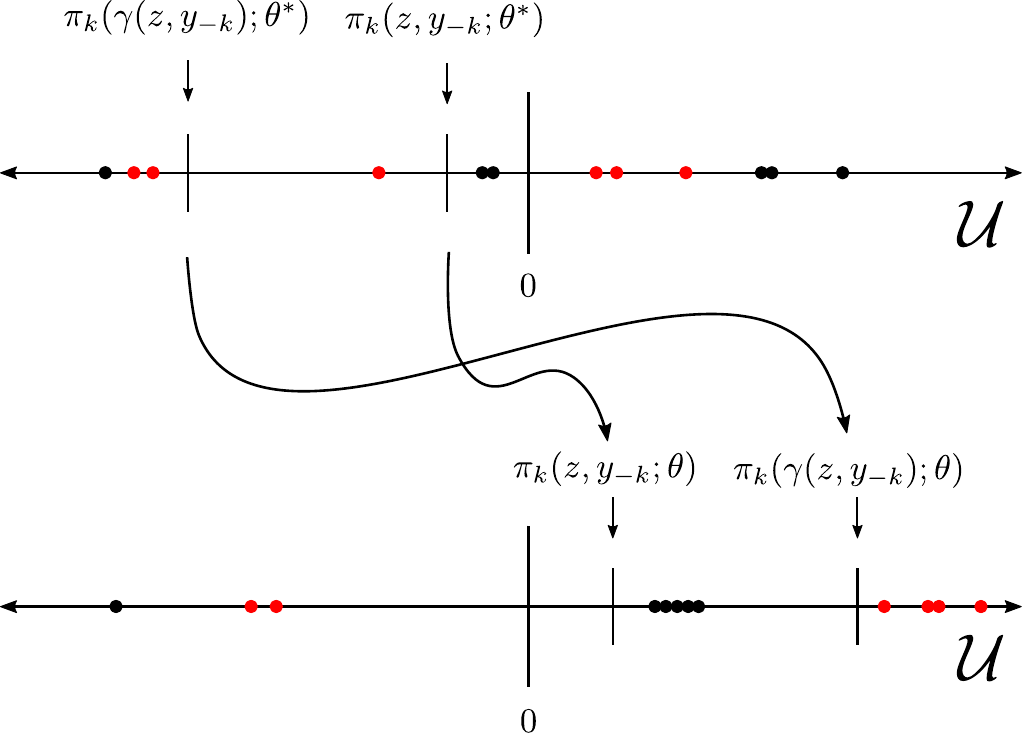}
\caption{This figure illustrates a case violating Assumption \ref{assumption_error_bound}(ii). The black dots $\bullet$ represent equal probability masses ($1/6$) assigned by the conditional distribution of $U_{k}$ given $(z,y_{-k})$. The red dots \color{red} $\bullet$ \color{black} represent equal probability masses ($1/6$) assigned by the conditional distribution of $U_{k}$ given $(z',y_{-k}')=\gamma(z,y_{-k})$. In the upper portion of the figure we have $\theta^{*} \in \Theta^{*}$, the median zero assumption is satisfied (three black dots $\bullet$ and three red dots \color{red} $\bullet$ \color{black} on either side of zero) and the maximum value of $P(Y_{\gamma}^\star=1|Z_{k}=z,Y_{-k}=y_{-k})$ at $\theta^{*}$ is obtained at $1/6$. However, in the bottom portion of the figure a small change in the value of $\theta^{*} \in \Theta^{*}$ to $\theta \notin \Theta^{*}$ causes a violation of the median zero assumption for the points $(z,y_{-k})$ and $(z',y_{-k}')$. At the new value $\theta \notin \Theta^{*}$ we have the maximum value of $P(Y_{\gamma}^\star=1|Z_{k}=z,Y_{-k}=y_{-k})$ is $1$. Note that the scale of the figure can be made arbitrarily small. }\label{fig_concentrated_mass2}
\end{figure}
\begin{figure}[!t]
\centering
\includegraphics[scale=1]{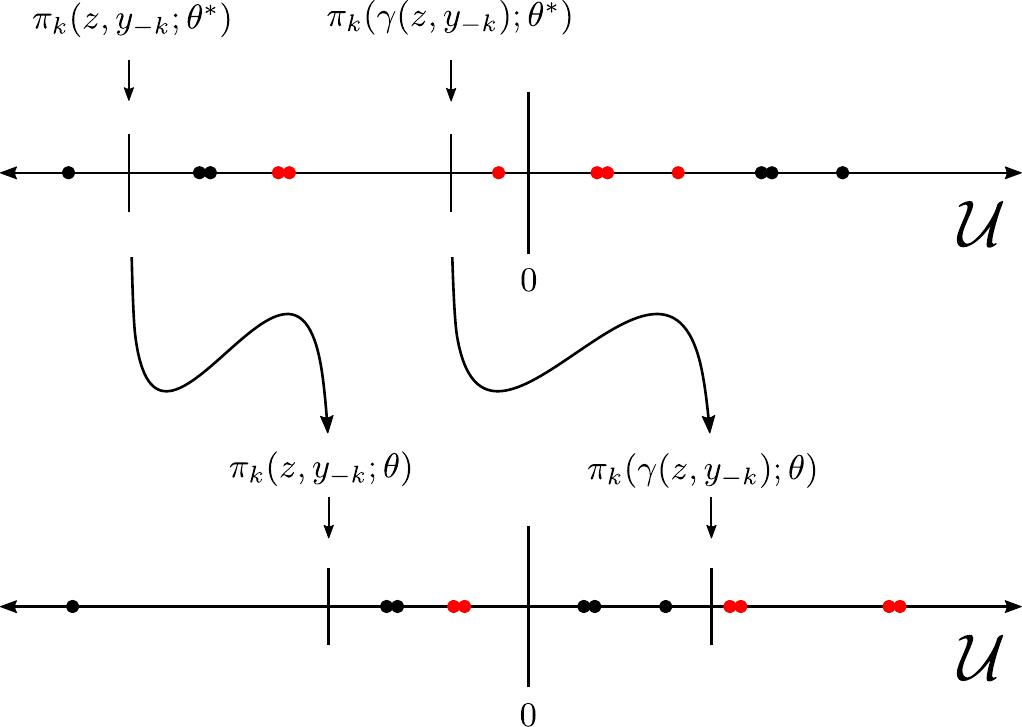}
\caption{This figure illustrates a case violating Assumption \ref{assumption_error_bound}(ii). The black dots $\bullet$ represent equal probability masses ($1/6$) assigned by the conditional distribution of $U_{k}$ given $(z,y_{-k})$. The red dots \color{red} $\bullet$ \color{black} represent equal probability masses ($1/6$) assigned by the conditional distribution of $U_{k}$ given $(z',y_{-k}')=\gamma(z,y_{-k})$. In the upper portion of the figure we have $\theta^{*} \in \Theta^{*}$, the median zero assumption is satisfied (three black dots $\bullet$ and three red dots \color{red} $\bullet$ \color{black} on either side of zero) and the maximum value of $P(Y_{\gamma}^\star=1|Z_{k}=z,Y_{-k}=y_{-k})$ at $\theta^{*}$ is obtained at $1/2$. However, in the bottom portion of the figure a small change in the value of $\theta^{*} \in \Theta^{*}$ to $\theta \notin \Theta^{*}$ causes a violation of the median zero assumption for the point $(z',y_{-k}')$. At the new value $\theta \notin \Theta^{*}$ we have the maximum value of $P(Y_{\gamma}^\star=1|Z_{k}=z,Y_{-k}=y_{-k})$ is $1$. Note that the scale of the figure can be made arbitrarily small.}\label{fig_concentrated_mass3}
\end{figure}
\begin{figure}[!t]
\centering
\includegraphics[scale=1]{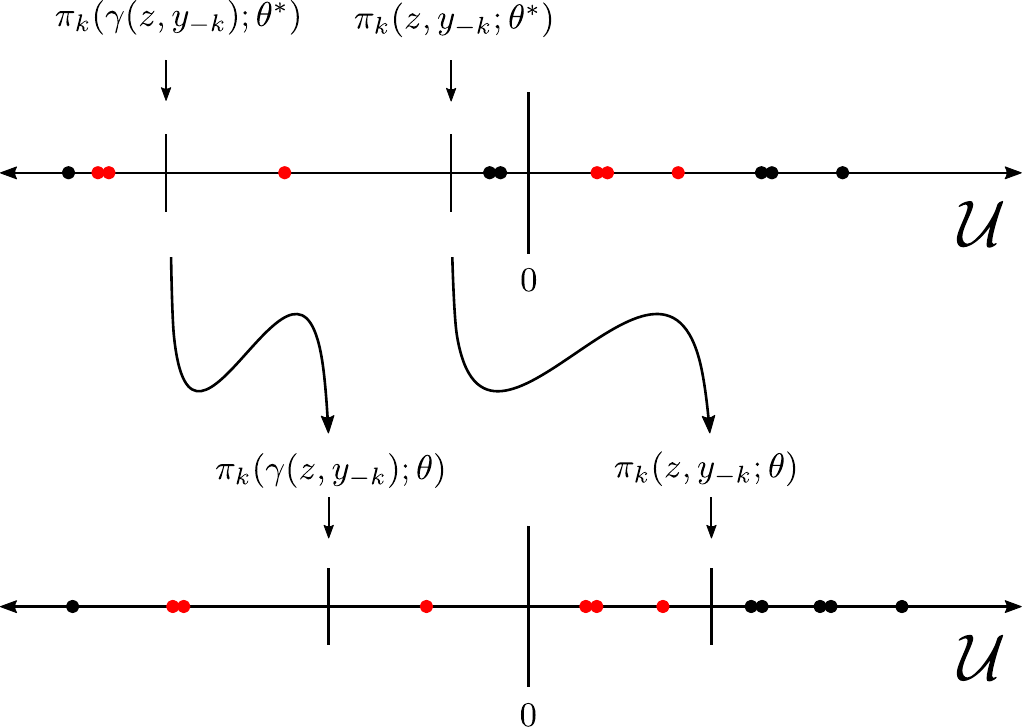}
\caption{This figure illustrates a case that does not violate Assumption \ref{assumption_error_bound}(ii). The black dots $\bullet$ represent equal probability masses ($1/6$) assigned by the conditional distribution of $U_{k}$ given $(z,y_{-k})$. The red dots \color{red} $\bullet$ \color{black} represent equal probability masses ($1/6$) assigned by the conditional distribution of $U_{k}$ given $(z',y_{-k}')=\gamma(z,y_{-k})$. In the upper portion of the figure we have $\theta^{*} \in \Theta^{*}$, the median zero assumption is satisfied (three black dots $\bullet$ and three red dots \color{red} $\bullet$ \color{black} on either side of zero) and $P(Y_{\gamma}^\star=1|Z_{k}=z,Y_{-k}=y_{-k})=1/6$. In the bottom portion of the figure a small change in the value of $\theta^{*} \in \Theta^{*}$ to $\theta \notin \Theta^{*}$ causes a violation of the median zero assumption for the point $(z,y_{-k})$. However, at the new value $\theta \notin \Theta^{*}$ we still have that the maximum obtainable value of $P(Y_{\gamma}^\star=1|Z_{k}=z,Y_{-k}=y_{-k})$ is $1/6$. }\label{fig_concentrated_mass4}
\end{figure}

To prevent such discontinuities in the value of the policy transform, we can introduce additional assumptions on the degree of smoothness of the distribution of $U_{k}$ around zero. In particular, instead of the moment conditions in \eqref{eq_sdc_mom1} and \eqref{eq_sdc_mom2} we propose imposing the constraints:
\begin{align}
P \left(U_{k}\leq \pi_{k}(z',y_{-k}';\theta) | Z_{k}=z, Y_{-k}=y_{-k}  \right) -0.5 \leq \max\{L_{0}\pi_{k}(z',y_{-k}';\theta),0\},\label{eq_sdc_cons1}\\
0.5 - P \left(U_{k}\leq\pi_{k}(z',y_{-k}';\theta) | Z_{k}=z, Y_{-k}=y_{-k}  \right) \leq \max\{-L_{0}\pi_{k}(z',y_{-k}';\theta),0\},\label{eq_sdc_cons2}
\end{align}
for some $L_{0}> 0$, for $k=1,\ldots,K$, and for all $z,z' \in \mathcal{Z}$ and $y_{-k}, y_{-k}' \in \mathcal{Y}^{K-1}$. These constraints impose a local Lipschitzian constraint on the distribution of $U_{k}$ around zero. Note that by taking $L_{0}$ sufficiently large, these constraints will only be active when $\pi_{k}(z',y_{-k}';\theta)$ is close to zero. It is also easily verified that the new moment conditions implied by \eqref{eq_sdc_cons1} and \eqref{eq_sdc_cons2} also satisfy Assumption \ref{assumption_factual_domain}.

We claim that the constraints \eqref{eq_sdc_cons1} and \eqref{eq_sdc_cons2} imply that $U_{k}$ is median zero and median independent of $(Z,Y_{-k})$. To see this, note that $U_{k}$ has a median of zero given $(z,y_{-k})$ if and only if: 
\begin{enumerate}[label=(\Roman*)]
	\item $\pi_{k}(z_{k},y_{-k};\theta) \leq 0$ and $P(U_{k}\leq \pi_{k}(z_{k},y_{-k};\theta)  |Z=z_{k},Y_{-k}=y_{-k}) \leq 0.5$; or
	\item $\pi_{k}(z_{k},y_{-k};\theta) > 0$ and $P(U_{k}> \pi_{k}(z_{k},y_{-k};\theta) |Z=z_{k},Y_{-k}=y_{-k})\leq 0.5$. 
\end{enumerate}
The idea behind these conditions is illustrated in Figure \ref{fig_median_independence}. Conversely, $U_{k}$ does not have a median of zero conditional on $(z,y_{-k})$ if and only if:
\begin{enumerate}[label=(\roman*)]
	\item $\pi_{k}(z_{k},y_{-k};\theta)  > 0$ and $P(U_{k}\leq \pi_{k}(z_{k},y_{-k};\theta)  |Z=z_{k},Y_{-k}=y_{-k}) < 0.5$; or
	\item $\pi_{k}(z_{k},y_{-k};\theta) \leq 0$ and $P(U_{k}> \pi_{k}(z_{k},y_{-k};\theta) |Z=z_{k},Y_{-k}=y_{-k})< 0.5$. 
\end{enumerate}
Note that if (i) holds then \eqref{eq_sdc_cons2} fails, and if (ii) holds then \eqref{eq_sdc_cons1} fails. This implies that if both \eqref{eq_sdc_cons1} and \eqref{eq_sdc_cons2} hold, then (i) and (ii) do not hold, and thus $U_{k}$ is median zero and median independent of $(Z,Y_{-k})$. However, note that it is possible that either (I) or (II) is satisfied but one of \eqref{eq_sdc_cons1} or \eqref{eq_sdc_cons2} fails, owing to the fact that together \eqref{eq_sdc_cons1} and \eqref{eq_sdc_cons2} are stronger than the median zero and median independence restrictions initially imposed in \eqref{eq_sdc_mom1} and \eqref{eq_sdc_mom2}.

\begin{figure}[!t]
\centering
\includegraphics[scale=0.9]{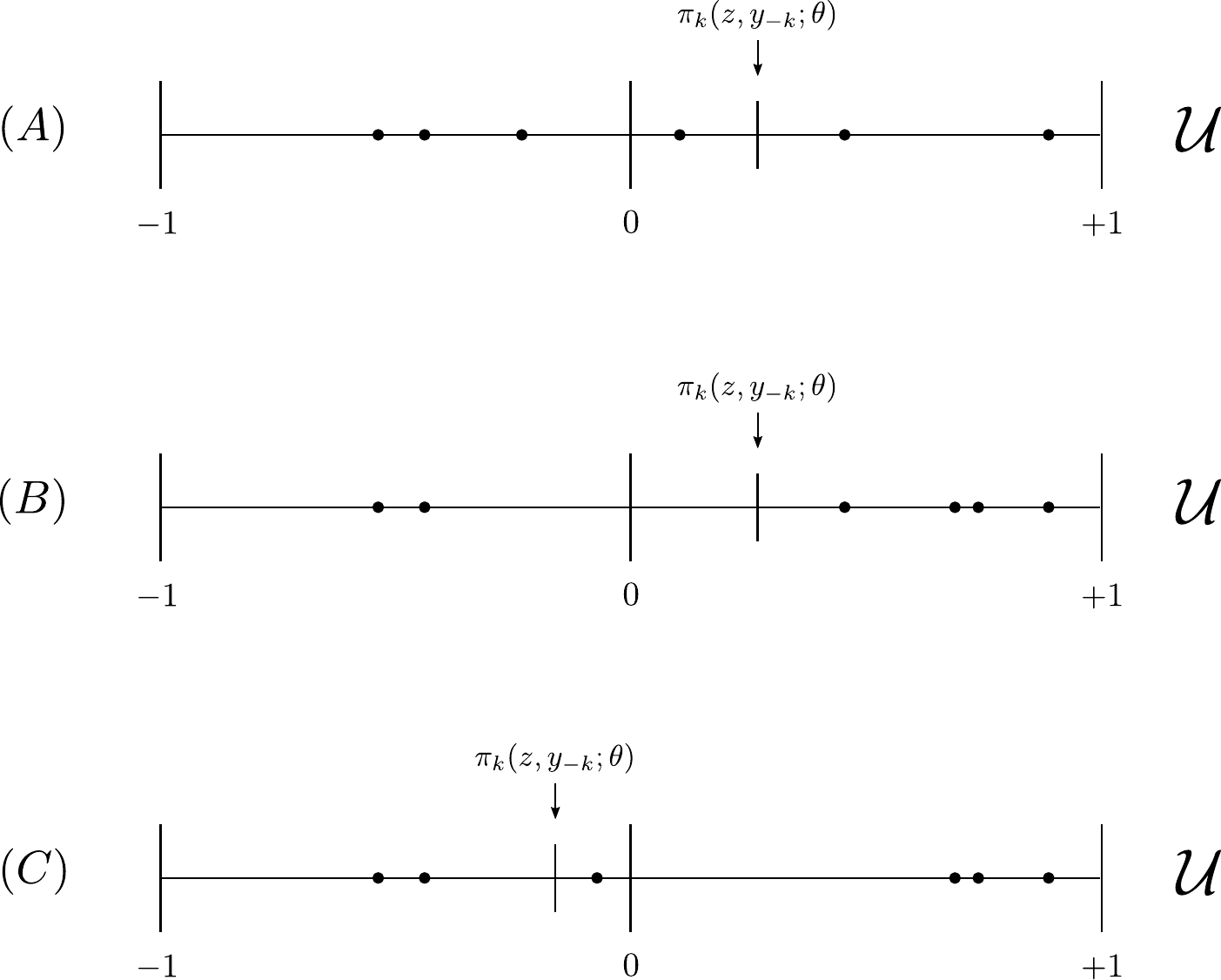}
\caption{This figure illustrates three scenarios, each scenario involving a different allocation of probability mass for $U_{k}$, represented by the $6$ dots $\bullet$ representing equal probability mass, and a different value of the cutoff $\pi_{k}(z,y_{-k};\theta)$. In scenario (A), $\pi_{k}(z_{k},y_{-k};\theta) > 0$ and $P(Y_{k}=0|Z=z_{k},Y_{-k}=y_{-k})\leq 0.5$. In this case the median zero condition can be satisfied, for example, by the allocation of probability mass displayed in the figure. In scenario (B), $\pi_{k}(z_{k},y_{-k};\theta) > 0$ and $P(Y_{k}=0|Z=z_{k},Y_{-k}=y_{-k})> 0.5$. Here there is no way of satisfying the median zero assumption, since too much mass will always be assigned above zero. In scenario (C), $\pi_{k}(z_{k},y_{-k};\theta) < 0$ and $P(Y_{k}=0|Z=z_{k},Y_{-k}=y_{-k})> 0.5$. In this case the median zero condition can again be satisfied, for example, by the allocation of probability mass displayed in the figure.   }\label{fig_median_independence}
\end{figure}

We will now proceed to verify Assumption \ref{assumption_error_bound}. First recall from the discussion in the text that $\pi_{k}$ is a known measurable function of $(Z_{k},Y_{-k},\theta)$ that is linear in parameters $\theta$ and has a gradient (with respect to $\theta$) bounded away from zero for each $(z,y_{-k})$. Thus, $\pi_{k}$ is Lipschitz in $\theta$, and also satisfies a ``reverse Lipschitz'' condition; that is, for each $(z,y_{-k})$ we have: 
\begin{align*}
L_{k}'||\theta-\theta^{*}||\leq |\pi_{k}(z,y_{-k};\theta)-\pi_{k}(z,y_{-k};\theta^{*})| \leq L_{k}||\theta-\theta^{*}||,
\end{align*}
for some $L_{k}',L_{k}>0$. Now, if one of the constraints \eqref{eq_sdc_cons1} or \eqref{eq_sdc_cons2} is violated, we have one of the following inequalities:
\begin{align}
P (\widetilde{U}_{k}\leq \pi_{k}(z',y_{-k}';\theta) | Z_{k}=z, Y_{-k}=y_{-k}  ) -0.5 > \max\{L_{0}\pi_{k}(z',y_{-k}';\theta),0\},\label{eq_sdc_cons3}\\
0.5 -P (\widetilde{U}_{k}\leq \pi_{k}(z',y_{-k}';\theta) | Z_{k}=z, Y_{-k}=y_{-k}  ) > \max\{-L_{0}\pi_{k}(z',y_{-k}';\theta),0\},\label{eq_sdc_cons4}
\end{align}
Subtracting \eqref{eq_sdc_cons3} from \eqref{eq_sdc_cons1} and taking $(z',y_{-k}')=\gamma(z,y_{-k})$, we have:
\begin{align}
&P \left(U_{k}\leq \pi_{k}(\gamma(z,y_{-k});\theta^{*}) | Z_{k}=z, Y_{-k}=y_{-k}  \right) -P (\widetilde{U}_{k}\leq \pi_{k}(\gamma(z,y_{-k});\theta) | Z_{k}=z, Y_{-k}=y_{-k} )\nonumber\\
&\qquad=P \left(U_{k}\leq \pi_{k}(z',y_{-k}';\theta^{*}) | Z_{k}=z, Y_{-k}=y_{-k}  \right) -P (\widetilde{U}_{k}\leq \pi_{k}(z',y_{-k}';\theta) | Z_{k}=z, Y_{-k}=y_{-k} )\nonumber\\
&\qquad< \max\{L_{0}\pi_{k}(z',y_{-k}';\theta^{*}),0\} -\max\{L_{0}\pi_{k}(z',y_{-k}';\theta),0\}\nonumber\\
&\qquad\leq \max\{L_{0}\pi_{k}(z',y_{-k}';\theta^{*})-L_{0}\pi_{k}(z',y_{-k}';\theta),0\}\nonumber\\
&\qquad\leq L_{0} |\pi_{k}(z',y_{-k}';\theta^{*})-\pi_{k}(z',y_{-k}';\theta)|\nonumber\\
&\qquad\leq L_{0} L_{k} ||\theta - \theta^{*}||.
\end{align}
Furthermore, subtracting \eqref{eq_sdc_cons4} from \eqref{eq_sdc_cons2} and again taking $(z',y_{-k}')=\gamma(z,y_{-k})$, we have:
\begin{align}
&P (\widetilde{U}_{k}\leq \pi_{k}(\gamma(z,y_{-k});\theta) | Z_{k}=z, Y_{-k}=y_{-k} ) -P \left(U_{k}\leq \pi_{k}(\gamma(z,y_{-k});\theta^{*}) | Z_{k}=z, Y_{-k}=y_{-k}  \right)\nonumber\\
&\qquad=P (\widetilde{U}_{k}\leq \pi_{k}(z',y_{-k}';\theta) | Z_{k}=z, Y_{-k}=y_{-k} ) -P \left(U_{k}\leq \pi_{k}(z',y_{-k}';\theta^{*}) | Z_{k}=z, Y_{-k}=y_{-k}  \right)\nonumber\\
&\qquad<  \max\{-L_{0}\pi_{k}(z',y_{-k}';\theta^{*}),0\} -\max\{-L_{0}\pi_{k}(z',y_{-k}';\theta),0\}\nonumber\\
&\qquad\leq  \max\{L_{0}\pi_{k}(z',y_{-k}';\theta)-L_{0}\pi_{k}(z',y_{-k}';\theta^{*}),0\} \nonumber\\
&\qquad\leq L_{0} |\pi_{k}(z',y_{-k}';\theta^{*})-\pi_{k}(z',y_{-k}';\theta)|\nonumber\\
&\qquad\leq L_{0} L_{k} ||\theta - \theta^{*}||.
\end{align}
From here we can deduce that Assumption \ref{assumption_error_bound}(ii) is satisfied for any $\delta>0$ with $C_{2}=L_{0}L$ where $L= \min_{k} L_{k}$.

To verify Assumption \ref{assumption_error_bound}(i), we will first introduce the following Lemma and provide a sketch of its proof:

\begin{lemma}\label{lemma_sufficient_condition_sdc}
Consider the simultaneous discrete choice environment of Example \ref{example_simultaneous_discrete_choice}, but with the new moment conditions \eqref{eq_sdc_cons1} and \eqref{eq_sdc_cons2} in place of \eqref{eq_sdc_mom1} and \eqref{eq_sdc_mom2}. Now fix some value $\theta \in \Theta$. If there exists a random variable $U$ with distribution $P_{U|Y,Z} \in \mathcal{P}_{U|Y,Z}(\theta)$ satisfying:
\begin{align}
P \left(U_{k}\leq \pi_{k}(z,y_{-k};\theta) | Z_{k}=z, Y_{-k}=y_{-k}  \right) -0.5 \leq \max\{L_{0}\pi_{k}(z,y_{-k};\theta),0\}, \label{eq_sdc_cons5}\\
0.5 - P \left(U_{k}\leq\pi_{k}(z,y_{-k};\theta) | Z_{k}=z, Y_{-k}=y_{-k}  \right) \leq \max\{-L_{0}\pi_{k}(z,y_{-k};\theta),0\},\label{eq_sdc_cons6}
\end{align}
for $k=1,\ldots,K$ and for every $(z,y_{-k}) \in \mathcal{Z} \times \mathcal{Y}^{K-1}$, then $\theta \in \Theta^{*}$. 
\end{lemma}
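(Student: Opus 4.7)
The plan is to prove the lemma constructively: under the diagonal hypothesis I will exhibit a specific $P_{U|Y,Z}\in \mathcal{P}_{U|Y,Z}(\theta)$ that satisfies the full system of moment inequalities \eqref{eq_sdc_cons1}--\eqref{eq_sdc_cons2} over every $(z,z',y_{-k},y_{-k}')$, thereby placing $\theta \in \Theta^{*}$.

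The first step is to rewrite the hypothesis as a condition on $\theta$ and the observables alone. Any selection $U \in \mathrm{Sel}(\bm G^{-}(\cdot,\theta))$ deterministically satisfies $\mathbbm{1}\{U_k \leq \pi_k(z,y_{-k};\theta)\}=\mathbbm{1}\{Y_k=1\}$ almost surely on $\{Z_k=z,Y_{-k}=y_{-k}\}$, so plugging this into \eqref{eq_sdc_cons5}--\eqref{eq_sdc_cons6} yields the observable inequality $|p_{z,y_{-k}}^{(k)}-1/2|\leq L_0|c_{z,y_{-k}}^{(k)}|$, paired with the sign match $\mathrm{sign}(p_{z,y_{-k}}^{(k)}-1/2)=\mathrm{sign}(c_{z,y_{-k}}^{(k)})$ and the equality $p_{z,y_{-k}}^{(k)}=1/2$ when $c_{z,y_{-k}}^{(k)}=0$, where $p_{z,y_{-k}}^{(k)}:=P(Y_k=1\mid Z_k=z,Y_{-k}=y_{-k})$ and $c_{z,y_{-k}}^{(k)}:=\pi_k(z,y_{-k};\theta)$; the required case analysis mirrors the one surrounding Figure \ref{fig_median_independence}. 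In particular, the ratio $a_{z,y_{-k}}^{(k)}:=(p_{z,y_{-k}}^{(k)}-1/2)/c_{z,y_{-k}}^{(k)}$ lies in $[0,L_0]$ whenever it is well defined, while $a_{z,y_{-k}}^{(k)}|c_{z,y_{-k}}^{(k)}|=|p_{z,y_{-k}}^{(k)}-1/2|\leq 1/2$ automatically.

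The core of the argument is a coordinate-by-coordinate construction. For each $k$ and each $(z,y_{-k})$ I build a CDF $F_{z,y_{-k}}^{(k)}$ supported in $[-1,1]$ that (a) passes through $(c_{z,y_{-k}}^{(k)},p_{z,y_{-k}}^{(k)})$ and (b) satisfies the envelope $|F_{z,y_{-k}}^{(k)}(t)-1/2|\leq L_0|t|$ at every threshold $t\in\{\pi_k(z',y_{-k}';\theta):(z',y_{-k}')\in\mathcal{Z}\times\mathcal{Y}^{K-1}\}$. A convenient candidate is the affine-in-the-middle family with slope $a:=a_{z,y_{-k}}^{(k)}$: when $a\leq 1/2$, take $F(t)=1/2+at$ on $[-1,1]$ with endpoint atoms $(1-2a)/2$ at $\pm 1$; when $a>1/2$, take the truncation $F(t)=\max(0,\min(1,1/2+at))$, supported on $[-1/(2a),1/(2a)]\subset(-1,1)$. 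Both families are valid probability laws, (a) is delivered by the automatic bound $a|c_{z,y_{-k}}^{(k)}|\leq 1/2$ from the previous step, and (b) holds because $|F(t)-1/2|=a|t|\leq L_0|t|$ on the affine region and $|F(t)-1/2|=1/2\leq a|t|\leq L_0|t|$ outside it. Since $F_{z,y_{-k}}^{(k)}(c_{z,y_{-k}}^{(k)})=p_{z,y_{-k}}^{(k)}$ by construction, the CDF decomposes as the $p_{z,y_{-k}}^{(k)}$-weighted mixture of two sub-laws on $[-1,c_{z,y_{-k}}^{(k)}]$ and $[c_{z,y_{-k}}^{(k)},1]$, which I install as the conditional laws of $U_k$ given $(Y_k=1,Z_k,Y_{-k})$ and $(Y_k=0,Z_k,Y_{-k})$ respectively. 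Declaring $U_1,\ldots,U_K$ conditionally independent given $(Y,Z)$ with these marginals produces a bona fide selection from $\bm G^{-}(\cdot,\theta)$ for which property (b), combined with the law of total expectation, delivers the full moment conditions.

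The main obstacle is property (b) at threshold values close to $\pm 1$, where the CDF must jump to $0$ or $1$ and the envelope $1/2\pm L_0|t|$ only accommodates this when $L_0\geq 1/2$; this is the regime relevant for the error-bound analysis of Assumption \ref{assumption_error_bound}, and smaller values of $L_0$ can be handled by a minor boundary modification that shrinks the effective support of $U_k$ slightly inside $[-1,1]$ without affecting the observable implications of the model. Measurability of the resulting kernel in $(y,z)$ is immediate from the piecewise-linear dependence of $F_{z,y_{-k}}^{(k)}$ on the observable pair $(p_{z,y_{-k}}^{(k)},c_{z,y_{-k}}^{(k)})$, and the product structure of the joint conditional law reduces the full moment-condition check to property (b) coordinate by coordinate.
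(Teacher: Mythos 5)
Your construction is correct in substance and pursues the same basic strategy as the paper's proof -- exhibit a selection whose conditional law of $U_{k}$ given $(Z_{k},Y_{-k})$ matches the diagonal constraint and simultaneously satisfies all off-diagonal constraints -- but it executes that strategy very differently. The paper anchors the diagonal values $P(\widetilde U_{k}\leq \pi_{k}(z,y_{-k};\theta)\mid Z_{k}=z,Y_{-k}=y_{-k})$ to those of the hypothesized selection and then simply asserts that the remaining values at thresholds $\pi_{k}(z',y_{-k}';\theta)$, $(z',y_{-k}')\neq(z,y_{-k})$, can be allocated monotonically so as to satisfy \eqref{eq_sdc_cons1}--\eqref{eq_sdc_cons2} (``such an allocation of probability is clearly always possible''), with no explicit distribution. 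You instead reduce the hypothesis to the observable slope condition $a^{(k)}_{z,y_{-k}}=(p-1/2)/c\in[0,L_{0}]$ and produce an explicit one-parameter family of conditional CDFs (affine through $(0,1/2)$ with endpoint atoms, truncated when $a>1/2$) that passes through $(c,p)$ and meets the $\pm L_{0}|t|$ envelope with the correct sign at every threshold at once; splitting the CDF at the cutoff into the $Y_{k}=1$ and $Y_{k}=0$ branches and taking products over $k$ then gives an actual selection. What your route buys is a concrete witness that automatically enforces the realizability constraints (monotonicity and the cap $F(t)\leq p$ for $t<c$) that the paper's two-sentence sketch leaves implicit; what it costs is that the argument is tied to the observed choice probability $p$ rather than to the hypothesized selection itself.

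Two caveats, neither fatal under the paper's own conventions. First, your step-one identity $\mathbbm{1}\{U_{k}\leq\pi_{k}(z,y_{-k};\theta)\}=\mathbbm{1}\{Y_{k}=1\}$ holds only up to mass that a selection may place exactly at the cutoff on the $Y_{k}=0$ branch, which the closure in \eqref{eq_DC_reverse_correspondence} permits; with such an atom the sign match between $p-1/2$ and $c$ does not follow from \eqref{eq_sdc_cons5}--\eqref{eq_sdc_cons6} and your slope $a$ could be negative. The paper's proof sidesteps this by working with the selection's own diagonal value rather than $p$ (although it invokes the same atom-free convention elsewhere, e.g.\ in \eqref{eq_sdc_property}), so under the intended reading in which the closure ``introduces no additional structure'' your reduction is fine, but it is worth stating the convention. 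Second, your endpoint worry is largely moot: a cutoff equal to $+1$ together with $L_{0}<1/2$ already contradicts the hypothesis (since $P(U_{k}\leq 1\mid\cdot)=1$ forces $1/2\leq L_{0}$ through the diagonal constraint at that cell), a cutoff equal to $-1$ is handled by the endpoint atom, which gives $F(-1)=1/2-a\in[1/2-L_{0},1/2]$, and thresholds strictly inside $(-1,1)$ satisfy the envelope for any $a\leq L_{0}$; moreover the proposed fix of shrinking the support would not change $F(1)=1$, so it would not remedy the one case it targets -- fortunately that case cannot arise under the hypothesis.
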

\begin{remark}
Note that, precisely because of the result in this Lemma, the new moment conditions implied by \eqref{eq_sdc_cons1} and \eqref{eq_sdc_cons2} satisfy the no-backtracking principle from Remark \ref{remark_no_backtracking}. Indeed, this Lemma shows that \eqref{eq_sdc_cons5} and \eqref{eq_sdc_cons6} are sufficient to characterize the identified set. Since these moment conditions do not depend on the counterfactual $\gamma$ of interest, the no-backtracking principle is satisfied. 
\end{remark}
\begin{proof}[Proof]
Note by assumption there exists a random variable $U$ with distribution $P_{U|Y,Z} \in \mathcal{P}_{U|Y,Z}(\theta)$ satisfying \eqref{eq_sdc_cons5} and \eqref{eq_sdc_cons6} for $k=1,\ldots,K$ and for every $(z,y_{-k}) \in \mathcal{Z} \times \mathcal{Y}^{K-1}$. 
Take $\widetilde{U}$ to be a random vector satisfying:
\begin{align*}
P (\widetilde{U}_{k}\leq \pi_{k}(z,y_{-k};\theta) | Z_{k}=z, Y_{-k}=y_{-k}  ) = P \left(U_{k}\leq \pi_{k}(z,y_{-k};\theta) | Z_{k}=z, Y_{-k}=y_{-k}  \right),
\end{align*}
for $k=1,\ldots,K$ and for every $(z,y_{-k}) \in \mathcal{Z} \times \mathcal{Y}^{K-1}$, so that $\widetilde{U}$ satisfies \eqref{eq_sdc_cons5} and \eqref{eq_sdc_cons6}. We must show that we can fix probabilities of the form $P(\widetilde{U}_{k} \leq \pi_{k}(z',y_{-k}';\theta)|Z_{k}=z,Y_{-k}=y_{-k})$ for $(z',y_{-k}')\neq (z,y_{-k})$ in a way that satisfies the remaining constraints from \eqref{eq_sdc_cons1} and \eqref{eq_sdc_cons2}, as well as the constraints:
\begin{align*}
&P(U_{k} \leq \pi_{k}(z',y_{-k}';\theta)|Z_{k}=z,Y_{-k}=y_{-k}) \leq P(U_{k} \leq \pi_{k}(z,y_{-k};\theta)|Z_{k}=z,Y_{-k}=y_{-k}),
\end{align*}
if $\pi_{k}(z',y_{-k}';\theta) \leq \pi_{k}(z,y_{-k};\theta)$, and:
\begin{align*}
P(U_{k} \leq \pi_{k}(z,y_{-k};\theta)|Z_{k}=z,Y_{-k}=y_{-k}) \leq P(U_{k} \leq \pi_{k}(z',y_{-k}';\theta)|Z_{k}=z,Y_{-k}=y_{-k}),
\end{align*}
if $\pi_{k}(z,y_{-k};\theta) \leq \pi_{k}(z',y_{-k}';\theta)$. However, such an allocation of probability is clearly always possible.   
\end{proof}

The contrapositive of this result says that if $\theta \notin \Theta^{*}$, then there is no random variable $U$ with distribution $P_{U|Y,Z} \in \mathcal{P}_{U|Y,Z}(\theta)$ satisfying \eqref{eq_sdc_cons5} and \eqref{eq_sdc_cons6}; in other words, if $\theta \notin \Theta^{*}$, then every distribution $P_{U|Y,Z} \in \mathcal{P}_{U|Y,Z}(\theta)$ violates either \eqref{eq_sdc_cons5} or \eqref{eq_sdc_cons6}. Thus, the Lemma suggests that when analysing violations of the moment conditions in order to verify Assumption \ref{assumption_error_bound}(i), it suffices to focus on the moment conditions \eqref{eq_sdc_cons5} and \eqref{eq_sdc_cons6}.

Finally, there is an important property that will be utilized repeatedly when verifying Assumption \ref{assumption_error_bound}(i): for any $P_{U|Y,Z}\in \mathcal{P}_{U|Y,Z}(\theta)$ and any $P_{U'|Y,Z}\in \mathcal{P}_{U|Y,Z}(\theta')$, we must have:
\begin{align}
P \left(U_{k}\leq \pi_{k}(z,y_{-k};\theta) | Z_{k}=z, Y_{-k}=y_{-k}  \right) = P \left(U_{k}'\leq \pi_{k}(z,y_{-k};\theta') | Z_{k}=z, Y_{-k}=y_{-k}  \right)\label{eq_sdc_property}
\end{align}
for $k=1,\ldots,K$ and for every $(z,y_{-k}) \in \mathcal{Z} \times \mathcal{Y}^{K-1}$. Indeed, this property follows from the fact that both $P_{U|Y,Z}$ and $P_{U'|Y,Z}$ satisfy the support restrictions for the simultaneous discrete choice model at $\theta$ and $\theta'$, respectively, and thus they must both rationalize the same observed conditional choice probabilities.

Now we are prepared to verify Assumption \ref{assumption_error_bound}(i). First fix some value of $\theta \notin \Theta$. If $\mathcal{P}_{U|Y,Z}(\theta)$ is empty, then Assumption \ref{assumption_error_bound}(i) is satisfied for any $C_{1},\delta>0$. Thus, we will focus attention on the non-trivial case where $\mathcal{P}_{U|Y,Z}(\theta)$ is non-empty. Note that if $P(Y_{k}=1|Z=z,Y_{-k}=y_{-k})=0.5$ for $k=1,\ldots,K$ and for every $(z,y_{-k}) \in \mathcal{Z} \times \mathcal{Y}^{K-1}$, then \eqref{eq_sdc_cons5} and \eqref{eq_sdc_cons6} will be satisfied for any $P_{U|Y,Z} \in \mathcal{P}_{U|Y,Z}(\theta)$. By Lemma \ref{lemma_sufficient_condition_sdc} this implies $\theta \in \Theta^{*}$, contradicting the fact that $\theta \notin \Theta$. We conclude that if $P(Y_{k}=1|Z=z,Y_{-k}=y_{-k})=0.5$ for $k=1,\ldots,K$ and for every $(z,y_{-k}) \in \mathcal{Z} \times \mathcal{Y}^{K-1}$ then $\theta \notin \Theta^{*}$ implies $\mathcal{P}_{U|Y,Z}(\theta)$ is empty, a case we have ruled out. Thus, we will take as a starting point that there exists at least one $k$ and one pair $(z,y_{-k}) \in \mathcal{Z} \times \mathcal{Y}^{K-1}$ such that $P(Y_{k}=1|Z=z,Y_{-k}=y_{-k})\neq 0.5$. Now define:
\begin{align}
\tau:= \min_{k}\min_{(z,y_{-k})} |0.5 - P(Y_{k}=1|Z=z,Y_{-k}=y_{-k})| && s.t. &&|0.5 - P(Y_{k}=1|Z=z,Y_{-k}=y_{-k})|>0.
\end{align}
By assumption and by construction we have $\tau>0$. We now consider violations of the moment conditions \eqref{eq_sdc_cons5} and \eqref{eq_sdc_cons6} in turn. First, consider a violation of \eqref{eq_sdc_cons5}. In particular, for our fixed value of $\theta \notin \Theta$ suppose:
\begin{align}
P (\widetilde{U}_{k}\leq \pi_{k}(z,y_{-k};\theta) | Z_{k}=z, Y_{-k}=y_{-k} ) -0.5 > \max\{L_{0}\pi_{k}(z,y_{-k};\theta),0\},\label{eq_v1}
\end{align}
for some $k$ and $(z,y_{-k})$ pair, where $\widetilde{U}_{k}$ is a subvector of $\widetilde{U}$ whose distribution is a member of $\mathcal{P}_{U|Y,Z}(\theta)$. Furthermore, let $\theta^{*} \in \Theta^{*}$ be the element of $\Theta^{*}$ closest to $\theta$ (such an element exists since $\Theta^{*}$ will be closed, which follows from continuity of the payoff functions).  There are four cases to consider:
\begin{enumerate}
	\item $\pi_{k}(z,y_{-k};\theta^{*}) \leq 0$ and $\pi_{k}(z,y_{-k};\theta)\leq 0$. Then we have:
	\begin{align}
	\max\{L_{0}\pi_{k}(z,y_{-k};\theta),0\} = 0. \label{eq_c1c1}
	\end{align}
	However, since $\pi_{k}(z,y_{-k};\theta^{*}) \leq 0$ it must be that:
	\begin{align*}
	0.5 \geq P \left(U_{k}\leq \pi_{k}(z,y_{-k};\theta^{*}) | Z_{k}=z, Y_{-k}=y_{-k}  \right) = P (\widetilde{U}_{k}\leq \pi_{k}(z,y_{-k};\theta) | Z_{k}=z, Y_{-k}=y_{-k} ),
	\end{align*}
	where we have used property \eqref{eq_sdc_property} and the fact that $\theta^{*}$ satisfies both \eqref{eq_sdc_cons1} and \eqref{eq_sdc_cons2}. But then this implies:
	\begin{align}
	P(\widetilde{U}_{k}\leq \pi_{k}(z,y_{-k};\theta) | Z_{k}=z, Y_{-k}=y_{-k} ) - 0.5 \leq 0. \label{eq_c1c2}
	\end{align}
	Combining \eqref{eq_c1c1} and \eqref{eq_c1c2} contradicts the assumption of \eqref{eq_v1}. Thus, this case is not possible under the assumption of \eqref{eq_v1}.

	\item $\pi_{k}(z,y_{-k};\theta^{*}) \leq 0$ and $\pi_{k}(z,y_{-k};\theta)> 0$. Then we have:
	\begin{align}
	\max\{L_{0}\pi_{k}(z,y_{-k};\theta),0\} = L_{0}\pi_{k}(z,y_{-k};\theta). \label{eq_c2c1}
	\end{align}
	However, since $\pi_{k}(z,y_{-k};\theta^{*}) \leq 0$ then it must be that:
	\begin{align*}
	0.5 \geq P \left(U_{k}\leq \pi_{k}(z,y_{-k};\theta^{*}) | Z_{k}=z, Y_{-k}=y_{-k}  \right) = P(\widetilde{U}_{k}\leq \pi_{k}(z,y_{-k};\theta) | Z_{k}=z, Y_{-k}=y_{-k} ),
	\end{align*} 
	where we have used property \eqref{eq_sdc_property} and the fact that $\theta^{*}$ satisfies both \eqref{eq_sdc_cons1} and \eqref{eq_sdc_cons2}. But then this implies:
	\begin{align}
	 P (\widetilde{U}_{k}\leq \pi_{k}(z,y_{-k};\theta) | Z_{k}=z, Y_{-k}=y_{-k}) - 0.5 \leq 0. \label{eq_c2c2}
	\end{align}
	Combining \eqref{eq_c2c1} and \eqref{eq_c2c2} contradicts the assumption of \eqref{eq_v1}. Thus, this case is not possible under the assumption of \eqref{eq_v1}.

	\item $\pi_{k}(z,y_{-k};\theta^{*}) > 0$ and $\pi_{k}(z,y_{-k};\theta)\leq 0$. Then we have:
	\begin{align*}
	\max\{L_{0}\pi_{k}(z,y_{-k};\theta),0\}=0.
	\end{align*}
	Then:
	\begin{align*}
	&P (\widetilde{U}_{k}\leq \pi_{k}(z,y_{-k};\theta) | Z_{k}=z, Y_{-k}=y_{-k} ) -0.5 -\max\{L_{0}\pi_{k}(z,y_{-k};\theta),0\}\\
	&=P (\widetilde{U}_{k}\leq \pi_{k}(z,y_{-k};\theta) | Z_{k}=z, Y_{-k}=y_{-k} ) -0.5\\
	&\geq \tau,
	\end{align*}
	where the last line follows from the fact that $P (\widetilde{U}_{k}\leq \pi_{k}(z,y_{-k};\theta) | Z_{k}=z, Y_{-k}=y_{-k} )-0.5>0$ by assumption of \eqref{eq_v1} and the fact $\pi_{k}(z,y_{-k};\theta)\leq 0$, and by the definition of $\tau$ from \eqref{eq_tau}.

	\item $\pi_{k}(z,y_{-k};\theta^{*}) > 0$ and $\pi_{k}(z,y_{-k};\theta)> 0$. First note that by assumption we have:
	\begin{align*}
	&P (\widetilde{U}_{k}\leq \pi_{k}(z,y_{-k};\theta) | Z_{k}=z, Y_{-k}=y_{-k} )-0.5  - \max\{L_{0}\pi_{k}(z,y_{-k};\theta),0\}\\
	&>0\\
	&\geq P (U_{k}\leq \pi_{k}(z,y_{-k};\theta^{*}) | Z_{k}=z, Y_{-k}=y_{-k} )-0.5  - \max\{L_{0}\pi_{k}(z,y_{-k};\theta^{*}),0\}.
	\end{align*}
	Using \eqref{eq_sdc_property} and the fact that $\pi_{k}(z,y_{-k};\theta^{*}) >0$ and $\pi_{k}(z,y_{-k};\theta)> 0$, this implies $\pi_{k}(z,y_{-k};\theta^{*}) > \pi_{k}(z,y_{-k};\theta)$.	Now let $\theta'$ be a convex combination of $\theta^{*}$ and $\theta$ satisfying:
	\begin{align*}
	P(U_{k}'\leq \pi_{k}(z,y_{-k};\theta') | Z_{k}=z, Y_{-k}=y_{-k} ) - 0.5  - L_{0}\pi_{k}(z,y_{-k};\theta')=0,
	\end{align*}
	for some selection $U_{k}'$. Such an element always exists by linearity of $\pi_{k}$. Then:
	\begin{align*}
	&P(\widetilde{U}_{k}\leq \pi_{k}(z,y_{-k};\theta) | Z_{k}=z, Y_{-k}=y_{-k} ) - 0.5 - \max\{L_{0}\pi_{k}(z,y_{-k};\theta),0\}\\
	&=P(\widetilde{U}_{k}\leq \pi_{k}(z,y_{-k};\theta) | Z_{k}=z, Y_{-k}=y_{-k} ) - 0.5 - \max\{L_{0}\pi_{k}(z,y_{-k};\theta),0\}\\
	&\qquad+ L_{0}\pi_{k}(z,y_{-k};\theta') - L_{0}\pi_{k}(z,y_{-k};\theta')\\
	&= L_{0}\pi_{k}(z,y_{-k};\theta') - L_{0}\pi_{k}(z,y_{-k};\theta)\\
	&= L_{0}|\pi_{k}(z,y_{-k};\theta') - \pi_{k}(z,y_{-k};\theta)|\\
	&\geq L_{0}L_{k}'||\theta'-\theta||\\
	&\geq L_{0}L_{k}'||\theta^{*}-\theta||.
	\end{align*}
	In the third last line we used the fact that $\pi_{k}(z,y_{-k};\theta^{*}) > \pi_{k}(z,y_{-k};\theta)$. In the second last line we have used the reverse Lipschitz condition, and in the final line we have used the fact that $\theta'$ lies between $\theta$ and $\theta^{*}$, by virtue of being a convex combination of these elements. 
\end{enumerate}

Next, consider a violation of \eqref{eq_sdc_cons6}. In particular, for our fixed $\theta \notin \Theta$ suppose:
\begin{align}
0.5 - P (\widetilde{U}_{k}\leq \pi_{k}(z,y_{-k};\theta) | Z_{k}=z, Y_{-k}=y_{-k} )  > \max\{-L_{0}\pi_{k}(z,y_{-k};\theta),0\},\label{eq_v2}
\end{align}
for some $k$ and $(z,y_{-k})$ pair, where $\widetilde{U}_{k}$ is a subvector of $\widetilde{U}$ whose distribution is is a member of $\mathcal{P}_{U|Y,Z}(\theta)$. Again, let $\theta^{*} \in \Theta^{*}$ be the element of $\Theta^{*}$ closest to $\theta$. There are again four cases to consider:
\begin{enumerate}
	\item $\pi_{k}(z,y_{-k};\theta^{*}) \leq 0$ and $\pi_{k}(z,y_{-k};\theta)\leq 0$. First note that by assumption we have:
	\begin{align*}
	&0.5 - P (\widetilde{U}_{k}\leq \pi_{k}(z,y_{-k};\theta) | Z_{k}=z, Y_{-k}=y_{-k} )  - \max\{-L_{0}\pi_{k}(z,y_{-k};\theta),0\}\\
	&>0\\
	&\geq 0.5 - P (U_{k}\leq \pi_{k}(z,y_{-k};\theta^{*}) | Z_{k}=z, Y_{-k}=y_{-k} )  - \max\{-L_{0}\pi_{k}(z,y_{-k};\theta^{*}),0\}.
	\end{align*}
	Using \eqref{eq_sdc_property} and the fact that $\pi_{k}(z,y_{-k};\theta^{*}) \leq 0$ and $\pi_{k}(z,y_{-k};\theta)\leq 0$, this implies $\pi_{k}(z,y_{-k};\theta^{*}) < \pi_{k}(z,y_{-k};\theta)$. Now let $\theta'$ be a convex combination of $\theta^{*}$ and $\theta$ satisfying:
	\begin{align*}
	0.5- P(U_{k}'\leq \pi_{k}(z,y_{-k};\theta') | Z_{k}=z, Y_{-k}=y_{-k} )  + L_{0}\pi_{k}(z,y_{-k};\theta')=0,
	\end{align*}
	for some selection $U_{k}'$. Such an element always exists by linearity of $\pi_{k}$.  Then:
	\begin{align*}
	&0.5- P(U_{k}\leq \pi_{k}(z,y_{-k};\theta) | Z_{k}=z, Y_{-k}=y_{-k} ) - \max\{-L_{0}\pi_{k}(z,y_{-k};\theta),0\}\\
	&=0.5- P(U_{k}\leq \pi_{k}(z,y_{-k};\theta) | Z_{k}=z, Y_{-k}=y_{-k} ) - \max\{-L_{0}\pi_{k}(z,y_{-k};\theta),0\}\\
	&\qquad+ L_{0}\pi_{k}(z,y_{-k};\theta') - L_{0}\pi_{k}(z,y_{-k};\theta')\\
	&= L_{0}\pi_{k}(z,y_{-k};\theta) - L_{0}\pi_{k}(z,y_{-k};\theta')\\
	&= L_{0}|\pi_{k}(z,y_{-k};\theta) - \pi_{k}(z,y_{-k};\theta')|\\
	&\geq L_{0}L_{k}'||\theta-\theta'||\\
	&\geq L_{0}L_{k}'||\theta-\theta^{*}||.
	\end{align*} 
	In the third last line we used the fact that $\pi_{k}(z,y_{-k};\theta^{*}) < \pi_{k}(z,y_{-k};\theta)$. In the second last line we have used the reverse Lipschitz condition, and in the final line we have used the fact that $\theta'$ lies between $\theta$ and $\theta^{*}$, by virtue of being a convex combination of these elements. 
	\item $\pi_{k}(z,y_{-k};\theta^{*}) \leq 0$ and $\pi_{k}(z,y_{-k};\theta)> 0$. Then we have:
	\begin{align*}
	\max\{-L_{0}\pi_{k}(z,y_{-k};\theta),0\}=0.
	\end{align*}
	Then:
	\begin{align*}
	& 0.5 - P (\widetilde{U}_{k}\leq \pi_{k}(z,y_{-k};\theta) | Z_{k}=z, Y_{-k}=y_{-k} ) -\max\{-L_{0}\pi_{k}(z,y_{-k};\theta),0\}\\
	&=0.5 - P (\widetilde{U}_{k}\leq \pi_{k}(z,y_{-k};\theta) | Z_{k}=z, Y_{-k}=y_{-k} )\\
	&\geq \tau,
	\end{align*} 
	where the last line follows from the fact that $0.5 - P (\widetilde{U}_{k}\leq \pi_{k}(z,y_{-k};\theta) | Z_{k}=z, Y_{-k}=y_{-k} )>0$ by assumption of \eqref{eq_v2} and the fact $\pi_{k}(z,y_{-k};\theta)> 0$, and by the definition of $\tau$ from \eqref{eq_tau}. 

	\item $\pi_{k}(z,y_{-k};\theta^{*}) > 0$ and $\pi_{k}(z,y_{-k};\theta)\leq 0$. Then we have:
	\begin{align}
	\max\{-L_{0}\pi_{k}(z,y_{-k};\theta),0\} = -L_{0}\pi_{k}(z,y_{-k};\theta). \label{eq_c2c1v2}
	\end{align}
	However, since $\pi_{k}(z,y_{-k};\theta^{*}) > 0$ then it must be that:
	\begin{align*}
	0.5 \leq P \left(U_{k}\leq \pi_{k}(z,y_{-k};\theta^{*}) | Z_{k}=z, Y_{-k}=y_{-k}  \right) = P(\widetilde{U}_{k}\leq \pi_{k}(z,y_{-k};\theta) | Z_{k}=z, Y_{-k}=y_{-k} ),
	\end{align*} 
	where we have used property \eqref{eq_sdc_property} and the fact that $\theta^{*}$ satisfies both \eqref{eq_sdc_cons1} and \eqref{eq_sdc_cons2}. But then this implies:
	\begin{align}
	0.5 - P(\widetilde{U}_{k}\leq \pi_{k}(z,y_{-k};\theta) | Z_{k}=z, Y_{-k}=y_{-k} )  \leq 0. \label{eq_c2c2v2}
	\end{align}
	Combining \eqref{eq_c2c1v2} and \eqref{eq_c2c2v2} contradicts the assumption of \eqref{eq_v2}. Thus, this case is not possible under the assumption of \eqref{eq_v2}.

	\item $\pi_{k}(z,y_{-k};\theta^{*}) > 0$ and $\pi_{k}(z,y_{-k};\theta)> 0$. Then we have:
	\begin{align}
	\max\{-L_{0}\pi_{k}(z,y_{-k};\theta),0\} = 0. \label{eq_c1c1v2}
	\end{align}
	However, since $\pi_{k}(z,y_{-k};\theta^{*}) > 0$ then it must be that:
	\begin{align*}
	0.5 \leq P \left(U_{k}\leq \pi_{k}(z,y_{-k};\theta^{*}) | Z_{k}=z, Y_{-k}=y_{-k}  \right) = P (\widetilde{U}_{k}\leq \pi_{k}(z,y_{-k};\theta) | Z_{k}=z, Y_{-k}=y_{-k} ),
	\end{align*}
	where we have used property \eqref{eq_sdc_property} and the fact that $\theta^{*}$ satisfies both \eqref{eq_sdc_cons1} and \eqref{eq_sdc_cons2}. But then this implies:
	\begin{align}
	0.5 - P(\widetilde{U}_{k}\leq \pi_{k}(z,y_{-k};\theta) | Z_{k}=z, Y_{-k}=y_{-k} )  \leq 0. \label{eq_c1c2v2}
	\end{align}
	Combining \eqref{eq_c1c1v2} and \eqref{eq_c1c2v2} contradicts the assumption of \eqref{eq_v2}. Thus, this case is not possible under the assumption of \eqref{eq_v2}.

\end{enumerate}

Combining everything, we conclude that Assumption \ref{assumption_error_bound} holds with $C_{1} = L_{0} L'$ and $\delta = \tau/(L_{0} L')$, where $L' = \min_{k} L_{k}'$. 

\subsubsection{Verification of Learnability}\label{appendix_additional_details_sdc_verify_learnable}

By the assumed linearity of $\pi_{k}$ with respect to $\theta$, and since $\pi_{k}$ depends only on the subvector $\theta_{k}$ of $\theta$, the function $(u,\theta) \mapsto \pi_{k}(\gamma(z,y_{-k});\theta) - u$ is a hyperplane in $\mathbb{R}^{d_{k}}$ for each $(z,y_{-k})$, where $d_{k}$ is the dimension of $\theta_{k}$. By Lemma 2.6.15 in \cite{van1996weak}, for example, $\Phi$ is a Vapnik-Chervonenkis (VC) class with VC dimension at most $d_{k}+2$. Furthermore, recalle that $\Phi$ can be taken to be uniformly bounded in absolute value by $1$. Using, for example, Theorem 2.6.7 in \cite{van1996weak}, we can deduce:
\begin{align*}
\sup_{Q \in \mathcal{Q}_{n}} \log N \left( \varepsilon, \Phi, ||\cdot||_{Q,2} \right) = O(1),
\end{align*}
so that $\Phi$ easily satisfies the entropy growth condition. Now let $j$ index a generic moment function:
\begin{align*}
m_{j}(Y_{-k},Z,U,\theta) = \left(\mathbbm{1}\{U_{k}\leq \pi_{k}(z',y_{-k}';\theta)\} - \max\{L_{0}\pi_{k}(z',y_{-k}';\theta),0\} - 0.5\right)\mathbbm{1}\{Z_{k}=z, Y_{-k}=y_{-k}\},
\end{align*}
and let $\mathcal{M}_{j}$ be the associated class of functions: 
\begin{align*}
\mathcal{M}_{j}= \left\{ m_{j}(\cdot,u,\theta):\mathcal{Y}\times\mathcal{Z} \to \mathbb{R} : (u,\theta) \in \mathcal{U}\times\Theta \right\}.
\end{align*}
Note that the values $(z',y_{-k}')$ are not arguments of the function, but instead are associated with the index $j$. Since $\pi_{k}$ takes values in the interval $[-1,1]$, the class $\mathcal{M}_{j}$ is uniformly bounded. We claim that there exists no set of size $2$ shattered by $\mathcal{M}_{j}$, implying $\mathcal{M}_{j}$ is a VC-subgraph class. We will prove this by way of contradiction. In particular, suppose that there exists two points $(y_{1},z_{1})$ and $(y_{2},z_{2})$, and values $t_{1},t_{2} \in \mathbb{R}$ such that:
\begin{align}
\left| \left\{\begin{bmatrix}\mathbbm{1}\{m_{j}(y_{1},z_{1},u,\theta) \geq t_{1}\}\\ \mathbbm{1}\{m_{j}(y_{2},z_{2},u,\theta) \geq t_{2}\}\end{bmatrix}  : (u,\theta) \in \mathcal{U} \times \Theta \right\}\right| = 4.\label{eq_sdc_shattering}
\end{align}
In other words, we suppose the set $\{(y_{1},z_{1}),(y_{2},z_{2})\}$ is shattered by $\mathcal{M}_{j}$, and that $t_{1},t_{2} \in \mathbb{R}$ witness the shattering. We have:
\begin{align*}
m_{j}(y_{1},z_{1},u,\theta) = \left(\mathbbm{1}\{u_{k}\leq \pi_{k}(z',y_{-k}';\theta)\} - \max\{L_{0}\pi_{k}(z',y_{-k}';\theta),0\} - 0.5\right)\mathbbm{1}\{z_{1,k}=z, y_{1,-k}=y_{-k}\},\\
m_{j}(y_{2},z_{2},u,\theta) = \left(\mathbbm{1}\{u_{k}\leq \pi_{k}(z',y_{-k}';\theta)\} - \max\{L_{0}\pi_{k}(z',y_{-k}';\theta),0\} - 0.5\right)\mathbbm{1}\{z_{2,k}=z, y_{2,-k}=y_{-k}\}.
\end{align*}
Now consider two cases:
\begin{enumerate}
	\item $\mathbbm{1}\{z_{1,k}=z, y_{1,-k}=y_{-k}\}=\mathbbm{1}\{z_{1,k}=z, y_{1,-k}=y_{-k}\}$: In this case the two functions $m_{j}(y_{1},z_{1},u,\theta)$ and $m_{j}(y_{2},z_{2},u,\theta)$ are identical for all $(u,\theta) \in \mathcal{U} \times \Theta$. This means \eqref{eq_sdc_shattering} is impossible, since at least one of the vectors $(1,0)$ and $(0,1)$ cannot be picked out by $\mathcal{M}_{j}$.
	\item  $\mathbbm{1}\{z_{1,k}=z, y_{1,-k}=y_{-k}\}\neq \mathbbm{1}\{z_{1,k}=z, y_{1,-k}=y_{-k}\}$: In this case at least one of the functions $m_{j}(y_{1},z_{1},u,\theta)$ or $m_{j}(y_{2},z_{2},u,\theta)$ is the zero function. Again, this means \eqref{eq_sdc_shattering} is impossible. For example, if $m_{j}(y_{1},z_{1},u,\theta)$ is the zero function, then it is impossible for $\mathcal{M}_{j}$ to pick out both $(0,0)$ and $(1,0)$ or both $(0,1)$ and $(1,1)$.
\end{enumerate}
Since $(y_{1},z_{1})$ and $(y_{2},z_{2})$ were arbitrary, we conclude that there exists no set of size $2$ shattered by $\mathcal{M}_{j}$. This implies that $\mathcal{M}_{j}$ is a VC-subgraph class, and using, for example, Theorem 2.6.7 in \cite{van1996weak}, we can deduce:
\begin{align*}
\sup_{Q \in \mathcal{Q}_{n}} \log N \left( \varepsilon, \mathcal{M}_{j}, ||\cdot||_{Q,2} \right) = O(1).
\end{align*}
Thus, $\mathcal{M}_{j}$ easily satisfies the entropy growth condition. Finally, let $j'$ index a generic moment function:
\begin{align*}
m_{j}(Y_{-k},Z,U,\theta) = \left(0.5 - \mathbbm{1}\{U_{k}\leq \pi_{k}(z',y_{-k}';\theta)\}-\max\{-L_{0}\pi_{k}(z',y_{-k}';\theta),0\}\right) \mathbbm{1}\{Z_{k}=z, Y_{-k}=y_{-k}\},
\end{align*}
and let $\mathcal{M}_{j'}$ be the associated class of functions: 
\begin{align*}
\mathcal{M}_{j'}= \left\{ m_{j'}(\cdot,u,\theta):\mathcal{Y}\times\mathcal{Z} \to \mathbb{R} : (u,\theta) \in \mathcal{U}\times\Theta \right\}.
\end{align*}
A nearly identical argument as for $\mathcal{M}_{j}$ reveals that $\mathcal{M}_{j'}$ is a VC-subgraph class and thus trivially satisfies the entropy growth condition. We conclude using Theorem \ref{theorem_pampac_learnability}(ii) that our class of policies $\Gamma$ is PAMPAC learnable with a rate of convergence of $O(n^{-1/2})$.

\subsection{Example \ref{example_program_evaluation}: Program Evaluation}\label{appendix_additional_details_te}

\subsubsection{Verification of Assumptions \ref{assump_preliminary}, \ref{assumption_factual_domain} and \ref{assumption_counterfactual_domain}}\label{appendix_additional_details_te_A1_to_A3}

We will now proceed to verify Assumption \ref{assump_preliminary}, \ref{assumption_factual_domain} and \ref{assumption_counterfactual_domain}. First note that Assumption \ref{assump_preliminary} is trivially satisfied, since the probability space $(\Omega,\mathfrak{A},P)$ is complete, $\mathcal{U}$ is a compact subset of euclidean space, and $\Theta$ is a Polish space; in particular, since $\mathcal{Z}$ (and thus also $\mathcal{X}$) is finite, $\mathcal{G}$ can be considered as the set of all positive measurable functions  $g:\mathcal{Z}\to [0,1]$, in which case each $g \in \mathcal{G}$ has an equivalent representation as a vector in $[0,1]^{|\mathcal{Z}|}$. The same logic applies to each $t \in \mathcal{T}$. Next, let us recall the multifunction:
\begin{align}
\bm G^{-} \left( Y,D,Z,\theta \right):= \text{cl}\left\{ (U_{0},U_{1},U) \in \mathcal{U} : \begin{array}{l}
    Y = U_{0}(1-D) + U_{1}D,  \\
    D = \mathbbm{1}\{g(Z) \geq U\}
  \end{array} \right\}.
\end{align}
Close inspection of this multifunction shows that:
\begin{align}
\bm G^{-} \left( y,d,z,\theta \right)=\begin{cases}
	\{ y \} \times [\underline{Y},\overline{Y}] \times [g(z),1], &\text{ if } d=0,\\
	[\underline{Y},\overline{Y}] \times \{ y \} \times [0,g(z)], &\text{ if } d=1.
\end{cases}
\end{align}
Now for any $(u_{0},u_{1},u)\in \mathcal{U}$ we have:
\begin{align} 
&d((u_{0},u_{1},u),\bm G^{-} \left( Y,D,Z,\theta \right))\nonumber\\
&= D \max\{|u_{0} - Y|,g(Z)-u \} + (1-D)\max\{|u_{1} - Y|,Z-g(z) \}.
\end{align}
Since $g \in \mathcal{G}$ is measurable by definition, from here it is easily verified that the distance above is measurable with respect to $\mathfrak{B}(\mathcal{Y})\otimes\mathfrak{B}(\mathcal{D})\otimes\mathfrak{B}(\mathcal{Z})$. Since $(u_{0},u_{1},u) \in \mathcal{U}$ was arbitrary, by the result of \cite{himmelberg1975measurable} (see also Theorem 1.3.3 in \cite{molchanov2017theory}) this implies that $\bm G^{-}$ is an Effros-measurable multifunction, as desired. Modulo changes in notation, it is easily seen that the conditional distribution of the vector $(U_{0},U_{1},U)$ given $(Y,D,Z)$ satisfies \eqref{eq_puyz} in Assumption \ref{assumption_factual_domain} using the multifunction in \eqref{eq_POM_reverse_correspondence} with $g(\cdot)=g_{0}(\cdot)$. Finally, note that all of the moment functions in the moment conditions \eqref{eq_pom_mom1} - \eqref{eq_pom_mom6} are measurable and bounded by $1$, and the moment functions from the moment conditions in \eqref{eq_pom_mom7} and \eqref{eq_pom_mom8} are measurable and bounded by $\max\{|\underline{Y}|,|\overline{Y}|\}$.  

Turning to the counterfactual domain, recall the multifunction:
\begin{align}
\bm G^\star(Z,U_{0},U_{1},U,\theta,\gamma):=\left\{ (Y_{\gamma}^\star,D_{\gamma}^\star) \in \mathcal{Y} \times \{0,1\} :   \begin{array}{l}
    Y_{\gamma}^\star = U_{0}(1-D_{\gamma}^\star) + U_{1}D_{\gamma}^\star,  \\
    D_{\gamma}^\star = \mathbbm{1}\{g(\gamma(Z)) \geq U\}
  \end{array} \right\}.
\end{align}
Note here we take $\mathcal{Y}^\star = \mathcal{Y}$, although this is not necessary. Furthermore, close inspection of this multifunction shows that:
\begin{align}
\bm G^\star(z,u_{0},u_{1},u,\theta,\gamma)=\begin{cases}
	(u_{1},1), &\text{ if } u \leq g(\gamma(z)),\\
	(u_{0},0), &\text{ if } g(\gamma(z)) < u.
\end{cases}
\end{align} 
In this case, the counterfactual map in \eqref{eq_POM_counterfactual_correspondence} is single-valued. In this case, Effros measurability is equivalent to the usual notion of measurability for functions, and measurability of $\bm G^\star$ follows from familiar arguments after noting that both $g$ and $\gamma$ are measurable functions. Finally, modulo changes in notation, it is easily seen that the conditional distribution of the vector $(Y_{\gamma}^\star,D_{\gamma}^\star)$ given $(Y,D,Z,U_{0},U_{1},U)$ satisfies \eqref{eq_pygamma} in Assumption \ref{assumption_counterfactual_domain} using the multifunction in \eqref{eq_POM_counterfactual_correspondence} with $g(\cdot)=g_{0}(\cdot)$.

\subsubsection{Verification of Assumption \ref{assumption_error_bound}}\label{appendix_additional_details_te_A4_error_bound}

First we focus on \eqref{eq_pom_mom1} - \eqref{eq_pom_mom4}.\todolt{There is some question for this assumption when the moment conditions do not depend on $\theta$...} Since these moments do not depend on $t \in \mathcal{T}$, to verify Assumption \ref{assumption_error_bound} it suffices to focus on the parameter $g \in \mathcal{G}$. From the moment conditions \eqref{eq_pom_mom1} and \eqref{eq_pom_mom2} we have:
\begin{align}
g(z_{0},x) = P(D=1|Z=z_{0},X=x) \iff
\begin{cases}
	\E[\left(D - g_{0}(z_{0},x)\right)\mathbbm{1}\{Z_{0}=z_{0},X=x\}] \leq 0 \\
	\E[\left(g_{0}(z_{0},x)-D\right)\mathbbm{1}\{Z_{0}=z_{0},X=x\}] \leq 0
\end{cases},\label{eq_pom_mom1_violate}
\end{align}
and from \eqref{eq_pom_mom3} and \eqref{eq_pom_mom4} we have:
\begin{align}
g_{0}(z_{0},x)  = P(U \leq g_{0}(z_{0},x)| X=x) \iff
\begin{cases}
	\E[\left(\mathbbm{1}\{U \leq g_{0}(z_{0},x)\} - g_{0}(z_{0},x)\right)\mathbbm{1}\{X=x\}]\leq 0\\
	\E[\left(g_{0}(z_{0},x) - \mathbbm{1}\{U \leq g_{0}(z_{0},x)\}\right)\mathbbm{1}\{X=x\}]\leq 0
\end{cases},\label{eq_pom_mom3_violate}
\end{align}
For notational simplicity, let $g_{0}(z) := g_{0}(z_{0},x)$ for $z=(z_{0},x)$. From \eqref{eq_pom_mom1_violate} we see that $g_{0}(z)$ is point-identified. Define:
\begin{align*}
\mathcal{G}^{*} = \{ g : (g,t) \in \Theta^{*} \text{ for some $t\in \mathcal{T}$}\}. 
\end{align*}
Then point-identification of $g_{0}$ implies that $\mathcal{G}^{*}$ is a singleton, and that for any $g \in \mathcal{G}$:
\begin{align*}
d(g,\mathcal{G}^{*}) = \max_{z \in \mathcal{Z}} |g(z) - g_{0}(z)|.
\end{align*}
From here it is straightforward to use conditions \eqref{eq_pom_mom1_violate} and \eqref{eq_pom_mom3_violate} to argue that part (i) of Assumption \ref{assumption_error_bound} is satisfied with $C_{1}=1$ for any $\delta>0$. In particular, suppose $g \notin \mathcal{G}^{*}$, and that $z^{*} \in \mathcal{Z}$ satisfies:
\begin{align*}
d(g,\mathcal{G}^{*}) = \max_{z \in \mathcal{Z}} |g(z) - g_{0}(z)|=|g(z^{*}) - g_{0}(z^{*})|.
\end{align*}
Without loss of generality, suppose that $g(z^{*})>g_{0}(z^{*})$. Then from \eqref{eq_pom_mom1_violate} we have:
\begin{align*}
\E[\left(g_{0}(z^{*})-D\right)\mathbbm{1}\{Z=z^{*}\}]= 0<\E[\left(g(z^{*})-D\right)\mathbbm{1}\{Z=z^{*}\}].
\end{align*}
Thus:
\begin{align*}
\E[\left(g(z^{*})-D\right)\mathbbm{1}\{Z=z^{*}\}] &= \E[\left(g(z^{*})-D\right)\mathbbm{1}\{Z=z^{*}\}] - \E[\left(g_{0}(z^{*})-D\right)\mathbbm{1}\{Z=z^{*}\}]\\
&= g(z^{*}) - g_{0}(z^{*})\\
&= |g(z^{*}) - g_{0}(z^{*})|\\
&= d(g,\mathcal{G}^{*}).
\end{align*}
Now to complete the verification of part (i) of Assumption \ref{assumption_error_bound} we turn to \eqref{eq_pom_mom5} - \eqref{eq_pom_mom8}, which can be written as:
\begin{align}
\E\left[t(z_{0},x)-\mathbbm{1}\{Z=z_{0},X=x\}\right] &= 0,\qquad\forall z_{0} \in \mathcal{Z}_{0},\, x \in \mathcal{X},\label{eq_pom_mom56}
\end{align} 
and:
\begin{align}
\E\left[U_{d}\left(\mathbbm{1}\{Z=z_{0},X=x\}\sum_{z_{0} \in \mathcal{Z}_{0}} t(z_{0},x)- \mathbbm{1}\{X=x\}t(z_{0},x) \right)\right] &\leq 0, \,\,\forall z_{0} \in \mathcal{Z}_{0},\, x \in \mathcal{X},\, d\in \{0,1\}.\label{eq_pom_mom78}
\end{align}
Since these moments do not depend on $g \in \mathcal{G}$, to verify Assumption \ref{assumption_error_bound} for these moments it suffices to focus on the parameter $t \in \mathcal{T}$. Now define:
\begin{align*}
\mathcal{T}^{*} = \{ t : (g,t) \in \Theta^{*} \text{ for some $g\in \mathcal{G}$}\}. 
\end{align*}
From \eqref{eq_pom_mom56} it is clear that $t_{0}$ is also point identified. Since $g_{0}$ is also point identified we have $\Theta^{*} = \{g_{0}\}\times \{t_{0}\}$. Because of this, we claim that it suffices to focus on the conditions from \eqref{eq_pom_mom56}; indeed, $t \notin \mathcal{T}^{*} \iff t\neq t_{0}$ implies that $t \notin \mathcal{T}^{*}$ if and only if \eqref{eq_pom_mom56} is violated. Now consider any $t \notin \mathcal{T}^{*}$ and let $(z_{0}^{*},x^{*})$ satisfy:
\begin{align*}
(z_{0}^{*},x^{*}) = \argmax_{z_{0},x} |t(z_{0},x) - t_{0}(z_{0},x)|.
\end{align*}
Without loss of generality we can suppose $t(z_{0},x) > t_{0}(z_{0},x)$. Then:
\begin{align*}
\E\left[t(z_{0}^{*},x^{*})-\mathbbm{1}\{Z=z_{0}^{*},X=x^{*}\}\right] &= \E\left[t(z_{0}^{*},x^{*})-\mathbbm{1}\{Z=z_{0}^{*},X=x^{*}\}\right] - \E\left[t_{0}(z_{0}^{*},x^{*})-\mathbbm{1}\{Z=z_{0}^{*},X=x^{*}\}\right] \\
&=t(z_{0}^{*},x^{*}) -t_{0}(z_{0}^{*},x^{*})\\
&= |  t(z_{0}^{*},x^{*}) -t_{0}(z_{0}^{*},x^{*})|\\
&=d(t,\mathcal{T}^{*}).
\end{align*}
Combining everything, if $\mathcal{J}$ indexes all the moment constraints and if $\theta \notin \Theta^{*}$ with $\theta=(g,t)$, then we know:
\begin{align*}
\inf_{P_{U_{0},U_{1},U|Y,D,Z} \in \mathcal{P}_{U_{0},U_{1},U|Y,D,Z}(\theta) } \max_{j \in \mathcal{J}} | \E[m_{j}(y,d,z,u_{0},u_{1},u,\theta)]|_{+} \geq \max\{d(g,\mathcal{G}^{*}), d(t,\mathcal{T}^{*})\} \geq d(\theta,\Theta^{*}).
\end{align*}
Conclude that Assumption \ref{assumption_error_bound} is satisfied with $C_{1}=1$ for any $\delta>0$.\\

For part (ii) of Assumption \ref{assumption_error_bound}, we claim that we can set $C_{2} = 1$. To show why, we will apply Lemma \ref{lemma_lipschitz_condition} to our environment. First note that $\varphi$ is the identity function when we are interested in $\E_{P}[Y_{\gamma}^\star]$. Thus $L_{\varphi}=1$ in Lemma \ref{lemma_lipschitz_condition}. Next, note from the definition of our support restrictions $\bm G^{-}$ and $\bm G^\star$ we can deduce that: 
\begin{align}
d((u_{0},u_{1},u),\bm G^{-}(y,d,z,\theta)) &= 
\begin{cases}
	\max\{|u_{0}-y|,|g(z)-u|_{+}\}, &\text{ if }d=0,\\
	\max\{|u_{1}-y|,|u-g(z)|_{+}\}, &\text{ if }d=1.
\end{cases}\label{eq_distance_pom_1}\\\nonumber\\
d((y^\star,d^\star),\bm G^{\star}(y,d,z,u_{0},u_{1},u,\theta,\gamma)) &= 
\begin{cases}
	\max\{|u_{0}-y|,|g(z)-u|_{+}\}, &\text{ if }u> g(\gamma(z)),\\
	\max\{|u_{1}-y|,|u-g(z)|_{+}\}, &\text{ if }u\leq g(\gamma(z)).
\end{cases}\label{eq_distance_pom_2}
\end{align}
We now define the sets $\Theta^{-}$ and $\Theta^{\star}$ given in Lemma \ref{lemma_lipschitz_condition} in the context of this example:
\begin{align}
\Theta^{-}(y,d,z,u_{0},u_{1},u)\cap \Theta_{\delta}^{*} &:= \left\{\theta \in \Theta_{\delta}^{*} : (u_{0},u_{1},u) \in \bm G^{-}(y,d,z,\theta) \right\} &&\nonumber\\\nonumber\\
&=\begin{cases}
	\{\theta\in \Theta_{\delta}^{*} : g(z)\in [0,u]\}, &\text{ if }d=0 \text{ and } u_{0}=y,\\
	\{\theta\in \Theta_{\delta}^{*} : g(z)\in [u,1]\}, &\text{ if }d=1 \text{ and } u_{1}=y,\\
	\emptyset, &\text{ otherwise,}
\end{cases}&&\\\nonumber\\
\Theta^{\star}(v,\gamma)\cap \Theta_{\delta}^{*} &:= \left\{\Theta_{\delta}^{*} \in \Theta : (y^\star,d^\star) \in \bm G^{\star}(y,d,z,u_{0},u_{1},u,\theta,\gamma) \right\}\nonumber &&\\
\nonumber\\
&=\begin{cases}
	\{\theta \in \Theta_{\delta}^{*} : g(\gamma(z)) \in [0,u]\}, &\text{ if }d^{*}=0 \text{ and } y^\star=u_{0},\\
	\{\theta \in \Theta_{\delta}^{*} : g(\gamma(z)) \in [u,1]\}, &\text{ if }d^{*}=1 \text{ and } y^\star=u_{1},\\
	\emptyset, &\text{ otherwise.}
\end{cases}&&
\end{align}
With these definitions, we have for any $\theta \in \Theta_{\delta}^{*}$:
\begin{align}
d(\theta,\Theta^{-}(y,d,z,u_{0},u_{1},u)\cap \Theta_{\delta}^{*}) &=\begin{cases}
	|g(z)-u|_{+}, &\text{ if }d=0 \text{ and } u_{0}=y,\\
	|u-g(z)|_{+}, &\text{ if }d=1 \text{ and } u_{1}=y,\\
	+\infty, &\text{ otherwise,}
\end{cases}\label{eq_distance_pom_3}\\\nonumber\\
d(\theta,\Theta^{\star}(v,\gamma)\cap \Theta_{\delta}^{*})&=\begin{cases}
	|g(\gamma(z))-u|_{+}, &\text{ if }d^{*}=0 \text{ and } y^\star=u_{0},\\
	|u-g(\gamma(z))|_{+}, &\text{ if }d^{*}=1 \text{ and } y^\star=u_{1},\\
	\emptyset, &\text{ otherwise.}
\end{cases}\label{eq_distance_pom_4}
\end{align}
Combining \eqref{eq_distance_pom_1} with \eqref{eq_distance_pom_3} we can verify condition \eqref{eq_lipschitz_cond1} with $\ell_{1}=1$. Furthermore, by combining \eqref{eq_distance_pom_2} with \eqref{eq_distance_pom_4} we can verify condition \eqref{eq_lipschitz_cond2} with $\ell_{2}=1$. Applying Lemma \ref{lemma_lipschitz_condition} then yields the choice $C_{2} = L_{\varphi} \max\{ \ell_{1},\ell_{2}\} = 1$, as claimed above. Note also that this value of $C_{2}$ works for any $\delta>0$.

It thus suffices to set $\mu^{*}=1$ in Theorem \ref{thm_cortes}. Also, recall the moment functions for this example from equations \eqref{eq_pom_mom1} - \eqref{eq_pom_mom4}. The Theorem then states that the lower and upper bounds on the closed convex hull of the identified set for $\E[Y_{\gamma}^\star]$ can be computed as the solutions to the problems \eqref{eq_lb_varphi} and \eqref{eq_ub_varphi}. Intuitively, under the assumptions of the Theorem the infimum over $\theta \in \Theta$ and supremum over $\theta \in \Theta$ in problems \eqref{eq_lb_varphi} and \eqref{eq_ub_varphi} will be obtained at the value $\theta_{0} \in \Theta$.

\subsubsection{Verification of Learnability}\label{appendix_additional_details_te_verify_learnable}

We claim that $\Phi$ is a VC class with VC index of at most $|\mathcal{Z}|+1$. To prove this, we must show that there exists no set of points $\mathcal{Z}_{n}=\{z_{1},\ldots,z_{n}\}$ with $n=|\mathcal{Z}|+1$ shattered by $\Phi$. Let $t_{1},\ldots,t_{n}$ be arbitrary real numbers. Now define the set:
\begin{align*}
B:= \left\{\begin{bmatrix}\mathbbm{1}\left\{\,\, \mathbbm{1}\{g(\gamma(z_{1})) \geq u \}(u_{1}-u_{0}) + u_{0}\geq t_{1} \,\,\right\} \\ \mathbbm{1}\left\{\,\,\mathbbm{1}\{g(\gamma(z_{2})) \geq u \}(u_{1}-u_{0}) + u_{0}\geq t_{2} \,\,\right\} \\ \vdots \\ \mathbbm{1}\left\{\,\,\mathbbm{1}\{g(\gamma(z_{n})) \geq u \}(u_{1}-u_{0}) + u_{0}\geq t_{n}\,\,\right\} \end{bmatrix} : (u_{0},u_{1},u,\theta) \in \mathcal{U} \times \Theta \right\}.
\end{align*}
If $B$ contains the vector $b\in \{0,1\}^{n}$, then we say that $\Phi$ ``picks out'' $b$. It suffices to show that there always exists at least one vector $b \in \{0,1\}^{n}$ that $\Phi$ fails to pick out. Since $n>|\mathcal{Z}|$, there exists at least one $z \in \mathcal{Z}$ that appears twice in the set $\mathcal{Z}_{n}$. Thus there is some $i,j \in \{1,\ldots,n\}$ such that $z_{i}=z_{j}$. Then regardless of the values of $(u_{0},u_{1},u,\theta)$ we will always have:
\begin{align*}
\mathbbm{1}\{g(\gamma(z_{i})) \geq u \}(u_{1}-u_{0}) + u_{0} = \mathbbm{1}\{g(\gamma(z_{j})) \geq u \}(u_{1}-u_{0}) + u_{0}. 
\end{align*}
We then have:
\begin{enumerate}
	\item If $t_{i} = t_{j}$ then $\Phi$ fails to pick out any vector $b \in \{0,1\}^{n}$ with $b_{i}=0$ and $b_{j}=1$. 
	\item If $t_{i} < t_{j}$ then $\Phi$ fails to pick out any vector $b \in \{0,1\}^{n}$ with $b_{i}=0$ and $b_{j}=1$. 
	\item If $t_{j} < t_{i}$ then $\Phi$ fails to pick out any vector $b \in \{0,1\}^{n}$ with $b_{i}=1$ and $b_{j}=0$. 
\end{enumerate}
Since this covers all possibilities for $t_{i},t_{j} \in \mathbb{R}$, we conclude that there always exists at least one binary vector that $\Phi$ fails to pick out, and thus $\Phi$ shatters no set of size $n = |\mathcal{Z}|+1$. Now using, for example, Theorem 2.6.7 in \cite{van1996weak}, we can deduce:
\begin{align*}
\sup_{Q \in \mathcal{Q}_{n}} \log N \left( \varepsilon, \Phi, ||\cdot||_{Q,2} \right) = O(1),
\end{align*}
so that $\Phi$ easily satisfies the entropy growth condition. Now let $j$ index a generic moment function:
\begin{align*}
m_{j}(D,Z,\theta) = \left(D - g(z_{0},x)\right)\mathbbm{1}\{Z_{0}=z_{0},X=x\},
\end{align*}
and let $\mathcal{M}_{j}$ be the associated class of functions: 
\begin{align*}
\mathcal{M}_{j}= \left\{ m_{j}(\cdot,\theta):\{0,1\}\times \mathcal{Z} \to \mathbb{R} : \theta \in \Theta \right\}.
\end{align*}
Note this class indexes the moment functions from the moment conditions \eqref{eq_pom_mom1}. Also note that $(z_{0},x)$ are not arguments of the moment function, but are instead are associated with the index $j$.

We claim that there exists no set of size $3$ shattered by $\mathcal{M}_{j}$, implying $\mathcal{M}_{j}$ is a VC-subgraph class. We will prove this by way of contradiction. In particular, suppose that there exists three points $(d_{1},z_{1})$, $(d_{2},z_{2})$, and $(d_{3},z_{3})$ and values $t_{1},t_{2}, t_{3} \in \mathbb{R}$ such that:
\begin{align}
\left| \left\{\begin{bmatrix}\mathbbm{1}\{m_{j}(d_{1},z_{1},\theta) \geq t_{1}\}\\ \mathbbm{1}\{m_{j}(d_{2},z_{2},\theta) \geq t_{2}\}\\
 \mathbbm{1}\{m_{j}(d_{3},z_{3},\theta) \geq t_{3}\}\end{bmatrix}  : \theta \in \Theta \right\}\right| = 8.\label{eq_sdc_shattering2}
\end{align}
In other words, we suppose the set $\{(d_{1},z_{1}),(d_{2},z_{2}),(d_{3},z_{3})\}$ is shattered by $\mathcal{M}_{j}$, and that $t_{1},t_{2},t_{3} \in \mathbb{R}$ witness the shattering. We have:
\begin{align*}
m_{j}(d_{1},z_{1},\theta) &=\left(d_{1} - g(z_{0},x)\right)\mathbbm{1}\{z_{1,0}=z_{0},x_1=x\},\\
m_{j}(d_{2},z_{2},\theta) &=\left(d_{2} - g(z_{0},x)\right)\mathbbm{1}\{z_{2,0}=z_{0},x_{2}=x\}\\
m_{j}(d_{3},z_{3},\theta) &=\left(d_{3} - g(z_{0},x)\right)\mathbbm{1}\{z_{3,0}=z_{0},x_{3}=x\}.
\end{align*}
Now consider two cases:
\begin{enumerate}
	\item $\mathbbm{1}\{z_{1,0}=z_{0},x_1=x\}=\mathbbm{1}\{z_{2,0}=z_{0},x_{2}=x\}=\mathbbm{1}\{z_{3,0}=z_{0},x_{3}=x\}=1$: Note that since $d_{i} \in \{0,1\}$, at least two functions $m_{j}(d_{1},z_{1},\theta)$, $m_{j}(d_{2},z_{2},\theta)$ and $m_{j}(d_{3},z_{3},\theta)$ are identical for all $\theta \in \Theta$. This means \eqref{eq_sdc_shattering2} is impossible. For instance, suppose that $m_{j}(d_{1},z_{1},\theta)= m_{j}(d_{2},z_{2},\theta)$. Then at least one of the vectors $(1,0,0)$ or $(0,1,0)$ cannot be picked out by $\mathcal{M}_{j}$. 
	\item  Either $\mathbbm{1}\{z_{1,k}=z, y_{1,-k}=y_{-k}\}=0$ or $\mathbbm{1}\{z_{2,k}=z, y_{2,-k}=y_{-k}\}=0$ or $\mathbbm{1}\{z_{3,k}=z, y_{3,-k}=y_{-k}\}=0$: In this case at least one of the functions $m_{j}(d_{1},z_{1},\theta)$, $m_{j}(d_{2},z_{2},\theta)$ or $m_{j}(d_{3},z_{3},\theta)$ is equal to zero for all $\theta \in \Theta$. Again, this means \eqref{eq_sdc_shattering2} is impossible. For example, if $m_{j}(d_{1},z_{1},\theta)$ is the zero function, then it is impossible for $\mathcal{M}_{j}$ to pick out both $(0,0,0)$ and $(1,0,0)$.
\end{enumerate}
Since $(d_{1},z_{1})$, $(d_{2},z_{2})$, and $(d_{3},z_{3})$ were arbitrary, we conclude that there exists no set of size $3$ shattered by $\mathcal{M}_{j}$. This implies that $\mathcal{M}_{j}$ is a VC-subgraph class, and using, for example, Theorem 2.6.7 in \cite{van1996weak}, we can deduce:
\begin{align*}
\sup_{Q \in \mathcal{Q}_{n}} \log N \left( \varepsilon, \mathcal{M}_{j}, ||\cdot||_{Q,2} \right) = O(1).
\end{align*}
Thus, $\mathcal{M}_{j}$ easily satisfies the entropy growth condition. Given the relation between the moment functions from \eqref{eq_pom_mom1} and \eqref{eq_pom_mom2}, a nearly identical analysis holds for the moment functions from the moment conditions \eqref{eq_pom_mom2}.  

Now let $j'$ index a generic moment function:
\begin{align*}
m_{j'}(X,U,\theta) = \left(\mathbbm{1}\{U \leq g(z_{0},x)\} - g(z_{0},x)\right)\mathbbm{1}\{X=x\},
\end{align*}
and let $\mathcal{M}_{j'}$ be the associated class of functions: 
\begin{align*}
\mathcal{M}_{j'}= \left\{ m_{j'}(\cdot,u,\theta):\mathcal{X} \to \mathbb{R} : (u,\theta) \in \mathcal{U}\times\Theta \right\}.
\end{align*}
Note this class indexes the moment functions from the moment conditions \eqref{eq_pom_mom3}. Also note that $(z_{0},x)$ are not arguments of the moment function, but are instead are associated with the index $j'$.

We claim that there exists no set of size $2$ shattered by $\mathcal{M}_{j'}$, implying $\mathcal{M}_{j'}$ is a VC-subgraph class. We will prove this by way of contradiction. In particular, suppose that there exists two points $x_{1}$ and $x_{2}$, and values $t_{1},t_{2} \in \mathbb{R}$ such that:
\begin{align}
\left| \left\{\begin{bmatrix}\mathbbm{1}\{m_{j'}(x_{1},u,\theta) \geq t_{1}\}\\ \mathbbm{1}\{m_{j'}(x_{2},u,\theta) \geq t_{2}\}\end{bmatrix}  : (u,\theta) \in \mathcal{U} \times \Theta \right\}\right| = 4.\label{eq_sdc_shattering3}
\end{align}
In other words, we suppose the set $\{x_{1},x_{2}\}$ is shattered by $\mathcal{M}_{j'}$, and that $t_{1},t_{2} \in \mathbb{R}$ witness the shattering. We have:
\begin{align*}
m_{j'}(x_{1},u,\theta) =\left(\mathbbm{1}\{u \leq g(z_{0},x)\} - g(z_{0},x)\right)\mathbbm{1}\{x_{1}=x\},\\
m_{j'}(x_{2},u,\theta) =\left(\mathbbm{1}\{u \leq g(z_{0},x)\} - g(z_{0},x)\right)\mathbbm{1}\{x_{2}=x\}.
\end{align*}
Now consider two cases:
\begin{enumerate}
	\item $\mathbbm{1}\{x_{1}=x\}=\mathbbm{1}\{x_{2}=x\}=1$: Then the two functions $m_{j'}(x_{1},u,\theta)$ and $m_{j'}(x_{2},u,\theta)$ are identical for all $(u,\theta)\in \mathcal{U} \times \Theta$. This means \eqref{eq_sdc_shattering3} is impossible, since at least one of the vectors $(1,0)$ and $(0,1)$ cannot be picked out by $\mathcal{M}_{j'}$. 
	\item  Either $\mathbbm{1}\{x_{1}=x\}=0$ or $\mathbbm{1}\{x_{2}=x\}=0$: In this case at least one of the functions $m_{j'}(x_{1},u,\theta)$ or $m_{j'}(x_{2},u,\theta)$ is the zero function. Again, this means \eqref{eq_sdc_shattering2} is impossible. For example, if $m_{j'}(x_{1},u,\theta)$ is the zero function, then it is impossible for $\mathcal{M}_{j'}$ to pick out both $(0,0)$ and $(1,0)$.
\end{enumerate}
Since $x_{1}$ and $x_{2}$ were arbitrary, we conclude that there exists no set of size $2$ shattered by $\mathcal{M}_{j'}$. This implies that $\mathcal{M}_{j'}$ is a VC-subgraph class, and using, for example, Theorem 2.6.7 in \cite{van1996weak}, we can deduce:
\begin{align*}
\sup_{Q \in \mathcal{Q}_{n}} \log N \left( \varepsilon, \mathcal{M}_{j'}, ||\cdot||_{Q,2} \right) = O(1).
\end{align*}
Thus, $\mathcal{M}_{j'}$ easily satisfies the entropy growth condition. Given the relation between the moment functions from \eqref{eq_pom_mom3} and \eqref{eq_pom_mom4}, a nearly identical analysis holds for the moment functions from the moment conditions \eqref{eq_pom_mom4}. 

Now let $j''$ index a generic moment function:
\begin{align*}
m_{j''}(Z,\theta) = t(z_{0},x)-\mathbbm{1}\{Z_{0}=z_{0},X=x)\} ,
\end{align*}
and let $\mathcal{M}_{j''}$ be the associated class of functions: 
\begin{align*}
\mathcal{M}_{j''}= \left\{ m_{j''}(\cdot,\theta):\mathcal{Z} \to \mathbb{R} : \theta \in \Theta \right\}.
\end{align*}
Note this class indexes the moment functions from the moment conditions \eqref{eq_pom_mom5}. Also note that $(z_{0},x)$ are not arguments of the moment function, but are instead are associated with the index $j''$.

We claim that there exists no set of size $3$ shattered by $\mathcal{M}_{j''}$, implying $\mathcal{M}_{j''}$ is a VC-subgraph class. To see this, note that for any three points $\{z_{1},z_{2},z_{3}\}$ we have:
\begin{align*}
m_{j''}(z_{1},\theta) =t(z_{0},x) - \mathbbm{1}\{z_{1,0}=z_{0},x_{1}=x)\},\\
m_{j''}(z_{2},\theta) =t(z_{0},x) - \mathbbm{1}\{z_{2,0}=z_{0},x_{2}=x)\},\\
m_{j''}(z_{3},\theta) =t(z_{0},x) - \mathbbm{1}\{z_{2,0}=z_{0},x_{2}=x)\}.
\end{align*}
The conclusion follows from the fact that two of these moment functions must always be the same. This implies that $\mathcal{M}_{j''}$ is a VC-subgraph class, and using, for example, Theorem 2.6.7 in \cite{van1996weak}, we can deduce:
\begin{align*}
\sup_{Q \in \mathcal{Q}_{n}} \log N \left( \varepsilon, \mathcal{M}_{j''}, ||\cdot||_{Q,2} \right) = O(1).
\end{align*}
Thus, $\mathcal{M}_{j''}$ easily satisfies the entropy growth condition. Given the relation between the moment functions from \eqref{eq_pom_mom5} and \eqref{eq_pom_mom6}, a nearly identical analysis holds for the moment functions from the moment conditions \eqref{eq_pom_mom6}. 

Finally, let $j'''$ index a generic moment function:
\begin{align*}
m_{j'''}(Z,U_{d},\theta)=U_{d}\left(\mathbbm{1}\{Z=z_{0},X=x\}\sum_{z_{0} \in \mathcal{Z}_{0}} t(z_{0},x)- \mathbbm{1}\{X=x\}t(z_{0},x) \right).
\end{align*}
and let $\mathcal{M}_{j''}$ be the associated class of functions: 
\begin{align*}
\mathcal{M}_{j'''}= \left\{ m_{j'''}(\cdot,u_{d},\theta):\mathcal{Z} \to \mathbb{R} : (u_{d},\theta) \in [\underline{Y},\overline{Y}]\times \Theta \right\}.
\end{align*}
Note this class indexes the moment functions from the moment conditions \eqref{eq_pom_mom7}. Also note that $(z_{0},x)$ are not arguments of the moment function, but are instead are associated with the index $j'''$.

We claim that there exists no set of size $5$ shattered by $\mathcal{M}_{j'''}$, implying $\mathcal{M}_{j'''}$ is a VC-subgraph class. To see this, note that for any five points $\{z_{1},z_{2},z_{3},z_{4},z_{5}\}$ we have:
\begin{align*}
m_{j'''}(z_{1},u_{d},\theta) =u_{d}\left(\mathbbm{1}\{z_{1,0}=z_{0},x_{1}=x\}\sum_{z_{0} \in \mathcal{Z}_{0}} t(z_{0},x)- \mathbbm{1}\{x_{1}=x\}t(z_{0},x) \right),\\
m_{j'''}(z_{2},u_{d},\theta) =u_{d}\left(\mathbbm{1}\{z_{2,0}=z_{0},x_{2}=x\}\sum_{z_{0} \in \mathcal{Z}_{0}} t(z_{0},x)- \mathbbm{1}\{x_{2}=x\}t(z_{0},x) \right),\\
m_{j'''}(z_{3},u_{d},\theta) =u_{d}\left(\mathbbm{1}\{z_{3,0}=z_{0},x_{3}=x\}\sum_{z_{0} \in \mathcal{Z}_{0}} t(z_{0},x)- \mathbbm{1}\{x_{3}=x\}t(z_{0},x) \right),\\
m_{j'''}(z_{4},u_{d},\theta) =u_{d}\left(\mathbbm{1}\{z_{4,0}=z_{0},x_{4}=x\}\sum_{z_{0} \in \mathcal{Z}_{0}} t(z_{0},x)- \mathbbm{1}\{x_{4}=x\}t(z_{0},x) \right),\\
m_{j'''}(z_{5},u_{d},\theta) =u_{d}\left(\mathbbm{1}\{z_{5,0}=z_{0},x_{5}=x\}\sum_{z_{0} \in \mathcal{Z}_{0}} t(z_{0},x)- \mathbbm{1}\{x_{5}=x\}t(z_{0},x) \right).
\end{align*}
The conclusion follows from the fact that two of these moment functions must always be identical for all $\theta$. This implies that $\mathcal{M}_{j'''}$ is a VC-subgraph class, and using, for example, Theorem 2.6.7 in \cite{van1996weak}, we can deduce:
\begin{align*}
\sup_{Q \in \mathcal{Q}_{n}} \log N \left( \varepsilon, \mathcal{M}_{j'''}, ||\cdot||_{Q,2} \right) = O(1).
\end{align*}
Thus, $\mathcal{M}_{j'''}$ easily satisfies the entropy growth condition. Given the relation between the moment functions from \eqref{eq_pom_mom7} and \eqref{eq_pom_mom8}, a nearly identical analysis holds for the moment functions from the moment conditions \eqref{eq_pom_mom8}.

Combining everything and applying Theorem \ref{theorem_pampac_learnability}(ii), we thus have that the policy space $\Gamma$ for this problem is learnable.

\end{appendix}
\end{document}